\newcommand*\widefbox[1]{\fbox{\hspace{2em}#1\hspace{2em}}}
\newcommand{\mD}{\mathcal{D}}
\newcommand{\mE}{\mathcal{E}}
\newcommand{\mF}{\mathcal{F}}
\newcommand{\mH}{\mathcal{H}}
\newcommand{\mI}{\mathcal{I}}
\newcommand{\mJ}{\mathcal{J}}
\newcommand{\mM}{\mathcal{M}}
\newcommand{\mT}{\mathcal{T}}
\newcommand{\mU}{\mathcal{U}}
\newcommand{\mV}{\mathcal{V}}
\newcommand{\Cbb}{\mathbb{C}}
\newcommand{\Ebb}{\mathbb{E}}
\newcommand{\Ibb}{\mathbb{I}}
\newcommand{\Nbb}{\mathbb{N}}
\newcommand{\Pbb}{\mathbb{P}}
\newcommand{\Rbb}{\mathbb{R}}
\newcommand{\Zbb}{\mathbb{Z}}
\newcommand{\Tr}{\operatorname{Tr}}
\newcommand{\<}{\langle}
\renewcommand{\>}{\rangle}
\newtheorem{thm}{Theorem}
\newtheorem{theorem}{Theorem}
\newtheorem{corollary}[theorem]{Corollary}
\newtheorem{definition}[theorem]{Definition}
\newtheorem{lemma}[theorem]{Lemma}
\renewcommand\onecolumngrid{%
  \do@columngrid{one}{\@ne}%
  \def\set@footnotewidth{\onecolumngrid}%
  \def\footnoterule{\kern-6pt\hrule width 1.5in\kern6pt}%
}
\newcommand{\bes} {\begin{subequations}}
\newcommand{\ees} {\end{subequations}}
\newcommand{\bea} {\begin{eqnarray}}
\newcommand{\eea} {\end{eqnarray}}
\newcommand{\be} {\begin{equation}}
\newcommand{\ee} {\end{equation}}
\def\>{\rangle}
\def\<{\langle}
\def\Tr{\textrm{Tr}}
\newcommand{\abs}[1]{\lvert #1 \rvert}
\newcommand{\ignore}[1]{}
\begin{document}

\title{Optimal Distillation of Qubit Clocks}

\author{Sujay Kazi}
\affiliation{Department of Electrical and Computer Engineering, Duke University, Durham, NC 27708, USA}
\affiliation{Duke Quantum Center, Durham, NC 27708, USA}

\author{Iman Marvian}
\affiliation{Department of Electrical and Computer Engineering, Duke University, Durham, NC 27708, USA}
\affiliation{Duke Quantum Center, Durham, NC 27708, USA}
\affiliation{Department of Physics, Duke University, Durham, NC 27708, USA}

\begin{abstract}

We study coherence distillation under time-translation-invariant operations: given many copies of a quantum state containing coherence in the energy eigenbasis, the aim is to produce a purer coherent state while respecting the time-translation symmetry. This symmetry ensures that the output remains synchronized with the input and that the process 
can be realized by energy-conserving unitaries coupling the system to a reservoir initially in an energy eigenstate, thereby modeling thermal operations supplemented  by a work reservoir or battery. For qubit systems, we determine the optimal asymptotic fidelity and show that it is governed by the purity of coherence, a measure of asymmetry derived from the right logarithmic derivative (RLD) Fisher information. In particular, we find that the lowest achievable infidelity (one minus fidelity) scales as $1/N$ times the reciprocal of the purity of coherence of each input qubit, where $N$ is the number of copies, giving this quantity a clear operational meaning. We additionally study many other interesting aspects of the coherence distillation problem for qubits, including computing higher-order corrections to the lowest achievable infidelity up to $O(1/N^3)$, and expressing the optimal channel as a boundary value problem that can be solved numerically.

\end{abstract}

\maketitle

\section{Introduction}
\label{sec:intro}



The manipulation of energetic coherence plays a central role at the intersection of quantum thermodynamics and quantum metrology, where it finds its most direct operational manifestation in quantum clocks~\cite{helstrom1969quantum, holevo2011probabilistic, Braunstein1994,  GiovannettiMetrology,giovannetti2001quantum,giovannetti2011advances,  Bartlett2007,lostaglio2015quantum,lostaglio2015description, marvian2022operational,Marvian2020,streltsov2017colloquium,chiribella2013quantum,marvian2016quantify}. Any non-stationary state of a closed quantum system can serve as a clock in the sense that it keeps track of time, much like a pendulum that swings back and forth. Such a quantum state is said to be \emph{coherent} with respect to the eigenspaces of the Hamiltonian.

When several such clocks, each subject to noise and decoherence, are employed together, one may hope to recover a sharper notion of time by coherently processing their joint state. The problem of \emph{coherence distillation} asks to what extent one can convert many low-quality quantum clocks into a single, high-quality quantum clock \cite{Marvian2020}. In particular, consider an ensemble of quantum clocks, each synchronized to a common reference but affected by noise such as dephasing or depolarization. The goal is to transform this collection of noisy clocks into a single quantum state that remains synchronized with the reference while being as close as possible to a pure state. Because direct access to the reference clock is not possible, the protocol must be covariant under time translations: it should act independently of the clocks' specific phase in their evolution, producing an output at the corresponding point along its own time trajectory.

In this work, we present a complete analysis of the optimal distillation of quantum qubit clocks. Specifically, we determine the optimal asymptotic protocol for converting $N\gg 1$ noisy qubit clocks into a single, purer qubit clock. Our findings uncover the fundamental scaling of the achievable precision and resolve a conjecture posed in Ref. \cite{Marvian2020}. In particular, we show that the minimum achievable error, quantified by the infidelity (i.e., one minus the fidelity with the ideal pure state), is determined by the \emph{purity of coherence} of the input state~$\rho$, namely, the right logarithmic derivative (RLD) Fisher information metric \cite{Petz1996,Petz2011, hayashi2017quantum} for the family of time-evolved states $e^{-itH}\rho e^{+itH}$ with respect to the time parameter $t$. Hence, our work provides an operational interpretation of this lesser-known variant of the Fisher information and complements a recent result obtained for continuous-variable coherent states~\cite{Yadavalli2024}.

A particularly intriguing feature of the purity of coherence is that, for pure states, it is either zero or infinite. Our results offer a natural explanation of this behavior: pure states that possess energetic coherence cannot be reached from mixed states through time-translation–invariant operations. In contrast, the symmetric logarithmic derivative (SLD) Fisher information, commonly associated with estimation precision via the quantum Cram\'{e}r-Rao bound \cite{Braunstein1994, helstrom1969quantum, BarndorffNielsen2000, holevo2011probabilistic}, does not capture this operational limitation. (Interestingly, however, the SLD Fisher information also admits an operational interpretation as the \emph{coherence cost}, that is, the minimal rate of consumption of pure-state quantum clocks required to generate arbitrary mixed or pure clock states~\cite{marvian2022operational}.)

Finally, we analyze higher-order features of the qubit coherence-distillation problem, deriving corrections to the minimum infidelity up to order $O(1/N^3)$ and expressing the optimal channel as a boundary value problem amenable to numerical solution.

\section{Preliminaries}
\label{sec:preliminaries}

Coherence distillation can be understood as a problem in the \emph{resource theory of asymmetry}, which attempts to quantify the amount of asymmetry carried by a quantum state in operationally meaningful ways \cite{gour2008resource, marvian2012symmetry, Bartlett2007}. In the resource theory of asymmetry, the free states are symmetric states (those that commute with all elements of a specific representation of a symmetry group on the Hilbert space), and the free operations are covariant channels, which are defined as follows:

\begin{definition}[Group-covariant channel]
\label{def:group-covariant-channel}
Given a group $G$ and unitary representations $R_{\text{in}}$ and $R_{\text{out}}$ on Hilbert spaces $\mH_{\text{in}}$ and $\mH_{\text{out}}$, respectively, a quantum channel $\mE: \mD\left(\mH_{\text{in}}\right)\rightarrow\mD\left(\mH_{\text{out}}\right)$ is said to be $\left(G, R_{\text{in}}, R_{\text{out}}\right)$-covariant if, for all density operators $\rho\in\mD\left(\mH_{\text{in}}\right)$ and all group elements $g\in G$,
\begin{equation}
    \mE\left(R_{\text{in}}(g)\rho R_{\text{in}}(g)^\dagger\right) = R_{\text{out}}(g)\mE(\rho)R_{\text{out}}(g)^\dagger.
\end{equation}
When the representations $R_{\text{in}}$ and $R_{\text{out}}$ are understood from context, we may drop them and simply call such a channel \emph{$G$-covariant} for convenience.
\end{definition}

We now define the special example of group covariance that will be relevant for coherence distillation:

\begin{definition}[Time-translation invariant (TI) channel]
\label{def:TI-channel}
Given Hamiltonians $H_{\text{in}}$ and $H_{\text{out}}$ on the input and output Hilbert spaces, respectively, a \emph{time-translation invariant (TI) channel} is a quantum channel $\mE$ such that, for all times $t\in\mathbb{R}$,
\begin{equation}
    \mE\left(e^{-itH_{\text{in}}}\rho e^{+itH_{\text{in}}}\right) = e^{-itH_{\text{out}}}\mE(\rho)e^{+itH_{\text{out}}}.
\end{equation}
\end{definition}

The condition for time-translation invariance is a special case of the condition for group covariance. In particular, for a Hamiltonian $H$, the group $\{e^{-itH} \,|\, t\in\mathbb{R}\}$ is isomorphic to $U(1)$ if the Hamiltonian is periodic (all of its spectral gaps are rational multiples of each other) and $\mathbb{R}$ if the Hamiltonian is aperiodic (some spectral gap is an irrational multiple of another spectral gap). Hence a TI channel is also a $G$-covariant channel, where $G = U(1)$ or $G = \mathbb{R}$ depending on whether $H_{\text{in}}$ is periodic.

It is worth noting that the set of TI operations also naturally arises in the context of quantum thermodynamics. This connection follows from the covariant Stinespring dilation theorem \cite{marvian2012symmetry, keyl1999optimal}. According to this result, any TI operation $\mathcal{E}$ on a system $S$ with Hamiltonian $H_S$ can be realized by coupling it to an ancillary system (or \emph{battery}) with Hamiltonian $H_B$, as
\begin{equation}
 \mathcal{E}(\sigma) = \Tr_B \big[ U \big(\sigma_S \otimes |E\rangle\langle E|_B\big) U^\dagger \big],
\end{equation}
where $U$ is an energy-conserving unitary, i.e., a unitary that commutes with the total Hamiltonian $H_S \otimes \mathbb{I}_B + \mathbb{I}_S \otimes H_B$. Moreover, the initial state of the battery can be chosen as an eigenstate of its Hamiltonian $H_B$, denoted $|E\rangle_B$, which implies that it does not initially contain any coherence.  
Hence, our above distillation protocol can be interpreted as the distillation of coherence using thermal machines that are allowed to consume an arbitrary amount of work (see \cite{Marvian2020} for further discussion).

\subsection{Problem Setup: Coherence Distillation}
\label{subsec:problem-setup}

We now more carefully define the problem we are interested in. We wish to construct a single-shot coherence distillation protocol, meaning that our input state is a tensor product of identical states, and target state is a pure coherent state on a single system. (The adjective ``single-shot'' contrasts this task with, for instance, the task of generating coherent states at some linear rate.) More precisely:

\begin{definition}[Single-shot coherence distillation protocol]
\label{def:single-shot-protocol}
A single-shot coherence distillation protocol, or $N\rightarrow 1$ coherence distillation protocol, is a TI channel $\mE$ where 
 the following conditions are satisfied:
\begin{itemize}
    \item The input Hamiltonian is a sum of identical non-interacting Hamiltonians on the $N$ identical subsystems:
\begin{equation}
    H_{\text{in}} = \sum_{j=1}^{N}H^{(j)} = \sum_{j=1}^{N}\mathbb{I}^{\otimes j-1}\otimes H\otimes\mathbb{I}^{\otimes N-j}.
\end{equation}
    \item The output Hamiltonian is that same Hamiltonian on a single system identical to the $N$ input subsystems: $H_{\text{out}} = H$.
    \item The input state $\rho_{\text{in}} = \rho^{\otimes N}$ is a tensor product of $N$ identical states.
    \item The target state is a pure state $\rho_{\text{out}} = \ket{\psi_{\text{out}}}\bra{\psi_{\text{out}}}$.
\end{itemize}
\end{definition}

Now we restrict our attention to qubits. For convenience, label the input qubits $1$ through $N$, and label the output qubit ``out''. Without loss of generality, we assume that every qubit has an equal $Z$ Hamiltonian (for example, we can have every qubit in a magnetic field pointing in the $z$-direction). Hence the input Hamiltonian is $H_{\text{in}} = \sum_{i=1}^{N}Z_i$ and the output Hamiltonian is $H_{\text{out}} = Z_{\text{out}}$. This motivates the following definition:

\begin{definition}[Single-shot qubit coherence distillation protocol]
\label{def:single-shot-qubit-protocol}
A single-shot coherence distillation protocol (see Definition \ref{def:single-shot-protocol}) will additionally be called a \emph{qubit protocol} if the input and output states are qubits. Without loss of generality, we assume that the input Hamiltonian is $H_{\text{in}} = \sum_{i=1}^{N}Z_i$ and the output Hamiltonian is $H_{\text{out}} = Z_{\text{out}}$. Furthermore, we use $\Theta_{\text{in}}$ and $\Theta_{\text{out}}$ respectively to denote the polar angle (angle from the positive $z$-axis) of the input qubit state and the target qubit state on the Bloch sphere. This means that the time-evolved input state takes the form
\begin{align}
    \rho(t) &= e^{-itZ}\rho e^{+itZ} \\
    &= \frac{\mathbb{I} + \lambda_x\left(X\cos(2t) + Y\sin(2t)\right) + \lambda_zZ}{2},
\end{align}
where $0 < \lambda \le 1$ is called the \emph{purity parameter}, and where we have defined $\lambda_x = \lambda\sin\Theta_{\text{in}}$ and $\lambda_z = \lambda\cos\Theta_{\text{in}}$ for convenience. while the time-evolved target state takes the form
\begin{align}
    \ket{\psi_{\text{out}}(t)} &= e^{-itZ}\ket{\psi_{\text{out}}} \\
    &\propto \cos\frac{\Theta_{\text{out}}}{2}\ket{0} + e^{2it}\sin\frac{\Theta_{\text{out}}}{2}\ket{1}.
\end{align}
\end{definition}

The single-shot qubit coherence distillation problem is illustrated in Figure \ref{fig:coherence-distillation-problem}, which includes the four parameters that characterize any instance of this problem: $N$ (the number of input qubits), $\lambda$ (the purity parameter of the input qubits), $\Theta_{\text{in}}$ (the polar angle of the input qubits), and $\Theta_{\text{out}}$ (the polar angle of the desired pure coherent qubit state).

\begin{figure}
    \centering
    \includegraphics[scale=0.3]{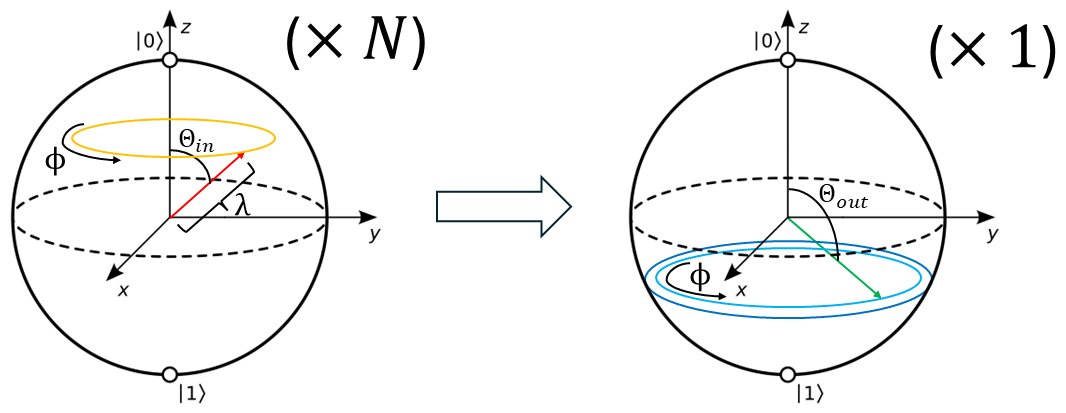}
    \caption{The single-shot qubit coherence distillation problem. Given $N$ copies of a noisy qubit coherent state (\textcolor{orange}{orange}), how can we create a state (\textcolor{cyan}{cyan}) that maximizes the fidelity with a pure coherent qubit state with the same time parameter $\phi$ (\textcolor{blue}{blue}), all without any knowledge of that time parameter?}
    \label{fig:coherence-distillation-problem}
\end{figure}

We now also define the special case where $\Theta_{\text{in}} = \Theta_{\text{out}} = \frac{\pi}{2}$, which will receive special attention throughout this paper due to its (relative) simplicity:

\begin{definition}[Single-shot equatorial qubit coherence distillation protocol]
\label{def:single-shot-qubit-protocol-equatorial}
A single-shot qubit coherence distillation protocol (see Definition \ref{def:single-shot-qubit-protocol}) will additionally be called \emph{equatorial} if the input polar angle $\Theta_{\text{in}}$ and the target polar angle $\Theta_{\text{out}}$ both equal $\frac{\pi}{2}$. This is equivalent to saying that the input qubit states and the target qubit state lie on the equatorial plane ($xy$-plane) of the Bloch sphere, hence the name.
\end{definition}

In fact, this equatorial special case was posed by Marvian \cite{Marvian2020} as the simplest interesting example of a coherence distillation problem (see Figure \ref{fig:coherence-distillation-three-graphs} for more details).

\subsection{Orders of Optimality}
\label{subsec:orders-of-optimality}

Next, we need to articulate how we will evaluate the performance of a single-shot coherence distillation protocol. We wish to maximize the fidelity of our output state with the desired pure coherent state. This motivates the following definition:

\begin{definition}[Infidelity of a single-shot distillation protocol]
\label{def:infidelity-of-protocol}
Suppose a single-shot distillation protocol $\mE$ has input state $\rho_{\text{in}} = \rho^{\otimes N}$, output state $\tilde{\rho} = \mE\left(\rho_{\text{in}}\right)$, and target state $\rho_{\text{out}} = \ket{\psi_{\text{out}}}\bra{\psi_{\text{out}}}$. Then the \emph{infidelity} of the protocol $\mI(\mE)$ is the infidelity (one minus fidelity) between the output state and the target state:
\begin{equation}
    \mI(\mE) = 1 - \text{Fid}\left(\tilde{\rho}, \rho_{\text{out}}\right) = 1 - \bra{\psi_{\text{out}}}\tilde{\rho}\ket{\psi_{\text{out}}}.
\end{equation}
\end{definition}

Of course, we should call a protocol optimal if it minimizes the infidelity. Furthermore, since we are most interested in the large-$N$ regime, it also makes sense to define a weaker, but arguably more useful, notion of optimality, where we only require the infidelity to be minimized up to some polynomial error:

\begin{definition}[Absolute optimality, $p^{\text{th}}$-order optimality]
\label{def:optimality-order}
A distillation protocol $\mE^{\text{opt}}_N$ is said to be \emph{absolutely optimal} if it achieves the minimum possible infidelity for every value of $N$. A distillation protocol $\mE_N$ is said to be \emph{$p^{\text{th}}$-order optimal} if, for an absolutely optimal protocol $\mE^{\text{opt}}_N$,
\begin{equation}
    \mI(\mE_N) - \mI\left(\mE^{\text{opt}}_N\right) = o(N^{-p}).
\end{equation}
\end{definition}

One can consider these definitions to be the distillation analogues of efficiency and $p^{\text{th}}$-order efficiency in classical parameter estimation. We are primarily interested in $1^{\text{st}}$-order optimality, which is already highly nontrivial, and which will provide interesting connections to quantum information geometry.

Furthermore, most of the protocols we will study in this work have their infidelity expressible as a power series in $\frac{1}{N}$, so it makes sense to define the following:

\begin{definition}[Infidelity coefficients, infidelity factor]
\label{def:infidelity-coeffs-infidelity-factor}
Suppose a distillation protocol $\mE$ has infidelity given by the asymptotic series
\begin{equation}
    \mI(\mE) = \sum_{p=1}^{\infty}\frac{a_p}{N^p} = \frac{a_1}{N} + \frac{a_2}{N^2} + \frac{a_3}{N^3} + \cdots.
\end{equation}
We call such a series an \emph{infidelity series}. Then the coefficients $a_1, a_2, a_3, \cdots$ are called the \emph{infidelity coefficients} of $\mE$, and we write $\delta_p(\mE) = a_p$ for each $p\in\mathbb{N}$. In particular, the leading coefficient $\delta_1(\mE)$ is called the \emph{infidelity factor} of $\mE$.
\end{definition}

One can, of course, construct distillation protocols whose infidelity does not take the form of a power series in $N^{-1}$. However, as we show in Appendices \ref{appendix:asymptotic-expansion}, \ref{appendix:1st-order-optimality}, \ref{appendix:2nd-order-optimality}, and \ref{appendix:equatorial-3rd-order-optimality}, the set of protocols we will naturally consider in the large-$N$ regime will have an infidelity series.

Note that, if the optimal protocol $\mE_{\text{opt}}$ and some other protocol $\mE$ both have an infidelity series, then $\mE$ is $p^{\text{th}}$-order optimal if and only if $\delta_q(\mE) = \delta_q(\mE_{\text{opt}})$ for each $1\le q\le p$. In particular, $\mE$ is $1^{\text{st}}$-order optimal if its infidelity factor $\delta_1(\mE)$ achieves the lowest possible value.

At first glance, it seems that one should also include an initial term $a_0$ in the above asymptotic series. The reason we exclude this term is that any ``reasonable'' distillation protocol should achieve $O\left(N^{-1}\right)$ infidelity. For example, there is a measure-and-prepare protocol that achieves $\Theta\left(N^{-1}\right)$ infidelity by performing suitable single-copy measurements on the input states, computing the maximum likelihood estimator (MLE) $\hat{\phi}$ for the time parameter $\phi$, and then preparing the target pure state according to $\hat{\phi}$ \cite{Marvian2020}. In fact, any single-shot distillation problem in the resource theory of asymmetry has a measure-and-prepare protocol that achieves ``reasonable'' performance in this sense. Therefore, there is no need for us to pay attention to single-shot distillation protocols that have $\omega\left(N^{-1}\right)$ infidelity, as they are obviously worse than various well-studied measure-and-prepare protocols.

\section{Purity of Coherence}
\label{sec:PH}

It turns out that coherence distillation cannot be done perfectly, and in fact, any coherence distillation protocol $\{\mE_N\}$ must have $\Omega(N^{-1})$ infidelity. The reason can be understood by studying the following quantity:

\begin{definition}[Purity of coherence \cite{Marvian2020}]
\label{def:purity-of-coherence}
The \emph{purity of coherence} of a quantum state $\rho$ with respect to a Hamiltonian $H$ is defined to be
\begin{equation}
    P_H(\rho) = \Tr\left[\rho^2H\rho^{-1}H\right] - \Tr\left[\rho H^2\right]
\end{equation}
if $\sup(H\rho H)\in\sup(\rho)$ and $\infty$ otherwise.
\end{definition}

It may be unclear at first glance why one should care about this quantity, or how one would come up with it at all. However, one way to motivate purity of coherence is as a distance metric on the space of states $\{e^{-itH}\rho e^{+itH} \,|\, t\in\Rbb\}$ known as right logarithmic derivative (RLD) Fisher information, which we discuss further in Section \ref{sec:RLD}. Purity of coherence is also proportional to the the second derivative of certain divergence measures between $\rho$ and $e^{-itH}\rho e^{+itH}$ at $t=0$. An example of such a divergence measure is the Petz-R\'{e}nyi $\alpha$-divergence with $\alpha=2$, given by $D_2(\rho \,||\, \sigma) = \ln\text{Tr}\left[\rho^2\sigma^{-1}\right]$. Therefore, purity of coherence is one way to quantify how rapidly a state time-evolved under the Schr\"{o}dinger equation becomes distinguishable from the original state. The interested reader is encouraged to peruse Supplementary Note 2 of \cite{Marvian2020} for a more detailed exposition on the properties of this quantity.

The most crucial fact about purity of coherence is that it is a measure of asymmetry with respect to time translation. In particular, it is zero for all incoherent states (those that commute with the Hamiltonian), positive for all coherent states (those that do not commute with the Hamiltonian), and non-increasing under TI channels:

\begin{theorem}[Monotonicity of purity of coherence under TI channels \cite{Marvian2020}]
\label{thm:PH-mononicity}
Suppose a TI channel $\mE_{\text{TI}}$ maps a system with Hamiltonian $H_{\text{in}}$ to a system with Hamiltonian $H_{\text{out}}$. Then for all possible input states $\rho$,
\begin{equation}
    P_{H_{\text{out}}}\left(\mE_{\text{TI}}(\rho)\right) \le P_{H_{\text{in}}}(\rho).
\end{equation}
\end{theorem}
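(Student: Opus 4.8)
The plan is to recognize the purity of coherence as a quantum Fisher information of the orbit generated by the dynamics, and then reduce the claim to a data-processing inequality. First I would introduce the smooth family $\rho(t) = e^{-itH}\rho e^{+itH}$ with $\rho(0)=\rho$, whose velocity at the origin is $\dot\rho(0) = -i[H,\rho]$. Defining the right logarithmic derivative $L$ by $\dot\rho(0) = \rho L$ gives $L = -i(\rho^{-1}H\rho - H)$, and the RLD Fisher information is $J_{\text{RLD}} = \Tr[\rho L L^\dagger]$. A direct expansion, using $L^\dagger = i(\rho H\rho^{-1}-H)$ and cyclicity of the trace, should yield
\begin{equation}
\Tr[\rho L L^\dagger] = \Tr[\rho^2 H\rho^{-1}H] - \Tr[\rho H^2] = P_H(\rho),
\end{equation}
so that the purity of coherence is exactly the RLD Fisher information of the orbit at $t=0$. (The finiteness condition $\mathrm{supp}(H\rho H)\subseteq\mathrm{supp}(\rho)$ in Definition~\ref{def:purity-of-coherence} is precisely the condition for $L$ to exist.) Equivalently, one checks $P_H(\rho) = \tfrac12\,\partial_t^2\, D_2(\rho\,\|\,\rho(t))\big|_{t=0}$ with $D_2(\rho\|\sigma)=\ln\Tr[\rho^2\sigma^{-1}]$, which furnishes a second, divergence-based route.

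Next I would use time-translation invariance to turn the channel into an intertwiner of the two orbits. Writing $\sigma=\mE_{\text{TI}}(\rho)$ and $\sigma(t)=e^{-itH_{\text{out}}}\sigma e^{+itH_{\text{out}}}$, the defining property in Definition~\ref{def:TI-channel} gives
\begin{equation}
\mE_{\text{TI}}\big(\rho(t)\big) = e^{-itH_{\text{out}}}\mE_{\text{TI}}(\rho)e^{+itH_{\text{out}}} = \sigma(t)
\end{equation}
for every $t$. Thus applying the single fixed channel $\mE_{\text{TI}}$ to the input orbit produces exactly the output orbit, parametrized by the same $t$.

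The final step would be data processing. Since the RLD Fisher information is non-increasing when a quantum channel is applied to a parametrized family, and $\mE_{\text{TI}}$ sends $\rho(t)\mapsto\sigma(t)$, the Fisher information of $\sigma(t)$ at $t=0$ is at most that of $\rho(t)$, i.e.\ $P_{H_{\text{out}}}(\sigma)\le P_{H_{\text{in}}}(\rho)$. To keep the argument self-contained I would instead use the divergence route: data processing for the Petz--R\'enyi divergence gives $D_2(\sigma\|\sigma(t)) \le D_2(\rho\|\rho(t))$ for all $t$, and both sides vanish and are minimized at $t=0$, so $g(t)=D_2(\rho\|\rho(t))-D_2(\sigma\|\sigma(t))$ is nonnegative with $g(0)=0$, forcing $g''(0)\ge 0$ and hence $P_{H_{\text{out}}}(\sigma)\le P_{H_{\text{in}}}(\rho)$ after applying the identity of the first step on both sides. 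The case $P_{H_{\text{in}}}(\rho)=\infty$ is trivially consistent with the inequality, so it suffices to treat finite inputs.

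I expect the main obstacle to be the rigorous justification of the data-processing step rather than the algebra. Two points need care: establishing (or cleanly citing) monotonicity of the RLD Fisher information / of $D_2$ under CPTP maps, which rests on operator-convexity arguments and, for $D_2$, on the fact that $\alpha=2$ still lies in the range where the Petz--R\'enyi divergence contracts; and the interchange of the channel with the $t$-derivatives together with the support bookkeeping, i.e.\ ensuring $\mathrm{supp}(\sigma(t))$ behaves well enough that $D_2(\sigma\|\sigma(t))$ is finite near $t=0$ and the second-derivative comparison is legitimate. Once these analytic points are secured, the statement follows immediately from the chain: Fisher-information identity, covariant intertwining, data processing.
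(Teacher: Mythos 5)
Your proposal is correct and takes essentially the same route as the paper: the paper justifies this theorem (cited from Marvian, 2020) precisely by noting that $P_H$ is proportional to the second derivative at $t=0$ of the Petz--R\'enyi divergence $D_2\left(\rho \,||\, e^{-itH}\rho e^{+itH}\right)$ (equivalently, the RLD Fisher information of the orbit), and that monotonicity then follows from the data-processing inequality for $D_2$ combined with the covariance property $\mE_{\text{TI}}\left(\rho(t)\right) = \sigma(t)$. Your algebraic identification of $P_H(\rho)$ with $\Tr[\rho L L^\dagger]$ and with $\tfrac{1}{2}\partial_t^2 D_2\left(\rho\,||\,\rho(t)\right)\big|_{t=0}$ checks out, and the support condition $\sup(H\rho H)\subseteq\sup(\rho)$ indeed makes the orbit's support constant so the second-derivative comparison is legitimate.
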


The monotonicity of purity of coherence under TI channels can be seen as an immediate consequence of the monotonicity of various divergence measures (including the Petz-R\'{e}nyi $\alpha$-divergence with $\alpha=2$) under general quantum channels.

Other than monotonicity under TI channels, the most important property of purity of coherence is that it is infinite for pure coherent states, and it can be shown to be very large for all states with sufficiently high fidelity relative to a pure coherent state \cite{Marvian2020}. The result is a lower bound proved by \cite{Marvian2020} on the infidelity factor of any coherence distillation protocol, which we restate here for convenience:

\begin{theorem}[General lower bound on infidelity factor \cite{Marvian2020}]
\label{thm:infidelity-factor-lower-bound-general}
Any single-shot distillation protocol $\mE$ with input state $\rho_{\text{in}} = \rho^{\otimes N}$ and target state $\rho_{\text{out}} = \ket{\psi_{\text{out}}}\bra{\psi_{\text{out}}}$ must have infidelity factor
\begin{equation}
    \delta_1(\mE) \ge \frac{V_{H_{\text{out}}}\left(\ket{\psi_{\text{out}}}\bra{\psi_{\text{out}}}\right)}{P_H(\rho)}.
\end{equation}
\end{theorem}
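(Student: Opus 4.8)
The plan is to sandwich the purity of coherence of the output state $\tilde\rho = \mE(\rho^{\otimes N})$ between an upper bound from monotonicity and a lower bound that exploits the divergence of $P_H$ near pure states. Writing $\eps = \mI(\mE)$ for the infidelity (which depends on $N$), the two bounds will read
\begin{equation}
    \frac{V_{H_{\text{out}}}(\ketbra{\psi_{\text{out}}}{\psi_{\text{out}}})}{\eps}\bigl(1 - o(1)\bigr) \;\le\; P_{H_{\text{out}}}(\tilde\rho) \;\le\; N\,P_H(\rho),
\end{equation}
where the left inequality holds as $\eps \to 0$. Rearranging yields $N\eps \ge V_{H_{\text{out}}}(\ketbra{\psi_{\text{out}}}{\psi_{\text{out}}})/P_H(\rho)\cdot(1-o(1))$, and letting $N\to\infty$ extracts $\delta_1(\mE) = \lim_N N\eps$ and gives the claim. (If $P_H(\rho)=\infty$ the bound is vacuous, so one may assume $\rho$ has full support.)

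The upper bound combines two facts. First, $P_H$ is additive on product states: since $H_{\text{in}}$ is a sum of identical local terms, $e^{-itH_{\text{in}}}\rho^{\otimes N}e^{+itH_{\text{in}}} = (e^{-itH}\rho e^{+itH})^{\otimes N}$ is a product family, and $P_H$ is proportional to the second $t$-derivative at $t=0$ of the Petz--R\'enyi divergence $D_2$ between the state and its time translate; because $D_2$ is additive on tensor products, so is $P_H$, giving $P_{H_{\text{in}}}(\rho^{\otimes N}) = N\,P_H(\rho)$. Second, Theorem~\ref{thm:PH-mononicity} gives $P_{H_{\text{out}}}(\tilde\rho) \le P_{H_{\text{in}}}(\rho^{\otimes N})$. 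Chaining these produces the right-hand inequality.

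The left-hand inequality is the crux. Working in the eigenbasis $\tilde\rho = \sum_i p_i\ketbra{i}{i}$, a short computation recasts the purity of coherence as
\begin{equation}
    P_{H_{\text{out}}}(\tilde\rho) = \sum_{i<j}\frac{(p_i-p_j)^2(p_i+p_j)}{p_i\,p_j}\,\lvert H_{ij}\rvert^2, \qquad H_{ij}=\bra{i}H_{\text{out}}\ket{j}.
\end{equation}
As $\tilde\rho$ approaches a pure state, one eigenvalue $p_1\to 1$ while the rest tend to $0$, and the terms pairing the dominant eigenvector with the small ones behave like $\lvert H_{1j}\rvert^2/p_j$, which diverges. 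I would then relate $\sum_{j\ne 1}p_j$ to the infidelity $\eps$ and $\sum_{j\ne 1}\lvert H_{1j}\rvert^2$ to the variance $V_{H_{\text{out}}}(\ketbra{\psi_{\text{out}}}{\psi_{\text{out}}})$, and close the estimate with Cauchy--Schwarz, $\bigl(\sum_j \lvert H_{1j}\rvert^2/p_j\bigr)\bigl(\sum_j p_j\bigr)\ge\bigl(\sum_j\lvert H_{1j}\rvert\bigr)^2\ge\sum_j\lvert H_{1j}\rvert^2$, which delivers $P_{H_{\text{out}}}(\tilde\rho)\,\eps\ge V_{H_{\text{out}}}(\ketbra{\psi_{\text{out}}}{\psi_{\text{out}}})(1-o(1))$.

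The main obstacle is making this last step uniform over all output states of a given infidelity, rather than only for the idealized configuration in which the dominant eigenvector is exactly aligned with $\ket{\psi_{\text{out}}}$ and the infidelity is carried entirely by the small eigenvalues. Two effects must be controlled: the misalignment between the dominant eigenvector and the target (which splits $\eps$ into a ``tilt'' part and a ``mixedness'' part), and the $O(p_j)$ and $O(\eps)$ corrections dropped in the spectral estimate. It is essential here that the bound uses the RLD quantity $P_H$ rather than a weaker surrogate: the analogous inequality \emph{fails} for the SLD Fisher information, whose product with the infidelity tends to zero near a pure state, so the $1/p_j$ blow-up specific to the RLD is exactly what forces the $1/N$ scaling.
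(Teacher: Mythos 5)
Your proposal is correct in outline and follows essentially the same route as the paper's proof: sandwich $P_{H_{\text{out}}}(\tilde\rho)$ between the monotonicity-plus-additivity upper bound $N\,P_H(\rho)$ and a lower bound that blows up like variance over infidelity as the output approaches the target, then compare coefficients of $N$. The one substantive difference is how the crux is handled. The paper does not prove the near-pure-state bound at all: it imports it as Lemma \ref{lem:PH-mixed-near-pure} (Supplementary Note 3, Lemma 2, Corollary 1 of \cite{Marvian2020}), namely $P_H(\sigma)\ge V_H(\ket{\phi}\bra{\phi})\left[(1-\delta)^2/\delta - 1\right]$, and simply substitutes $\delta=\mI(\mE)=\delta_1(\mE)/N+o(N^{-1})$; that packaged inequality already contains, uniformly, exactly the alignment control you flag as your main obstacle. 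Your inline sketch of that lemma uses the right mechanism --- your spectral formula for $P_H$ is correct, and Cauchy--Schwarz against the small eigenvalues produces the $1/\delta$ divergence --- and the gap you name is real but routine to close: $1-\delta=\bra{\psi_{\text{out}}}\tilde\rho\ket{\psi_{\text{out}}}\le p_1$ forces $\sum_{j\ne 1}p_j\le\delta$ and $\abs{\bracket{\psi_{\text{out}}}{1}}^2\ge 1-2\delta$, so the variance of the dominant eigenvector differs from $V_{H_{\text{out}}}(\ket{\psi_{\text{out}}}\bra{\psi_{\text{out}}})$ only by $O(\sqrt{\delta})$, which is absorbed into your $(1-o(1))$ factor; since the theorem concerns only the leading $1/N$ coefficient, this suffices. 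A minor plus on your side: you justify the additivity $P_{H_{\text{in}}}(\rho^{\otimes N})=N\,P_H(\rho)$ via additivity of the Petz--R\'enyi divergence $D_2$, a step the paper's proof uses silently.
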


Applying this theorem to the problem of single-shot qubit coherence distillation, as defined in Definition \ref{def:single-shot-qubit-protocol}, yields the following lower bound on the infidelity factor:

\begin{theorem}[Lower bound on infidelity factor for qubit coherence distillation]
\label{thm:infidelity-factor-lower-bound-qubit}
Any $N\rightarrow 1$ distillation protocol $\mE$ that maps input qubit states with polar angle $\Theta_{\text{in}}$ and purity parameter $\lambda$ to output qubit states with polar angle $\Theta_{\text{out}}$ must have infidelity factor
\begin{equation}
    \delta_1(\mE) \ge \frac{1-\lambda^2}{4\lambda^2}\frac{\sin^2\Theta_{\text{out}}}{\sin^2\Theta_{\text{in}}}.
\end{equation}
\end{theorem}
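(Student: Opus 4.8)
The plan is to invoke the general lower bound of \Cref{thm:infidelity-factor-lower-bound-general} and evaluate its two ingredients — the variance $V_{H_{\text{out}}}$ of the output Hamiltonian in the target pure state, and the purity of coherence $P_H(\rho)$ of a single input qubit — for the concrete Bloch-vector parametrization of \Cref{def:single-shot-qubit-protocol}. Since \Cref{thm:infidelity-factor-lower-bound-general} already packages the substantive content (additivity of purity of coherence over the $N$ copies, its monotonicity under the TI channel via \Cref{thm:PH-mononicity}, and the fact that a state with high fidelity to a pure coherent state must have large purity of coherence), the work that remains is a direct Pauli-algebra computation.

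First I would compute the numerator, reading $V_{H_{\text{out}}}$ as the ordinary variance $\Tr[\rho H^2]-(\Tr[\rho H])^2$. The target state $\ket{\psi_{\text{out}}}$ has Bloch polar angle $\Theta_{\text{out}}$, so $\langle Z_{\text{out}}\rangle=\cos\Theta_{\text{out}}$ while $\langle Z_{\text{out}}^2\rangle=1$ because $Z^2=\mathbb{I}$. Hence
\begin{equation}
V_{Z_{\text{out}}}\!\left(\ketbra{\psi_{\text{out}}}{\psi_{\text{out}}}\right)=1-\cos^2\Theta_{\text{out}}=\sin^2\Theta_{\text{out}},
\end{equation}
which is independent of the phase $e^{2it}$, as it must be for a time-translation-covariant quantity.

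Next I would evaluate the denominator $P_Z(\rho)$ for the single input qubit $\rho=\tfrac12(\mathbb{I}+\lambda_x X+\lambda_z Z)$ with $\lambda_x=\lambda\sin\Theta_{\text{in}}$ and $\lambda_z=\lambda\cos\Theta_{\text{in}}$. The useful intermediate quantities are the qubit inverse $\rho^{-1}=\tfrac{2}{1-\lambda^2}(\mathbb{I}-\lambda_x X-\lambda_z Z)$ (valid since $\lambda<1$ makes $\rho$ full rank, so the support condition in \Cref{def:purity-of-coherence} holds automatically and the finite branch applies), the square $\rho^2=\tfrac14\big((1+\lambda^2)\mathbb{I}+2\lambda_x X+2\lambda_z Z\big)$, and the conjugation $Z\rho^{-1}Z=\tfrac{2}{1-\lambda^2}(\mathbb{I}+\lambda_x X-\lambda_z Z)$, where the sign flip on the $X$ term is precisely the anticommutation identity $ZXZ=-X$. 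Taking $\Tr\!\left[\rho^2\,Z\rho^{-1}Z\right]$ using $\Tr[\mathbb{I}]=2$ together with the orthogonality of the Paulis, subtracting $\Tr[\rho Z^2]=1$, and finally substituting $\lambda^2=\lambda_x^2+\lambda_z^2$ to collapse the $\lambda_z$ dependence, gives
\begin{equation}
P_Z(\rho)=\frac{4\lambda_x^2}{1-\lambda^2}=\frac{4\lambda^2\sin^2\Theta_{\text{in}}}{1-\lambda^2}.
\end{equation}

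Substituting both expressions into \Cref{thm:infidelity-factor-lower-bound-general} then yields
\begin{equation}
\delta_1(\mE)\ge\frac{\sin^2\Theta_{\text{out}}}{\,4\lambda^2\sin^2\Theta_{\text{in}}/(1-\lambda^2)\,}=\frac{1-\lambda^2}{4\lambda^2}\,\frac{\sin^2\Theta_{\text{out}}}{\sin^2\Theta_{\text{in}}},
\end{equation}
as claimed. There is no genuine conceptual obstacle here; the only step requiring care is the evaluation of $P_Z(\rho)$, where one must correctly apply $ZXZ=-X$ and the full-rank qubit inverse formula, and verify that the finiteness (support) condition of \Cref{def:purity-of-coherence} is met — which holds for all $0<\lambda<1$, while the boundary case $\lambda=1$ makes both the right-hand side vanish and $P_Z(\rho)$ infinite, rendering the bound trivially true.
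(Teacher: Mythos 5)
Your proof is correct and follows essentially the same route as the paper: apply the general bound of \Cref{thm:infidelity-factor-lower-bound-general} with $V_{Z}\left(\ket{\psi_{\text{out}}}\bra{\psi_{\text{out}}}\right)=\sin^2\Theta_{\text{out}}$ and $P_Z(\rho)=\frac{4\lambda^2\sin^2\Theta_{\text{in}}}{1-\lambda^2}$. The only difference is that the paper cites the qubit purity-of-coherence formula from Ref.~\cite{Marvian2020}, whereas you derive it directly by Pauli algebra (correctly, including the full-rank/support caveat), which makes your version self-contained but not conceptually distinct.
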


The proofs for Theorem \ref{thm:infidelity-factor-lower-bound-general} and Theorem \ref{thm:infidelity-factor-lower-bound-qubit} are provided in Appendix \ref{appendix:infidelity-lower-bound}.

\section{Main Results}
\label{sec:main-results}

The primary achievement of our work is that we find a first-order optimal single-shot qubit coherence distillation protocol and demonstrate that it has infidelity
\begin{equation}
    \mI(\mE_N) = \frac{1-\lambda^2}{4\lambda^2}\frac{\sin^2\Theta_{\text{out}}}{\sin^2\Theta_{\text{in}}}\frac{1}{N} + O\left(N^{-2}\right),
\end{equation}
meaning that its infidelity factor saturates the lower bound set by the monotonicity of purity of coherence. This result establishes an operational interpretation of the purity of coherence, and hence RLD Fisher information.

Furthermore, beyond first-order optimality, we additionally study the single-shot qubit coherence distillation problem in much more extensive detail.

First, we show how to compute absolutely optimal coherence distillation protocols for arbitrary parameter values. In fact, in the equatorial special case with odd $N$, we can actually find the absolutely optimal protocol analytically. As a result, we have the ability to study exactly optimal distillation protocols in much more detail than what the leading-order analyses would allow.

Second, we extend our analysis to solve the coherence distillation problem to $2^{\text{nd}}$-order optimality as well, and we even solve the equatorial special case to $3^{\text{rd}}$-order optimality. These analyses make this, to our knowledge, the first problem in the resource theory of asymmetry for which higher-order calculations have been explicitly carried out. We additionally interpret the $2^{\text{nd}}$-order optimal fidelity in the equatorial case as resulting from some small but nonzero wastage of purity of coherence.

Third, in the equatorial special case, we demonstrate how the optimal coherence distillation protocol behaves in the low-noise limit by treating it perturbatively around $\lambda=1$. In particular, we compute the second-order perturbation series in the parameter $c_0 = \frac{1-\lambda}{2}$, which denotes the infidelity of the input qubits with the target qubit state.

\section{Optimal Distillation Protocol}
\label{sec:optimal-protocol}

We now explain how we derived the optimal distillation protocol and showed that it asymptotically saturates the bound set by the monotonicity of purity of coherence. The first step, as described in Subsection \ref{subsec:preprocessing-schur-transform}, is to exploit a procedure known as Schur sampling \cite{Cirac1999, harrow2005applications}, which is reversible for permutation-invariant states such as $\rho^{\otimes N}$, to restrict the set of channels we need to consider to a collection with only $\Theta(N)$ real parameters. The second step, as described in Subsection \ref{subsec:boundary-value-problem}, is to write out the infidelity as a function of these parameters and use standard calculus to derive a boundary value problem that the parameters satisfy for the optimal protocol. This boundary value problem can then be solved numerically to derive an exactly optimal protocol for any single-shot qubit coherence distillation protocol, and it can be used to gain insight into how the optimal protocol behaves in the asymptotic regime. The third step, as described in Subsection \ref{subsec:concentration-power-series}, is to write out the parameters using power series in $N^{-1}$ and then use asymptotic properties of the distilled multi-qubit states obtained from Schur sampling \cite{Cirac1999} to do an order expansion and solve for those parameters sequentially.

\subsection{Preprocessing via the Schur Transform}
\label{subsec:preprocessing-schur-transform}

A powerful primitive that considerably reduces the space of protocols we need to consider is the Schur transform \cite{harrow2005applications, Cirac1999}. This transformation allows us to exploit the permutation symmetry of $\rho^{\otimes N}$. Recall that by Schur–Weyl duality, the $N$-qubit Hilbert space decomposes under the joint action of the unitary and permutation groups as
\begin{align}
    \left(\Cbb^2\right)^{\otimes N} &\cong \bigoplus_{j}\Cbb^{2j+1}\otimes\Cbb^{d(J,j)},
\end{align}
where $j = J, J-1, J-2, \cdots $ with $J=\tfrac{N}{2}$, $\Cbb^{2j+1}$ denotes the irreducible representations (irreps) of $SU(2)$, and $\Cbb^{d(J,j)}$ denotes the irreps of $S_N$. The irreps of $S_N$ contain no information about the state $\rho$, and hence that information can be discarded; only the irreps of $SU(2)$ contain useful information about $\rho$. This means that one can perform a total angular momentum measurement on the state $\rho^{\otimes N}$ that yields some value $j$ (a process known as \emph{Schur sampling}), and then apply a suitably chosen unitary to put the last $N-2j$ qubits in singlet states, which carry no useful information and thus can be freely discarded \cite{Cirac1999}.

The remaining state is a state on $N_C = 2j$ qubits that lives solely in the symmetric subspace, meaning that it lives in a far smaller Hilbert space of dimension $N_C+1$. In addition, this process is reversible \cite{Cirac1999}, meaning that it never hurts to start with this protocol before doing anything else. Furthermore, the multi-qubit state emerging from this procedure (which we call a \emph{Schur-sampled state} for convenience) retains no ``memory'' of the original value of $N$; as a result, we can just treat our protocol as something we do on $N_C$ qubits after Schur sampling.

We note that it has recently been shown that the Schur transform on $N$ copies of a qubit state $\rho$ can be efficiently implemented via random SWAP tests on the copies, followed by discarding the detected singlets \cite{brahmachari2025optimal}. In particular, after approximately $N\log_2N$ SWAP tests, the procedure yields, with high probability, a remaining subsystem of $N_C = 2j$ qubits in a totally symmetric state, with the success probability approaching one in the large-$N$ limit \cite{brahmachari2025optimal}.

Once we perform Schur sampling to reduce to a state on the $N_C$-qubit symmetric subspace and incorporate a few other ``obvious'' optimizations, we can narrow down our distillation protocol dramatically. In particular, in Appendix \ref{appendix:deriving-kraus-rep}, we prove that there must be an optimal distillation protocol with Kraus operators of the form
\begin{equation}
    K_w = \cos\theta_{w-1}\ket{0}\bra{(w-1)^{(s)}} + \sin\theta_w\ket{1}\bra{w^{(s)}}
\end{equation}
for $1\le w\le N_C$, where $\theta_0 = 0$, $\theta_{N_C} = \frac{\pi}{2}$, and $\theta_1,\cdots,\theta_{N_C-1}\in\left[0,\frac{\pi}{2}\right]$. (The state $\ket{w^{(s)}}$, commonly called a Dicke state, refers to the equal superposition of all computational basis states on $N_C$ qubits with Hamming weight $w$.) Hence we have only $N_C-1$ nontrivial real parameters that we need to optimize.

For the equatorial case $\Theta_{\text{in}} = \Theta_{\text{out}} = \frac{\pi}{2}$, we can additionally impose \emph{negation (bit-flip) symmetry}. In particular, if we negate all of the input qubits, then we should also negate the output qubit:
\begin{equation}
    \mE\left(X^{\otimes N}\rho_{\text{in}}X^{\otimes N}\right) = X\mE\left(\rho_{\text{in}}\right)X.
\end{equation}
For the Kraus operators shown above, bit-flip symmetry imposes an additional condition $\theta_w + \theta_{N_C-w} = \frac{\pi}{2}$. Although this only cuts the number of nontrivial real parameters by about half (from $N_C-1$ to $\lceil N_C/2\rceil-1$), it does have some unique consequences that substantially simplify the problem, which is why we give the equatorial case special attention throughout the paper.

\subsection{The First-Order Optimal Protocol}
\label{subsec:first-order-optimal-example}

In Appendix \ref{appendix:lower-bound-saturation}, we show that there exists a coherence distillation protocol that saturates the bound set by purity of coherence. In particular, we use the convenient form for the Kraus operators derived in Appendix \ref{appendix:deriving-kraus-rep}, which depend on parameters $\theta_w$ for $1\le\theta_w\le N_C-1$. We can consider the protocol whose $\theta_w$ values satisfy the following:
\begin{equation}
    \sin^2\theta_w = \frac{1-\cos\Theta_{\text{out}}}{2} + \lambda\frac{\sin^2\Theta_{\text{out}}}{\sin^2\Theta_{\text{in}}}\left(\frac{w}{N_C} - \frac{1-\cos\Theta_{\text{in}}}{2}\right).
\end{equation}
In Appendix \ref{appendix:lower-bound-saturation}, we show that this protocol achieves the lowest possible infidelity factor, namely
\begin{equation}
    \delta_1(\mE) = \frac{1-\lambda^2}{4\lambda^2}\frac{\sin^2\Theta_{\text{out}}}{\sin^2\Theta_{\text{in}}}.
\end{equation}
This confirms that the monotonicity of purity of coherence under TI channels sets the tightest possible bound on the leading-order infidelity of single-shot coherence distillation for qubits, and thereby imbues purity of coherence with operational significance.

Of course, this choice of $\sin^2\theta_w$ values may seem quite unmotivated. However, in the following subsections, we explain the systematic optimization method that allowed us to find this protocol.

\subsection{Boundary Value Problem}
\label{subsec:boundary-value-problem}

Once we have the Kraus representation shown at the end of Subsection \ref{subsec:preprocessing-schur-transform}, we need to optimize the values of $\theta_w$ for $1\le w\le N_C$. As we show in Appendix \ref{appendix:boundary-value-problem}, one can use ordinary multivariable calculus to find optimality conditions, but the solution only writes each $\theta_w$ value in terms of its neighbors $\theta_{w-1}$ and $\theta_{w+1}$. These three-angle relations, combined with the boundary conditions $\theta_0 = 0$ and $\theta_{N_C} = \frac{\pi}{2}$, produce a \emph{boundary value problem}.

Fortunately, this boundary value problem can be solved numerically to provide a wealth of insight into specific instances of this problem. There are also many other interesting studies one can carry out. For example, in the equatorial case with $N_C$ odd, the boundary value problem can actually be solved exactly, as we show in Appendix \ref{appendix:boundary-value-problem}\ref{subsec:solve-recurrence-odd-N-equatorial}. Furthermore, the boundary value problem can be solved perturbatively in the low-noise regime $\lambda\approx 1$, as we show in Appendix \ref{appendix:perturbative-protocols}.

Unfortunately, it is hard to use this boundary value problem to prove properties of an optimal or approximately optimal protocol in the asymptotic regime of large $N$. In order to perform the type of asymptotic analysis that revealed to us the $1^{\text{st}}$-order protocol stated in Subsection \ref{subsec:first-order-optimal-example}, we need to use a different method.

\subsection{Concentration Argument and Power Series Expansion}
\label{subsec:concentration-power-series}

To estimate the optimal distillation protocol in the asymptotic regime, we define $x=\frac{w}{N_C}$ for convenience and write $\sin^2\theta_w$ as a function $f(x)$ for some function $f:[0,1]\rightarrow[0,1]$. We then expand $f(x)$ as a power series about $x = x_0 \coloneqq \frac{1-\cos\Theta_{\text{in}}}{2}$, because the $P_{w-1,w}$ and $P_{w,w}$ values that enter the fidelity formula are concentrated heavily around $x = x_0$. In particular, the $k^{\text{th}}$ derivative $f^{(k)}(x_0)$ cannot affect $\delta_p(\mE)$ unless $p\ge\frac{k}{2}$; and furthermore, we find that the conditions for optimality do not involve the $k^{\text{th}}$ derivative until we reach $p=k$. This explains why taking $\sin^2\theta_w$ as a suitably chosen linear function of $w$ is sufficient to achieve $1^{\text{st}}$-order optimality. This also allows us to optimize the derivatives sequentially to find distillation protocols that are optimal to any order we wish. (For $2^{\text{nd}}$-order optimality and beyond, we have to consider the fact that the derivatives themselves could have correction terms that decay with $N_C$. Nonetheless, the essential optimization strategy remains the same.)

In Appendix \ref{appendix:asymptotic-expansion}, we explain the more general strategy that we use to sequentially carry out the asymptotic analysis at each order of optimality. In Appendix \ref{appendix:1st-order-optimality}, we show the calculations for the zeroth derivative $f(x_0)$ and first derivative $f'(x_0)$, which are all we need to compute a $1^{\text{st}}$-order optimal distillation protocol to demonstrate the operational significance of purity of coherence and RLD Fisher information. We find that the optimal choices are
\begin{equation}
    f(x_0) = \frac{1-\cos\Theta_{\text{out}}}{2}, \quad f'(x_0) = \lambda\frac{\sin^2\Theta_{\text{out}}}{\sin^2\Theta_{\text{in}}},
\end{equation}
which is exactly how we obtain the choice of $\sin^2\theta_w$ that we presented in Subsection \ref{subsec:first-order-optimal-example}.

In Appendix \ref{appendix:2nd-order-optimality}, we find the conditions for $2^{\text{nd}}$-order optimality in the general case. Finally, in Appendix \ref{appendix:equatorial-3rd-order-optimality}, we extend the analysis of the equatorial case to $3^{\text{rd}}$-order optimality.

\section{Connection to RLD Fisher Information}
\label{sec:RLD}

Purity of coherence is closely linked to a distance metric on quantum states known as the right logarithmic derivative (RLD) Fisher information. More precisely, the RLD Fisher information squared distance between $\rho$ and its time-evolved version $e^{-itH}\rho e^{+itH}$ equals $P_H(\rho)t^2 + O(t^4)$ for small values of $t$ \cite{Marvian2020}. As a result, the operational interpretation of purity of coherence can be understood from the properties of RLD Fisher information, which we explain in this section.

In classical information geometry, the Fisher information metric defines a Riemannian metric on the space of probability distributions. By Chentsov's theorem \cite{Chentsov1982}, Fisher information is the unique (up to global scaling) Riemannian metric on probability distributions that is non-increasing under stochastic maps. This imbues Fisher information with great significance in the field of classical parameter estimation, since parameter estimation is all about distinguishing a continuum of different probability distributions using a number of samples. Most notably, the Cram\'{e}r-Rao bound states that the covariance matrix of an unbiased estimator is greater than or equal to the inverse of the Fisher information matrix, a fact that arises from the monotonicity of the Fisher information metric (though it is usually not presented this way).

Generalizing Chentsov's theorem to quantum states would mean finding all Riemannian metrics on density operators that are non-increasing under quantum channels. However, instead of just one metric, one finds a whole family of monotone Riemannian metrics. These metrics were fully classified by Morozova, Chentsov, and Petz \cite{Morozova1991,Petz1996}, whose primary result we restate here for convenience:

\begin{theorem}[Morozova-Chentsov-Petz theorem \cite{Morozova1991,Petz1996}]
\label{thm:morozova-chentsov-petz}
A Riemannian metric on density operators is monotone (that is, non-increasing under quantum channels) if and only if, in the neighborhood of a diagonalized density matrix $\rho = \sum_{i}p_i\ket{i}\bra{i}$, it takes the form
\begin{equation}
    ds^2 = \sum_{i}\frac{d\rho_{ii}^2}{p_i} + \sum_{i\neq j}\frac{\abs{d\rho_{ij}}^2}{m_f(p_i,p_j)},
\end{equation}
where $m_f(p_i,p_j) = p_jf(p_i/p_j)$ and $f:[0,\infty)\rightarrow[0,\infty)$ is a Morozova-Chentsov (MC) function, meaning that it satisfies the following three properties:
\begin{itemize}
    \item normalized: $f(1) = 1$
    \item self-inverse: $f(t) = tf(1/t)$
    \item operator monotone: for any two positive semidefinite matrices $A$ and $B$, $A\ge B$ implies $f(A)\ge f(B)$.
\end{itemize}
These monotone Riemannian metrics are commonly referred to as \emph{quantum Fisher information (QFI) metrics}.
\end{theorem}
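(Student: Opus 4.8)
The plan is to prove both directions of the equivalence through the superoperator representation of the metric. First I would record that a Riemannian metric assigns to each full-rank $\rho$ a real inner product $K_\rho$ on the tangent space of Hermitian (trace-zero) matrices, and that any such inner product can be written as $K_\rho(A,B) = \Tr[A^\dagger\,\Omega_\rho(B)]$ for a unique positive superoperator $\Omega_\rho$. Monotonicity under a channel $\mE$ is then equivalent to the superoperator inequality $\mE^\dagger\,\Omega_{\mE(\rho)}\,\mE \preceq \Omega_\rho$, where $\mE^\dagger$ is the Hilbert--Schmidt adjoint. The whole theorem becomes a statement about which superoperators $\Omega_\rho$ can satisfy this inequality for every channel.

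The second step is to extract the structural form of $\Omega_\rho$ using only the reversible channels. Since a unitary conjugation and its inverse are both channels, monotonicity in both directions forces exact unitary covariance, $\Omega_{U\rho U^\dagger}(UAU^\dagger) = U\,\Omega_\rho(A)\,U^\dagger$. Specializing to diagonal unitaries shows that, in the eigenbasis $\rho=\sum_i p_i\ketbra{i}{i}$, the superoperator $\Omega_\rho$ is diagonal in the matrix-unit basis, so $\Omega_\rho(E_{ij}) = c(p_i,p_j)\,E_{ij}$ for some symmetric function $c$; equivalently $\Omega_\rho = c(L_\rho,R_\rho)$ in the commuting left/right multiplication operators ($L_\rho X = \rho X$, $R_\rho X = X\rho$). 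Restricting $\rho$ and the tangent vector to be simultaneously diagonal reduces the problem to a classical metric on the probability simplex, and monotonicity under stochastic (classical) maps together with Chentsov's theorem pins the diagonal block down to the classical Fisher form $\sum_i d\rho_{ii}^2/p_i$, which both normalizes $c$ so that $f(1)=1$ and identifies the diagonal contribution in the claimed formula. Writing $c(x,y)=1/m_f(x,y)$ with $m_f(x,y)=y\,f(x/y)$, the symmetry $c(x,y)=c(y,x)$ is exactly the self-inverse condition $f(t)=t\,f(1/t)$.

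The crux is the remaining claim that $f$ is operator monotone if and only if the metric is monotone, which I would handle by recognizing $m_f$ as a candidate Kubo--Ando operator mean. For sufficiency, assuming $f$ operator monotone makes $m_f$ a genuine operator mean, and I would reduce a general channel to an isometric dilation followed by a partial trace via Stinespring; the isometry step is handled by the covariance already established, and the partial-trace step follows from the Kubo--Ando transformer inequality for operator means, which yields precisely $\mE^\dagger\,\Omega_{\mE(\rho)}\,\mE \preceq \Omega_\rho$ after translating the mean inequality through the left/right multiplication calculus. For necessity I would run this in reverse: applying monotonicity to a well-chosen family of partial-trace channels on block-structured states forces $m_f$ to obey the transformer inequality, which in turn forces $f$ to be operator monotone.

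I expect the operator-monotone equivalence to be the main obstacle, for two reasons. On the sufficiency side, the delicate point is converting the transformer inequality, a statement about operator means of two positive operators, into the superoperator inequality for $\Omega_\rho$, because $L_\rho$ and $R_\rho$, while commuting, must be handled simultaneously through the joint functional calculus and interact nontrivially with the non-unital adjoint $\mE^\dagger$. On the necessity side, the challenge is designing channels rich enough to recover full operator monotonicity of $f$ rather than merely ordinary scalar monotonicity; this is where the Kubo--Ando characterization of operator means does the essential work, and any self-contained proof must either invoke it or reconstruct the relevant part of it.
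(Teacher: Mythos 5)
The paper does not actually prove this statement: Theorem \ref{thm:morozova-chentsov-petz} is imported from the literature (the paper's Refs.~\cite{Morozova1991,Petz1996}) purely as background for the discussion of RLD Fisher information in Section \ref{sec:RLD}, so there is no in-paper proof to compare yours against. Judged against the standard proof in those references, your sketch reconstructs the correct architecture: the superoperator reformulation $K_\rho(A,B)=\Tr[A^\dagger\Omega_\rho(B)]$ with monotonicity expressed as $\mE^\dagger\Omega_{\mE(\rho)}\mE\preceq\Omega_\rho$, unitary covariance to reduce $\Omega_\rho$ to a function of the left/right multiplication operators, Chentsov's theorem for the diagonal block (fixing the normalization $f(1)=1$), the symmetry of the coefficient function as the self-inverse condition, and the Kubo--Ando operator-mean machinery with the transformer inequality as the engine of both directions. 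This is essentially Petz's route.

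Two technical gaps are worth flagging. First, diagonal-unitary covariance alone only gives $\Omega_\rho(E_{ij})\propto E_{ij}$ for $i\neq j$; it does not give uniformity of the coefficient. To obtain a single function $c(p_i,p_j)$ of the eigenvalues alone --- the same function for all index pairs, all spectra, and all dimensions --- you also need permutation unitaries, embeddings into larger systems via ancilla-appending and ancilla-discarding channels, and a continuity argument, with degenerate spectra handled separately. Second, your Stinespring reduction for sufficiency has a rank problem: the dilated state $V\rho V^\dagger$ is never full rank, while the metric and $\Omega_\rho$ are defined only at faithful states, so the claim that ``the isometry step is handled by covariance'' does not literally parse. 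The standard fix avoids dilation entirely: prove $\mE L_\rho\mE^\dagger\preceq L_{\mE(\rho)}$ and $\mE R_\rho\mE^\dagger\preceq R_{\mE(\rho)}$ directly from the Schwarz inequality for the unital CP map $\mE^\dagger$, convert metric monotonicity into the equivalent statement $\mE\, m_f(L_\rho,R_\rho)\,\mE^\dagger\preceq m_f\bigl(L_{\mE(\rho)},R_{\mE(\rho)}\bigr)$ by a Schur-complement argument, and finish with the transformer inequality plus monotonicity of operator means applied to the commuting superoperators. With those repairs, and with the Kubo--Ando correspondence invoked for the necessity direction exactly as you describe, your outline matches the known proof.
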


Notice that the first summation, which involves infinitesimal changes in the diagonal entries of $\rho$, looks exactly like the classical Fisher information metric on the space of probability distributions on a finite sample space. This is an immediate consequence of Chentsov's theorem for classical probability distributions \cite{Chentsov1982}. However, the second summation, which involves infinitesimal changes in the off-diagonal entries of $\rho$, has considerable freedom coming from the choice of MC function $f$.

Intuitively, $m_f(p_i,p_j)$ is some way of taking an average of $p_i$ and $p_j$, which is why $f$ must satisfy the normalization and self-inverse conditions. The operator monotone condition restricts MC functions to lie between a minimum MC function $f_{\text{HM}}(t) = \frac{2t}{1+t}$ and a maximum MC function $f_{\text{AM}}(t) = \frac{1+t}{2}$, where ``HM'' and ``AM'' stand for ``harmonic mean'' and ``arithmetic mean'', respectively. Since the MC function is in the denominator, the maximum MC function $f_{\text{AM}}$ yields the smallest QFI metric, known as \emph{symmetric logarithmic derivative (SLD) Fisher information}, while the minimum MC function $f_{\text{HM}}$ yields the largest QFI metric, known as \emph{right logarithmic derivative (RLD) Fisher information}. Two other notable QFI metrics are the Wigner-Yanase metric and the Kubo-Mori metric, with respective MC functions $f_{\text{WY}}(t) = \left(\frac{1+\sqrt{t}}{2}\right)^2$ and $f_{\text{KM}}(t) = \frac{t-1}{\ln t}$.

We can classify QFI metrics into two broad categories based on their behavior for rank-deficient states. If $f(0)\neq 0$, such as for the SLD and Wigner-Yanase metrics, then the QFI metric can be extended to rank-deficient states. In contrast, if $f(0)=0$, such as for the Kubo-Mori and RLD metrics, then the QFI metric becomes infinite for rank-deficient states. More specifically, a path on the manifold of rank-deficient density operators can have infinite length with respect to the latter type of QFI metric. This is a marked contrast from the classical Fisher information, since a path on the manifold of probability distributions on a finite sample space will always have finite length with respect to the classical Fisher information metric. Furthermore, the fact that some path lengths on density operators become infinite with respect to QFI metrics with $f(0)=0$ allows one to prove the impossibility of coherence distillation at a positive linear rate $r > 0$, since any of those QFI metrics on the family of input states $\rho(\theta)^{\otimes N}$ would scale as $N$, while that QFI metric on any family of states with vanishing infidelity relative to the family of $\ket{\psi(\theta)}\bra{\psi(\theta)}^{\otimes \lfloor rN\rfloor}$ would scale strictly faster than $N$.

The distinguished significance of the RLD metric comes from its status as the largest of the QFI metrics. Intuitively, to derive the strongest possible resource-theoretic bound on a quantum state conversion task, you should pick a resource that is small for the states you \textit{have} and large for the states you \textit{want}. Since coherence distillation seeks to produce pure states (or as close as possible to pure states), The RLD metric is the ideal resource to study, since among all QFI metrics, it diverges the fastest as one approaches a pure state. Any of the metrics that diverges for a pure state can be used to derive the conclusion that coherence distillation at a positive linear rate is impossible, but the RLD metric is preferred because it gives us the lowest upper bound on the sub-linear rate at which one can perform distillation with vanishing error, and conversely it gives us the highest lower bound on the infidelity that must remain in the output qubit for single-shot distillation.

\section{Discussion}
\label{sec:discussion}

Despite recent advances in the resource theory of asymmetry, e.g., in \cite{Marvian2020, tajima2022universal, Yamaguchi2024, yamaguchi2023beyond, marvian2022operational, tajima2024gibbs}, many important questions remain open. In this work, we solved an open problem posed in \cite{Marvian2020} about how well qubit clocks can be distilled. This work observed that, in the equatorial special case, the lower bound on the infidelity factor enforced by purity of coherence can be saturated at both the low-noise and high-noise limits using more well-established protocols (see Figure \ref{fig:coherence-distillation-three-graphs}). It was further conjectured that this lower bound can be saturated at all noise levels, though no proof was attempted. By establishing first-order optimality, we demonstrate that the purity of coherence bound can indeed be saturated across all noise regimes.

\begin{figure}
    \includegraphics[scale=0.39]{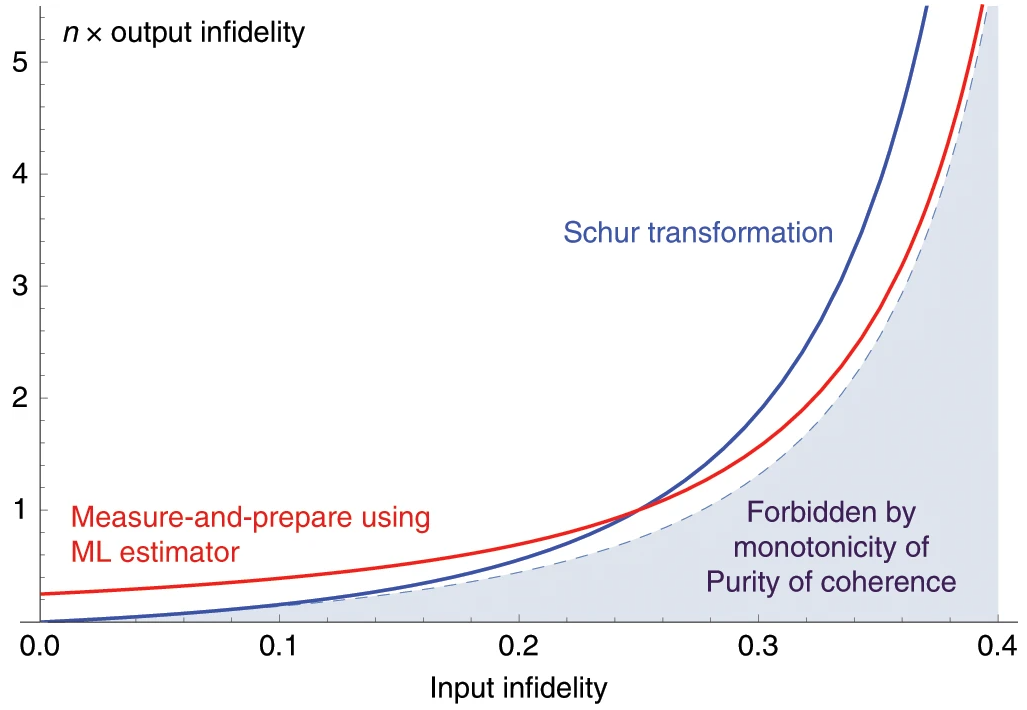}
    \caption{A comparison of the first-order infidelity of three different single-shot qubit coherence distillation protocols. The horizontal axis is the input infidelity $\frac{1-\lambda}{2}$, and the vertical axis is the infidelity factor $\delta_1(\mE)$, meaning that the output infidelity is $\mI(\mE_N) = \frac{\delta_1(\mE)}{N} + o(N^{-1})$. The blue curve $\mathcolor{blue}{\delta_1(\mE) = \frac{1-\lambda}{2\lambda^2}}$ corresponds to the qubit distillation protocol with full $SU(2)$ covariance devised by Cirac et al., which is based on the Schur transform \cite{Cirac1999}. The red curve $\mathcolor{red}{\delta_1(\mE) = \frac{1}{4\lambda^2}}$ corresponds to a TI measure-and-prepare protocol devised in \cite{Marvian2020}. Finally, the violet curve $\mathcolor{violet}{\delta_1(\mE) = \frac{1-\lambda^2}{4\lambda^2}}$ corresponds to our first-order optimal protocol. Anything below the purple curve is forbidden by the monotonicity of purity of coherence \cite{Marvian2020}. (This figure is taken from \cite{Marvian2020}.)}
    \label{fig:coherence-distillation-three-graphs}
\end{figure}

The primary importance of our work is that it provides an operational significance to purity of coherence (RLD Fisher information) as the quantity whose monotonicity sets the tightest possible leading-order bound on the performance of single-shot qubit coherence distillation. This same fact was also recently established for the problem of distilling thermal coherent states of a quantum harmonic oscillator \cite{Yadavalli2024}. We are thus increasingly confident that the coherence distillation problem will be governed at the leading order by purity of coherence for any input family of noisy coherent states and output family of pure coherent states (although this is still open). A natural direction for future study would be to find a first-order optimal coherence distillation protocol for a general input coherent state and output coherent state. Of particular interest would be whether RLD Fisher information continues to enforce the tightest possible bound on the infidelity factor. One could even study distillation problems for symmetry groups other than $U(1)$, and similarly ask whether there are natural information-theoretic considerations (QFI metrics or otherwise) that yield the minimal infidelity factor.

Although the Morozova-Chentsov-Petz classification of Riemannian CPTP-monotone metrics on density operators has been known for thirty years, relatively little attention has been paid to any of these metrics other than SLD Fisher information, which is by far the most famous, largely because it sets the tightest asymptotic bound on the mean squared error of quantum state estimation on a one-parameter family \cite{Braunstein1994}, analogous to the Cram\'{e}r-Rao bound for classical parameter estimation. This work showcases the other extreme of the monotone metrics family as an important quantity in its own right. In particular, the fact that RLD Fisher information blows up for rank-deficient states is crucial for its importance in quantum state distillation problems such as this one.

Beyond first-order optimality, the ability to numerically solve for an absolutely optimal protocol for arbitrary values means that coherence distillation can be studied for non-asymptotic regimes as well. This distinguishes it from many other similar problems, for which very little is known in non-asymptotic regimes. One potential use case could be the experimental realization of single-shot coherence distillation protocols on a small number of qubits.

The second-order analysis in the general case and the third-order analysis in the equatorial special case are also significant in their own right. Many other problems in the resource theory of asymmetry and quantum state estimation theory have been addressed at the leading order using methods that preclude the possibility of higher-order analysis (see Appendix \ref{appendix:2nd-order-optimality}\ref{subsec:2nd-order-optimality-difficult}).  Therefore, although our method of finding a first-order optimal protocol is bespoke and may be difficult to apply to other problems, the fact that it can be naturally extended to higher orders at all is a virtue.

Interestingly, as part of our own work, we also perform some amount of second-order and third-order analysis for Schur sampling applied to a tensor product of $N$ identical qubits, which we show in Appendix \ref{appendix:angular-momentum-moments}. This work may be of independent interest for other problems where Schur sampling is used as a primitive.

In fact, our work is the first to explicitly compute the $2^{\text{nd}}$-order and $3^{\text{rd}}$-order infidelity for a distillation problem in the resource theory of asymmetry. The only other distillation problem to be solved beyond $1^{\text{st}}$-order optimality is that of single-shot qudit distillation with full $SU(d)$ symmetry, solved for qubits by Cirac et al. \cite{Cirac1999} and extended to qudits by Li et al. \cite{Li2024}. However, neither of these works explicitly carries out the infidelity analysis to higher orders.

Therefore, another future endeavor would be to study higher orders of optimality for distillation protocols. In the classical setting, information geometry has been used to study higher orders of efficiency for parameter estimation beyond the Cram\'{e}r-Rao bound (which governs $1^{\text{st}}$-order efficiency, with one notable result being that maximum likelihood estimation supplemented by bias correction is $2^{\text{nd}}$-order efficient but \textit{not} $3^{\text{rd}}$-order efficient \cite{Amari1985,Ghosh1982}. It would be interesting to see whether there are analogues of these facts and others for higher-order optimal distillation protocols.


\section*{Acknowledgments}
\label{sec:acknowledgments}

 We acknowledge support from  NSF Phy-2046195, NSF FET-2106448, and 
NSF QLCI grant OMA-2120757. SK is funded by the National Defense Science and Engineering Graduate (NDSEG) Fellowship. SK and IM would like to thank Shiv Akshar Yadavalli, Shrigyan Brahmachari, Yash Chitgopekar, Govind Lal Sidhardh, Plato Deliyannis, Austin Hulse, Nikolaos Koukoulekidis, and David Jakab for many useful discussions.

\bibliography{references}

\begin{thebibliography}{38}%
\makeatletter
\providecommand \@ifxundefined [1]{%
 \@ifx{#1\undefined}
}%
\providecommand \@ifnum [1]{%
 \ifnum #1\expandafter \@firstoftwo
 \else \expandafter \@secondoftwo
 \fi
}%
\providecommand \@ifx [1]{%
 \ifx #1\expandafter \@firstoftwo
 \else \expandafter \@secondoftwo
 \fi
}%
\providecommand \natexlab [1]{#1}%
\providecommand \enquote  [1]{``#1''}%
\providecommand \bibnamefont  [1]{#1}%
\providecommand \bibfnamefont [1]{#1}%
\providecommand \citenamefont [1]{#1}%
\providecommand \href@noop [0]{\@secondoftwo}%
\providecommand \href [0]{\begingroup \@sanitize@url \@href}%
\providecommand \@href[1]{\@@startlink{#1}\@@href}%
\providecommand \@@href[1]{\endgroup#1\@@endlink}%
\providecommand \@sanitize@url [0]{\catcode `\\12\catcode `\$12\catcode `\&12\catcode `\#12\catcode `\^12\catcode `\_12\catcode `\%12\relax}%
\providecommand \@@startlink[1]{}%
\providecommand \@@endlink[0]{}%
\providecommand \url  [0]{\begingroup\@sanitize@url \@url }%
\providecommand \@url [1]{\endgroup\@href {#1}{\urlprefix }}%
\providecommand \urlprefix  [0]{URL }%
\providecommand \Eprint [0]{\href }%
\providecommand \doibase [0]{https://doi.org/}%
\providecommand \selectlanguage [0]{\@gobble}%
\providecommand \bibinfo  [0]{\@secondoftwo}%
\providecommand \bibfield  [0]{\@secondoftwo}%
\providecommand \translation [1]{[#1]}%
\providecommand \BibitemOpen [0]{}%
\providecommand \bibitemStop [0]{}%
\providecommand \bibitemNoStop [0]{.\EOS\space}%
\providecommand \EOS [0]{\spacefactor3000\relax}%
\providecommand \BibitemShut  [1]{\csname bibitem#1\endcsname}%
\let\auto@bib@innerbib\@empty
\bibitem [{\citenamefont {Helstrom}(1969)}]{helstrom1969quantum}%
  \BibitemOpen
  \bibfield  {author} {\bibinfo {author} {\bibfnamefont {C.~W.}\ \bibnamefont {Helstrom}},\ }\bibfield  {title} {\bibinfo {title} {Quantum detection and estimation theory},\ }\href@noop {} {\bibfield  {journal} {\bibinfo  {journal} {Journal of Statistical Physics}\ }\textbf {\bibinfo {volume} {1}},\ \bibinfo {pages} {231} (\bibinfo {year} {1969})}\BibitemShut {NoStop}%
\bibitem [{\citenamefont {Holevo}(2011)}]{holevo2011probabilistic}%
  \BibitemOpen
  \bibfield  {author} {\bibinfo {author} {\bibfnamefont {A.~S.}\ \bibnamefont {Holevo}},\ }\href@noop {} {\emph {\bibinfo {title} {Probabilistic and statistical aspects of quantum theory}}},\ Vol.~\bibinfo {volume} {1}\ (\bibinfo  {publisher} {Springer Science \& Business Media},\ \bibinfo {year} {2011})\BibitemShut {NoStop}%
\bibitem [{\citenamefont {Braunstein}\ and\ \citenamefont {Caves}(1994)}]{Braunstein1994}%
  \BibitemOpen
  \bibfield  {author} {\bibinfo {author} {\bibfnamefont {S.~L.}\ \bibnamefont {Braunstein}}\ and\ \bibinfo {author} {\bibfnamefont {C.~M.}\ \bibnamefont {Caves}},\ }\bibfield  {title} {\bibinfo {title} {Statistical distance and the geometry of quantum states},\ }\href {https://doi.org/10.1103/PhysRevLett.72.3439} {\bibfield  {journal} {\bibinfo  {journal} {Phys. Rev. Lett.}\ }\textbf {\bibinfo {volume} {72}},\ \bibinfo {pages} {3439} (\bibinfo {year} {1994})}\BibitemShut {NoStop}%
\bibitem [{\citenamefont {Giovannetti}\ \emph {et~al.}(2006)\citenamefont {Giovannetti}, \citenamefont {Lloyd},\ and\ \citenamefont {Maccone}}]{GiovannettiMetrology}%
  \BibitemOpen
  \bibfield  {author} {\bibinfo {author} {\bibfnamefont {V.}~\bibnamefont {Giovannetti}}, \bibinfo {author} {\bibfnamefont {S.}~\bibnamefont {Lloyd}},\ and\ \bibinfo {author} {\bibfnamefont {L.}~\bibnamefont {Maccone}},\ }\bibfield  {title} {\bibinfo {title} {Quantum metrology},\ }\href {https://doi.org/10.1103/PhysRevLett.96.010401} {\bibfield  {journal} {\bibinfo  {journal} {Phys. Rev. Lett.}\ }\textbf {\bibinfo {volume} {96}},\ \bibinfo {pages} {010401} (\bibinfo {year} {2006})}\BibitemShut {NoStop}%
\bibitem [{\citenamefont {Giovannetti}\ \emph {et~al.}(2001)\citenamefont {Giovannetti}, \citenamefont {Lloyd},\ and\ \citenamefont {Maccone}}]{giovannetti2001quantum}%
  \BibitemOpen
  \bibfield  {author} {\bibinfo {author} {\bibfnamefont {V.}~\bibnamefont {Giovannetti}}, \bibinfo {author} {\bibfnamefont {S.}~\bibnamefont {Lloyd}},\ and\ \bibinfo {author} {\bibfnamefont {L.}~\bibnamefont {Maccone}},\ }\bibfield  {title} {\bibinfo {title} {Quantum-enhanced positioning and clock synchronization},\ }\href@noop {} {\bibfield  {journal} {\bibinfo  {journal} {Nature}\ }\textbf {\bibinfo {volume} {412}},\ \bibinfo {pages} {417} (\bibinfo {year} {2001})}\BibitemShut {NoStop}%
\bibitem [{\citenamefont {Giovannetti}\ \emph {et~al.}(2011)\citenamefont {Giovannetti}, \citenamefont {Lloyd},\ and\ \citenamefont {Maccone}}]{giovannetti2011advances}%
  \BibitemOpen
  \bibfield  {author} {\bibinfo {author} {\bibfnamefont {V.}~\bibnamefont {Giovannetti}}, \bibinfo {author} {\bibfnamefont {S.}~\bibnamefont {Lloyd}},\ and\ \bibinfo {author} {\bibfnamefont {L.}~\bibnamefont {Maccone}},\ }\bibfield  {title} {\bibinfo {title} {Advances in quantum metrology},\ }\href@noop {} {\bibfield  {journal} {\bibinfo  {journal} {Nature photonics}\ }\textbf {\bibinfo {volume} {5}},\ \bibinfo {pages} {222} (\bibinfo {year} {2011})}\BibitemShut {NoStop}%
\bibitem [{\citenamefont {Bartlett}\ \emph {et~al.}(2007)\citenamefont {Bartlett}, \citenamefont {Rudolph},\ and\ \citenamefont {Spekkens}}]{Bartlett2007}%
  \BibitemOpen
  \bibfield  {author} {\bibinfo {author} {\bibfnamefont {S.~D.}\ \bibnamefont {Bartlett}}, \bibinfo {author} {\bibfnamefont {T.}~\bibnamefont {Rudolph}},\ and\ \bibinfo {author} {\bibfnamefont {R.~W.}\ \bibnamefont {Spekkens}},\ }\bibfield  {title} {\bibinfo {title} {Reference frames, superselection rules, and quantum information},\ }\href {https://doi.org/10.1103/RevModPhys.79.555} {\bibfield  {journal} {\bibinfo  {journal} {Rev. Mod. Phys.}\ }\textbf {\bibinfo {volume} {79}},\ \bibinfo {pages} {555} (\bibinfo {year} {2007})}\BibitemShut {NoStop}%
\bibitem [{\citenamefont {Lostaglio}\ \emph {et~al.}(2015{\natexlab{a}})\citenamefont {Lostaglio}, \citenamefont {Korzekwa}, \citenamefont {Jennings},\ and\ \citenamefont {Rudolph}}]{lostaglio2015quantum}%
  \BibitemOpen
  \bibfield  {author} {\bibinfo {author} {\bibfnamefont {M.}~\bibnamefont {Lostaglio}}, \bibinfo {author} {\bibfnamefont {K.}~\bibnamefont {Korzekwa}}, \bibinfo {author} {\bibfnamefont {D.}~\bibnamefont {Jennings}},\ and\ \bibinfo {author} {\bibfnamefont {T.}~\bibnamefont {Rudolph}},\ }\bibfield  {title} {\bibinfo {title} {Quantum coherence, time-translation symmetry, and thermodynamics},\ }\href@noop {} {\bibfield  {journal} {\bibinfo  {journal} {Physical review X}\ }\textbf {\bibinfo {volume} {5}},\ \bibinfo {pages} {021001} (\bibinfo {year} {2015}{\natexlab{a}})}\BibitemShut {NoStop}%
\bibitem [{\citenamefont {Lostaglio}\ \emph {et~al.}(2015{\natexlab{b}})\citenamefont {Lostaglio}, \citenamefont {Jennings},\ and\ \citenamefont {Rudolph}}]{lostaglio2015description}%
  \BibitemOpen
  \bibfield  {author} {\bibinfo {author} {\bibfnamefont {M.}~\bibnamefont {Lostaglio}}, \bibinfo {author} {\bibfnamefont {D.}~\bibnamefont {Jennings}},\ and\ \bibinfo {author} {\bibfnamefont {T.}~\bibnamefont {Rudolph}},\ }\bibfield  {title} {\bibinfo {title} {Description of quantum coherence in thermodynamic processes requires constraints beyond free energy},\ }\href@noop {} {\bibfield  {journal} {\bibinfo  {journal} {Nature communications}\ }\textbf {\bibinfo {volume} {6}},\ \bibinfo {pages} {6383} (\bibinfo {year} {2015}{\natexlab{b}})}\BibitemShut {NoStop}%
\bibitem [{\citenamefont {Marvian}(2022)}]{marvian2022operational}%
  \BibitemOpen
  \bibfield  {author} {\bibinfo {author} {\bibfnamefont {I.}~\bibnamefont {Marvian}},\ }\bibfield  {title} {\bibinfo {title} {Operational interpretation of quantum fisher information in quantum thermodynamics},\ }\href@noop {} {\bibfield  {journal} {\bibinfo  {journal} {Physical Review Letters}\ }\textbf {\bibinfo {volume} {129}},\ \bibinfo {pages} {190502} (\bibinfo {year} {2022})}\BibitemShut {NoStop}%
\bibitem [{\citenamefont {Marvian}(2020)}]{Marvian2020}%
  \BibitemOpen
  \bibfield  {author} {\bibinfo {author} {\bibfnamefont {I.}~\bibnamefont {Marvian}},\ }\bibfield  {title} {\bibinfo {title} {Coherence distillation machines are impossible in quantum thermodynamics},\ }\bibfield  {journal} {\bibinfo  {journal} {Nature Communications}\ }\textbf {\bibinfo {volume} {11}},\ \href {https://doi.org/10.1038/s41467-019-13846-3} {10.1038/s41467-019-13846-3} (\bibinfo {year} {2020})\BibitemShut {NoStop}%
\bibitem [{\citenamefont {Streltsov}\ \emph {et~al.}(2017)\citenamefont {Streltsov}, \citenamefont {Adesso},\ and\ \citenamefont {Plenio}}]{streltsov2017colloquium}%
  \BibitemOpen
  \bibfield  {author} {\bibinfo {author} {\bibfnamefont {A.}~\bibnamefont {Streltsov}}, \bibinfo {author} {\bibfnamefont {G.}~\bibnamefont {Adesso}},\ and\ \bibinfo {author} {\bibfnamefont {M.~B.}\ \bibnamefont {Plenio}},\ }\bibfield  {title} {\bibinfo {title} {Colloquium: Quantum coherence as a resource},\ }\href@noop {} {\bibfield  {journal} {\bibinfo  {journal} {Reviews of Modern Physics}\ }\textbf {\bibinfo {volume} {89}},\ \bibinfo {pages} {041003} (\bibinfo {year} {2017})}\BibitemShut {NoStop}%
\bibitem [{\citenamefont {Chiribella}\ \emph {et~al.}(2013)\citenamefont {Chiribella}, \citenamefont {Yang},\ and\ \citenamefont {Yao}}]{chiribella2013quantum}%
  \BibitemOpen
  \bibfield  {author} {\bibinfo {author} {\bibfnamefont {G.}~\bibnamefont {Chiribella}}, \bibinfo {author} {\bibfnamefont {Y.}~\bibnamefont {Yang}},\ and\ \bibinfo {author} {\bibfnamefont {A.~C.-C.}\ \bibnamefont {Yao}},\ }\bibfield  {title} {\bibinfo {title} {Quantum replication at the heisenberg limit},\ }\href@noop {} {\bibfield  {journal} {\bibinfo  {journal} {Nature communications}\ }\textbf {\bibinfo {volume} {4}},\ \bibinfo {pages} {2915} (\bibinfo {year} {2013})}\BibitemShut {NoStop}%
\bibitem [{\citenamefont {Marvian}\ and\ \citenamefont {Spekkens}(2016)}]{marvian2016quantify}%
  \BibitemOpen
  \bibfield  {author} {\bibinfo {author} {\bibfnamefont {I.}~\bibnamefont {Marvian}}\ and\ \bibinfo {author} {\bibfnamefont {R.~W.}\ \bibnamefont {Spekkens}},\ }\bibfield  {title} {\bibinfo {title} {How to quantify coherence: Distinguishing speakable and unspeakable notions},\ }\href@noop {} {\bibfield  {journal} {\bibinfo  {journal} {Physical Review A}\ }\textbf {\bibinfo {volume} {94}},\ \bibinfo {pages} {052324} (\bibinfo {year} {2016})}\BibitemShut {NoStop}%
\bibitem [{\citenamefont {Petz}(1996)}]{Petz1996}%
  \BibitemOpen
  \bibfield  {author} {\bibinfo {author} {\bibfnamefont {D.}~\bibnamefont {Petz}},\ }\bibfield  {title} {\bibinfo {title} {Monotone metrics on matrix spaces},\ }\href {https://doi.org/https://doi.org/10.1016/0024-3795(94)00211-8} {\bibfield  {journal} {\bibinfo  {journal} {Linear Algebra and its Applications}\ }\textbf {\bibinfo {volume} {244}},\ \bibinfo {pages} {81} (\bibinfo {year} {1996})}\BibitemShut {NoStop}%
\bibitem [{\citenamefont {Petz}\ and\ \citenamefont {Ghinea}(2011)}]{Petz2011}%
  \BibitemOpen
  \bibfield  {author} {\bibinfo {author} {\bibfnamefont {D.}~\bibnamefont {Petz}}\ and\ \bibinfo {author} {\bibfnamefont {C.}~\bibnamefont {Ghinea}},\ }\bibfield  {title} {\bibinfo {title} {Introduction to quantum fisher information},\ }in\ \href {https://doi.org/10.1142/9789814338745_0015} {\emph {\bibinfo {booktitle} {Quantum Probability and Related Topics}}}\ (\bibinfo  {publisher} {World Scientific},\ \bibinfo {year} {2011})\BibitemShut {NoStop}%
\bibitem [{\citenamefont {Hayashi}(2017)}]{hayashi2017quantum}%
  \BibitemOpen
  \bibfield  {author} {\bibinfo {author} {\bibfnamefont {M.}~\bibnamefont {Hayashi}},\ }\href@noop {} {\emph {\bibinfo {title} {Quantum information theory}}}\ (\bibinfo  {publisher} {Springer},\ \bibinfo {year} {2017})\BibitemShut {NoStop}%
\bibitem [{\citenamefont {Yadavalli}\ and\ \citenamefont {Marvian}(2024)}]{Yadavalli2024}%
  \BibitemOpen
  \bibfield  {author} {\bibinfo {author} {\bibfnamefont {S.~A.}\ \bibnamefont {Yadavalli}}\ and\ \bibinfo {author} {\bibfnamefont {I.}~\bibnamefont {Marvian}},\ }\href {https://arxiv.org/abs/2409.05974} {\bibinfo {title} {Optimal distillation of coherent states with phase-insensitive operations}} (\bibinfo {year} {2024}),\ \Eprint {https://arxiv.org/abs/2409.05974} {arXiv:2409.05974 [quant-ph]} \BibitemShut {NoStop}%
\bibitem [{\citenamefont {Barndorff-Nielsen}\ and\ \citenamefont {Gill}(2000)}]{BarndorffNielsen2000}%
  \BibitemOpen
  \bibfield  {author} {\bibinfo {author} {\bibfnamefont {O.~E.}\ \bibnamefont {Barndorff-Nielsen}}\ and\ \bibinfo {author} {\bibfnamefont {R.~D.}\ \bibnamefont {Gill}},\ }\bibfield  {title} {\bibinfo {title} {Fisher information in quantum statistics},\ }\href {https://doi.org/10.1088/0305-4470/33/24/306} {\bibfield  {journal} {\bibinfo  {journal} {Journal of Physics A: Mathematical and General}\ }\textbf {\bibinfo {volume} {33}},\ \bibinfo {pages} {4481} (\bibinfo {year} {2000})}\BibitemShut {NoStop}%
\bibitem [{\citenamefont {Gour}\ and\ \citenamefont {Spekkens}(2008)}]{gour2008resource}%
  \BibitemOpen
  \bibfield  {author} {\bibinfo {author} {\bibfnamefont {G.}~\bibnamefont {Gour}}\ and\ \bibinfo {author} {\bibfnamefont {R.~W.}\ \bibnamefont {Spekkens}},\ }\bibfield  {title} {\bibinfo {title} {The resource theory of quantum reference frames: manipulations and monotones},\ }\href@noop {} {\bibfield  {journal} {\bibinfo  {journal} {New Journal of Physics}\ }\textbf {\bibinfo {volume} {10}},\ \bibinfo {pages} {033023} (\bibinfo {year} {2008})}\BibitemShut {NoStop}%
\bibitem [{\citenamefont {Marvian~Mashhad}(2012)}]{marvian2012symmetry}%
  \BibitemOpen
  \bibfield  {author} {\bibinfo {author} {\bibfnamefont {I.}~\bibnamefont {Marvian~Mashhad}},\ }\emph {\bibinfo {title} {Symmetry, asymmetry and quantum information}},\ \href@noop {} {Ph.D. thesis},\ \bibinfo  {school} {Ph. D. thesis, University of Waterloo} (\bibinfo {year} {2012})\BibitemShut {NoStop}%
\bibitem [{\citenamefont {Keyl}\ and\ \citenamefont {Werner}(1999)}]{keyl1999optimal}%
  \BibitemOpen
  \bibfield  {author} {\bibinfo {author} {\bibfnamefont {M.}~\bibnamefont {Keyl}}\ and\ \bibinfo {author} {\bibfnamefont {R.~F.}\ \bibnamefont {Werner}},\ }\bibfield  {title} {\bibinfo {title} {Optimal cloning of pure states, testing single clones},\ }\href@noop {} {\bibfield  {journal} {\bibinfo  {journal} {Journal of Mathematical Physics}\ }\textbf {\bibinfo {volume} {40}},\ \bibinfo {pages} {3283} (\bibinfo {year} {1999})}\BibitemShut {NoStop}%
\bibitem [{\citenamefont {Cirac}\ \emph {et~al.}(1999)\citenamefont {Cirac}, \citenamefont {Ekert},\ and\ \citenamefont {Macchiavello}}]{Cirac1999}%
  \BibitemOpen
  \bibfield  {author} {\bibinfo {author} {\bibfnamefont {J.~I.}\ \bibnamefont {Cirac}}, \bibinfo {author} {\bibfnamefont {A.~K.}\ \bibnamefont {Ekert}},\ and\ \bibinfo {author} {\bibfnamefont {C.}~\bibnamefont {Macchiavello}},\ }\bibfield  {title} {\bibinfo {title} {Optimal purification of single qubits},\ }\href {https://doi.org/10.1103/PhysRevLett.82.4344} {\bibfield  {journal} {\bibinfo  {journal} {Phys. Rev. Lett.}\ }\textbf {\bibinfo {volume} {82}},\ \bibinfo {pages} {4344} (\bibinfo {year} {1999})}\BibitemShut {NoStop}%
\bibitem [{\citenamefont {Harrow}(2005)}]{harrow2005applications}%
  \BibitemOpen
  \bibfield  {author} {\bibinfo {author} {\bibfnamefont {A.~W.}\ \bibnamefont {Harrow}},\ }\bibfield  {title} {\bibinfo {title} {Applications of coherent classical communication and the schur transform to quantum information theory},\ }\href@noop {} {\bibfield  {journal} {\bibinfo  {journal} {arXiv preprint quant-ph/0512255}\ } (\bibinfo {year} {2005})}\BibitemShut {NoStop}%
\bibitem [{\citenamefont {Brahmachari}\ \emph {et~al.}(2025)\citenamefont {Brahmachari}, \citenamefont {Hulse}, \citenamefont {Pfister},\ and\ \citenamefont {Marvian}}]{brahmachari2025optimal}%
  \BibitemOpen
  \bibfield  {author} {\bibinfo {author} {\bibfnamefont {S.}~\bibnamefont {Brahmachari}}, \bibinfo {author} {\bibfnamefont {A.}~\bibnamefont {Hulse}}, \bibinfo {author} {\bibfnamefont {H.~D.}\ \bibnamefont {Pfister}},\ and\ \bibinfo {author} {\bibfnamefont {I.}~\bibnamefont {Marvian}},\ }\bibfield  {title} {\bibinfo {title} {Optimal qubit purification and unitary schur sampling via random swap tests},\ }\href@noop {} {\bibfield  {journal} {\bibinfo  {journal} {arXiv preprint arXiv:2508.05046}\ } (\bibinfo {year} {2025})}\BibitemShut {NoStop}%
\bibitem [{\citenamefont {Chentsov}(1982)}]{Chentsov1982}%
  \BibitemOpen
  \bibfield  {author} {\bibinfo {author} {\bibfnamefont {N.}~\bibnamefont {Chentsov}},\ }\href {https://books.google.com/books?id=kTJJtQAACAAJ} {\emph {\bibinfo {title} {Statistical Decision Rules and Optimal Inference}}},\ Translations of mathematical monographs\ (\bibinfo  {publisher} {American Mathematical Society},\ \bibinfo {year} {1982})\BibitemShut {NoStop}%
\bibitem [{\citenamefont {Morozova}\ and\ \citenamefont {Chentsov}(1991)}]{Morozova1991}%
  \BibitemOpen
  \bibfield  {author} {\bibinfo {author} {\bibfnamefont {E.~A.}\ \bibnamefont {Morozova}}\ and\ \bibinfo {author} {\bibfnamefont {N.~N.}\ \bibnamefont {Chentsov}},\ }\bibfield  {title} {\bibinfo {title} {Markov invariant geometry on manifolds of states},\ }\href {https://doi.org/10.1007/BF01095975} {\bibfield  {journal} {\bibinfo  {journal} {Journal of Soviet Mathematics}\ }\textbf {\bibinfo {volume} {56}},\ \bibinfo {pages} {2648} (\bibinfo {year} {1991})}\BibitemShut {NoStop}%
\bibitem [{\citenamefont {Tajima}\ \emph {et~al.}(2022)\citenamefont {Tajima}, \citenamefont {Takagi},\ and\ \citenamefont {Kuramochi}}]{tajima2022universal}%
  \BibitemOpen
  \bibfield  {author} {\bibinfo {author} {\bibfnamefont {H.}~\bibnamefont {Tajima}}, \bibinfo {author} {\bibfnamefont {R.}~\bibnamefont {Takagi}},\ and\ \bibinfo {author} {\bibfnamefont {Y.}~\bibnamefont {Kuramochi}},\ }\bibfield  {title} {\bibinfo {title} {Universal trade-off structure between symmetry, irreversibility, and quantum coherence in quantum processes},\ }\href@noop {} {\bibfield  {journal} {\bibinfo  {journal} {arXiv preprint arXiv:2206.11086}\ } (\bibinfo {year} {2022})}\BibitemShut {NoStop}%
\bibitem [{\citenamefont {Yamaguchi}\ \emph {et~al.}(2024)\citenamefont {Yamaguchi}, \citenamefont {Mitsuhashi},\ and\ \citenamefont {Tajima}}]{Yamaguchi2024}%
  \BibitemOpen
  \bibfield  {author} {\bibinfo {author} {\bibfnamefont {K.}~\bibnamefont {Yamaguchi}}, \bibinfo {author} {\bibfnamefont {Y.}~\bibnamefont {Mitsuhashi}},\ and\ \bibinfo {author} {\bibfnamefont {H.}~\bibnamefont {Tajima}},\ }\href {https://arxiv.org/abs/2411.04766} {\bibinfo {title} {Quantum geometric tensor determines the i.i.d. conversion rate in the resource theory of asymmetry for any compact lie group}} (\bibinfo {year} {2024}),\ \Eprint {https://arxiv.org/abs/2411.04766} {arXiv:2411.04766 [quant-ph]} \BibitemShut {NoStop}%
\bibitem [{\citenamefont {Yamaguchi}\ and\ \citenamefont {Tajima}(2023)}]{yamaguchi2023beyond}%
  \BibitemOpen
  \bibfield  {author} {\bibinfo {author} {\bibfnamefont {K.}~\bibnamefont {Yamaguchi}}\ and\ \bibinfo {author} {\bibfnamefont {H.}~\bibnamefont {Tajima}},\ }\bibfield  {title} {\bibinfo {title} {Beyond iid in the resource theory of asymmetry: An information-spectrum approach for quantum fisher information},\ }\href@noop {} {\bibfield  {journal} {\bibinfo  {journal} {Physical Review Letters}\ }\textbf {\bibinfo {volume} {131}},\ \bibinfo {pages} {200203} (\bibinfo {year} {2023})}\BibitemShut {NoStop}%
\bibitem [{\citenamefont {Tajima}\ and\ \citenamefont {Takagi}(2024)}]{tajima2024gibbs}%
  \BibitemOpen
  \bibfield  {author} {\bibinfo {author} {\bibfnamefont {H.}~\bibnamefont {Tajima}}\ and\ \bibinfo {author} {\bibfnamefont {R.}~\bibnamefont {Takagi}},\ }\bibfield  {title} {\bibinfo {title} {Gibbs-preserving operations requiring infinite amount of quantum coherence},\ }\href@noop {} {\bibfield  {journal} {\bibinfo  {journal} {arXiv preprint arXiv:2404.03479}\ } (\bibinfo {year} {2024})}\BibitemShut {NoStop}%
\bibitem [{\citenamefont {Li}\ \emph {et~al.}(2024)\citenamefont {Li}, \citenamefont {Fu}, \citenamefont {Isogawa},\ and\ \citenamefont {Chuang}}]{Li2024}%
  \BibitemOpen
  \bibfield  {author} {\bibinfo {author} {\bibfnamefont {Z.}~\bibnamefont {Li}}, \bibinfo {author} {\bibfnamefont {H.}~\bibnamefont {Fu}}, \bibinfo {author} {\bibfnamefont {T.}~\bibnamefont {Isogawa}},\ and\ \bibinfo {author} {\bibfnamefont {I.}~\bibnamefont {Chuang}},\ }\href {https://arxiv.org/abs/2409.18167} {\bibinfo {title} {Optimal quantum purity amplification}} (\bibinfo {year} {2024}),\ \Eprint {https://arxiv.org/abs/2409.18167} {arXiv:2409.18167 [quant-ph]} \BibitemShut {NoStop}%
\bibitem [{\citenamefont {Amari}(1985)}]{Amari1985}%
  \BibitemOpen
  \bibfield  {author} {\bibinfo {author} {\bibfnamefont {S.-i.}\ \bibnamefont {Amari}},\ }\bibinfo {title} {Curved exponential families and edgeworth expansions},\ in\ \href {https://doi.org/10.1007/978-1-4612-5056-2_4} {\emph {\bibinfo {booktitle} {Differential-Geometrical Methods in Statistics}}}\ (\bibinfo  {publisher} {Springer New York},\ \bibinfo {address} {New York, NY},\ \bibinfo {year} {1985})\ pp.\ \bibinfo {pages} {104--127}\BibitemShut {NoStop}%
\bibitem [{\citenamefont {Ghosh}\ and\ \citenamefont {Sinha}(1982)}]{Ghosh1982}%
  \BibitemOpen
  \bibfield  {author} {\bibinfo {author} {\bibfnamefont {J.~K.}\ \bibnamefont {Ghosh}}\ and\ \bibinfo {author} {\bibfnamefont {B.~K.}\ \bibnamefont {Sinha}},\ }\bibfield  {title} {\bibinfo {title} {Third order efficiency of the mle —a counterexample*},\ }\href {https://doi.org/10.1177/0008068319820304} {\bibfield  {journal} {\bibinfo  {journal} {Calcutta Statistical Association Bulletin}\ }\textbf {\bibinfo {volume} {31}},\ \bibinfo {pages} {151} (\bibinfo {year} {1982})},\ \Eprint {https://arxiv.org/abs/https://doi.org/10.1177/0008068319820304} {https://doi.org/10.1177/0008068319820304} \BibitemShut {NoStop}%
\bibitem [{\citenamefont {Marvian}\ and\ \citenamefont {Spekkens}(2014)}]{marvian2014modes}%
  \BibitemOpen
  \bibfield  {author} {\bibinfo {author} {\bibfnamefont {I.}~\bibnamefont {Marvian}}\ and\ \bibinfo {author} {\bibfnamefont {R.~W.}\ \bibnamefont {Spekkens}},\ }\bibfield  {title} {\bibinfo {title} {Modes of asymmetry: the application of harmonic analysis to symmetric quantum dynamics and quantum reference frames},\ }\href@noop {} {\bibfield  {journal} {\bibinfo  {journal} {Physical Review A}\ }\textbf {\bibinfo {volume} {90}},\ \bibinfo {pages} {062110} (\bibinfo {year} {2014})}\BibitemShut {NoStop}%
\bibitem [{\citenamefont {Keyl}\ and\ \citenamefont {Werner}(2001)}]{Keyl2001}%
  \BibitemOpen
  \bibfield  {author} {\bibinfo {author} {\bibfnamefont {M.}~\bibnamefont {Keyl}}\ and\ \bibinfo {author} {\bibfnamefont {R.~F.}\ \bibnamefont {Werner}},\ }\bibfield  {title} {\bibinfo {title} {Estimating the spectrum of a density operator},\ }\href {https://doi.org/10.1103/PhysRevA.64.052311} {\bibfield  {journal} {\bibinfo  {journal} {Phys. Rev. A}\ }\textbf {\bibinfo {volume} {64}},\ \bibinfo {pages} {052311} (\bibinfo {year} {2001})}\BibitemShut {NoStop}%
\bibitem [{\citenamefont {Gu\ifmmode \mbox{\c{t}}\else\c{t}\fi{}\ifmmode~\u{a}\else \u{a}\fi{}}\ and\ \citenamefont {Kahn}(2006)}]{Guta2006}%
  \BibitemOpen
  \bibfield  {author} {\bibinfo {author} {\bibfnamefont {M.}~\bibnamefont {Gu\ifmmode \mbox{\c{t}}\else\c{t}\fi{}\ifmmode~\u{a}\else \u{a}\fi{}}}\ and\ \bibinfo {author} {\bibfnamefont {J.}~\bibnamefont {Kahn}},\ }\bibfield  {title} {\bibinfo {title} {Local asymptotic normality for qubit states},\ }\href {https://doi.org/10.1103/PhysRevA.73.052108} {\bibfield  {journal} {\bibinfo  {journal} {Phys. Rev. A}\ }\textbf {\bibinfo {volume} {73}},\ \bibinfo {pages} {052108} (\bibinfo {year} {2006})}\BibitemShut {NoStop}%
\bibitem [{\citenamefont {Kahn}\ and\ \citenamefont {Gu\ifmmode \mbox{\c{t}}\else\c{t}\fi{}\ifmmode~\u{a}\else \u{a}\fi{}}(2009)}]{Kahn2009}%
  \BibitemOpen
  \bibfield  {author} {\bibinfo {author} {\bibfnamefont {J.}~\bibnamefont {Kahn}}\ and\ \bibinfo {author} {\bibfnamefont {M.}~\bibnamefont {Gu\ifmmode \mbox{\c{t}}\else\c{t}\fi{}\ifmmode~\u{a}\else \u{a}\fi{}}},\ }\bibfield  {title} {\bibinfo {title} {Local asymptotic normality for finite dimensional quantum systems},\ }\href {https://doi.org/10.1007/s00220-009-0787-3} {\bibfield  {journal} {\bibinfo  {journal} {Communications in Mathematical Physics}\ }\textbf {\bibinfo {volume} {289}},\ \bibinfo {pages} {597} (\bibinfo {year} {2009})}\BibitemShut {NoStop}%
\end{thebibliography}%

\onecolumngrid

\newpage

\maketitle
\vspace{-5in}
\begin{center}
\Large{Supplementary Material}
\end{center}

This paper has a lot of supplementary material, so for the reader's benefit, we organize them as follows:
\begin{itemize}
    \item If you are only interested in how coherence distillation provides an operational interpretation for purity of coherence (RLD Fisher information), you can focus on Appendices \ref{appendix:infidelity-lower-bound}, \ref{appendix:deriving-kraus-rep}, and \ref{appendix:lower-bound-saturation}.
    \item To see how the problem of optimal single-shot qubit coherence distillation can be understood as a boundary value problem that can be solved numerically, you can refer to Appendix \ref{appendix:boundary-value-problem}.
    \item For an explanation of how one can systematically find protocols at any order of optimality in the large-$N$ asymptotic regime, you can refer to Appendix \ref{appendix:asymptotic-expansion}.
    \item To see how we came up with a $1^{\text{st}}$-order optimal protocol that saturates the purity-of-coherence bound, you can refer to Appendix \ref{appendix:1st-order-optimality}.
    \item For $2^{\text{nd}}$-order and $3^{\text{rd}}$-order analyses in the large-$N$ asymptotic regime, you can refer to Appendices \ref{appendix:2nd-order-optimality}, \ref{appendix:perfect-conversion}, and \ref{appendix:equatorial-3rd-order-optimality}.
    \item For proofs of the asymptotic formulas used throughout this paper, which concern the distribution of the angular momentum measurement outcome when performing Schur sampling, as well as the matrix entries of the resulting states on the symmetric subspace, you can refer to Appendices \ref{appendix:angular-momentum-moments} and \ref{appendix:understanding-P-vals}.
    \item For a deeper study of other facets of the single-shot coherence distillation problem that are fascinating in their own right, you can refer to Appendices \ref{appendix:PH-dissipation}, \ref{appendix:entanglement-breaking}, and \ref{appendix:perturbative-protocols}. In all of these appendices, we restrict our attention to the equatorial case.
\end{itemize}


\newcommand\appitemtwo[2]{
\newcommand\appitem[1]{\hyperref[{#1}]
{\textbf{\cref{#1}}} \textbf{\nameref*{#1}}
\dotfill \pageref{#1}\vspace{5pt}}
\newcommand\subappitem[1]{
\makeatletter
\newcommand{\appsec}[2]{%
  \section{#1}%
  \def\@currentlabelname{#1}%
  \def\@currentlabel{\thesection}%
  \label{#2}%
  \addcontentsline{toc}{section}{#1}%
}
\newcommand{\appsubsec}[2]{%
  \subsection{#1}%
  \def\@currentlabelname{#1}%
  \def\@currentlabel{\thesubsection}%
  \label{#2}%
}
\newcommand{\appsubsubsec}[2]{%
  \refstepcounter{subsubsection}%
  \subsubsection{#1}%
  \addtocounter{subsubsection}{-1}%
  \def\@currentlabelname{#1}%
  \def\@currentlabel{\thesubsubsection}%
  \label{#2}%
}
\makeatother


\section*{Supplementary Material: Table of Contents}
\begin{itemize}[label={}]
\item \appitem{appendix:infidelity-lower-bound}

\item \appitem{appendix:deriving-kraus-rep}
\subitem \subappitem{subsec:cirac-distillation-primitive}
\subitem \subappitem{subsec:covariance-conditions-parameter-counting}
\subitem \subappitem{subsec:distillation-protocol-form-proof-choi-matrices}
\subitem \subappitem{subsec:distillation-protocol-form-proof-kraus-operators}
\subitem \subappitem{subsec:distillation-protocol-implementation-stinespring-dilation}

\item \appitem{appendix:lower-bound-saturation}

\item \appitem{appendix:boundary-value-problem}
\subitem \subappitem{subsec:brute-force-optimization}
\subitem \subappitem{subsec:solve-recurrence-pure-input-matching-target}
\subitem \subappitem{subsec:solve-recurrence-odd-N-equatorial}

\item \appitem{appendix:asymptotic-expansion}

\item \appitem{appendix:1st-order-optimality}
\subitem \subappitem{subsec:0th-order-optimality-derivation}
\subitem \subappitem{subsec:1st-order-optimality-derivation}

\item \appitem{appendix:2nd-order-optimality}
\subitem \subappitem{subsec:2nd-order-optimality-difficult}
\subitem \subappitem{subsec:2nd-order-optimality-derivation}
\subitem \subappitem{subsec:2nd-order-optimality-special-case-equatorial}
\subitem \subappitem{subsec:2nd-order-optimality-special-case-lam1}

\item \appitem{appendix:perfect-conversion}

\item \appitem{appendix:equatorial-3rd-order-optimality}

\item \appitem{appendix:angular-momentum-moments}
\subitem \subappitem{subsec:J2-integer-moments}
\subitem \subappitem{subsec:J2-moments-to-NC-moments}
\subitem \subappitem{subsec:positive-moments-to-negative-moments}
\subitem \subappitem{subsec:NC-moments-ese}
\subitem \subappitem{subsec:NC-negative-moments-full-computation}

\item \appitem{appendix:understanding-P-vals}
\subitem \subappitem{subsec:P-vals-miscellaneous-facts}
\subitem \subappitem{subsec:P-vals-moments-defining}
\subitem \subappitem{subsec:P-vals-moments-offset0}
\subitem \subappitem{subsec:P-vals-moments-offset1-equatorial}
\subitem \subappitem{subsec:P-vals-moments-offset1-general}

\item \appitem{appendix:PH-dissipation}

\item \appitem{appendix:entanglement-breaking}
\subitem \subappitem{subsec:characterizing-eb-protocols}
\subitem \subappitem{subsec:explicit-separability}

\item \appitem{appendix:perturbative-protocols}
\subitem \subappitem{subsec:lam1-order1}
\subitem \subappitem{subsec:lam1-order2}
\end{itemize}

\appendix

\color{black}
\onecolumngrid

\appsec{Lower Bound on Infidelity Factor of Single-Shot Coherence Distillation}
{appendix:infidelity-lower-bound}

In this appendix, we use the monotonicity of purity of coherence under TI channels (Theorem \ref{thm:PH-mononicity}) to prove a lower bound on the infidelity factor of a general single-shot coherence distillation protocol (Theorem \ref{thm:infidelity-factor-lower-bound-general}). We then apply this result to prove a lower bound on the infidelity factor of a single-shot qubit distillation protocol (Theorem \ref{thm:infidelity-factor-lower-bound-qubit}).

\vspace{0.5\baselineskip}

To lower-bound the infidelity, we must show that any mixed state that is sufficiently close to a pure coherent state must have large purity of coherence. Fortunately, this work has already been done for us by Supplementary Note 3, Lemma 2, Corollary 1 of \cite{Marvian2020}, which we restate here for convenience:

\begin{lemma}[Purity of coherence of mixed state near pure state; Supplementary Note 3, Lemma 2, Corollary 1 of \cite{Marvian2020}]
\label{lem:PH-mixed-near-pure}
If a state $\sigma$ has infidelity $\delta$ with pure state $\ket{\phi}$, i.e., $\bra{\phi}\sigma\ket{\phi} = 1-\delta$, then
\begin{equation}
    P_H(\sigma) \ge V_H(\ket{\phi}\bra{\phi})\left[\frac{(1-\delta)^2}{\delta} - 1\right].
\end{equation}
\end{lemma}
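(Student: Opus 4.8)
The plan is to recognize $P_H(\sigma)$ as the right logarithmic derivative (RLD) Fisher information of the orbit $\sigma(t)=e^{-itH}\sigma e^{+itH}$ and to exploit its divergence-monotonicity. A short computation shows that, writing $A=-i[H,\sigma]=\dot\sigma(0)$, one has the identity $P_H(\sigma)=\Tr[A\,\sigma^{-1}A]$ (equivalently, $\tfrac12\tfrac{d^2}{dt^2}D_2(\sigma\|\sigma(t))\big|_{0}=P_H(\sigma)$ for the Petz–R\'enyi-$2$ divergence $D_2$). The operator $A\sigma^{-1}A=(\sigma^{-1/2}A)^\dagger(\sigma^{-1/2}A)$ is positive semidefinite, which will be the workhorse. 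I would first dispose of the trivial case: if $\mathrm{supp}(H\sigma H)\not\subseteq\mathrm{supp}(\sigma)$ then $P_H(\sigma)=\infty$ and the bound holds, so I may assume $\sigma$ is invertible on the relevant support. I would also shift $H\mapsto H-\langle\phi|H|\phi\rangle\ident$ (which changes neither $A$, $P_H$, nor the commutator structure), so that $V_H(\ket{\phi}\bra{\phi})=\langle\phi|H^2|\phi\rangle=\|H\ket{\phi}\|^2$ and $H\ket{\phi}\perp\ket{\phi}$.

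The core observation is that the bound depends on $\sigma$ only through $\delta=1-\langle\phi|\sigma|\phi\rangle$, so the natural target is $\min_{\sigma:\langle\phi|\sigma|\phi\rangle=1-\delta}P_H(\sigma)$, and the relevant coherence lives entirely in the two-dimensional subspace $\mathcal S=\mathrm{span}\{\ket{\phi},\,H\ket{\phi}\}$. To make this rigorous I would invoke monotonicity of $D_2$ under a general (not necessarily time-translation-invariant) CPTP map --- exactly the fact the excerpt already cites in its discussion of \Cref{thm:PH-mononicity}. Applying $D_2(\sigma(0)\|\sigma(t))\ge D_2(\mathcal N\sigma(0)\|\mathcal N\sigma(t))$ to the compression channel $\mathcal N(\rho)=\Pi\rho\Pi+\Tr[(\ident-\Pi)\rho]\,\ketbra{g}{g}$ onto $\Pi=\ket{\phi}\bra{\phi}+\ket{\chi}\bra{\chi}$, with $\ket{\chi}=H\ket{\phi}/\sqrt{V_H}$, and differentiating twice at $t=0$ gives $P_H(\sigma)\ge I_{\mathrm{RLD}}\!\big(\{\mathcal N\sigma(t)\}\big)$, reducing everything to an effective two-level (qubit) family. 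It is essential that this reduction be \emph{quantum} rather than a projective measurement onto $\ket{\phi}$: a classical measurement only yields $\tfrac{p'(0)^2}{(1-\delta)\delta}$, and since $p'(0)=O(\sqrt{\delta})$ this is merely the bounded, symmetric-logarithmic-derivative (SLD) type estimate; the $1/\delta$ blow-up is genuinely an RLD effect and survives only if coherence between $\ket{\phi}$ and $H\ket{\phi}$ is retained.

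The remaining step is an explicit qubit computation. For the clean rotating model $H=\sqrt{V_H}\,X$ on $\mathcal S$ one finds in closed form
\begin{equation}
P_H=\Tr[A\,\sigma^{-1}A]=\frac{4\,V_H\,(s_y^2+s_z^2)}{1-\lvert\vec s\rvert^2},
\end{equation}
where $\vec s$ is the Bloch vector and the fidelity constraint fixes $s_z=1-2\delta$. Minimizing over the free components $s_x,s_y$ (the minimum sits at $s_x=s_y=0$) yields $V_H(1-2\delta)^2/[\delta(1-\delta)]$, which scales as $V_H/\delta$ and, since it exceeds $V_H\big[\tfrac{(1-\delta)^2}{\delta}-1\big]$ by the nonnegative amount $V_H\,\delta^2/(1-\delta)$, already implies the claimed inequality in the exactly solvable case.

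The main obstacle is bridging this clean qubit model to the general reduced family. Because $H$ maps $\ket{\chi}=H\ket{\phi}/\sqrt{V_H}$ out of $\mathcal S$, the reduced generator $\mathcal N(A)=\Pi A\Pi+\cdots$ carries cross-terms $\Pi H\bar\Pi\sigma\Pi$ that are absent in the idealized rotating model, and these must be controlled before the qubit formula can be applied. I expect the cleanest route is to bound $I_{\mathrm{RLD}}$ of the reduced family from below using only the positive-semidefinite diagonal entry $\langle\phi|\mathcal N(A)\,\mathcal N(\sigma)^{-1}\,\mathcal N(A)\ket{\phi}$ together with the Cauchy--Schwarz estimate $\langle u|\tau^{-1}|u\rangle\ge \lvert\langle\chi|u\rangle\rvert^2/\langle\chi|\tau|\chi\rangle$, choosing the test vector so that the $H^2$-dependent cross-terms drop out and only $\delta$ and $V_H$ remain. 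The slight looseness between the exact qubit minimum and the stated constant $\tfrac{(1-\delta)^2}{\delta}-1$ is precisely the slack this clean (non-tight) bounding step is allowed to spend, so I do not expect tightness to be an issue --- only the bookkeeping of the off-block contributions of $H$.
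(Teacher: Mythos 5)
You cannot be compared against the paper's own argument here, because the paper has none: Lemma~\ref{lem:PH-mixed-near-pure} is imported, statement only, from Supplementary Note~3, Lemma~2, Corollary~1 of Ref.~\cite{Marvian2020}, and Appendix~\ref{appendix:infidelity-lower-bound} immediately applies it as a black box. Judged on its own, your proposal contains several correct ingredients: the identity $P_H(\sigma)=\Tr[A\sigma^{-1}A]$ with $A=-i[H,\sigma]$, the $D_2$ second-derivative identity, the data-processing inequality for the compression channel, and the clean-qubit minimization (I verified that the minimum is $V_H(1-2\delta)^2/[\delta(1-\delta)]$, attained at $s_x=s_y=0$, exceeding the claimed constant by $V_H\delta^2/(1-\delta)$). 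The fatal problem is the step you postpone as ``bookkeeping of the off-block contributions of $H$.'' That step is not bookkeeping; it is impossible, because the two premises your reduction rests on---that the optimal bound depends on $\sigma$ only through $\delta$, and that the relevant coherence lives entirely in $\mathrm{span}\{\ket{\phi},H\ket{\phi}\}$---are false, and indeed the inequality as transcribed is false. Take $H=Z$, $\ket{\phi}=\cos(\Theta/2)\ket{0}+\sin(\Theta/2)\ket{1}$, and $\sigma=(1-\epsilon)\ketbra{0}{0}+\epsilon\ketbra{1}{1}$. Then $[\sigma,H]=0$, the support condition of Definition~\ref{def:purity-of-coherence} holds, and $P_H(\sigma)=0$; yet $\delta=\sin^2(\Theta/2)+\epsilon\cos\Theta$ and $V_H(\ketbra{\phi}{\phi})=\sin^2\Theta$, so for $\Theta=\pi/3$ and small $\epsilon$ the right-hand side equals approximately $15/16>0$. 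Thus $\min_\sigma P_H(\sigma)$ at fixed fidelity is zero, not the positive qubit-model value your bridge is supposed to reach: whenever $H$ has an eigenvector within infidelity $\delta$ of $\ket{\phi}$ (which fixing $V_H(\phi)$ does not preclude), the cross terms you hoped to control cancel the main term $(1-\delta)\sqrt{V_H}$ identically---in this example $A=0$ outright, so every monotone-metric reduction, including yours, returns zero.

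The structural reason no completion of your argument (or any argument) can exist is that a true lower bound on $P_H(\sigma)$ must vanish whenever $\sigma$ commutes with $H$, whereas $V_H(\ketbra{\phi}{\phi})\bigl[(1-\delta)^2/\delta-1\bigr]$ does not: $V_H(\phi)$ measures the energy spread of $\ket{\phi}$ but says nothing about whether $H$ possesses an eigenvector near $\ket{\phi}$, and the two are compatible once the spectral width of $H$ is of order $\sqrt{V_H/\delta}$ (in the qubit example, $\delta=\sin^2(\Theta/2)$ and $V_H/4$ coincide at leading order, so no hypothesis of the form ``$\delta$ small'' uniform in $H$ can rescue it). A correct statement must therefore either carry an additive correction of order the squared spectral width of $H$ (for $H_{\text{out}}=Z$ the violation is at most $4\cos^6(\Theta/2)\le 4$), or a hypothesis forcing $\sigma$ to be coherent, such as $\delta$ small compared to $V_H$ divided by the squared spectral width; presumably the original lemma in Ref.~\cite{Marvian2020} carries such a qualifier that was lost in this paper's restatement. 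None of this harms the paper's downstream use---in Appendix~\ref{appendix:infidelity-lower-bound} the output Hamiltonian and target state are fixed while $\delta\to 0$ as $N\to\infty$, so an $O(1)$ correction is invisible at the leading $\Theta(N)$ order---but it does mean the statement you were asked to prove is not a theorem, and any repaired proof must use information about $H$ beyond $V_H(\phi)$, which your two-dimensional reduction discards by construction.
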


This will be the principal fact we use to lower-bound the infidelity factor.

\vspace{0.5\baselineskip}

\begin{proof}[Proof of Theorem \ref{thm:infidelity-factor-lower-bound-general}]

Consider a single-shot distillation protocol $\mE_N$ with infidelity
\begin{equation}
    \mI(\mE) = \frac{\delta_1(\mE)}{N} + o\left(N^{-1}\right).
\end{equation}
Using Lemma \ref{lem:PH-mixed-near-pure}, the purity of coherence of the output state is
\begin{align}
    P_{H_{\text{out}}}\left(\mE\left(\rho_{\text{in}}\right)\right) &\ge V_H\left(\ket{\psi_{\text{out}}}\bra{\psi_{\text{out}}}\right)\left[\frac{\left(1-\mI(\mE)\right)^2}{\mI(\mE)}-1\right] \\
    &= \left[\frac{V_H\left(\ket{\psi_{\text{out}}}\bra{\psi_{\text{out}}}\right)}{\delta_1(\mE)} + o(1)\right]N.
\end{align}
The monotonicity of purity of coherence states that
\begin{equation}
    P_{H_{\text{out}}}\left(\mE\left(\rho_{\text{in}}\right)\right) \le P_{H_{\text{in}}}\left(\rho_{\text{in}}\right) = P_H(\rho)N,
\end{equation}
which immediately implies that
\begin{equation}
    \delta_1(\mE) \ge \frac{V_{H_{\text{out}}}\left(\ket{\psi_{\text{out}}}\bra{\psi_{\text{out}}}\right)}{P_H(\rho)}.
\end{equation}

\end{proof}

\begin{proof}[Proof of Theorem \ref{thm:infidelity-factor-lower-bound-qubit}]
The purity of coherence of a qubit state can be computed conveniently using results from Supplementary Note 2 of \cite{Marvian2020}. (In particular, the subsection titled ``Purity of coherence for Qubits'' provides several nice formulas.) The purity of coherence of each input qubit is
\begin{equation}
    P_H(\rho) = \frac{4\lambda^2}{1-\lambda^2}\sin^2\Theta_{\text{in}}.
\end{equation}
The energy variance of the target qubit state is
\begin{equation}
    V_H\left(\ket{\psi_{\text{out}}}\bra{\psi_{\text{out}}}\right) = \sin^2\Theta_{\text{out}}.
\end{equation}
The resulting infidelity factor lower bound is thus
\begin{equation}
    \delta_1(\mE) \ge \frac{1-\lambda^2}{4\lambda^2}\frac{\sin^2\Theta_{\text{out}}}{\sin^2\Theta_{\text{in}}}.
\end{equation}
\end{proof}

If we restrict our attention to equatorial distillation, we can make a more precise statement. In particular, as shown in Supplementary Note 10 of \cite{Marvian2020}, one can exploit symmetry with respect to the $X$ operator to show that
\begin{equation}
    \mI(\mE_N) \ge \frac{1}{2}\left[1 - \left(1 + \frac{1-\lambda^2}{\lambda^2}\frac{1}{N}\right)^{-1/2}\right] = \frac{1-\lambda^2}{4\lambda^2}\frac{1}{N} - \frac{3\left(1-\lambda^2\right)^2}{16\lambda^2}\frac{1}{N^2} + \frac{5\left(1-\lambda^2\right)^3}{32\lambda^6}\frac{1}{N^3} + O\left(N^{-4}\right).
\end{equation}
Just as in the general case, the first term can be saturated, in the sense that there exists a protocol $\mE_N$ satisfying $\delta_1(\mE) = \frac{1-\lambda^2}{4\lambda^2}$.

\vspace{0.5\baselineskip}

In the equatorial case, since it is convenient to write a lower bound on infidelity that includes higher-order terms, one may also naturally wonder whether there exists a protocol that can match the higher-order terms in the above inequality. For example, does there exist a protocol $\mE_N$ satisfying $\delta_1(\mE) = \frac{1-\lambda^2}{4\lambda^2}$ and also $\delta_2(\mE) = -\frac{3\left(1-\lambda^2\right)^2}{16\lambda^4}$, thereby saturating the purity-of-coherence bound at the second order? However, the answer to this question turns out to be no, as we show in Appendix \ref{appendix:2nd-order-optimality}\ref{subsec:2nd-order-optimality-special-case-equatorial}. As a result, it is unclear whether the higher-order terms resulting from monotonicity of purity of coherence have any operational significance.

\newpage

\appsec{Deriving the Kraus Representation of the Optimal Distillation Protocol}
{appendix:deriving-kraus-rep}

This appendix is dedicated to proving the following theorem:

\begin{theorem}
\label{thm:optimal-protocol-kraus-rep}
There is an optimal $N\rightarrow 1$ coherence distillation channel $\mE$ that begins with Schur sampling and subsequently applies a channel with the following Kraus operators on the $N_C$-qubit symmetric subspace:
\begin{equation}
    K_w = \cos\theta_{w-1}\ket{0}\bra{(w-1)^{(s)}} + \sin\theta_w\ket{1}\bra{w^{(s)}} \quad (1\le w\le N_C),
\end{equation}
for some collection of angles $\theta_0, \theta_1, \theta_2, \cdots, \theta_{N_C}\in\left[0,\frac{\pi}{2}\right]$, where $\theta_0 = 0$ and $\theta_{N_C} = \frac{\pi}{2}$. Here, $\ket{w^{(s)}}$ denotes the fully symmetric $N_C$-qubit state with Hamming weight $w$ in the computational basis (also commonly called a \textbf{Dicke state}), that is:
\begin{equation}
    \ket{w^{(s)}} = \binom{N_C}{w}^{-1/2}\sum_{w(b)=w}\ket{b},
\end{equation}
where the sum is taken over all $N_C$-bit strings $b$ with Hamming weight $w$.
\end{theorem}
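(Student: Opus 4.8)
The plan is to pass to the symmetric subspace produced by Schur sampling and then let time-translation covariance, together with trace preservation, dictate almost everything; only a short optimization is needed at the end to collapse the remaining freedom. First I would use the reversibility of Schur sampling on the permutation-invariant input $\rho^{\otimes N}$ (Subsection~\ref{subsec:preprocessing-schur-transform}) to assume, without loss of generality, that an optimal $\mE$ begins by measuring the total angular momentum, discards the $S_N$-multiplicity space and the detected singlets, and then acts on the spin-$j$ irrep, i.e.\ the $N_C = 2j$-qubit symmetric subspace spanned by the Dicke states $\ket{w^{(s)}}$, $0 \le w \le N_C$. The point I would stress is that these Dicke states are exactly the nondegenerate eigenstates of $H_{\text{in}} = \sum_i Z_i$ restricted to this subspace, with eigenvalues $N_C - 2w$, while the output qubit has eigenstates $\ket{0}$ and $\ket{1}$ of $H_{\text{out}} = Z$ with eigenvalues $+1$ and $-1$. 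Since $\mE$ must be TI (Definition~\ref{def:TI-channel}) and fidelity is then independent of the time parameter by covariance, it suffices to optimize $\bra{\psi_{\text{out}}}\mE(\sigma)\ket{\psi_{\text{out}}}$ at $t = 0$, where $\sigma$ is the Schur-sampled state.

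Second, I would invoke the structure theorem for covariant channels to choose a Kraus representation in which each Kraus operator carries a definite frequency $\omega$, i.e.\ it intertwines $H_{\text{in}}$ with $H_{\text{out}} + \omega$ and hence maps an input energy eigenstate of energy $E$ into the output eigenspace of energy $E - \omega$. Because the input energies $N_C - 2w$ are distinct and the output energies are only $\pm 1$, the single frequency $\omega_w := N_C - 2w + 1$ allows precisely the two transitions $\ket{(w-1)^{(s)}} \to \ket{0}$ and $\ket{w^{(s)}} \to \ket{1}$, and forbids every other matrix element. This already pins each frequency-$\omega_w$ Kraus operator to the two-term ladder shape $\alpha_w\ket{0}\bra{(w-1)^{(s)}} + \beta_w\ket{1}\bra{w^{(s)}}$. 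Writing $A_w = \sum_i |\alpha_w^{(i)}|^2$ and $B_w = \sum_i |\beta_w^{(i)}|^2$ for the (a priori several) Kraus operators at frequency $\omega_w$, the completeness relation $\sum_{w,i} K_w^{(i)\dagger} K_w^{(i)} = \mathbb{I}$ is diagonal in the Dicke basis (the cross terms die because $\langle 0|1\rangle = 0$) and decouples into the chain relations $A_{w+1} + B_w = 1$. This is solved identically by $A_w = \cos^2\theta_{w-1}$, $B_w = \sin^2\theta_w$; killing the two extremal single-term transitions $\ket{0^{(s)}} \to \ket{1}$ and $\ket{N_C^{(s)}} \to \ket{0}$ then fixes $\theta_0 = 0$ and $\theta_{N_C} = \tfrac{\pi}{2}$ and confines every $\theta_w$ to $[0,\tfrac{\pi}{2}]$.

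Third, I would verify the two residual optimizations. Substituting the ladder operators, the fidelity splits over frequencies as $\sum_w\big(A_w\cos^2\tfrac{\Theta_{\text{out}}}{2}\,\sigma_{w-1,w-1} + B_w\sin^2\tfrac{\Theta_{\text{out}}}{2}\,\sigma_{w,w} + 2\operatorname{Re}[C_w\,\sigma_{w-1,w}]\cos\tfrac{\Theta_{\text{out}}}{2}\sin\tfrac{\Theta_{\text{out}}}{2}\big)$, where $\sigma_{uv} = \bra{u^{(s)}}\sigma\ket{v^{(s)}}$ and $C_w = \sum_i \alpha_w^{(i)}\overline{\beta_w^{(i)}}$. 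The diagonal terms depend only on the fixed $A_w, B_w$, while Cauchy--Schwarz gives $|C_w| \le \sqrt{A_w B_w}$ with equality exactly when the vectors $(\alpha_w^{(i)})_i$ and $(\beta_w^{(i)})_i$ are proportional, i.e.\ a single effective Kraus operator per frequency; choosing its relative phase to align $C_w$ with $\sigma_{w-1,w}$ then maximizes every off-diagonal contribution at once, which lets us take real, nonnegative amplitudes $\alpha_w = \cos\theta_{w-1}$, $\beta_w = \sin\theta_w$, giving exactly the claimed $K_w$. The main obstacle I anticipate is this final layer: rigorously justifying the covariant structure theorem in the periodic $U(1)$ setting, controlling the phases of the nearest-neighbor entries $\sigma_{w-1,w}$ of the Schur-sampled qubit state so that a single real-coefficient Kraus operator genuinely attains the Cauchy--Schwarz bound, and confirming by a monotonicity argument (using the $X$ reflection symmetry to reduce to $\Theta_{\text{out}} \le \tfrac{\pi}{2}$ where needed) that the two extremal transitions can indeed be dropped. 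Everything else reduces to bookkeeping in the Dicke basis.
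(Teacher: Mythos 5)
Your proposal is correct and follows essentially the same route as the paper's Kraus-operator proof (Appendix \ref{appendix:deriving-kraus-rep}\ref{subsec:distillation-protocol-form-proof-kraus-operators}): Schur sampling reduction, covariant Kraus operators of definite frequency forcing the two-term ladder form, the chain normalization conditions parametrized by angles, Cauchy--Schwarz to collapse to a single multiplicity-free Kraus operator per frequency with real nonnegative coefficients, and an optimality/monotonicity argument to kill the extremal transitions and fix $\theta_0 = 0$, $\theta_{N_C} = \tfrac{\pi}{2}$. The subtleties you flag (reality of the entries $\sigma_{w-1,w}$, which holds because the input Bloch vector can be taken in the $xz$-plane, and the dropping of the extremal transitions) are handled the same way in the paper, so no genuine gap remains.
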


We will provide two different proofs of this theorem, which are based on two different common descriptions of a quantum channel. The first proof is based on Choi matrices (see Appendix \ref{appendix:deriving-kraus-rep}\ref{subsec:distillation-protocol-form-proof-choi-matrices}), whereas the second proof is based on Kraus operators (see Appendix \ref{appendix:deriving-kraus-rep}\ref{subsec:distillation-protocol-form-proof-kraus-operators}). The second proof (using Kraus operators) is probably the most appealing to well-versed quantum information theorists. However, the first proof (using Choi matrices) introduces a more general expression for covariant protocols which will be valuable in Appendix \ref{appendix:entanglement-breaking}, where we discuss entanglement-breaking distillation channels. One popular tool for determining whether a channel is entanglement-breaking is the positive partial transpose (PPT) condition, which is easy to assess given a Choi matrix.

\vspace{0.5\baselineskip}

Furthermore, following those two proofs, we will demonstrate that a distillation protocol of the form stated in Theorem \ref{thm:optimal-protocol-kraus-rep} has a nice description in terms of yet another common description of a quantum channel, namely the Stinespring dilation (see Appendix \ref{appendix:deriving-kraus-rep}\ref{subsec:distillation-protocol-implementation-stinespring-dilation}). This will shed some light on how one might actually implement one of these coherence distillation protocols in practice, including an operational meaning for the $\theta_w$ values in the theorem statement.

\vspace{0.5\baselineskip}

Prior to these discussions, we first provide more detail about how Schur sampling and additional covariance conditions reduce the space of protocols we need to consider. In Appendix \ref{appendix:deriving-kraus-rep}\ref{subsec:cirac-distillation-primitive}, we say a bit more about Schur sampling. In addition, in Appendix \ref{appendix:deriving-kraus-rep}\ref{subsec:covariance-conditions-parameter-counting}, we explain how imposing additional covariance conditions reduces the number of parameters but does not capture the full benefits of Schur sampling.

\appsubsec{Using Schur Sampling as a Primitive}
{subsec:cirac-distillation-primitive}

An important primitive in all three proofs, and throughout this paper, is the process of Schur sampling \cite{Cirac1999}. Given $\rho^{\otimes N}$ for a qubit state $\rho = \lambda\ket{\psi}\bra{\psi} + (1-\lambda)\frac{\Ibb}{2}$, first perform a total angular momentum measurement, which yields some value $j$ and a multiplicity index $\alpha$. Afterward, apply a suitable unitary $U_{j,\alpha}$ to transform the state into a state where the last $N-2j$ qubits are all in singlet states. These states carry no information about $\rho$, and thus they can be freely discarded. The result is a state on the symmetric space of $N_C=2j$ qubits, where $0\le N_C\le N$ and $N-N_C$ is even (but for large $N$, $N_C\sim\lambda N$ with high probability). We discuss the distribution of $N_C$ (as a function of $N$ and $\lambda$) in much more detail in Appendix \ref{appendix:angular-momentum-moments}.

\vspace{0.5\baselineskip}

Cirac et al. showed that Schur sampling can be understood as a distillation procedure because the reduced state of any one of these remaining qubits has very high fidelity with $\ket{\psi}\bra{\psi}$ \cite{Cirac1999}. In fact, if $\Theta_{\text{in}} = \Theta_{\text{out}}$, just performing Schur sampling and then discarding all but one qubit already achieves decent performance, with infidelity factor $\delta_1(\mE) = \frac{1-\lambda}{2\lambda^2}$ \cite{Cirac1999}. However, this procedure is not optimal for the problem we care about, first because it satisfies full $SU(2)$ covariance, which is more than what we need, and second because  $\Theta_{\text{in}}\neq\Theta_{\text{out}}$ in general.

\vspace{0.5\baselineskip}

For this discussion, the crucial points about the Schur sampling are as follows. First, it is actually reversible, in the sense that one can undo it by reintroducing the appropriate number of singlet states and performing a random permutation on the qubits \cite{Cirac1999}. Hence, it never hurts to begin our distillation procedure by performing Schur sampling. Second, the Schur-sampled state lives in the $N_C$-qubit symmetric subspace, which has dimension $N_C+1$, far smaller than the $2^N$ dimension of the original Hilbert space.

\vspace{0.5\baselineskip}

The real kicker is that any output state of Schur sampling applied to $\rho^{\otimes N}$, which lives in the $N_C$-qubit symmetric subspace, depends only on $N_C$ and $\lambda$; it maintains no ``memory'' of the original value of $N$. This means that, when we optimize the post-Schur-sampling protocol for a specific value of $N_C$, we actually do it in one fell swoop for any value of $N$ that could produce that value of $N_C$ from Schur sampling (in other words, any $N\ge N_C$ such that $N-N_C$ is even). This allows us to optimize the $\theta_w$ values just based on $N_C$, which is what ensures that we have $N_C-1 = \Theta(N)$ real parameters. If the ``memoryless'' property were not true, then perhaps we would need to optimize the $\theta_w$ values based on both the original number of qubits $N$ and the Schur-sampled number of qubits $N_C$, which would increase the number of parameters to $\Theta(N^2)$.

\vspace{0.5\baselineskip}

There are some additional practical benefits of always starting our procedure with Schur sampling. In particular, the $N_C$-qubit symmetric subspace has dimension $N_C+1$, which is far smaller than $2^N$ for the original Hilbert space.

Also, this way of thinking automatically incorporates the extra assumption of permutation invariance but goes even further. Permutation invariance is an addition which we discuss \ref{appendix:deriving-kraus-rep}\ref{subsec:covariance-conditions-parameter-counting}. As a result, it greatly benefits us to think of our distillation protocols in a ``post-Schur-sampling'' sense, and we will essentially always do so from this point forward.

\appsubsec{Covariance Conditions and Parameter Counting}
{subsec:covariance-conditions-parameter-counting}

To further demonstrate the advantage of starting with Schur sampling, we count the number of parameters in the single-shot qubit coherence distillation protocol under various constraints.

\vspace{0.5\baselineskip}

An arbitrary quantum channel from $N$ qubits to $1$ qubit, has $3\times 4^N$ real parameters, which we can see by considering the Choi matrix. There are $4^{N+1}$ real parameters for a Hermitian matrix of dimension $2^{N+1}$ (so that the channel is Hermiticity-preserving), but then each $2\times 2$ block has to have trace $1$ for the channel to be trace-preserving.

\vspace{0.5\baselineskip}

The time-translation invariance (TI) condition reduces this parameter count somewhat. In particular, as we will show in Appendix \ref{appendix:deriving-kraus-rep}\ref{subsec:distillation-protocol-form-proof-choi-matrices}, the TI condition enforces a block-diagonal condition on the Choi matrix, where the different blocks have dimension $\binom{N+1}{E}$ for each integer $0\le E\le N+1$. Hence the number of free parameters drops to approximately $\binom{2N+2}{N+1}$ (slightly less due to the trace-preserving condition). This is only a reduction by a factor of about $\sqrt{\pi N}$, and in particular, the number of parameters is still exponentially large.

\vspace{0.5\baselineskip}

This is where imposing additional covariance conditions helps dramatically. Since the input state $\rho^{\otimes N}$ is permutation-invariant, we are free to impose \emph{permutation symmetry} on our distillation protocol as well. In particular, if we permute the input qubits, then we should get the same output qubit state. That is, for any permutation $\sigma\in S_N$,
\begin{equation}
    \mE\left(P_{\sigma}\rho_{\text{in}}P_{\sigma^{-1}}\right) = \mE\left(\rho_{\text{in}}\right),
\end{equation}
where $P_{\sigma}$ denotes the qubit permutation operator corresponding to $\sigma$. This can be understood as $S_N$-covariance, where $S_N$ takes the qubit-permuting representation on the input Hilbert space and the trivial representation on the output Hilbert space. This symmetry reduces the number of free parameters from exponentially many to only polynomially many, so it is a very substantial simplification without which this problem would be very unwieldy. In particular, if $b,b'$ are $N$-bit strings and $c,c'$ are single bits, then the $\ket{b}\bra{b'}\otimes\ket{c}\bra{c'}$ entry of the Choi matrix must be the same even if one shuffles the bits of $b$ and the bits of $b'$ according to the same permutation $\sigma\in S_N$. Even without time-translation invariance, this drops the number of free parameters all the way down to $\Theta(N^3)$. When combined with time-translation invariance, the number of free parameters falls to $\Theta(N^2)$.

\vspace{0.5\baselineskip}

However, notice that this number of parameters is still greater than the $\Theta(N)$ count we obtain by taking advantage of Schur sampling. The reason for this is that Schur sampling automatically incorporates permutation symmetry, but it \textit{also} incorporates the memoryless nature of the Schur-sampled state, in the sense that it depends only on $N_C$ and $\lambda$, and not on the original number of qubits $N$. Hence the advantages of Schur sampling cannot be fully appreciated from permutation symmetry alone.

\vspace{0.5\baselineskip}

For the equatorial case $\Theta_{\text{in}} = \Theta_{\text{out}} = \frac{\pi}{2}$, we can additionally impose \emph{negation (bit-flip) symmetry}. In particular, if we negate all of the input qubits, then we should also negate the output qubit:
\begin{equation}
    \mE\left(X^{\otimes N}\rho_{\text{in}}X^{\otimes N}\right) = X\mE\left(\rho_{\text{in}}\right)X.
\end{equation}
This can be understood as $\mathbb{Z}_2$-covariance, where the single non-identity element of $\mathbb{Z}_2$ is represented by $X^{\otimes N}$ on the input Hilbert space and $X$ on the output Hilbert space.

\vspace{0.5\baselineskip}

It is also worth commenting on the impact of \emph{negation (bit-flip) symmetry}. In Subsection \ref{subsec:preprocessing-schur-transform}, we noted that, in the equatorial case $\Theta_{\text{in}} = \Theta_{\text{out}} = \frac{\pi}{2}$, we can additionally impose negation symmetry. We also pointed out that, for the Kraus operators given by Theorem \ref{thm:optimal-protocol-kraus-rep}, the number of parameters is only cut by about half, which follows from the fact that the covariance group $\Zbb_2$ has size $2$.

\vspace{0.5\baselineskip}

The advantage of negation symmetry is thus not really about cutting down the parameter count; rather, negation symmetry has other unique consequences that substantially simplify the problem. First, as we show in Appendix \ref{appendix:boundary-value-problem}\ref{subsec:brute-force-optimization}, the boundary value problem that we need to solve incorporates the matrix entries in the computational basis of the $N_C$-qubit state that results from Schur sampling. Ordinarily, optimizing the coherence distillation protocol requires studying the entries both on the diagonal (which we call $P_{w,w}$) and one away from the diagonal (which we call $P_{w-1,w}$). However, negation symmetry simplifies the problem in such a way that the $P_{w,w}$ entries no longer matter, and we only need to care about the $P_{w-1,w}$ entries. Second, as we show in Appendix \ref{appendix:boundary-value-problem}\ref{subsec:solve-recurrence-odd-N-equatorial}, negation symmetry allows the optimal equatorial distillation protocol to be computed efficiently and exactly for any odd value of $N$. This was greatly helpful for some of the numerical computations that we carried out for large values of $N$ to notice useful patterns. Third, as we show in Appendix \ref{appendix:2nd-order-optimality}\ref{subsec:2nd-order-optimality-special-case-equatorial}, a $1^{\text{st}}$-order optimal distillation protocol that satisfies bit-flip symmetry is automatically $2^{\text{nd}}$-order optimal as well.

\appsubsec{Proof Using Choi Matrices}
{subsec:distillation-protocol-form-proof-choi-matrices}

We begin with the proof using Choi matrices. We suspect that this proof will be most appealing to those who do not yet have much comfort with TI channels and would wish to see in the most detailed way possible what conditions time-translation invariance imposes on a channel when written in the energy eigenbasis:

\begin{proof}[Proof using Choi matrices]
The time-translation invariance (TI) condition states that, for any $N$-qubit state $\rho$ and any angle $\theta\in\mathbb{R}$,
\begin{equation}
    \mE\left(\left(e^{i\theta Z}\right)^{\otimes N}\rho\left(e^{-i\theta Z}\right)^{\otimes N}\right) = e^{i\theta Z}\mE(\rho)e^{-i\theta Z}.
\end{equation}
For the sake of more generally demonstrating how time-translation invariance restricts a quantum channel, we actually start by not treating the protocol in a ``post-Schur-sampling'' sense, and rather just starting with the original $N$-qubit tensor product state. We will return to the ``post-Schur-sampling'' viewpoint shortly.

\vspace{0.5\baselineskip}

We can consider the action of $\mE$ on computational basis states. In particular, if $b,b'$ are two $N$-bit strings, then
\begin{equation}
    \left(e^{i\theta Z}\right)^{\otimes N}\ket{b}\bra{b'}\left(e^{-i\theta Z}\right)^{\otimes N} = e^{i(N-2w(b))}e^{-i(N-2w(b')}\ket{b}\bra{b'} = e^{2i(w(b')-w(b))}\ket{b}\bra{b'}.
\end{equation}
This immediately implies that
\begin{equation}
    e^{i\theta Z}\mE\left(\ket{b}\bra{b'}\right)e^{-i\theta Z} = e^{2i(w(b')-w(b))}\mE\left(\ket{b}\bra{b'}\right),
\end{equation}
which in turn restricts which entries in the $2\times 2$ matrix $\mE\left(\ket{b}\bra{b'}\right)$ are even allowed to be nonzero. In particular, we obtain
\begin{equation}
\label{eq:TI-comp-basis}
    \mE\left(\ket{b}\bra{b'}\right) = \begin{cases}
        0 & \abs{w(b') - w(b)} \ge 2 \\
        z(b,b')\ket{0}\bra{1} & w(b') - w(b) = +1 \\
        z(b,b')\ket{1}\bra{0} & w(b') - w(b) = -1 \\
        r(b,b')\ket{0}\bra{0} + (1-r(b,b'))\ket{1}\bra{1} & w(b') - w(b) = 0,
    \end{cases}
\end{equation}
where in this case, $z(b,b')$ refers to some complex number and $r(b,b')$ refers to some real number. Note that, by choosing the coefficients in the $w(b') - w(b) = 0$ case to add up to $1$, we have already enforced the trace-preserving condition on the channel.

\vspace{0.5\baselineskip}

This is a special case of a more general observation. In particular, if a channel $\mE_{A\rightarrow B}$ is time-translation invariant with input Hamiltonian $H_A$ and output Hamiltonian $H_B$, then its Choi matrix
\begin{equation}
    J\left(\mE_{A\rightarrow B}\right) = \sum_{i,j=1}^{d_A}\ket{i}\bra{j}_A\otimes\mE(\ket{i}\bra{j})_B
\end{equation}
commutes with $H_{\text{tot}} = H_A\otimes\mathbb{I}_B + \mathbb{I}_A\otimes\left(-H_B^\intercal\right)$. This is discussed in much more detail in Supplementary Note 1 of Ref. \cite{Marvian2020}. In our special case above, $H_A = \frac{\hbar\pi}{\tau}\sum_{i=1}^{N}Z_i$ and $H_B = \frac{\hbar\pi}{\tau}Z_{\text{out}}$ (where ``out'' denotes the output qubit and $\tau$ is the period of the quantum clock), so the Choi matrix $J(\mathcal{E})$ commutes with $Z_1 + Z_2 + \cdots + Z_N - Z_{\text{out}}$. This fact immediately imposes a block-diagonal structure on $J(\mE)$ that yields the same restrictions on $\mE(\ket{b}\bra{b'})$ that are shown above in Equation \ref{eq:TI-comp-basis}. 

\vspace{0.5\baselineskip}

Also, we note that in the language of \cite{marvian2014modes}, Eq.~\ref{eq:TI-comp-basis} determines which modes of asymmetry of the input are relevant. 
In particular, time-translation symmetry of the channel implies that, for a single-qubit output, only modes of asymmetry corresponding to frequencies $w(b') - w(b) = 0, \pm 1$ are relevant. 
Ref.~\cite{marvian2014modes} shows how similar arguments can be extended to channels that are covariant with respect to an arbitrary compact group, using the notion of Fourier transform over general groups.

\vspace{0.5\baselineskip}

We now return to thinking of our protocol in a post-Schur-sampling sense, meaning that we will have the channel $\mE$ act on the $N_C$-qubit symmetric subspace. The same observation as above applies, except now with the permutation-invariant Hamming weight states $\{\ket{w^{(s)}} \,|\, 0\le w\le N_C\}$ taking the place of the computational basis states $\{\ket{b} \,|\, b\in\{0,1\}^N\}$. In particular, for exactly the same reason as above,
\begin{equation}
    \mE\left(\ket{w^{(s)}}\bra{(w')^{(s)}}\right) = \begin{cases}
        0 & \abs{w'-w} \ge 2 \\
        A_w & w'-w = +1 \\
        \overline{A}_w & w'-w = -1 \\
        \cos^2\theta_w\ket{0}\bra{0} + \sin^2\theta_w\ket{1}\bra{1} & w'-w = 0,
    \end{cases}
\end{equation}
where we have already incorporated the fact that $\mE$ is trace-preserving and Hermiticity-preserving. (We have also ensured that $\mE(\ket{w^{(s)}}\bra{w^{(s)}})$ is positive semidefinite, which at least ensures that $\mE$ does not fail positivity in an ``obvious'' way.) Therefore, the Choi matrix adopts the following block-diagonal form:

\begin{equation}
    J(\mE) = \begin{array}{|cc|cc|cc|cc|cc|cc|}
        \hline
        \cos^2\theta_0 & & & A_1 & & & & & & & & \\
        & \sin^2\theta_0 & & & & & & & & & & \\
        \hline
        & & \cos^2\theta_1 & & & A_2 & & & & & & \\
        \overline{A}_1 & & & \sin^2\theta_1 & & & & & & & & \\
        \hline
        & & & & \cos^2\theta_2 & & & \ddots & & & & \\
        & & \overline{A}_2 & & & \sin^2\theta_2 & & & & & & \\
        \hline
        & & & & & & \ddots & & & A_{N_C-1} & & \\
        & & & & \ddots & & & \ddots & & & & \\
        \hline
        & & & & & & & & \cos^2\theta_{N_C-1} & & & A_{N_C} \\
        & & & & & & \overline{A}_{N_C-1} & & & \sin^2\theta_{N_C-1} & & \\
        \hline
        & & & & & & & & & & \cos^2\theta_{N_C} & \\
        & & & & & & & & \overline{A}_{N_C} & & & \sin^2\theta_{N_C} \\
        \hline
    \end{array}.
\end{equation}

In the above Choi matrix, we have already enforced the constraints that $\mE$ is trace-preserving and Hermiticity-preserving (as well as not ``obviously'' failing positivity, as we explained above). The $\theta_w$ values actually do have some significance in the context of a covariant Stinespring dilation, as we will show in Appendix \ref{appendix:deriving-kraus-rep}\ref{subsec:distillation-protocol-implementation-stinespring-dilation}, so the choice to label the diagonal entries $\cos^2\theta_w$ and $\sin^2\theta_w$ is deeper than merely ensuring that those entries add up to $1$. Also, since we have only used $\cos^2\theta_w$ and $\sin^2\theta_w$ values in the above Choi matrices, we can restrict the $\theta_w$ values to $\left[0,\frac{\pi}{2}\right]$ for convenience.

\vspace{0.5\baselineskip}

The only remaining condition on $\mE$ we need to check is complete positivity, which is equivalent to $J(\mE)$ being positive semidefinite (PSD). This matrix is block-diagonal with $2$ blocks of size $1\times 1$ and $N_C$ blocks of size $2\times 2$. As a result, it is easy to determine that $J(\mE)$ is PSD if and only if
\begin{equation}
    \abs{A_w} \le \cos\theta_{w-1}\sin\theta_w \quad (1\le w\le N_C).
\end{equation}

Now remember that our aim is to map $N$ copies of a qubit state to a single copy of another qubit state with the \emph{same} azimuthal angle on the Bloch sphere. So for example, if our input qubits have azimuthal angle $0$ (that is, they lie in the $xz$-plane of the Bloch sphere with positive $x$-coordinate), then our target state also has azimuthal angle $0$. The target state thus takes the form
\begin{equation}
    \rho_{\text{target}} = \ket{\psi_{\text{target}}}\bra{\psi_{\text{target}}} = \frac{1}{2}\begin{bmatrix}
        1+C_{\text{out}} & S_{\text{out}} \\
        S_{\text{out}} & 1-C_{\text{out}}
    \end{bmatrix},
\end{equation}
where we have defined $C_{\text{out}} \coloneqq \cos\Theta_{\text{out}}$ and $S_{\text{out}} \coloneqq \sin\Theta_{\text{out}}$ for convenience. Therefore, the fidelity of the output qubit state with the target state is
\begin{equation}
    \mF = \sum_{w=0}^{N_C}\left(\frac{1+C_{\text{out}}}{2}\cos^2\theta_w + \frac{1-C_{\text{out}}}{2}\sin^2\theta_w\right)P_{w,w} + S_{\text{out}}\sum_{w=1}^{N_C}\text{Re}[A_w]P_{w-1,w},
\end{equation}
where the $P_{w,w'}$ values are the matrix entries of the Schur-sampled state $\rho_{N_C/2}$ when written in the basis of Dicke states:
\begin{equation}
    P_{w,w'} \coloneqq \bra{w^{(s)}}\rho_{N_C/2}\ket{(w')^{(s)}}.
\end{equation}
These matrix elements are studied in detail in Appendix \ref{appendix:understanding-P-vals}.

\vspace{0.5\baselineskip}

To maximize the fidelity, we must therefore maximize each $\text{Re}[A_w]$, which means choosing
\begin{equation}
    A_w = \cos\theta_{w-1}\sin\theta_w \quad (1\le w\le N_C).
\end{equation}
Finally, we need to show that $\theta_0 = 0$ and $\theta_{N_C} = \frac{\pi}{2}$. Notice that the dependence of $\mF$ on $\theta_0$ can be written as follows:
\begin{align}
    \mF &= \left(\frac{1+C_{\text{out}}}{2}\cos^2\theta_0 + \frac{1-C_{\text{out}}}{2}\sin^2\theta_0\right)P_{00} + S_{\text{out}}\cos\theta_0\sin\theta_1P_{01} + \text{(other terms)} \\
    &= \left(\frac{1-C_{\text{out}}}{2} + C_{\text{out}}\cos^2\theta_0\right)P_{00} + S_{\text{out}}\cos\theta_0\sin\theta_1P_{01} + \text{(other terms)}.
\end{align}
When $\mF$ is written in this way, it is much more obvious that maximizing $\cos\theta_0$ is the optimal choice, irrespective of the other $\theta_w$ values. A similar logic will show that maximizing $\sin\theta_{N_C}$ is the optimal choice, irrespective of the other $\theta_w$ values. Hence, $\theta_0 = 0$ and $\theta_{N_C} = \frac{\pi}{2}$. The above Choi matrix with the above expression for the $A_w$ values and the above values for $\theta_0$ and $\theta_{N_C}$ perfectly matches the Kraus representation in the statement of Theorem \ref{thm:optimal-protocol-kraus-rep}, so we are done.
\end{proof}

\appsubsec{Proof Using Kraus Operators}
{subsec:distillation-protocol-form-proof-kraus-operators}

Let us now proceed to a proof that directly uses Kraus operators. We suspect that this proof may be more appealing to those well-versed in the theory of covariant channels and what the covariance conditions says about their Kraus operators:

\begin{proof}[Proof using Kraus operators]
It was shown by Gour \cite{gour2008resource} and again by Marvian \cite{Marvian2020} that a TI channel has a Kraus representation $\{K_{(E,\alpha)}\}$ where, in addition to the usual normalization condition
\begin{equation}
    \sum_{(E,\alpha)}K_{(E,\alpha)}^\dagger K_{(E,\alpha)} = \Ibb,
\end{equation}
each individual Kraus operator satisfies
\begin{equation}
    e^{-iH_{\text{out}}t}K_{(E,\alpha)}e^{+iH_{\text{in}}t} = e^{-iEt}K_{(E,\alpha)}.
\end{equation}
Intuitively, $E$ refers to the amount of energy that is being added to the system, and $\alpha$ is a multiplicity index, since there can be more than one Kraus operator with the same value of $E$.

\vspace{0.5\baselineskip}

Once again, we assume that we start with Schur sampling to bring us to an $N_C$-qubit state in the symmetric subspace \cite{Cirac1999}. Therefore, the energy eigenstates states on the input space are $\ket{w^{(s)}}$ for $0\le w\le N_C$, and the energy eigenstates on the output space are $\ket{0}$ and $\ket{1}$. Notice that
\begin{equation}
    e^{-iH_{\text{out}}t}\ket{b}\bra{w^{(s)}}e^{+iH_{\text{in}}t} = e^{-itZ}\ket{b}\bra{w^{(s)}}e^{-it\sum_{j=1}^{N_C}Z_j} = e^{it(w-b)}\ket{b}\bra{w^{(s)}}.
\end{equation}
Therefore, the only terms of the form $\ket{b}\bra{w^{(s)}}$ that can be put in the same Kraus operator are those with the same value of $w-b$. In other words, our Kraus operators must take the form
\begin{equation}
    K_{(w,\alpha)} = c(w-1,\alpha)\ket{0}\bra{(w-1)^{(s)}} + d(w,\alpha)\ket{1}\bra{w^{(s)}} \quad (0\le w\le N_C+1)
\end{equation}
for some complex numbers $c(w,\alpha)$ and $d(w,\alpha)$, and where we set $\ket{(-1)^{(s)}} = \ket{(N_C+1)^{(s)}} = 0$ by convention. Now notice that
\begin{align}
    \sum_{(w,\alpha)}K_{(w,\alpha)}^\dagger K_{(w,\alpha)} &= \sum_{(w,\alpha)}\Big\{\abs{c(w-1,\alpha)}^2\ket{(w-1)^{(s)}}\bra{(w-1)^{(s)}} \\
    & \quad\quad + \abs{d(w,\alpha)}^2\ket{w^{(s)}}\bra{w^{(s)}} \\
    & \quad\quad + c(w-1,\alpha)^*d(w,\alpha)\ket{(w-1)^{(s)}}\bra{w^{(s)}} \\
    & \quad\quad + c(w-1,\alpha)d(w,\alpha)^*\ket{w^{(s)}}\bra{(w-1)^{(s)}}\Big\}.
\end{align}
Therefore, the normalization condition implies that
\begin{equation}
    \sum_{(w,\alpha)}\left[\abs{c(w,\alpha)}^2 + \abs{d(w,\alpha)}^2\right] = 1 \quad (0\le w\le N_C).
\end{equation}
To narrow down $c(w,\alpha)$ and $d(w,\alpha)$ further, we now need to invoke optimality. In particular, we can directly compute that the fidelity of the output qubit state with the target state is
\begin{equation}
    \mF = \frac{1+C_{\text{out}}}{2}\sum_{(w,\alpha)}P_{w,w}\abs{c(w,\alpha)}^2 + \frac{1-C_{\text{out}}}{2}\sum_{(w,\alpha)}P_{w,w}\abs{d(w,\alpha)}^2 + S_{\text{out}}\sum_{(w,\alpha)}P_{w-1,w}\text{Re}\left[c(w-1,\alpha)^*d(w,\alpha)\right].
\end{equation}
By the Cauchy-Schwarz inequality, for any fixed $w$,
\begin{equation}
    \sum_{\alpha}\text{Re}\left[c(w-1,\alpha)^*d(w,\alpha)\right] \le \abs{\sum_{\alpha}c(w-1,\alpha)^*d(w,\alpha)} \le \sqrt{\left[\sum_{\alpha}\abs{c(w-1,\alpha)}^2\right]\left[\sum_{\alpha}\abs{d(w,\alpha)}^2\right]}.
\end{equation}
The equality case of the Cauchy-Schwarz inequality (the second inequality above) is achieved if and only if the ratio $c(w-1,\alpha):d(w,\alpha)$ is the same for all $\alpha$. However, this means that, for that value of $w$, the Kraus operators $K_{(w,\alpha)}$ are all the same up to constant factor. If a quantum channel has multiple Kraus operators $K_\alpha$ that are the same up to constant factors $b_\alpha$, in the sense that $K_\alpha = b_\alpha K_0$ for some fixed operator $K_0$, then they can all be replaced by a single Kraus operator $K_{\text{tot}} = b_{\text{tot}}K_0$, where $\abs{b_{\text{tot}}}^2 = \sum_{\alpha}\abs{b_\alpha}^2$. In other words, we are best off going multiplicity-free and having just one Kraus operator for each Hamming weight:
\begin{equation}
    K_w = c(w-1)\ket{0}\bra{(w-1)^{(s)}} + d(w)\ket{1}\bra{w^{(s)}} \quad (0\le w\le N_C+1).
\end{equation}
(Note that the values $c(-1)$ and $d(N_C+1)$ do not matter, because we set $\ket{(-1)^{(s)}} = \ket{(N_C+1)^{(s)}} = 0$ by convention.) Now, to saturate the first inequality above, we need $c(w-1)^*d(w)$ to be nonnegative real, meaning that, for each $0\le w\le N_C+1$, $c(w-1)$ and $d(w)$ have the same argument. Since we can freely multiply Kraus operators by global phase factors, we can set $c(w-1)$ and $d(w)$ to be nonnegative real. Then the normalization condition states that $c(w)^2 + d(w)^2 = 1$ for each $0\le w\le N_C+1$. Hence we can write $c(w) = \cos\theta_w$ and $d(w) = \sin\theta_w$ for some angles $0\le\theta_w\le\frac{\pi}{2}$. Finally, when we plug all of these values into the fidelity formula, we find exactly the same fidelity $\mF$ as we did in Appendix \ref{appendix:deriving-kraus-rep}\ref{subsec:distillation-protocol-form-proof-choi-matrices} (of course we must, since it is the same channel). Therefore, by the same logic, we conclude that $\theta_0 = 0$ and $\theta_{N_C} = \frac{\pi}{2}$, which finishes the proof.
\end{proof}

\appsubsec{Implementation of Coherence Distillation via Covariant Stinespring Dilation}
{subsec:distillation-protocol-implementation-stinespring-dilation}

Finally, we show that a quantum channel of the form shown in Theorem \ref{thm:optimal-protocol-kraus-rep} has an especially nice Stinespring dilation. We suspect that this may be of interest to anyone wishing to actually implement coherence distillation in practice.

\vspace{0.5\baselineskip}

The covariant Stinespring dilation theorem \cite{marvian2012symmetry, keyl1999optimal} states that any TI channel has a Stinespring dilation where the ancillary system is initialized in an incoherent state and the unitary operators applied to the joint system are all energy-conserving unitaries with respect to the joint Hamiltonian $H_{\text{in}}\otimes\Ibb_{\text{anc}} + \Ibb_{\text{in}}\otimes H_{\text{anc}}$.

\vspace{0.5\baselineskip}

A general quantum channel acting on $N$ qubits may require up to $N$ ancillary qubits to implement via a Stinespring dilation. However, the form of Kraus operators shown in Theorem \ref{thm:optimal-protocol-kraus-rep} will imply that we actually need only $1$ ancilla qubit. The crucial point is that, as we showed in Appendix \ref{appendix:deriving-kraus-rep}\ref{subsec:distillation-protocol-form-proof-kraus-operators}, although a general TI channel would require multiple Kraus operators $K_{(w,\alpha)}$ for each Hamming weight $w$, an optimal coherence distillation protocol will only require one Kraus operator $K_w$ for each Hamming weight $w$.

\vspace{0.5\baselineskip}

Without loss of generality, we can assume that the ancilla qubit has Hamiltonian $H_{\text{anc}} = Z$. We can initialize it in any incoherent state, namely $p\ket{0}\bra{0} + (1-p)\ket{1}\bra{1}$ for some $0\le p\le 1$. For convenience, we will initialize it in the state $\ket{0}\bra{0}$. In this case, the total energy (equivalently, Hamming weight) eigenspaces are as follows:
\begin{itemize}
    \item $\text{span}\{\ket{0^{(s)}}\ket{0}\}$ (total Hamming weight $0$)
    \item $\text{span}\{\ket{w^{(s)}}\ket{1},\ket{w^{(s)}}\ket{0}\}$ for $1\le w\le N_C$ (total Hamming weight $w$);
    \item $\text{span}\{\ket{N_C^{(s)}}\ket{1}\}$ (total Hamming weight $N_C+1$).
\end{itemize}
The total energy-conserving unitary can be understood as the direct sum of an energy-conserving unitary on each total energy eigenspace:
\begin{equation}
    U_{\text{sys,anc}} = \bigoplus_{w=0}^{N_C+1}U^{(w)},
\end{equation}
where $U^{(0)} = [u^{(0)}_{00}]$ and $U^{(N_C+1)} = [u^{(N_C+1)}_{11}]$ are $1\times 1$ unitaries, and $U^{(w)} = [u^{(w)}_{ij}]$ for $1\le w\le N_C$ are $2\times 2$ unitaries.

\vspace{0.5\baselineskip}

As a reminder, we use $P_{w,w'}$ to denote the entries of the Schur-sampled state $\rho_{N_C/2}$ in the computational basis. Therefore, we can write the joint state before the unitary as follows:
\begin{equation}
    \sum_{w,w'=0}^{N_C}P_{w,w'}\ket{w^{(s)}}\bra{(w')^{(s)}}\otimes\ket{0}\bra{0}.
\end{equation}
Then, after applying the joint unitary $U_{\text{sys,anc}}$, the resulting joint state is
\begin{equation}
    \sum_{w,w'=0}^{N_C}P_{w,w'}\left(u^{(w)}_{00}\ket{w^{(s)}}\ket{0} + u^{(w)}_{10}\ket{(w-1)^{(s)}}\ket{1}\right)\left(\overline{u}^{(w')}_{00}\bra{(w')^{(s)}}\bra{0} + \overline{u}^{(w')}_{10}\bra{(w'-1)^{(s)}}\bra{1}\right),
\end{equation}
where we set $\ket{(-1)^{(s)}}=0$ by convention.

\vspace{0.5\baselineskip}

When we trace out the original system and return the ancilla qubit, the state we get is the sum of all the $2\times 2$ blocks on the diagonal of the above matrix. Therefore, the final output state is
\begin{equation}
    \sum_{w=0}^{N_C}P_{w,w}\left[\abs{u^{(w)}_{00}}^2\ket{0}\bra{0} + \abs{u^{(w)}_{10}}^2\ket{1}\bra{1}\right] + \sum_{w=1}^{N_C}P_{w-1,w}\left[u^{(w-1)}_{00}\overline{u}^{(w)}_{10}\ket{0}\bra{1} + \overline{u}^{(w-1)}_{00}u^{(w)}_{10}\ket{1}\bra{0}\right],
\end{equation}
where we set the ``out of bounds'' value $u^{(0)}_{10}$ to equal zero by convention. Therefore, the fidelity with the target state is
\begin{equation}
    \mF = \sum_{w=0}^{N_C}P_{w,w}\left[\frac{1+C_{\text{out}}}{2}\abs{u^{(w)}_{00}}^2 + \frac{1-C_{\text{out}}}{2}\abs{u^{(w)}_{10}}^2\ket{1}\bra{1}\right] + S_{\text{out}}\sum_{w=1}^{N_C}P_{w-1,w}\text{Re}\left[u^{(w-1)}_{00}\overline{u}^{(w)}_{10}\right].
\end{equation}
Once again, there is no reason not to choose $u^{(w)}_{00}$ and $u^{(w)}_{10}$ to be nonnegative real numbers. Also, in order for $U^{(w)}$ to be unitary, these two numbers must have their squared magnitudes add up to $1$. Hence, we can say $u^{(w)}_{00} = \cos\theta_w$ and $u^{(w)}_{10} = \sin\theta_w$ for some $0\le\theta_w\le\frac{\pi}{2}$. Once we set these values, we can conclude that the channel has Kraus operators
\begin{equation}
    K_w = \cos\theta_{w-1}\ket{0}\bra{(w-1)^{(s)}} + \sin\theta_w\ket{1}\bra{w^{(s)}} \quad (0\le w\le N_C).
\end{equation}
This is exactly the form we want. Finally, when we plug all of these values into the fidelity formula, we find exactly the same formula as we did in Appendix \ref{appendix:deriving-kraus-rep}\ref{subsec:distillation-protocol-form-proof-choi-matrices} (of course we must, since it is the same channel). Therefore, by the same logic, we conclude that $\theta_0 = 0$ and $\theta_{N_C} = \frac{\pi}{2}$.

\vspace{0.5\baselineskip}

In the two proofs of Theorem \ref{thm:optimal-protocol-kraus-rep} shown in Appendices \ref{appendix:deriving-kraus-rep}\ref{subsec:distillation-protocol-form-proof-choi-matrices} and \ref{appendix:deriving-kraus-rep}\ref{subsec:distillation-protocol-form-proof-kraus-operators}, the $\theta_w$ values were basically just abstractions so that we could define quantities $\cos\theta_w$ and $\sin\theta_w$ whose squares add up to $1$. However, the covariant Stinespring dilation above shows us a meaningful interpretation for the $\theta_w$ values. Because we initialized the ancilla in the state $\ket{0}\bra{0}$, we never had to worry about the entries $u^{(w)}_{01}$ and $u^{(w)}_{11}$ in the energy-conserving unitaries $U^{(w)}$. At this point, we can choose them to be anything that ensures that $U^{(w)}$ is unitary, meaning that $u^{(w)}_{01} = -e^{i\phi_w}\sin\theta_w$ and $u^{(w)}_{11} = e^{i\phi_w}\cos\theta_w$ for some angles $\phi_w$. For convenience, we can choose $\phi_w$ across the board, meaning that
\begin{align}
    U^{(0)} &= \begin{bmatrix} \cos\theta_0 \end{bmatrix} = \begin{bmatrix} 1 \end{bmatrix} \\
    U^{(w)} &= \begin{bmatrix}
        \cos\theta_w & -\sin\theta_w \\
        \sin\theta_w & \cos\theta_w
    \end{bmatrix} \quad (1\le w\le N_C).
\end{align}
(We never solved for $U^{(N_C+1)}$, but it does not matter, because our choice of ancilla initialization means that the joint state before the unitary has no presence in Hamming weight $N_C+1$.) In other words, each $U^{(w)}$ is just a $2\times 2$ rotation matrix on the $2$-dimensional energy eigenspace with total Hamming weight $w$, and $\theta_w$ is nothing but the angle of that rotation.

\vspace{0.5\baselineskip}

Let us show this more viscerally by writing out the matrices explicitly for the very simple example $N_C=3$. Before applying the energy-conserving unitaries, the joint state looks as follows:
\begin{equation}
    \begin{array}{|cc|cc|cc|cc|}
        \hline
        P_{00} & & P_{01} & & P_{02} & & P_{03} & \\
        & & & & & & & \\
        \hline
        P_{10} & & P_{11} & & P_{12} & & P_{13} & \\
        & & & & & & & \\
        \hline
        P_{20} & & P_{21} & & P_{22} & & P_{23} & \\
        & & & & & & & \\
        \hline
        P_{30} & & P_{31} & & P_{32} & & P_{33} & \\
        & & & & & & & \\
        \hline
    \end{array}.
\end{equation}
And after applying the energy-conserving unitaries, the joint state looks as follows:
\begin{equation}
    \begin{array}{|cc|cc|cc|cc|}
        \hline
        P_{00}\cos^2\theta_0 & P_{01}\cos\theta_0\sin\theta_1 & P_{01}\cos\theta_0\cos\theta_1 & P_{02}\cos\theta_0\sin\theta_2 & P_{02}\cos\theta_0\cos\theta_2 & P_{03}\cos\theta_0\sin\theta_3 & P_{03}\cos\theta_0\cos\theta_3 & \\
        P_{10}\sin\theta_1\cos\theta_0 & P_{11}\sin^2\theta_1 & P_{12}\sin\theta_1\sin\theta_2 & P_{12}\sin\theta_1\cos\theta_2 & P_{13}\sin\theta_1\sin\theta_3 & P_{13}\sin\theta_1\cos\theta_3 & \\
        \hline
        P_{10}\cos\theta_1\cos\theta_0 & P_{11}\cos\theta_1\sin\theta_1 & P_{11}\cos^2\cos\theta_1 & P_{12}\cos\theta_1\sin\theta_2 & P_{12}\cos\theta_1\cos\theta_2 & P_{13}\cos\theta_1\sin\theta_3 & P_{13}\cos\theta_1\cos\theta_3 & \\
        P_{20}\sin\theta_2\cos\theta_0 & P_{21}\sin\theta_2\sin\theta_1 & P_{21}\sin\theta_2\cos\theta_1 & P_{22}\sin^2\theta_2 & P_{22}\sin\theta_2\cos\theta_2 & P_{23}\sin\theta_2\sin\theta_3 & P_{23}\sin\theta_2\cos\theta_3 & \\
        \hline
        P_{20}\cos\theta_2\cos\theta_0 & P_{21}\cos\theta_2\sin\theta_1 & P_{21}\cos\theta_2\cos\theta_1 & P_{22}\cos\theta_2\sin\theta_2 & P_{22}\cos^2\theta_2 & P_{23}\cos\theta_2\sin\theta_3 & P_{23}\cos\theta_2\cos\theta_3 & \\
        P_{30}\sin\theta_3\cos\theta_0 & P_{31}\sin\theta_3\sin\theta_1 & P_{31}\sin\theta_3\cos\theta_1 & P_{32}\sin\theta_3\sin\theta_2 & P_{32}\sin\theta_3\cos\theta_2 & P_{33}\sin^2\theta_3 & P_{33}\sin\theta_3\cos\theta_3 & \\
        \hline
        P_{30}\cos\theta_3\cos\theta_0 & P_{31}\cos\theta_3\sin\theta_1 & P_{31}\cos\theta_3\cos\theta_1 & P_{32}\cos\theta_3\sin\theta_2 & P_{32}\cos\theta_3\cos\theta_2 & P_{33}\cos\theta_3\sin\theta_3 & P_{33}\cos^2\theta_3 & \\
        & & & & & & & \\
        \hline
    \end{array}.
\end{equation}
When we trace out the original $N_C$-qubit system and return the ancilla qubit, we can see that the diagonal blocks add up to exactly what we stated in the proof. Furthermore, we can more readily see from that final state that the Kraus operators have the form shown in Theorem \ref{thm:optimal-protocol-kraus-rep}.

\newpage

\appsec{Asymptotic Saturation of the Purity-of-Coherence Bound}
{appendix:lower-bound-saturation}

In Appendix \ref{appendix:infidelity-lower-bound} (see Theorem \ref{thm:infidelity-factor-lower-bound-qubit}), we showed that any coherence distillation protocol $\mE$ must have infidelity factor
\begin{equation}
    \delta_1(\mE) \ge \frac{1-\lambda^2}{4\lambda^2}\frac{\sin^2\Theta_{\text{out}}}{\sin^2\Theta_{\text{in}}}.
\end{equation}
In this appendix, we will show an explicit example of a protocol that saturates this bound, thereby confirming that purity of coherence (RLD Fisher information) sets the tightest possible leading-order bound on the achievable infidelity of single-shot qubit coherence distillation.

\vspace{0.5\baselineskip}

In Appendix \ref{appendix:deriving-kraus-rep}, we showed that an optimal coherence distillation protocol can be characterized with a simple set of Kraus operators. In particular, we begin by performing Schur sampling to obtain a state on the symmetric subspace of $N_C$ qubits. We then apply a channel with Kraus operators
\begin{equation}
    K_w = \cos\theta_{w-1}\ket{0}\bra{(w-1)^{(s)}} + \sin\theta_{w-1}\ket{1}\bra{w^{(s)}} \quad (1\le w\le N_C),
\end{equation}
where $\theta_0 = 0$ and $\theta_{N_C} = \frac{\pi}{2}$. Finding the optimal coherence distillation protocol can now be reduced to the question of optimizing the $\theta_w$ values for $1\le w\le N_C-1$.

\vspace{0.5\baselineskip}

For convenience, define the following values:
\begin{equation}
    C_{\text{out}} \coloneqq \cos\Theta_{\text{out}}, \quad\quad S_{\text{out}} \coloneqq \sin\Theta_{\text{out}}, \quad\quad C_{\text{in}} \coloneqq \cos\Theta_{\text{in}}, \quad\quad S_{\text{in}} \coloneqq \sin\Theta_{\text{in}}.
\end{equation}
In this appendix, we will show that the $\theta_w$ values satisfying
\begin{equation}
    \boxed{\sin^2\theta_w = \frac{1-C_{\text{out}}}{2} + \lambda\frac{S_{\text{out}}^2}{S_{\text{in}}^2}\left(\frac{w}{N_C} - \frac{1-C_{\text{in}}}{2}\right)}
\end{equation}
will produce a protocol that achieves the smallest possible infidelity factor $\delta_1(\mE) = \frac{1-\lambda^2}{4\lambda^2}\frac{S_{\text{out}}^2}{S_{\text{in}}^2}$.

\vspace{0.5\baselineskip}

This choice of protocol may seem very unmotivated, but in fact it was found through a systematic optimization process. We encourage the curious reader to check out Appendices \ref{appendix:asymptotic-expansion} and \ref{appendix:1st-order-optimality} to learn how we came up with this $1^{\text{st}}$-order optimal protocol.

\vspace{0.5\baselineskip}

We first define the following two variables, which are just linear functions of $w$:
\begin{align}
    z_0 &\coloneqq \frac{w}{N_C} - \frac{1-C_{\text{in}}}{2} \\
    z_1 &\coloneqq \frac{w-1/2}{N_C} - \frac{1-C_{\text{in}}}{2} = z_0 - \frac{1}{2N_C}.
\end{align}
We now use these to define another collection of values:
\begin{align}
    \mM_p^{(0)} &\coloneqq \sum_{w=0}^{N_C}z_0^pP_{w,w} \\
    \mM_p^{(1)} &\coloneqq \sum_{w=0}^{N_C}z_1^pP_{w-1,w}.
\end{align}
The motivation for defining these quantities is that the $P_{w,w}$ values define a probability distribution that is centered at $w \approx \frac{1-C_{\text{in}}}{2}N_C$. Therefore, the $\mM_p^{(0)}$ values can be intuitively interpreted as the centered moments of this distribution. A similar fact is true for the $P_{w-1,w}$ values, with the exception that this ``distribution'' is not quite normalized, since the $P_{w-1,w}$ values add up to slightly less than $1$. It is also worth mentioning that we define $z_1$ to be slightly different from $z_0$ (in particular, replacing $w$ with $w-\frac{1}{2}$) to account for the fact that $P_{w-1,w}$ has indices $w-1$ and $w$, so $w-\frac{1}{2}$ is a suitable way to ``balance'' these indices.

\vspace{0.5\baselineskip}

To prove the $1^{\text{st}}$-order optimality of the above protocol, we need the following result:

\begin{theorem}[Moments of the $P_{w,w}$ and $P_{w-1,w}$ Values, up to $1^{\text{st}}$-order precision]
\label{thm:P-vals-centered-moments-1st-order}
Up to $1^{\text{st}}$-order precision in $N_C^{-1}$, the moments of the $P_{w,w}$ values look as follows:
\begin{align}
    \mM_0^{(0)} &= 1 \\
    \mM_2^{(0)} &= \frac{S_{\text{in}}^2}{4\lambda}\frac{1}{N_C} + O\left(N_C^{-2}\right) \\
    \mM_p^{(0)} &= O\left(N_C^{-2}\right) \quad (p\neq 0,2).
\end{align}
Furthermore, up to $1^{\text{st}}$-order precision in $N_C^{-1}$, the moments of the $P_{w-1,w}$ values look as follows:
\begin{align}
    \mM_0^{(1)} &= 1 - \frac{1}{2\lambda S_{\text{in}}^2}\frac{1}{N_C} + O\left(N_C^{-2}\right) \\
    \mM_2^{(1)} &= \frac{S_{\text{in}}^2}{4\lambda}\frac{1}{N_C} + O\left(N_C^{-2}\right) \\
    \mM_p^{(1)} &= O\left(N_C^{-2}\right) \quad (p\neq 0,2).
\end{align}
\end{theorem}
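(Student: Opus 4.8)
The plan is to exploit the explicit structure of the Schur-sampled state. First I would record that, writing $\rho=\tfrac{1+\lambda}{2}\ket{n_+}\bra{n_+}+\tfrac{1-\lambda}{2}\ket{n_-}\bra{n_-}$ with $\hat n$ the Bloch direction at polar angle $\Theta_{\text{in}}$, the standard Schur--Weyl analysis shows that $\rho_{N_C/2}$ is diagonal in the $J_{\hat n}$-eigenbasis with weights proportional to $p_+^{j+m}p_-^{j-m}$; equivalently it is the spin-thermal state
\begin{equation}
\rho_{N_C/2}=\frac{e^{\mu J_{\hat n}}}{Z_j},\qquad \mu=\ln\frac{1+\lambda}{1-\lambda},\qquad J_{\hat n}=C_{\text{in}}J_z+S_{\text{in}}J_x,
\end{equation}
on the spin-$j$ irrep with $j=N_C/2$. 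Since $\ket{w^{(s)}}$ is the $J_z$-eigenstate with eigenvalue $j-w$, the quantity $z_0$ is exactly the eigenvalue of $-\tfrac{1}{N_C}(J_z-jC_{\text{in}})$, so the diagonal moments are scaled central moments of $J_z$,
\begin{equation}
\mM_p^{(0)}=\frac{(-1)^p}{N_C^{p}}\,\Tr\!\big[\rho_{N_C/2}\,(J_z-jC_{\text{in}})^p\big].
\end{equation}
This turns the diagonal half of the theorem into a calculation of the first few moments of $J_z$ in a completely explicit one-parameter state.

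For those moments I would use that the law of $J_{\hat n}$ is a truncated geometric distribution in the number $k=j-m$ of reversed spins with ratio $r=\tfrac{1-\lambda}{1+\lambda}<1$; the truncation is exponentially negligible, and $\partial_\mu\ln Z_j$, $\partial_\mu^2\ln Z_j$ give $\langle J_{\hat n}\rangle=j-\tfrac{1-\lambda}{2\lambda}$ and $\Var(J_{\hat n})=\tfrac{1-\lambda^2}{4\lambda^2}$, with all cumulants $O(1)$. Decomposing $J_z=C_{\text{in}}J_{\hat n}+S_{\text{in}}J_{\hat n}^{\perp}$ and using that $\rho_{N_C/2}$ is invariant under rotations about $\hat n$ (so the transverse mean and the cross term vanish), the dominant spread comes from $\langle (J_{\hat n}^{\perp})^2\rangle=\tfrac12\big(j(j+1)-\langle J_{\hat n}^2\rangle\big)=\tfrac{N_C}{4\lambda}+O(1)$, giving $\Var(J_z)=\tfrac{S_{\text{in}}^2 N_C}{4\lambda}+O(1)$ and hence $\mM_2^{(0)}=\tfrac{S_{\text{in}}^2}{4\lambda N_C}+O(N_C^{-2})$. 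For $p\ge 3$ the central moments of $J_z$ grow only like $N_C^{\lfloor p/2\rfloor}$ (asymptotic Gaussianity with $O(\sqrt{N_C})$ width and $O(N_C)$ cumulants), so $\mM_p^{(0)}=O(N_C^{-2})$. The one delicate diagonal term is $\mM_1^{(0)}=-\tfrac{1}{N_C}(\langle J_z\rangle-jC_{\text{in}})=\tfrac{C_{\text{in}}(1-\lambda)}{2\lambda N_C}$, which is $O(N_C^{-2})$ precisely in the equatorial case $C_{\text{in}}=0$ where the lemma is chiefly applied; in the general case one should center about the true mean (or carry this term explicitly), and I would check that doing so does not disturb the stated orders.

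The off-diagonal moments are the genuinely hard part, because $P_{w-1,w}$ is not an expectation of a polynomial in the $J$'s. Writing $J_+\ket{w^{(s)}}=c_w\ket{(w-1)^{(s)}}$ with $c_w=\sqrt{w(N_C+1-w)}$ and introducing the unit-shift operator $\Lambda=\sum_w\ket{(w-1)^{(s)}}\bra{w^{(s)}}$, I have the two exact identities $\mM_0^{(1)}=\Tr[\rho_{N_C/2}\,\Lambda]$ and $\langle J_+\rangle=\sum_w c_w P_{w-1,w}$, i.e. $\Lambda$ is $J_+$ with the slowly varying factor $c_w$ stripped off. Near the peak $c_w\approx jS_{\text{in}}$, so to leading order $\mM_0^{(1)}=\langle J_+\rangle/\bar c$ with $\bar c$ the $P_{w-1,w}$-weighted mean of $c_w$; extracting the $O(1/N_C)$ coefficient then forces one to expand $\bar c$ to relative order $1/N_C$, using the curvature of the ladder coefficients $c_w$ near the peak, the transverse variance from the previous step, and the $O(1)$ offset of the peak location. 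I would obtain these either by Euler--Maclaurin applied to the Wigner-$d$ representation $P_{w,w'}=Z_j^{-1}\sum_m d^{j}_{j-w,m}(\Theta_{\text{in}})\,e^{\mu m}\,d^{j}_{j-w',m}(\Theta_{\text{in}})$, or more cleanly from the generating functions $\sum_w P_{w,w}\,t^{w}$ and $\sum_w P_{w-1,w}\,t^{w}$ evaluated as $SU(2)$ coherent-state integrals and differentiated; the same computation yields $\mM_2^{(1)}=\mM_2^{(0)}+O(N_C^{-2})$, reflecting that the first off-diagonal tracks the diagonal at leading order.

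I expect the main obstacle to be exactly this last step: unlike the diagonal moments, the off-diagonal sums are not fixed by a handful of angular-momentum averages, so the $O(1/N_C)$ terms --- in particular the coherence defect $\tfrac{1}{2\lambda S_{\text{in}}^2 N_C}$ in $\mM_0^{(1)}$ --- arise only after a careful asymptotic analysis in which the Gaussian width of the $z$-basis marginal, the curvature of $c_w$, and the subleading shift of the mean all contribute at the same order. Cross-checking the extracted coefficients against the exact relation $\langle J_+\rangle=\sum_w c_w P_{w-1,w}$ is the natural consistency test, and I would treat matching these $1/N_C$ coefficients as the crux of the proof.
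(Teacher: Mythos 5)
Your route is essentially the paper's own (Appendix \ref{appendix:understanding-P-vals}), recast in cleaner angular-momentum language: both arguments rest on the fact that $\rho_{N_C/2}$ is diagonal in the $\hat n$-rotated Dicke basis with geometric weights (your spin-thermal form $e^{\mu J_{\hat n}}/Z_j$), on exact angular-momentum expectation values in that state, and on Taylor-expanding the ladder coefficients about the peak to convert sums over $P_{w-\alpha,w}$ into those expectation values. For the diagonal half your argument is complete and somewhat slicker than the paper's explicit ``up-down-middle'' path sums: the decomposition $J_z=C_{\text{in}}J_{\hat n}+S_{\text{in}}J_{\hat n}^{\perp}$, rotational invariance about $\hat n$, and $\langle (J_{\hat n}^{\perp})^2\rangle=\tfrac12\bigl(j(j+1)-\langle J_{\hat n}^2\rangle\bigr)$ do give $\mM_2^{(0)}=\frac{S_{\text{in}}^2}{4\lambda N_C}+O(N_C^{-2})$. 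Two remarks, though. For $p\ge 3$ you should not invoke ``asymptotic Gaussianity'' as a black box, since in this paper Gaussianity is a \emph{consequence} of these moment bounds; instead bound $\langle (J_{\hat n}^{\perp})^{2q}\rangle=O(N_C^{q})$ directly by ladder-operator counting, which is the paper's path argument in disguise. And your $p=1$ caveat is a genuine catch: with $z_0$ centered at $\frac{1-C_{\text{in}}}{2}$, as the theorem literally defines it, one has $\mM_1^{(0)}=\frac{C_{\text{in}}(1-\lambda)}{2\lambda N_C}+\text{e.s.e.}$, which is \emph{not} $O(N_C^{-2})$ unless $C_{\text{in}}=0$; the paper's proof silently centers at the true mean $\mu$ (its Appendix \ref{appendix:understanding-P-vals} convention), and in the fidelity computation the stray $p=1$ contributions from the $\alpha=0$ and $\alpha=1$ sums cancel against each other. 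Re-centering, as you propose, is the right fix.

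The off-diagonal half is where your write-up is a plan rather than a proof, and the deferred step is not a routine check --- it is the content of the paper's argument. The identity $\langle J_x\rangle=\sum_w c_wP_{w-1,w}$, with $c_w=\sqrt{w(N_C-w+1)}$ and $\langle J_x\rangle=S_{\text{in}}\bigl(j-\frac{1-\lambda}{2\lambda}\bigr)+\text{e.s.e.}$ known exactly, is a single equation in several unknowns: expanding $c_w=\frac{N_CS_{\text{in}}}{2}\bigl[1+\frac{2C_{\text{in}}}{S_{\text{in}}^2}z_1-\frac{2}{S_{\text{in}}^4}z_1^2+\frac{1}{N_CS_{\text{in}}^2}+\cdots\bigr]$ shows that the $O(N_C^{-1})$ part of $\mM_0^{(1)}$ enters tied to the \emph{leading} parts of $\mM_1^{(1)}$ and $\mM_2^{(1)}$, so this relation alone cannot produce the coefficient $\frac{1}{2\lambda S_{\text{in}}^2}$. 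You need companion identities with powers of $J_z$ inserted --- e.g.\ $\Tr\bigl[\rho_{N_C/2}(J_z-jC_{\text{in}})^2J_x\bigr]$ evaluated both in the computational Dicke basis and in the rotated basis --- to pin down $\mM_2^{(1)}=\frac{S_{\text{in}}^2}{4\lambda N_C}+O(N_C^{-2})$ and $\mM_1^{(1)}$, and then solve the resulting triangular linear system. This is precisely the paper's method: its two-way evaluation of $\Tr\bigl[\rho_{N_C/2}(S_z-a\Ibb)^pS_{\hat n}\bigr]$ for $p=0,2,\dots$ is your $\langle J_+\rangle$/$\Lambda$ identity at $p=0$ plus the higher-$p$ relations needed for closure. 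Once the system is written down the numbers do come out: substituting $\mM_2^{(1)}=\frac{S_{\text{in}}^2}{4\lambda N_C}$ and $\mM_1^{(1)}=\frac{C_{\text{in}}(1-\lambda)}{2\lambda N_C}$ into the expansion above yields exactly $\mM_0^{(1)}=1-\frac{1}{2\lambda S_{\text{in}}^2N_C}+O(N_C^{-2})$, so your plan is viable; but as written it stops exactly where the work begins.
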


We prove this result for the $\mM_p^{(1)}$ values in Appendix \ref{appendix:understanding-P-vals}\ref{subsec:P-vals-moments-offset0}, and we prove this result for the $\mM_p^{(1)}$ values in Appendix \ref{appendix:understanding-P-vals}\ref{subsec:P-vals-moments-offset1-general}.

\vspace{0.5\baselineskip}

These results can be understood intuitively as a consequence of the fact that the ``distributions'' of the $P_{w,w}$ and $P_{w-1,w}$ values are approximately Gaussian, both with mean $\sim\frac{1-C_{\text{in}}}{2}N_C$ and variance $\sim\frac{S_{\text{in}}^2}{4\lambda}N_C$, but with slightly different amplitudes. However, we do not actually need the Gaussian character of these ``distributions'' to prove asymptotic optimality. We only need to know that the errors in the above moment values are all $O\left(N_C^{-2}\right)$.

\vspace{0.5\baselineskip}

Let us begin by writing out the fidelity achieved by a general protocol with the form shown in Theorem \ref{thm:optimal-protocol-kraus-rep}. As we show in the different proofs of this theorem in Appendix \ref{appendix:deriving-kraus-rep}, the fidelity takes the form
\begin{equation}
    \mF = \frac{1+C_{\text{out}}}{2} - C_{\text{out}}\sum_{w=0}^{N_C}P_{w,w}\sin^2\theta_w + S_{\text{out}}\sum_{w=1}^{N_C}P_{w-1,w}\cos\theta_{w-1}\sin\theta_w.
\end{equation}
We now use our $\sin^2\theta_w$ ansatz and Theorem \ref{thm:P-vals-centered-moments-1st-order} to evaluate the first summation:
\begin{align}
    \sum_{w=0}^{N_C}P_{w,w}\sin^2\theta_w &= \sum_{w=0}^{N_C}P_{w,w}\left(\frac{1-C_{\text{out}}}{2} + \lambda\frac{S_{\text{out}}^2}{S_{\text{in}}^2}z_0\right) \\
    &= \frac{1-C_{\text{out}}}{2}\mM_0^{(0)} + \lambda\frac{S_{\text{out}}^2}{S_{\text{in}}^2}\mM_1^{(0)} \\
    &= \frac{1-C_{\text{out}}}{2}\left[1\right] + \lambda\frac{S_{\text{out}}^2}{S_{\text{in}}^2}\left[O\left(N_C^{-2}\right)\right] \\
    &= \frac{1-C_{\text{out}}}{2} + O\left(N_C^{-2}\right).
\end{align}
Now we need to evaluate the second summation. This will require a bit more effort, because we need to use the protocol ansatz to compute the quantity $\cos\theta_{w-1}\sin\theta_w$. First, we compute $\cos^2\theta_{w-1}$:
\begin{align}
    \cos^2\theta_{w-1} &= 1 - \sin^2\theta_{w-1} \\
    &= 1 - \left[\frac{1-C_{\text{out}}}{2} + \lambda\frac{S_{\text{out}}^2}{S_{\text{in}}^2}\left(\frac{w-1}{N_C} - \frac{1-C_{\text{in}}}{2}\right)\right] \\
    &= \frac{1+C_{\text{out}}}{2} - \lambda\frac{S_{\text{out}}^2}{S_{\text{in}}^2}\left(z_1 - \frac{1}{2N_C}\right).
\end{align}
Next, we compute $\cos^2\theta_{w-1}\sin^2\theta_w$:
\begin{align}
    \cos^2\theta_{w-1}\sin^2\theta_w &= \left[\frac{1+C_{\text{out}}}{2} - \lambda\frac{S_{\text{out}}^2}{S_{\text{in}}^2}\left(z_1 - \frac{1}{2N_C}\right)\right]\left[\frac{1-C_{\text{out}}}{2} + \lambda\frac{S_{\text{out}}^2}{S_{\text{in}}^2}\left(z_1 + \frac{1}{2N_C}\right)\right] \\
    &= \frac{S_{\text{out}}^2}{4} + \frac{\lambda C_{\text{out}}S_{\text{out}}^2}{S_{\text{in}}^2}z + \frac{\lambda S_{\text{out}}^2}{2S_{\text{in}}^2}\frac{1}{N_C} - \frac{\lambda^2S_{\text{out}}^4}{S_{\text{in}}^4}z^2 + \cdots \\
    &= S_{\text{out}}^2\left[\frac{1}{4} + \frac{\lambda C_{\text{out}}}{S_{\text{in}}^2}z + \frac{\lambda}{2S_{\text{in}}^2}\frac{1}{N_C} - \frac{\lambda^2S_{\text{out}}^2}{S_{\text{in}}^4}z^2 + \cdots\right]
\end{align}
The above expression has terms with $z_1/N_C$ and $1/N_C^2$, but we can ignore these terms. More generally, any term that equals $z^pN_C^{-q}$ (where $z$ can refer to $z_0$ or $z_1$) times a constant factor (i.e., a factor depending only on $\lambda$, $\Theta_{\text{in}}$, $\Theta_{\text{out}}$) can be ignored as long as $p+2q\ge 3$, because then it will not affect the $1^{\text{st}}$-order fidelity. Intuitively, the typical values of $z$ are $\Theta\left(N_C^{-1/2}\right)$, so a term with $z^pN_C^{-q}$ only contributes at the order of $N_C^{-(p+2q)/2}$. (In fact, even if $p+2q=3$, the moments in Theorem \ref{thm:P-vals-centered-moments-1st-order} actually imply that such a term will only contribute at the order of $N_C^{-2}$, as opposed to $N_C^{-3/2}$.)

\vspace{0.5\baselineskip}

Finally, we take the square root of the above quantity to obtain $\cos\theta_{w-1}\sin\theta_w$. We use the following Taylor series:
\begin{equation}
    \sqrt{\frac{1}{4} + \varepsilon} = \frac{1}{2} + \varepsilon - \varepsilon^2 + O\left(\varepsilon^3\right)
\end{equation}
Applying this Taylor series, and again truncating terms that will not contribute to the $1^{\text{st}}$-order infidelity, we obtain
\begin{align}
    \cos\theta_{w-1}\sin\theta_w &= S_{\text{out}}\sqrt{\frac{1}{4} + \frac{\lambda C_{\text{out}}}{S_{\text{in}}^2}z + \frac{\lambda}{2S_{\text{in}}^2}\frac{1}{N_C} - \frac{\lambda^2S_{\text{out}}^2}{S_{\text{in}}^4}z^2 + \cdots} \\
    &= S_{\text{out}}\Bigg\{\frac{1}{2} + \left[\frac{\lambda C_{\text{out}}}{S_{\text{in}}^2}z + \frac{\lambda}{2S_{\text{in}}^2}\frac{1}{N_C} - \frac{\lambda^2S_{\text{out}}^2}{S_{\text{in}}^4}z^2 + \cdots\right] \\
    & \quad - \left[\frac{\lambda C_{\text{out}}}{S_{\text{in}}^2}z + \frac{\lambda}{2S_{\text{in}}^2}\frac{1}{N_C} - \frac{\lambda^2S_{\text{out}}^2}{S_{\text{in}}^4}z^2 + \cdots\right]^2 + \cdots\Bigg\} \\
    &= S_{\text{out}}\Bigg\{\frac{1}{2} + \left[\frac{\lambda C_{\text{out}}}{S_{\text{in}}^2}z + \frac{\lambda}{2S_{\text{in}}^2}\frac{1}{N_C} - \frac{\lambda^2S_{\text{out}}^2}{S_{\text{in}}^4}z^2 + \cdots\right] \\
    & \quad - \left[\frac{\lambda^2C_{\text{out}}^2}{S_{\text{in}}^4}z^2 + \cdots\right] + \cdots\Bigg\} \\
    &= S_{\text{out}}\Bigg\{\frac{1}{2} + \frac{\lambda C_{\text{out}}}{S_{\text{in}}^2}z + \frac{\lambda}{2S_{\text{in}}^2}\frac{1}{N_C} - \frac{\lambda^2}{S_{\text{in}}^4}z^2 + \cdots\Bigg\}.
\end{align}
We can now finally compute the second summation in the fidelity formula:
\begin{align}
    \sum_{w=0}^{N_C}P_{w,w}\cos\theta_{w-1}\sin\theta_w &= S_{\text{out}}\sum_{w=0}^{N_C}P_{w-1,w}\Bigg\{\frac{1}{2} + \frac{\lambda C_{\text{out}}}{S_{\text{in}}^2}z + \frac{\lambda}{2S_{\text{in}}^2}\frac{1}{N_C} - \frac{\lambda^2}{S_{\text{in}}^4}z^2 + \cdots\Bigg\} \\
    &= S_{\text{out}}\Bigg\{\left(\frac{1}{2} + \frac{\lambda}{2S_{\text{in}}^2}\frac{1}{N_C}\right)\mM_0^{(1)} + \left(\frac{\lambda C_{\text{out}}}{S_{\text{in}}^2}\right)\mM_1^{(1)} - \frac{\lambda^2}{S_{\text{in}}^4}\mM_2^{(0)} + O\left(N_C^{-2}\right)\Bigg\} \\
    &= S_{\text{out}}\Bigg\{\left(\frac{1}{2} + \frac{\lambda}{2S_{\text{in}}^2}\frac{1}{N_C}\right)\left[1 - \frac{1}{2\lambda S_{\text{in}}^2}\frac{1}{N_C} + O\left(N_C^{-2}\right)\right] \\
    & \quad + \left(\frac{\lambda C_{\text{out}}}{S_{\text{in}}^2}\right)\left[O\left(N_C^{-2}\right)\right] - \frac{\lambda^2}{S_{\text{in}}^4}\left[\frac{S_{\text{in}}^2}{4\lambda}\frac{1}{N_C} + O\left(N_C^{-2}\right)\right] + O\left(N_C^{-2}\right)\Bigg\} \\
    &= S_{\text{out}}\Bigg\{\frac{1}{2} + \left[\frac{\lambda}{2S_{\text{in}}^2} - \frac{1}{4\lambda S_{\text{in}}^2} - \frac{\lambda}{4S_{\text{in}}^2}\right]\frac{1}{N_C} + O\left(N_C^{-2}\right)\Bigg\} \\
    &= S_{\text{out}}\Bigg\{\frac{1}{2} - \frac{1-\lambda^2}{4\lambda S_{\text{in}}^2}\frac{1}{N_C} + O\left(N_C^{-2}\right)\Bigg\}.
\end{align}
We now put everything together and compute the fidelity of the protocol in question:
\begin{align}
    \mF &= \frac{1+C_{\text{out}}}{2} - C_{\text{out}}\sum_{w=0}^{N_C}P_{w,w}\sin^2\theta_w + S_{\text{out}}\sum_{w=1}^{N_C}P_{w-1,w}\cos\theta_{w-1}\sin\theta_w \\
    &= \frac{1+C_{\text{out}}}{2} - C_{\text{out}}\Bigg\{\frac{1-C_{\text{out}}}{2} + O\left(N_C^{-2}\right)\Bigg\} + S_{\text{out}}^2\Bigg\{\frac{1}{2} - \frac{1-\lambda^2}{4\lambda S_{\text{in}}^2}\frac{1}{N_C} + O\left(N_C^{-2}\right)\Bigg\} \\
    &= 1 - \frac{1-\lambda^2}{4\lambda}\frac{S_{\text{out}}^2}{S_{\text{in}}^2}\frac{1}{N_C} + O\left(N_C^{-2}\right).
\end{align}
Therefore, the infidelity of this protocol, which is just one minus fidelity, takes the form
\begin{equation}
    \mI(\mE_N) = \frac{1-\lambda^2}{4\lambda}\frac{S_{\text{out}}^2}{S_{\text{in}}^2}\frac{1}{N_C} + O\left(N_C^{-2}\right).
\end{equation}

\vspace{0.5\baselineskip}

This infidelity is written in terms of $N_C$, the number of qubits following the initial Schur sampling step; but we would like it to be written in terms of $N$, the original number of qubits. To this end, we need to know something about the distribution of $N_C$, or in other words, the distribution of the total angular momentum $j=N_C/2$ when a total angular momentum measurement is applied to $\rho^{\otimes N}$ for a qubit state $\rho$. In Appendix \ref{appendix:angular-momentum-moments}, we study the distribution of $N_C$ in great detail. However, for our purposes here, it is enough to use the following: 
\begin{align}
    \Ebb\left[N_C^{-1}\right] &= \frac{1}{\lambda}\frac{1}{N} + O\left(N^{-2}\right) \\
    \Ebb\left[N_C^{-p}\right] &= O\left(N^{-p}\right) \quad (p\ge 2).
\end{align}
In fact, at this level of precision, these results follow from the fact that $N_C$ is also asymptotically Gaussian with mean $\sim\lambda N$ and variance $\sim(1-\lambda^2)N$, a fact which has been known for many years \cite{Keyl2001}.

\vspace{0.5\baselineskip}

Applying these results to the above formula for $\mI(\mE)$ immediately tells us that the average infidelity is
\begin{equation}
    \mI(\mE_N) = \frac{1-\lambda^2}{4\lambda^2}\frac{S_{\text{out}}^2}{S_{\text{in}}^2}\frac{1}{N} + O\left(N^{-2}\right),
\end{equation}
which means that the infidelity factor is
\begin{equation}
    \delta_1(\mE) = \frac{1-\lambda^2}{4\lambda^2}\frac{S_{\text{out}}^2}{S_{\text{in}}^2}.
\end{equation}
We conclude that our protocol ansatz saturates the lower bound proved in Theorem \ref{thm:infidelity-factor-lower-bound-qubit}. This confirms that purity of coherence has an operational interpretation as a time-asymmetry resource whose monotonicity under TI channels (derived from the monotonicity of RLD Fisher information for general quantum channels) sets the tightest possible leading-order bound on the infidelity of single-shot qubit coherence distillation.

\newpage

\appsec{Finding the Optimal Distillation Protocol by Solving a Boundary Value Problem}
{appendix:boundary-value-problem}

In this appendix, we use the form of the Kraus operators proved in Appendix \ref{appendix:deriving-kraus-rep} to compute the fidelity and then set up the problem of maximizing the fidelity. The result ends up being a boundary value problem, as we show in Appendix \ref{appendix:boundary-value-problem}\ref{subsec:brute-force-optimization}. This boundary value problem can be solved numerically in general, and furthermore, it can also be solved exactly in some special cases, as we show in Appendices \ref{appendix:boundary-value-problem}\ref{subsec:solve-recurrence-pure-input-matching-target} and \ref{appendix:boundary-value-problem}\ref{subsec:solve-recurrence-odd-N-equatorial}.

\appsubsec{Brute-Force Optimization}
{subsec:brute-force-optimization}

We begin by showing that straightforward multivariable calculus can be used to optimize an arbitrary instance of the single-shot qubit coherence distillation problem, with the result taking the form of a boundary value problem.

\vspace{0.5\baselineskip}

We must first write out the form of the Schur-sampled state, and to do so nicely, we must define some additional notation. The computational basis states for a specific direction $\hat{n}$ are defined so that $\ket{0_{\hat{n}}}$ is a $-1$ eigenstate of $\hat{n}\cdot\vec{\sigma}$, while $\ket{1_{\hat{n}}}$ is a $+1$ eigenstate of $\hat{n}\cdot\vec{\sigma}$. (This convention may seem backward, but it is done to keep in line with the convention used by Cirac et al. \cite{Cirac1999}.) In particular, let $\hat{n} = \langle\sin\Theta,0,\cos\Theta\rangle$ for some $0\le\Theta\le\pi$ be some unit vector with azimuthal angle $\phi=0$. Then the computational basis states for direction $\hat{n}$ are
\begin{align}
    \ket{0_{\hat{n}}} &= \sin\frac{\Theta}{2}\ket{0} - \cos\frac{\Theta}{2}\ket{1} \\
    \ket{1_{\hat{n}}} &= \cos\frac{\Theta}{2}\ket{0} + \sin\frac{\Theta}{2}\ket{1}.
\end{align}
Once these single-qubit states are defined for an arbitrary direction $\hat{n}$, the Dicke states for direction $\hat{n}$, which we denote $\ket{k_{\hat{n}}^{(s)}}$, are defined completely analogously to the usual Dicke states. In particular, $\ket{k_{\hat{n}}^{(s)}}$ is the uniform superposition of all $\binom{N_C}{k}$ distinct permutations of $k$ qubits in the state $\ket{1_{\hat{n}}}$ and $N_C-k$ qubits in the state $\ket{0_{\hat{n}}}$.

\vspace{0.5\baselineskip}

Using this notation, we can write out the state we have following Schur sampling \cite{Cirac1999}, which is denoted $\rho_j$ for $j = N_C/2$:
\begin{equation}
    \rho_{N_C/2} = \frac{c_1-c_0}{c_1^{N_C+1} - c_0^{N_C+1}}\sum_{k=0}^{N_C}c_1^kc_0^{N_C-k}\ket{k_{\hat{n}}^{(s)}}\bra{k_{\hat{n}}^{(s)}}.
\end{equation}
Next, we write out the Kraus representation of our protocol, as shown in Theorem \ref{thm:optimal-protocol-kraus-rep}:
\begin{equation}
    K_w = \cos\theta_{w-1}\ket{0}\bra{(w-1)^{(s)}} + \sin\theta_w\ket{1}\bra{w^{(s)}} \quad (1\le w\le N_C).
\end{equation}
Therefore, the output of the protocol is as follows:
\begin{align}
    \mE\left(\rho_{N_C/2}\right) &= \sum_{w=1}^{N_C}K_w\left(\rho_{N_C/2}\right)K_w^\dagger \\
    &= \left[\sum_{w=1}^{N_C}P_{w-1,w-1}\cos^2\theta_{w-1}\right]\ket{0}\bra{0} \\
    & \quad + \left[\sum_{w=1}^{N_C}P_{w,w}\sin^2\theta_w\right]\ket{1}\bra{1} \\
    & \quad + \left[\sum_{w=1}^{N_C}P_{w-1,w}\cos\theta_{w-1}\sin\theta_w\right]\left(\ket{0}\bra{1} + \ket{1}\bra{0}\right).
\end{align}
Our target state is the following:
\begin{equation}
    \rho_{\text{target}} = \ket{\psi_{\text{target}}}\bra{\psi_{\text{target}}} = \frac{1}{2}\begin{bmatrix}
        1+C_{\text{out}} & S_{\text{out}} \\
        S_{\text{out}} & 1-C_{\text{out}}
    \end{bmatrix}.
\end{equation}
Therefore, we can compute the fidelity of $\mE\left(\rho_{N_C/2}\right)$ with $\rho_{\text{target}} = \ket{\psi_{\text{target}}}\bra{\psi_{\text{target}}}$:
\begin{align}
    \mF &= \frac{1+C_{\text{out}}}{2}\sum_{w=0}^{N_C}P_{w,w}\cos^2\theta_w + \frac{1-C_{\text{out}}}{2}\sum_{w=0}^{N_C}P_{w,w}\sin^2\theta_w + S_{\text{out}}\sum_{w=1}^{N_C}P_{w-1,w}\cos\theta_{w-1}\sin\theta_w \\
    &= \frac{1+C_{\text{out}}}{2} - C_{\text{out}}\sum_{w=0}^{N_C}P_{w,w}\sin^2\theta_w + S_{\text{out}}\sum_{w=1}^{N_C}P_{w-1,w}\cos\theta_{w-1}\sin\theta_w.
\end{align}

\vspace{0.5\baselineskip}

Now we can attempt to maximize this fidelity in the most straightforward way, which is to use multivariable calculus and just take partial derivatives:
\begin{equation}
    \frac{\partial\mF}{\partial\theta_w} = -2C_{\text{out}}P_{w,w}\cos\theta_w\sin\theta_w + S_{\text{out}}\left(\cos\theta_{w-1}\cos\theta_wP_{w-1,w} - \sin\theta_w\sin\theta_{w+1}P_{w,w+1}\right).
\end{equation}
Setting each partial derivative to zero yields the following condition:
\begin{equation}
    \boxed{\cot\Theta_{\text{out}} = \frac{\cos\theta_{w-1}\cos\theta_wP_{w-1,w} - \sin\theta_w\sin\theta_{w+1}P_{w,w+1}}{2\cos\theta_w\sin\theta_wP_{w,w}}}.
\end{equation}
And we're done! We have determined the optimal distillation protocol!

\vspace{0.5\baselineskip}

$\ldots$ Well, kind of. What we have here is a \textbf{boundary value problem}, given by a $3$-angle recurrence relation that relates every set of $3$ consecutive angles, as well as boundary conditions given by the first and last angle ($\theta_0 = 0$ and $\theta_N = \frac{\pi}{2}$, respectively). In this case, this does indeed uniquely determine all of the intermediate angles, so in that sense, we are done: the optimal choice of angles is precisely those angles that solve this boundary value problem. However, actually finding these angles is far from a trivial process.

\vspace{0.5\baselineskip}

The rest of this appendix will be dedicated to making certain observations about this boundary value problem in special cases. Furthermore, in Appendix \ref{appendix:perturbative-protocols}, we use the boundary value problem in an especially interesting way, namely to compute the first-order and second-order perturbations to the optimal protocol in the equatorial case when $\lambda$ is perturbed slightly away from $1$. However, the boundary value problem is tough to study analytically, so for the order analysis of the type needed to prove a connection with information-theoretic constraints such as purity of coherence and RLD Fisher information, the boundary value problem does not do us much good. As a result, to find approximately optimal protocols in the asymptotic regime, we will have to resort to a completely different method.

\vspace{0.5\baselineskip}

Let us first make an observation about the case where the target state is equatorial, i.e., $\Theta_{\text{out}} = \frac{\pi}{2}$, meaning $C_{\text{out}} = 0$ and $S_{\text{out}} = 1$. In this case, the fidelity formula simplifies as follows:
\begin{equation}
    \mF = \frac{1}{2} + \sum_{w=1}^{N_C}P_{w-1,w}\cos\theta_{w-1}\sin\theta_w.
\end{equation}
Notice that the dependence on the $P_{w,w}$ values has completely dropped. The three-angle relation also simplifies considerably:
\begin{equation}
    \boxed{\tan\theta_w = \frac{\cos\theta_{w-1}}{\sin\theta_{w+1}}\frac{P_{w-1,w}}{P_{w,w+1}}}.
\end{equation}
We can rewrite this as
\begin{equation}
    \tan\theta_w = \frac{\cos\theta_{w-1}}{\sin\theta_{w+1}}R_w,
\end{equation}
where for convenience we have defined
\begin{equation}
    R_w \coloneqq \frac{P_{w-1,w}}{P_{w,w+1}}.
\end{equation}
In other words, in the special case where $\Theta_{\text{out}} = \frac{\pi}{2}$ (meaning that the target state is equatorial), the optimal protocol does not depend on the $P_{w,w}$ values; it only depends on the $P_{w-1,w}$ values. Furthermore, it does not depend on their values intrinsically, only on their ratios given by $R_w$. This case is especially nice to solve numerically because it is very easy to determine the optimal value of one angle $\theta_w$ given its neighbors $\theta_{w-1}$ and $\theta_{w+1}$.

\appsubsec{Exact Solution for Pure Input States and Matching Target State}
{subsec:solve-recurrence-pure-input-matching-target}

One special case where we can solve the fidelity optimization problem exactly is the case where $\lambda = 1$ and $\Theta_{\text{in}} = \Theta_{\text{out}} \eqqcolon \Theta$. Of course, in such a situation, each input state is already equal to the target state, and the Schur-sampled state will also just be $N_C = N$ copies of that same state \cite{Cirac1999}. Hence, we should find that just discarding all the qubits except one should achieve perfect fidelity.

\vspace{0.5\baselineskip}

Let us confirm that this is actually what happens. First, refer to the formula for $P_{w,w'}$ in Appendix \ref{appendix:understanding-P-vals}\ref{subsec:P-vals-miscellaneous-facts}, where $P_{w,w'}$ is written in terms of other values called $Q_{wk}$ and $Q_{w'k}$. When $k=N_C$, that expression simplifies as follows:
\begin{equation}
    P_{w,w'} = Q_{w,N_C}Q_{w',N_C}.
\end{equation}
Furthermore, when $k=N_C$, the expressions for $Q_{wk}$ also simplify as follows:
\begin{equation}
    Q_{w,N_C} = \sqrt{\binom{N_C}{w}}\left(\sin\frac{\Theta}{2}\right)^w\left(\cos\frac{\Theta}{2}\right)^{N_C-w}.
\end{equation}
Therefore, the fidelity can be written as follows:
\begin{align}
    \mF &= \frac{1+C_{\text{out}}}{2}\sum_{w=0}^{N_C}P_{w,w}\cos^2\theta_w + \frac{1-C_{\text{out}}}{2}\sum_{w=0}^{N_C}P_{w,w}\sin^2\theta_w + S_{\text{out}}\sum_{w=1}^{N_C}P_{w-1,w}\cos\theta_{w-1}\sin\theta_w \\
    &= \cos^2\frac{\Theta}{2}\sum_{w=0}^{N_C}Q_{w,N_C}^2\cos^2\theta_w + \sin^2\frac{\Theta}{2}\sum_{w=0}^{N_C}Q_{w,N_C}^2\sin^2\theta_w \\
    & \quad + 2\cos\frac{\Theta}{2}\sin\frac{\Theta}{2}\sum_{w=1}^{N_C}Q_{w-1,N_C}Q_{w,N_C}\cos\theta_{w-1}\sin\theta_w.
\end{align}
Now, we can use the iconic \textbf{arithmetic mean-geometric mean (AM-GM) inequality}, which states that the arithmetic mean of a collection of nonnegative real numbers is always at least as large as its geometric mean, with equality if and only if all the numbers are equal. For just two numbers $x$ and $y$, the proof is very simple:
\begin{equation}
    (\sqrt{x} - \sqrt{y})^2 \ge 0 \implies x - 2\sqrt{xy} + y \ge 0 \implies \frac{x+y}{2} \ge \sqrt{xy}.
\end{equation}
The AM-GM inequality implies that
\begin{equation}
    2\cos\frac{\Theta}{2}\sin\frac{\Theta}{2}Q_{w-1,N_C}Q_{w,N_C}\cos\theta_{w-1}\sin\theta_w \le \sin^2\frac{\Theta}{2}Q_{w-1,N_C}^2\cos^2\theta_{w-1} + \cos^2\frac{\Theta}{2}Q_{w,N_C}^2\sin^2\theta_w.
\end{equation}
Therefore, we can upper bound the fidelity as follows:
\begin{align}
    \mF &\le \cos^2\frac{\Theta}{2}\sum_{w=0}^{N_C}Q_{w,N_C}^2\cos^2\theta_w + \sin^2\frac{\Theta}{2}\sum_{w=0}^{N_C}Q_{w,N_C}^2\sin^2\theta_w \\
    & \quad + \sum_{w=1}^{N_C}\left(\sin^2\frac{\Theta}{2}Q_{w-1,N_C}^2\cos^2\theta_{w-1} + \cos^2\frac{\Theta}{2}Q_{w,N_C}^2\sin^2\theta_w\right) \\
    &= \left(\cos^2\frac{\Theta}{2} + \sin^2\frac{\Theta}{2}\right)\sum_{w=0}^{N_C}Q_{w,N_C}^2\cos^2\theta_w \\
    & \quad + \left(\sin^2\frac{\Theta}{2} + \cos^2\frac{\Theta}{2}\right)\sum_{w=0}^{N_C}Q_{w,N_C}^2\sin^2\theta_w \\
    &= \sum_{w=0}^{N_C}Q_{w,N_C}^2\cos^2\theta_w + \sum_{w=0}^{N_C}Q_{w,N_C}^2\sin^2\theta_w \\
    &= \sum_{w=0}^{N_C}Q_{w,N_C}^2 \\
    &= \sum_{w=0}^{N_C}P_{w,w} \\
    &= 1
\end{align}
Of course, it is not a surprise that the fidelity is at most $1$. The more interesting point is that there is an equality condition. In particular, the AM-GM inequality is saturated if and only if
\begin{align}
    \sin^2\frac{\Theta}{2}Q_{w-1,N_C}^2\cos^2\theta_{w-1} &= \cos^2\frac{\Theta}{2}Q_{w,N_C}^2\sin^2\theta_w \\
    \iff \binom{N_C}{w-1}\cos^2\theta_{w-1} &= \binom{N_C}{w}\sin^2\theta_w \\
    \iff \sin^2\theta_w &= \frac{w}{N_C-w+1}\cos^2\theta_w.
\end{align}
If we define $S_w\coloneqq\sin^2\theta_w$ for convenience, then we get the recurrence relation
\begin{equation}
    S_w = \frac{w}{N-w+1}\left(1 - S_{w-1}\right),
\end{equation}
which along with $S_0 = \sin^2\theta_0 = 0$ and an easy induction proof, yields a remarkably simple formula:
\begin{equation}
    \sin^2\theta_w = \frac{w}{N_C} \iff \theta_w = \arcsin\sqrt{\frac{w}{N_C}} \quad (0\le w\le N_C).
\end{equation}
In other words, perfect fidelity is achieved if and only if we make the above choice of $\theta_w$ values. But it turns out that this is precisely the discarding protocol, in which we throw out all the qubits except one! To see this, one can directly compute the partial trace of a computational basis state $\ket{k_{\hat{n}}^{(s)}}\bra{(k')_{\hat{n}}^{(s)}}$ with respect to all the qubits except one and confirm that, for all $0\le k,k'\le N_C$, it matches the $(k,k')$ block of the Choi matrix corresponding to the channel with the above choice of $\theta_w$ values.

\vspace{0.5\baselineskip}

If $\lambda=1$ but $\Theta_{\text{in}}\neq\Theta_{\text{out}}$, then it turns out that it is not possible to achieve perfect fidelity. However, this is not a matter of great concern, because it turns out that one can achieve perfect fidelity with an exponentially small chance of failure. The idea is that there is a time translation-invariant procedure that, with some positive probability of success, will convert a pure qubit state with polar angle $\Theta_{\text{in}}$ and azimuthal angle $\phi$ into the pure qubit state with polar angle $\Theta_{\text{in}}$ and the same azimuthal angle $\phi$. Since you only need to succeed once in $N$ tries, the chance of failure is exponentially small (in other words, the fidelity is $1$ minus an exponentially small amount). We explicitly show how to do this probabilistic conversion in Appendix \ref{appendix:perfect-conversion}.

\appsubsec{Exact Solution for Equatorial Case with Odd Number of Qubits}
{subsec:solve-recurrence-odd-N-equatorial}

Suppose we are in the equatorial special case $\Theta_{\text{in}} = \Theta_{\text{out}} = \frac{\pi}{2}$, and also suppose that $N_C$ is odd (which also means that the original $N$ must have been odd). Then we can actually solve the recurrence relation exactly by exploiting bit-flip symmetry, which enforces opposite angles to be complementary, i.e., $\theta_w + \theta_{N-w} = \frac{\pi}{2}$. Let $N_C = 2n+1$ be odd. Then the recurrence relation with $w=n$ tells us that
\begin{equation}
    \tan\theta_n = \frac{\cos\theta_{n-1}}{\sin\theta_{n+1}}R_n.
\end{equation}
Since $\theta_n + \theta_{n+1} = \frac{\pi}{2}$, we have $\sin\theta_{n+1} = \cos\theta_n$, from which it follows that
\begin{equation}
    \sin\theta_n = \cos\theta_{n-1}R_n.
\end{equation}
We can now plug this relation into the recurrence relation with $w=n-1$ as follows:
\begin{align}
    \tan\theta_{n-1} = \frac{\cos\theta_{n-2}}{\sin\theta_n}R_{n-1} = \frac{\cos\theta_{n-2}}{\cos\theta_{n-1}}\frac{R_{n-1}}{R_n} \\
    \implies \sin\theta_{n-1} = \cos\theta_{n-2}\frac{R_{n-1}}{R_n}.
\end{align}
Next, we plug this relation into the recurrence relation with $w=n-2$ in exactly the same way:
\begin{align}
    \tan\theta_{n-2} = \frac{\cos\theta_{n-3}}{\sin\theta_{n-1}}R_{n-2} = \frac{\cos\theta_{n-3}}{\cos\theta_{n-2}}\frac{R_{n-2}R_n}{R_{n-1}} \\
    \implies \sin\theta_{n-2} = \cos\theta_{n-3}\frac{R_{n-2}R_n}{R_{n-1}}.
\end{align}
We can now repeat this trick again and again. In the end, we obtain
\begin{equation}
    \sin\theta_1 = \cos\theta_0\frac{R_1R_3R_5\cdots}{R_2R_4R_6\cdots},
\end{equation}
where the numerator has $R_w$ for all odd values of $w$ from $1$ to $n$, and the denominator has $R_w$ for all even values of $w$ from $1$ to $n$. Finally, we use the fact that $\cos\theta_0 = 1$ to obtain an exact expression for $\sin\theta_1$:
\begin{equation}
    \sin\theta_1 = \prod_{w=1}^{n}R_w^{(-1)^{w-1}} = \begin{cases}
        \frac{R_1R_3R_5\cdots R_{n-1}}{R_2R_4R_6\cdots R_n} & n\text{ even }(N_C \equiv 1\pmod{4}) \\
        \frac{R_1R_3R_5\cdots R_n}{R_2R_4R_6\cdots R_{n-1}} & n\text{ odd }(N_C \equiv 3\pmod{4}),
    \end{cases}
\end{equation}
where as a reminder, we have set $N_C = 2n+1$ to be odd.

\vspace{0.5\baselineskip}

We can now solve the recurrence relation very quickly, since we can use $\theta_0$ and $\theta_1$ to obtain $\theta_2$, and then use $\theta_1$ and $\theta_2$ to obtain $\theta_3$, and so on. Unfortunately, the expression for $\sin\theta_1$ is fairly complicated, so attempting to move from this point to get closed-form expressions for $\theta_2$, $\theta_3$, and so on appears to be tedious and not particularly worthwhile. However, this exact solution can at least be used in a computer program to efficiently solve for all the angles, which saves at least some time. As a result, the equatorial special case with an odd number of qubits is especially nice when trying to use numerics to better understand trends in the optimal protocol.

\vspace{0.5\baselineskip}

So why does this trick fail when $N_C$ is even? At the middle, the bit-flip symmetry enforces $\theta_{N_C/2} = \frac{\pi}{4}$, rather than relating two angles. As a result, we are left with a boundary value problem on the angles $\theta_w$ for $0\le w\le \frac{N_C}{2}$, where the first and last angles are known to be $\theta_0 = 0$ and $\theta_{N_C/2} = \frac{\pi}{4}$. Furthermore, none of the $3$-angle recurrence relations tell us anything new to take advantage of the bit-flip symmetry. For example, plugging $\theta_{N_C/2} = \frac{\pi}{4}$ into the $3$-angle recurrence relation with $w = N_C/2$ yields a statement that is satisfied for any choice of $\theta_{w-1}$, as long as $\theta_{w-1}$ and $\theta_{w+1}$ are complementary. As a result, we are stuck with the same type of boundary value problem we started with, just on a domain of half the length.

\vspace{0.5\baselineskip}

One cool consequence of the exact solution for odd $N_C$ can be seen by rewriting the $3$-angle recurrence relation in terms of $S_w = \sin^2\theta_w$. One way of doing so is the following:
\begin{equation}
    S_{w+1} = \left(1 - S_{w-1}\right)\left(\frac{1}{S_w} - 1\right)R_w^2.
\end{equation}
Based on the formula for the $P_{w,w'}$ values shown in Corollary \ref{cor:P-val-single-summation-equatorial}, $R_w^2$ is always a rational function of $\lambda$ with integer coefficients. Therefore, if any two of $S_{w-1}$, $S_w$, and $S_{w+1}$ are rational functions of $\lambda$ with integer coefficients, then the same must also be true of the third one. Therefore, as long as this holds for $S_0$ and $S_1$, induction implies that this holds for all $S_w$. But we know that $S_0 = 0$, and we now have a formula for $S_1$ (obtained by squaring the above result for $\sin\theta_1$) that is just a quotient of two products of $R_w^2$ values, meaning that $S_1$ is also a rational function of $\lambda$ with integer coefficients. We conclude that, if $N_C$ is odd, then $S_w = \sin^2\theta_w$ is a rational function of $\lambda$ with integer coefficients for all integers $0\le w\le N_C$. (We suspect that this fact is true for even $N_C$ as well, but we have not attempted to prove or disprove this.)

\newpage

\appsec{Series Expansion of the Distillation Protocol in the Asymptotic Regime}
{appendix:asymptotic-expansion}

Although we can use the boundary value problem described in Appendix \ref{appendix:boundary-value-problem} to numerically derive the exactly optimal protocol, it is hard to analytically derive the asymptotic behavior of the protocol from that boundary value problem. As a result, we turn to a different method for the sake of the asymptotic analysis, which is what provides us the connection to purity of coherence and RLD Fisher information.

\vspace{0.5\baselineskip}

As we show in Appendix \ref{appendix:deriving-kraus-rep}, an optimal protocol can be fully described by the values of $\sin^2\theta_w$ for $0\le w\le N_C$. Furthermore, we know that the $P_{w,w'}$ values are highly concentrated in a relatively narrow range of $w$ and $w'$ values. As a result, the fidelity formula shown in Appendix \ref{appendix:boundary-value-problem} demonstrates that only a relatively small range of $\sin^2\theta_w$ values contribute meaningfully to the fidelity. This motivates the use of an asymptotic expansion. In particular, we can set $z = \frac{w}{N}-\mu$, where the $P_{w,w'}$ values concentrate around $w,w'\approx\mu N_C$. Then, we can write $\sin^2\theta_w$ as a power series in $z$:
\begin{equation}
    \sin^2\theta_w = a_0 + a_1z + a_2z^2 + a_3z^3 + a_4z^4 + \cdots
\end{equation}
However, the coefficients of the power series may themselves be power series in $N_C^{-1}$. As a result, we modify our asymptotic expansion to look as follows:
\begin{align}
    \sin^2\theta_w = \,\, & \left(b_{00} + \frac{b_{01}}{N_C} + \frac{b_{02}}{N_C^2} + \cdots\right) + \left(b_{10} + \frac{b_{11}}{N_C} + \cdots\right)z \\
    & + \left(b_{20} + \frac{b_{21}}{N_C} + \cdots\right)z^2 + \left(b_{30} + \cdots\right)z^3 + \left(b_{40} + \cdots\right)z^4 + \cdots
\end{align}
In Appendix \ref{appendix:understanding-P-vals}, we compute the moments of the $P_{w-\alpha,w}$ values for $\alpha=0,1$. The decay of those moments shows that, at the leading order, the $P_{w,w'}$ values tend to a Gaussian with mean $\approx\mu N_C$ and variance $\approx\sigma^2N_C$ for some values of $\mu$ and $\sigma$ that do not depend on $N_C$. This means that the typical values of $z$ are on the order of $N_C^{-1/2}$. More specifically, we will have
\begin{align}
    z^p &= O\left(N_C^{-p/2}\right) \\
    \mathbb{E}\left[z^p\right] &= O\left(N_C^{-\lceil p/2\rceil}\right).
\end{align}

\vspace{0.5\baselineskip}

Using this information, we can break the coefficients into tiers, based on what power of $N_C$ they carry. For example, the term $\frac{b_{31}}{N_C}z^3$ is on the order of $N_C^{-5/2}$, so it can only contribute at the $3^{\text{rd}}$ order. In general, the term $\frac{b_{pq}}{N_C^q}z^p$ is on the order of $N_C^{-(p+2q)/2}$, meaning that the breakdown looks as follows:

\begin{itemize}
    \item $0^{\text{th}}$-order infidelity: only $b_{00}$ contributes
    \item $1^{\text{st}}$-order infidelity: $b_{01}$, $b_{10}$, $b_{20}$ also contribute
    \item $2^{\text{nd}}$-order infidelity: $b_{02}$, $b_{11}$, $b_{21}$, $b_{30}$, $b_{40}$ also contribute
    \item $3^{\text{rd}}$-order infidelity: $b_{03}$, $b_{12}$, $b_{22}$, $b_{31}$, $b_{41}$, $b_{50}$, $b_{60}$ also contribute
    \item (and so on...)
\end{itemize}

This may look a bit scary, since even for our primary objective, which is $1^{\text{st}}$-order optimality, we would need to think about four variables, namely $b_{00}$, $b_{01}$, $b_{10}$, $b_{20}$. However, it turns out that our analysis will be a lot easier than this. In the vicinity of a local optimum, the deviation of a function is quadratic in the deviations of its inputs, as opposed to linear. Therefore, although the term $\frac{b_{pq}}{N_C^q}z^p$ is on the order of $N_C^{-(p+2q)/2}$, when we plug in the optimum values for all previous variables, we will find that $b_{pq}$ will only contribute at the order of $N_C^{-(p+2q)}$. Therefore, when we solve for successive orders of optimality, the workflow will look as follows:

\begin{itemize}
    \item $0^{\text{th}}$-order optimality: solve for $b_{00}$
    \item $1^{\text{st}}$-order optimality: solve for $b_{10}$
    \item $2^{\text{nd}}$-order optimality: solve for $b_{01}$, $b_{20}$
    \item $3^{\text{rd}}$-order optimality: solve for $b_{11}$, $b_{30}$
    \item $4^{\text{th}}$-order optimality: solve for $b_{02}$, $b_{21}$, $b_{40}$
    \item $5^{\text{th}}$-order optimality: solve for $b_{12}$, $b_{31}$, $b_{50}$
    \item (and so on...)
\end{itemize}

To clarify this point, let us give an example. The variable $b_{31}$, since it is saddled with a factor of $N_C^{-5/2}$, will first show up in the most general formula for the $3^{\text{rd}}$-order infidelity. However, if we have already optimized the $2^{\text{nd}}$-order infidelity, then deviations in $b_{31}$ will only cause the infidelity to change quadratically in $N_C^{-5/2}$, meaning that it has deviations on the order of $N_C^{-5}$. As a result, when we plug in the optimum values for $b_{00}$, $b_{01}$, $b_{10}$, $b_{20}$ (which are all the quantities that we need to solve for to determine $2^{\text{nd}}$-order optimality), the terms containing $b_{31}$ will ``magically'' cancel out. When we proceed to the expression for the $4^{\text{th}}$-order infidelity, then once again, there will be terms depending on $b_{31}$. However, when we plug in the optimum values for the four variables previously mentioned, and additionally the optimum values for $b_{11}$ and $b_{30}$ (which we need to solve for to determine $3^{\text{rd}}$-order optimality), the terms containing $b_{31}$ will again ``magically'' cancel out. Only when we proceed to the expression of the $5^{\text{th}}$-order infidelity will we find that, even when plugging in the optimum values for all previous variables, the terms containing $b_{31}$ will not cancel out. As a result, we will finally be forced to solve for $b_{31}$ in order to achieve $5^{\text{th}}$-order optimality.

\vspace{0.5\baselineskip}

For the most general version of our single-shot qubit coherence distillation problem, where we allow $\lambda$, $\Theta_{\text{in}}$, and $\Theta_{\text{out}}$ to be anything, we tackle $1^{\text{st}}$-order optimality in Appendix \ref{appendix:1st-order-optimality} and $2^{\text{nd}}$-order optimality in Appendix \ref{appendix:2nd-order-optimality}. In particular, $1^{\text{st}}$-order optimality will demonstrate the connection with purity of coherence and RLD Fisher information.

\vspace{0.5\baselineskip}

However, for the equatorial special case, where we restrict to $\Theta_{\text{in}} = \Theta_{\text{out}} = \frac{\pi}{2}$, we can additionally impose bit-flip symmetry on our protocol to simplify the asymptotic analysis considerably. A protocol that additionally satisfies bit-flip symmetry will also satisfy $\theta_w + \theta_{N_C-w} = \frac{\pi}{2}$, meaning that $\mu = \frac{1}{2}$, $a_0 = \frac{1}{2}$, and $a_p = 0$ for all even values of $p\ge 1$. Furthermore, the behavior of the $P_{w,w'}$ coefficients also somewhat simplifies. As a result, we additionally solve for $3^{\text{rd}}$-order optimality in this special case, as we show in Appendix \ref{appendix:equatorial-3rd-order-optimality}.

\vspace{0.5\baselineskip}

As those later appendices will reveal, these computations become incredibly tedious (and also less illuminating) very fast, which is why we do not attempt to proceed further. However, at least in theory, there is no barrier to determining optimality to arbitrarily high orders. In particular, in order to solve for $p^{\text{th}}$-order optimality, it suffices to compute the following values:
\begin{itemize}
    \item the $q^{\text{th}}$ centered moment of the $P_{w-1,w}$ and $P_{w,w}$ values, up to $\Theta(N^{-p})$ precision, and for each $1\le q\le p$ (we compute these for the first few orders in Appendix \ref{appendix:understanding-P-vals});
    \item $\Ebb\left[N_C^{-q}\right]$, up to $\Theta(N^{-p})$ precision, for each $1\le q\le p$ (we compute these for the first few orders in Appendix \ref{appendix:angular-momentum-moments}).
\end{itemize}
Even the latter quantities are not needed to find the optimal protocol; they are just needed to convert the infidelity expressions from a power series in $N_C^{-1}$ to a power series in $N^{-1}$.

\vspace{0.5\baselineskip}

The fact that this method \textit{can} be carried to arbitrarily high orders is itself a fact of independent interest. Many problems in the resource theory of asymmetry are tackled at the leading order using methods that prevent higher-order analysis. The most notable example of such a method is quantum local asymptotic normality (QLAN), which observes that, if a state $\rho$ is within $O(N^{-(\varepsilon+1/4)})$ distance of some fixed diagonalized state $\rho_0$, then the state $\rho^{\otimes N}$ can be understood as a tensor product of a classical Gaussian random variable (for the diagonal entries of $\rho$) and several thermal coherent states (one for each independent off-diagonal entry of $\rho$) \cite{Guta2006,Kahn2009}. Although QLAN is a very powerful and general tool, it requires one to localize the state up to $O(N^{-(\varepsilon+1/4)})$ error, which is often done by measuring a sublinear number of copies to estimate the state. Such a method is fine for $1^{\text{st}}$-order optimality, where a sublinear number of copies can be wasted without issue, but it cannot be used for $2^{\text{nd}}$-order optimality, as we explain in more detail at the beginning of Appendix \ref{appendix:2nd-order-optimality}. Therefore, we believe that our method, although much more bespoke, is superior for the problem at hand.

\newpage

\appsec{First-Order Optimality for General Qubit Distillation}
{appendix:1st-order-optimality}

In this appendix, we derive a $1^{\text{st}}$-order optimal protocol and show that it saturates the bound imposed by the monotonicity of purity of coherence. In particular, this appendix will show how we came up with the ansatz that we used in Appendix \ref{appendix:lower-bound-saturation}.

\vspace{0.5\baselineskip}

We begin by defining the following for convenience:
\begin{equation}
    C_{\text{out}} \coloneqq \cos\frac{\Theta_{\text{out}}}{2}, \quad\quad S_{\text{out}} \coloneqq \sin\frac{\Theta_{\text{out}}}{2}, \quad\quad C_{\text{in}} \coloneqq \cos\frac{\Theta_{\text{in}}}{2}, \quad\quad S_{\text{in}} \coloneqq \sin\frac{\Theta_{\text{in}}}{2}.
\end{equation}
(WARNING: This ``half-angle'' convention happens to be more convenient for this one appendix, but it does \textit{not} match what is used in the rest of this paper. Every other appendix in this paper uses the ``full-angle'' convention $C_{\text{out}} \coloneqq \cos\Theta_{\text{out}}$, and similarly for the others.)

\vspace{0.5\baselineskip}

Using this notation, the formula for the fidelity that we derived in Appendix \ref{appendix:boundary-value-problem} can be written as follows:
\begin{equation}
    \mF = \sum_{w=0}^{N_C}\left(C_{\text{out}}^2\cos^2\theta_w + S_{\text{out}}^2\sin^2\theta_w\right)P_{w,w} + 2C_{\text{out}}S_{\text{out}}\sum_{w=1}^{N_C}\cos\theta_{w-1}\sin\theta_wP_{w-1,w}.
\end{equation}

\vspace{0.5\baselineskip}

Now we use the crucial fact about how the $P_{w,w'}$ values behave in the asymptotic regime. For a fixed value of the ``offset'' $\alpha$, the $P_{w-\alpha,w}$ values roughly follow an unnormalized Gaussian. The three parameters that depend on $\alpha$ are the amplitude $A_\alpha$, the mean $\mu_\alpha$, and the variance $\sigma_\alpha^2$. For our problem, we care about $\alpha=0$ and $\alpha=1$. In particular, for $\alpha=0$, the parameters are as follows:
\begin{align}
    A_0 &= 1 \\
    \mu_0 &= \frac{1-\cos\Theta_{\text{in}}}{2} + \cos\Theta_{\text{in}}\frac{1-\lambda}{2\lambda}\frac{1}{N_C} + O\left(N_C^{-2}\right) \\
    \sigma_0^2 &= \frac{\sin^2\Theta_{\text{in}}}{4\lambda}\frac{1}{N_C} + O\left(N_C^{-2}\right),
\end{align}
while for $\alpha=1$, the parameters are as follows:
\begin{align}
    A_1 &= 1 - \frac{1}{2\lambda\sin^2\Theta_{\text{in}}}\frac{1}{N_C} + O\left(N_C^{-2}\right) \\
    \mu_1 &= \frac{1-\cos\Theta_{\text{in}}}{2} + \cos\Theta_{\text{in}}\frac{1-\lambda}{2\lambda}\frac{1}{N_C} + O\left(N_C^{-2}\right) \\
    \sigma_1^2 &= \frac{\sin^2\Theta_{\text{in}}}{4\lambda}\frac{1}{N_C} + O\left(N_C^{-2}\right).
\end{align}
This Gaussian behavior can be understood as a consequence of the moment formulas that we prove in Appendix \ref{appendix:understanding-P-vals}.

\vspace{0.5\baselineskip}

As a brief aside, we could present this entire discussion just in terms of those moment formulas, with no special attention paid to this Gaussian behavior. In this case, the relevant moments are $\mM_p^{(\alpha)}$ for $p=0,1,2$ and $\alpha=0,1$. In fact, once we proceed to higher orders of optimality (see Appendices \ref{appendix:2nd-order-optimality} and \ref{appendix:equatorial-3rd-order-optimality}), we have no choice but to do this, since the Gaussian approximation for the $P_{w-\alpha,w}$ values will no longer be precise enough. However, because we want to highlight the Gaussianity of these matrix entries, we present this appendix in this alternative way, characterizing each distribution by its amplitude, mean, and variance instead of by its zeroth, first, and second moments.

\vspace{0.5\baselineskip}

Since the values of $\mu_0$ and $\mu_1$ agree to $\Theta(N_C^{-1})$ precision, we will define for convenience
\begin{align}
    \mu &\coloneqq \frac{1-\cos\Theta_{\text{out}}}{2} + \cos\Theta_{\text{in}}\frac{1-\lambda}{2\lambda}\frac{1}{N_C} \\
    x &\coloneqq \frac{w}{N_C} \\
    z &\coloneqq x - \mu.
\end{align}
In particular, since
\begin{equation}
    \mu = \mu_0 + O\left(N_C^{-2}\right) = \mu_1 + O\left(N_C^{-2}\right),
\end{equation}
we can treat both $P_{w-1,w}$ and $P_{w,w}$ as being centered at $\mu$. In fact, even the $\Theta(N_C^{-1})$ term in $\mu$ is irrelevant for this appendix, and it will only concern us when we study $2^{\text{nd}}$-order optimality. However, as we will show in Appendix \ref{appendix:2nd-order-optimality}, the $2^{\text{nd}}$-order analysis will have to take into account the difference between $\mu_0$ and $\mu_1$ at the $\Theta(N_C^{-2})$ level.

\vspace{0.5\baselineskip}

As mentioned in Appendix \ref{appendix:asymptotic-expansion}, we will write $\sin^2\theta_w$ as a power series in $z$ as follows:
\begin{equation}
    \sin^2\theta_{N_Cx} = a_0 + a_1z + a_2z^2 + a_3z^3 + \cdots
\end{equation}
However, each coefficient $a_p$ can also be a power series in $N_C^{-1}$:
\begin{equation}
    a_p = \sum_{q=0}^{\infty}\frac{b_{pq}}{N_C^q} = b_{p0} + \frac{b_{p1}}{N_C} + \frac{b_{p2}}{N_C^2} + \cdots.
\end{equation}
For example, if we write all the terms that contribute up to $\Theta(N_C^{-2})$ precision, we obtain
\begin{equation}
    \sin^2\theta_{N_Cx} = \left(b_{00} + \frac{b_{01}}{N_C} + \frac{b_{02}}{N_C^2}\right) + \left(b_{10} + \frac{b_{11}}{N_C}\right)z + b_{20}z^2 + b_{30}z^3 + O\left(N_C^{-5/2}\right).
\end{equation}

\vspace{0.5\baselineskip}

Finally, for convenience, we will also define the following notation:
\begin{equation}
    \theta_{\pm} \coloneqq \theta_{N_Cx\pm\frac{1}{2}} \quad \theta_\circ \coloneqq \theta_{N_Cx}.
\end{equation}
This notation is needed because, even for $1^{\text{st}}$-order optimality, we will need to use the fact that the inputs to cosine and sine in the second summation in the fidelity formula are slightly different angles $\theta_{w-1}$ and $\theta_{w+1}$, as opposed to being the same angle.

\appsubsec{Deriving Zeroth-Order Optimality}
{subsec:0th-order-optimality-derivation}

Even before we tackle $1^{\text{st}}$-order optimality, we must tackle $0^{\text{th}}$-order optimality. In other words, we need to ensure that $\mF$ tends to $1$, which is the bare minimum for a reasonable distillation protocol. However, this will turn out to be quite simple, so we do not put $0^{\text{th}}$-order optimality in its own appendix.

\vspace{0.5\baselineskip}

At the $0^{\text{th}}$-order, only $b_{00}$ matters, so everything is quite simple:
\begin{align}
    \sin^2\theta_\circ &= b_{00} + O\left(N_C^{-1/2}\right) \\
    \cos^2\theta_\circ &= 1 - b_{00} + O\left(N_C^{-1/2}\right) \\
    \cos\theta_-\sin\theta_+ &= \sqrt{b_{00}\left(1 - b_{00}\right)} + O\left(N_C^{-1/2}\right)
\end{align}
We can plug these into the formula for $\mF$ and use the Cauchy-Schwarz inequality as follows:
\begin{align}
    \mF &= C_{\text{out}}^2\left(1 - b_{00}\right) + S_{\text{out}}^2b_{00} + 2C_{\text{out}}S_{\text{out}}\sqrt{b_{00}\left(1 - b_{00}\right)} + O\left(N_C^{-1}\right) \\
    &= \left(C_{\text{out}}\sqrt{1 - b_{00}} + S_{\text{out}}\sqrt{b_{00}}\right)^2 + O\left(N_C^{-1}\right) \\
    &\le \left(C_{\text{out}}^2 + S_{\text{out}}^2\right)\left(\left(1 - b_{00}\right) + b_{00}\right) + O\left(N_C^{-1}\right) \\
    &= 1 + O\left(N_C^{-1}\right).
\end{align}
The fidelity equals $1 + O(N_C^{-1})$ if and only if the above application of the Cauchy-Schwarz inequality is saturated, which occurs if and only if
\begin{equation}
    \boxed{b_{00} = S_{\text{out}}^2 = \frac{1-\cos\Theta_{\text{out}}}{2}}.
\end{equation}
Thus we have completed the $0^{\text{th}}$-order analysis.

\appsubsec{Deriving First-Order Optimality}
{subsec:1st-order-optimality-derivation}

Now the $1^{\text{st}}$-order analysis truly begins. We can plug in our answer for $b_{00}$ above, and $b_{01}$, $b_{10}$, $b_{20}$ also contribute:
\begin{equation}
    \sin^2\theta_{N_Cx} = \left(S_{\text{out}}^2 + \frac{b_{01}}{N_C}\right) + b_{10}z + b_{20}z^2 + O\left(N_C^{-3/2}\right).
\end{equation}
When we subtract this from $1$, we obtain
\begin{equation}
    \cos^2\theta_{N_Cx} = \left(C_{\text{out}}^2 - \frac{b_{01}}{N_C}\right) - b_{10}z - b_{20}z^2 + O\left(N_C^{-3/2}\right).
\end{equation}
If we replace $x$ with $x\pm\frac{1}{2N_C}$ in the above formulas and multiply them, we obtain, after some straightforward but tedious algebra,
\begin{align}
    \cos^2\theta_-\sin^2\theta_+ = \,\, & \left[\left(S_{\text{out}}^2 + \frac{b_{01}}{N_C}\right) + b_{10}\left(z + \frac{1}{2N_C}\right) + b_{20}\left(z + \frac{1}{2N_C}\right)^2 + O\left(N_C^{-3/2}\right)\right] \\
    & \left[\left(C_{\text{out}}^2 - \frac{b_{01}}{N_C}\right) - b_{10}\left(z - \frac{1}{2N_C}\right) - b_{20}\left(z - \frac{1}{2N_C}\right)^2 + O\left(N_C^{-3/2}\right)\right] \\
    = \,\, & C_{\text{out}}^2S_{\text{out}}^2 + b_{01}\left(C_{\text{out}}^2 - S_{\text{out}}^2\right)\frac{1}{N_C} + b_{10}\left(C_{\text{out}}^2 - S_{\text{out}}^2\right)z \\
    & + \frac{b_{10}}{2}\left(C_{\text{out}}^2 + S_{\text{out}}^2\right)\frac{1}{N_C} - b_{10}^2z^2 + b_{20}\left(C_{\text{out}}^2 - S_{\text{out}}^2\right)z^2 + O\left(N_C^{-3/2}\right) \\
    = \,\, & C_{\text{out}}^2S_{\text{out}}^2 + \left[b_{01}\left(C_{\text{out}}^2 - S_{\text{out}}^2\right) + \frac{b_{10}}{2}\right]\frac{1}{N_C} + b_{10}\left(C_{\text{out}}^2 - S_{\text{out}}^2\right)z \\
    & + \left[b_{20}\left(C_{\text{out}}^2 - S_{\text{out}}^2\right) - b_{10}^2\right]z^2 + O\left(N_C^{-3/2}\right) \\
    = \,\, & \frac{1}{4}\sin^2\Theta_{\text{out}} + \left(b_{01}\cos\Theta_{\text{out}} + \frac{b_{10}}{2}\right)\frac{1}{N_C} + b_{10}\cos\Theta_{\text{out}}z \\
    & + \left(b_{20}\cos\Theta_{\text{out}} - b_{10}^2\right)z^2 + O\left(N_C^{-3/2}\right).
\end{align}
Now we need to take the square root of the above quantity. The dominant term is the leading one, which does not have any decay when $N_C$ increases. As a result, we use the following Taylor series approximation:
\begin{equation}
    \sqrt{\frac{1}{4}\sin^2\Theta_{\text{out}} + \epsilon} = \frac{1}{2}\sin\Theta_{\text{out}} + \frac{\epsilon}{\sin\Theta_{\text{out}}} - \frac{\epsilon^2}{\sin^3\Theta_{\text{out}}} + O\left(\epsilon^3\right)
\end{equation}
When we apply this Taylor series approximation, we obtain
\begin{align}
    \cos\theta_-\sin\theta_+ = \,\, & \frac{1}{2}\sin\Theta_{\text{out}} + b_{10}\cot\Theta_{\text{out}}z \\
    & + \left(b_{01}\cot\Theta_{\text{out}} + \frac{b_{10}}{2}\csc\Theta_{\text{out}}\right)\frac{1}{N} \\
    & + \left(b_{20}\cot\Theta_{\text{out}} - b_{10}^2\csc^3\Theta_{\text{out}}\right)z^2 + O\left(N_C^{-3/2}\right).
\end{align}

\vspace{0.5\baselineskip}

Now that we have formulas for $\cos^2\theta_\circ$, $\sin^2\theta_\circ$, and $\cos\theta_-\sin\theta_+$, we can plug them into the formula for $\mF$:
\begin{align}
    \mF = \,\, & \sum_{w=0}^{N_C}\left(C_{\text{out}}^2\cos^2\theta_w + S_{\text{out}}^2\sin^2\theta_w\right)P_{w,w} + 2C_{\text{out}}S_{\text{out}}\sum_{w=1}^{N_C}\cos\theta_{w-1}\sin\theta_wP_{w-1,w} \\
    = \,\, & C_{\text{out}}^2\sum_{w=0}^{N_C}P_{w,w} - \left(C_{\text{out}}^2 - S_{\text{out}}^2\right)\sum_{w=0}^{N_C}\sin^2\theta_wP_{w,w} + 2C_{\text{out}}S_{\text{out}}\sum_{w=1}^{N_C}\cos\theta_{w-1}\sin\theta_wP_{w-1,w} \\
    = \,\, & C_{\text{out}}^2A_0 - \left(C_{\text{out}}^2 - S_{\text{out}}^2\right)\sum_{w=0}^{N_C}P_{w,w}\left[\left(S_{\text{out}}^2 + \frac{b_{01}}{N_C}\right) + b_{10}z + b_{20}z^2 + O\left(N_C^{-3/2}\right)\right] \\
    & + 2C_{\text{out}}S_{\text{out}}\sum_{w=1}^{N_C}P_{w-1,w}\Bigg\{\frac{1}{2}\sin\Theta_{\text{out}} + b_{10}\cot\Theta_{\text{out}}z \\
    & + \left(b_{01}\cot\Theta_{\text{out}} + \frac{b_{10}}{2}\csc\Theta_{\text{out}}\right)\frac{1}{N_C} \\
    & + \left(b_{20}\cot\Theta_{\text{out}} - b_{10}^2\csc^3\Theta_{\text{out}}\right)z^2 + O\left(N_C^{-3/2}\right)\Bigg\} \\
    = \,\, & \frac{1 + \cos\Theta_{\text{out}}}{2}A_0 - \cos\Theta_{\text{out}}\left[\frac{1 - \cos\Theta_{\text{out}}}{2} + \frac{b_{01}}{N_C} + b_{20}\sigma_0^2\right]A_0 \\
    & + \sin\Theta_{\text{out}}\Bigg\{\frac{\sin\Theta_{\text{out}}}{2} + \frac{b_{01}\cot\Theta_{\text{out}}}{N_C} + \frac{b_{10}\csc\Theta_{\text{out}}}{2N_C} \\
    & + \left(b_{20}\cot\Theta_{\text{out}} - b_{10}^2\csc^3\Theta_{\text{out}}\right)\sigma_1^2\Bigg\}A_1 + O\left(N_C^{-2}\right) \\
    = \,\, & A_0\Bigg\{\frac{1 + \cos^2\Theta_{\text{out}}}{2} - b_{01}\cos\Theta_{\text{out}}\frac{1}{N_C} - b_{20}\cos\Theta_{\text{out}}\sigma_0^2\Bigg\} \\
    & + A_1\Bigg\{\frac{\sin^2\Theta_{\text{out}}}{2} + b_{01}\cos\Theta_{\text{out}}\frac{1}{N_C} + \frac{b_{10}}{2}\frac{1}{N_C} \\
    & + \left(b_{20}\cos\Theta_{\text{out}} - b_{10}^2\csc^2\Theta_{\text{out}}\right)\sigma_1^2\Bigg\} + O\left(N_C^{-2}\right)
\end{align}
Notice that, along the way, we are able to simplify certain summations by writing them in terms of the parameters $A_0$, $\sigma_0$, $A_1$, $\sigma_1$. It is worth mentioning that Gaussianity is not actually required for these substitutions to work, only that the $P_{w-1,w}$ and $P_{w,w}$ values decay fast enough for any higher centered moments to be $O(N_C^{-2})$.

\vspace{0.5\baselineskip}

Now we can plug in the values for $A_0$, $\sigma_0$, $A_1$, $\sigma_1$. We thus obtain the following:
\begin{align}
    \mF = \,\, & \frac{1 + \cos^2\Theta_{\text{out}}}{2} - \textcolor{red}{b_{01}\cos\Theta_{\text{out}}\frac{1}{N_C}} - \textcolor{blue}{b_{20}\frac{\cos\Theta_{\text{out}}\sin^2\Theta_{\text{out}}}{4\lambda}\frac{1}{N_C}} \\
    & + \frac{\sin^2\Theta_{\text{out}}}{2} - \frac{\sin^2\Theta_{\text{out}}}{4\lambda\sin^2\Theta_{\text{in}}}\frac{1}{N_C} + \textcolor{red}{b_{01}\cos\Theta_{\text{out}}\frac{1}{N_C}} + \frac{b_{10}}{2}\frac{1}{N_C} \\
    & + \textcolor{blue}{b_{20}\frac{\cos\Theta_{\text{out}}\sin^2\Theta_{\text{out}}}{4\lambda}\frac{1}{N_C}} - b_{10}^2\frac{\sin^2\Theta_{\text{in}}}{4\lambda\sin^2\Theta_{\text{out}}}\frac{1}{N_C} + O\left(N_C^{-2}\right) \\
    = \,\, & 1 - \left(\frac{\sin^2\Theta_{\text{out}}}{4\lambda\sin^2\Theta_{\text{in}}} - \frac{b_{10}}{2} + \frac{\sin^2\Theta_{\text{in}}}{4\lambda\sin^2\Theta_{\text{out}}}b_{10}^2\right)\frac{1}{N_C} + O\left(N_C^{-2}\right)
\end{align}
This is where the ``magical'' cancellation described in Appendix \ref{appendix:asymptotic-expansion} comes into play. Notice that the terms involving $b_{01}$ (marked above in \textcolor{red}{red}) and the terms involving $b_{20}$ (marked above in \textcolor{blue}{blue}) perfectly cancel out. Furthermore, this cancellation does not happen for arbitrary protocols; it happens specifically because we have already plugged in $b_{00} = S_{\text{out}}^2$, which is the necessary and sufficient condition for $0^{\text{th}}$-order optimality. As we explained in Appendix \ref{appendix:asymptotic-expansion}, this occurs because, in the vicinity of a local optimum, deviations in the input only affect the function quadratically. Therefore, deviations in $b_{10}$ only contribute at the $\Theta(N_C^{-1})$ order, while deviations in $b_{01}$ and $b_{20}$ only contribute at the $\Theta(N_C^{-2})$ order.

\vspace{0.5\baselineskip}

As a result of the cancellation, our leading-order infidelity (one minus fidelity) can be written as follows:
\begin{equation}
    \mI(\mE) = \left(\frac{\sin^2\Theta_{\text{out}}}{4\lambda\sin^2\Theta_{\text{in}}} - \frac{b_{10}}{2} + \frac{\sin^2\Theta_{\text{in}}}{4\lambda\sin^2\Theta_{\text{out}}}b_{10}^2\right)\frac{1}{N_C} + O\left(N_C^{-2}\right)
\end{equation}
The coefficient of $N_C^{-1}$ is a quadratic in $b_{10}$, and it achieves its minimum at the following value:
\begin{equation}
    \boxed{b_{10} = \frac{\sin^2\Theta_{\text{out}}}{\sin^2\Theta_{\text{in}}}\lambda}.
\end{equation}
Just for clarity, this means that any protocol that has $\sin^2\theta_{N_Cx} = f(x)$, with $f(x)$ differentiable and satisfying
\begin{align}
    f\left(\frac{1-\cos\Theta_{\text{out}}}{2}\right) &= \frac{1-\cos\Theta_{\text{out}}}{2} \\
    f'\left(\frac{1-\cos\Theta_{\text{out}}}{2}\right) &= \frac{\sin^2\Theta_{\text{out}}}{\sin^2\Theta_{\text{in}}}\lambda,
\end{align}
will be $1^{\text{st}}$-order optimal. (Of course, for the actual protocol, this function only gets evaluated when $x$ is an integer multiple of $N_C^{-1}$. However, it is convenient to think of $x$ as a continuous parameter.) This also demonstrates that the $1^{\text{st}}$-order optimal protocol that we used in Appendix \ref{appendix:lower-bound-saturation} emerges organically from this systematic optimization procedure.

\vspace{0.5\baselineskip}

When we plug this into the formula for $\mI(\mE)$, we obtain
\begin{equation}
    \mI(\mE_{\text{opt}}) = \left(\frac{1-\lambda^2}{4\lambda}\frac{\sin^2\Theta_{\text{out}}}{\sin^2\Theta_{\text{in}}}\right)\frac{1}{N_C} + O\left(N_C^{-2}\right).
\end{equation}
This is nice, but we would prefer to have this written as a power series in $N^{-1}$, rather than $N_C^{-1}$. To do this, we need to understand the outcome distribution of the total angular momentum measurement required for Schur sampling \cite{Cirac1999}. In this case, we only need to know $\Ebb[N_C^{-1}]$, and as we show in Appendix \ref{appendix:angular-momentum-moments}, that has a very simple expression:
\begin{equation}
    \mathbb{E}\left[\frac{1}{N_C}\right] = \frac{1}{\lambda}\frac{1}{N} + O\left(N^{-2}\right).
\end{equation}
As a result, the infidelity series for a $1^{\text{st}}$-order optimal distillation protocol starts as follows:
\begin{equation}
    \mI(\mE_{\text{opt}}) = \left(\frac{1-\lambda^2}{4\lambda^2}\frac{\sin^2\Theta_{\text{out}}}{\sin^2\Theta_{\text{in}}}\right)\frac{1}{N} + O\left(N^{-2}\right)
\end{equation}
Hence the first coefficient of the infidelity series, which we call the infidelity factor by Definition \ref{def:infidelity-coeffs-infidelity-factor}, is
\begin{equation}
    \boxed{\delta_1(\mE_{\text{opt}}) = \frac{1-\lambda^2}{4\lambda^2}\frac{\sin^2\Theta_{\text{out}}}{\sin^2\Theta_{\text{in}}}}.
\end{equation}
This exactly saturates the lower bound proved in Appendix \ref{appendix:infidelity-lower-bound} using the monotonicity of purity of coherence! Hence the monotonicity of purity of coherence sets the tightest possible lower bound on the infidelity factor.

\newpage

\appsec{Second-Order Optimality for General Qubit Distillation}
{appendix:2nd-order-optimality}

In this appendix, we push beyond the foundation set by first-order optimality, which we derived in Appendix \ref{appendix:1st-order-optimality}, and tackle the question of second-order optimality. Although this appendix will no longer reveal a connection to any more general information-theoretic idea (as far as we currently know), we believe it has value as a more detailed treatment of the single-shot qubit coherence distillation problem in its own right, and as a demonstration of the power of the more bespoke method that we use for the asymptotic analysis, as described in Appendix \ref{appendix:asymptotic-expansion}.

\appsubsec{Why Is Second-Order Optimality Difficult?}
{subsec:2nd-order-optimality-difficult}

Second-order optimality is much more punishing than first-order optimality. In first-order optimality, you can use a sublinear number of copies of your input state however you want; even throwing them away is perfectly fine. This fact is essential for many leading-order analyses in many problems in quantum parameter estimation theory and the resource theory of asymmetry. For example, the first-order optimal measurement scheme for single-parameter quantum state estimation, as outlined by Barndorff-Nielsen and Gill \cite{BarndorffNielsen2000}, relies on consuming a sublinear number of copies to find a crude approximation of the parameter, after which one can measure the rest of the qubits using a measurement designed to be maximally informative for parameter values near the crude estimate. As another example, the proof for the maximum i.i.d. conversion rate between pure states in the resource theory of asymmetry for an arbitrary compact Lie group \cite{Yamaguchi2024} uses a result known as quantum local asymptotic normality \cite{Guta2006,Kahn2009}, which is a very powerful and general tool, but which also requires one to first localize the state to a small neighborhood by measuring a sublinear number of copies.

\vspace{0.5\baselineskip}

However, such tricks are not permissible for higher orders of optimality. In fact, wasting even a single copy of your input state will ruin your ability to achieve second-order optimality! To see this, note that an optimal distillation protocol $\mE_N$ will have infidelity series
\begin{equation}
    \mI(\mE_N) = \frac{\delta_1(\mE)}{N} + \frac{\delta_2(\mE)}{N^2} + O\left(N^{-3}\right).
\end{equation}
Now observe the geometric series
\begin{equation}
    \frac{1}{N-1} = \sum_{q=1}^{\infty}\frac{1}{N^q} = \frac{1}{N} + \frac{1}{N^2} + \frac{1}{N^3} + \cdots
\end{equation}
More generally, when you raise both sides of the above equation to an integer power $p$, you obtain:
\begin{equation}
    \frac{1}{(N-1)^p} = \sum_{q=p}^{\infty}\frac{\binom{q-1}{p-1}}{N^q} = \frac{1}{N^p} + \frac{p}{N^{p+1}} + \frac{p(p+1)/2}{N^{p+2}} + \cdots
\end{equation}
Now suppose that you waste a single copy and run $\mE_{N-1}$ on the remaining input copies. The resulting infidelity is
\begin{align}
    \mI(\mE_{N-1}) &= \frac{\delta_1(\mE)}{N-1} + \frac{\delta_2(\mE)}{(N-1)^2} + O\left((N-1)^{-3}\right) \\
    &= \frac{\delta_1(\mE)}{N} + \frac{\delta_1(\mE) + \delta_2(\mE)}{N^2} + O\left(N^{-3}\right).
\end{align}
Notice that the coefficient of the $N^{-2}$ term has increased from $\delta_2(\mE)$ to $\delta_1(\mE) + \delta_2(\mE)$. Hence any protocol with $\delta_1(\mE) > 0$ (which is always the case for the distillation problems we are considering) will be suboptimal at the second order if we waste an input copy.

\vspace{0.5\baselineskip}

Of course, the above argument may seem a little artificial, because there is no particular reason to waste an input copy. However, we strongly suspect that the same conclusion will hold for any protocol that uses any number of input copies in a ``sufficiently suboptimal'' way, and this is potentially very common. In particular, we believe that all existing protocols for quantum state estimation or conversion that rely on first developing a crude estimate of the state will fall prey to this issue.

\appsubsec{Deriving Second-Order Optimality}
{subsec:2nd-order-optimality-derivation}

Let us now proceed to the actual derivation. We return to the usual notational convention for the sines and cosines of $\Theta_{\text{in}}$ and $\Theta_{\text{out}}$, meaning that we define the following quantities for convenience:
\begin{equation}
    C_{\text{in}} \equiv \cos\Theta_{\text{in}}, \quad S_{\text{in}} \equiv \sin\Theta_{\text{in}}, \quad C_{\text{out}} \equiv \cos\Theta_{\text{out}}, \quad S_{\text{out}} \equiv \sin\Theta_{\text{out}}.
\end{equation}
Furthermore, as explained in Appendix \ref{appendix:asymptotic-expansion}, we need to use the centered moments of the $P_{w-\alpha,w}$ values for offset values $\alpha=0$ and $\alpha=1$. 
In particular, these moments are defined to be centered on the value $\mu$, which is (up to a factor of $N_C$) the first moment of the $P_{w,w}$ values:
\begin{equation}
    \mu \equiv \sum_{w=0}^{N_C}\frac{w}{N_C}P_{w,w} = \frac{1 - C_{\text{in}}}{2} + \frac{C_{\text{in}}(1-\lambda)}{2\lambda}\frac{1}{N_C} + \text{e.s.e.}
\end{equation}
This formula for $\mu$ is proven in Appendix \ref{appendix:understanding-P-vals}\ref{subsec:P-vals-moments-offset0}. Once $\mu$ is defined, as we show in Appendix \ref{appendix:understanding-P-vals}\ref{subsec:P-vals-moments-defining}, the $p^{\text{th}}$ moment centered at $\mu$ of the $P_{w,w'}$ values with offset $\alpha$ can then be defined as
\begin{equation}
    \mM_{p}^{(\alpha)} \equiv \sum_{w=\alpha}^{N_C}\left(\frac{w - \frac{\alpha}{2}}{N_C} - \mu\right)^pP_{w-\alpha,w}.
\end{equation}
It may seem like a concern that we are defining all of these centered moments based on the mean for the $P_{w,w}$ values (i.e., offset zero). After all, the $P_{w-\alpha,w}$ values for other $\alpha$ values may have different means. However, the mean for $\alpha=1$ agrees with $\mu$ up to $O(N_C^{-2})$ error, so we never have to worry about this discrepancy in the first-order analysis. However, in the second-order analysis, we do have to worry about the $O(N_C^{-2})$ difference between the mean for the $P_{w,w}$ values and the mean for the $P_{w-1,w}$ values. (This will become clearer later in this appendix.)

\vspace{0.5\baselineskip}

In general, as one can see from the results in Appendix \ref{appendix:understanding-P-vals}, the distribution of the $P_{w,w}$ values (i.e., offset zero) is slightly ``nicer'' than those for the other offset values, because the $p^{\text{th}}$ moment, after the $N_C^{-p}$ term, has only exponentially small error (which we abbreviate as ``e.s.e.'' for convenience). Since $\mu$ acts as the reference point for computing the centered moments and the point relative to which the protocol is written as a power series in $z\coloneqq \frac{w}{N_C} - \mu$, it thus makes sense to define $\mu$ based on the $P_{w,w}$ values to make these calculations as convenient as possible.

\vspace{0.5\baselineskip}

In Appendices \ref{appendix:understanding-P-vals}\ref{subsec:P-vals-moments-offset0} and \ref{appendix:understanding-P-vals}\ref{subsec:P-vals-moments-offset1-general}, we show how to compute these moments. Since we are doing single-shot distillation to an output qubit clock of the same frequency as the input qubit clocks, we only care about $\alpha=0$ and $\alpha=1$. Furthermore, we only need to compute these values to $N_C^{-2}$ precision. Since the typical values of $\frac{w}{N_C}$ are only $O(N_C^{-1/2})$ away from $\mu$, we only need to care about the moments up to $p=4$.

\vspace{0.5\baselineskip}

As we show in Lemma \ref{lem:centered-moments-offset-0}, the first few centered moments for offset $\alpha=0$ are as follows:
\begin{align}
    \mM_0^{(0)} &= 1 \\
    \mM_1^{(0)} &= 0 \\
    \mM_2^{(0)} &= \frac{S_{\text{in}}^2}{4\lambda}\frac{1}{N_C} + \frac{(1-\lambda)\left(\left(C_{\text{in}}^2 - S_{\text{in}}^2\right) + C_{\text{in}}^2\lambda\right)}{4\lambda^2}\frac{1}{N_C^2} + \text{e.s.e.} \\
    \mM_3^{(0)} &= \frac{C_{\text{in}}S_{\text{in}}^2(3 - \lambda^2)}{8\lambda^2}\frac{1}{N_C^2} + O\left(N_C^{-3}\right) \\
    \mM_4^{(0)} &= \frac{3S_{\text{in}}^4}{16\lambda^2}\frac{1}{N_C^2} + O\left(N_C^{-3}\right).
\end{align}
Because the sum of the $P_{w,w}$ values equals the trace of the Schur-sampled state, it must equal exactly $1$. Furthermore, since $\mu$ was defined using the $P_{w,w}$ values, the centered first moment must be exactly $0$. The previously mentioned fact about the zero-offset distribution explains the presence of the ``e.s.e.'' in $\mM_2{(0)}$ after the $N_C^{-2}$ term. Similarly, if we were to extend the above formula for $\mM_3^{(0)}$ to the $N_C^{-3}$ term, then the remaining error would also be exponentially small (and similar for all higher moments too).

\vspace{0.5\baselineskip}

As we show in Lemma \ref{lem:centered-moments-offset-0}, the first few centered moments for offset $\alpha=1$ are as follows:
\begin{align}
    \mM_0^{(1)} &= 1 - \frac{1}{2S_{\text{in}}^2\lambda}\frac{1}{N_C} + \left(-\frac{C_{\text{in}}^2}{2S_{\text{in}}^4} + \frac{1}{2S_{\text{in}}^2\lambda} - \frac{3}{8S_{\text{in}}^4\lambda^2}\right)\frac{1}{N_C^2} + O\left(N_C^{-3}\right) \\
    \mM_1^{(1)} &= \frac{C_{\text{in}}(1 + \lambda^2)}{4S_{\text{in}}^2\lambda^2}\frac{1}{N_C^2} + O\left(N_C^{-3}\right) \\
    \mM_2^{(1)} &= \frac{S_{\text{in}}^2}{4\lambda}\frac{1}{N_C} + \left[-\frac{1}{8} + \frac{(1-\lambda)\left(-1 + \left(C_{\text{in}}^2 - S_{\text{in}}^2\right)(2+\lambda)\right)}{8\lambda^2}\right]\frac{1}{N_C^2} + O\left(N_C^{-3}\right) \\
    \mM_3^{(1)} &= \frac{C_{\text{in}}S_{\text{in}}^2(3 - \lambda^2)}{8\lambda^2}\frac{1}{N_C^2} + O\left(N_C^{-3}\right) \\
    \mM_4^{(1)} &= \frac{3S_{\text{in}}^4}{16\lambda^2}\frac{1}{N_C^2} + O\left(N_C^{-3}\right).
\end{align}
Let us briefly compare and contrast the $\alpha=0$ and $\alpha=1$ distributions. The two distributions both tend to normal distributions with the same leading-order parameters, so the leading term (that is, the $\Theta\left(N_C^{-\lceil p/2\rceil}\right)$ term) of $\mM_p^{(\alpha)}$ will always be the same regardless of $\alpha$. However, beyond the leading terms of each moment, there can be appreciable differences. For example, notice that $\mM_0^{(1)}$ is slightly less than $1$. In addition, notice that $\mM_1^{(1)}$ has an $N_C^{-2}$ term, reflecting the slight difference in the means of the $\alpha=0$ and $\alpha=1$ distributions.

\vspace{0.5\baselineskip}

We now invoke the asymptotic expansion for the distillation protocol that we described in Appendix \ref{appendix:asymptotic-expansion}:
\begin{align}
    \sin^2\theta_w = \,\, & \left(b_{00} + \frac{b_{01}}{N_C} + \frac{b_{02}}{N_C^2} + \cdots\right) + \left(b_{10} + \frac{b_{11}}{N_C} + \cdots\right)z \\
    & + \left(b_{20} + \frac{b_{21}}{N_C} + \cdots\right)z^2 + \left(b_{30} + \cdots\right)z^3 + \left(b_{40} + \cdots\right)z^4 + \cdots
\end{align}
We now compute $\sin^2\theta_+$ by plugging $w=N_Cx+\frac{1}{2}$ into the asymptotic expansion:
\begin{align}
    \sin^2\theta_+ = \,\, & \left[b_{00} + \frac{b_{01}}{N_C} + \frac{b_{02}}{N_C^2} + \cdots\right] + \left[b_{10} + \frac{b_{11}}{N_C} + \cdots\right]\left(z + \frac{1}{2N_C}\right) \\
    & + \left[b_{20} + \frac{b_{21}}{N_C} + \cdots\right]\left(z + \frac{1}{2N_C}\right)^2 + \left[b_{30} + \cdots\right]\left(z + \frac{1}{2N_C}\right)^3 \\
    & + \left[b_{40} + \cdots\right]\left(z + \frac{1}{2N_C}\right)^4 + \cdots \\
    = \,\, & b_{00} + b_{10}z + \left(b_{01} + \frac{b_{10}}{2}\right)\frac{1}{N_C} + b_{20}z^2 + \left(b_{11} + b_{20}\right)\frac{z}{N_C} + b_{30}z^3 \\
    & + \left(b_{02} + \frac{b_{11}}{2} + \frac{b_{20}}{4}\right)\frac{1}{N_C^2} + \left(b_{21} + \frac{3}{2}b_{30}\right)\frac{z^2}{N_C} + b_{40}z^4 + \cdots
\end{align}
We similarly compute $\cos^2\theta_-$ by plugging $w=N_Cx-\frac{1}{2}$ into the asymptotic expansion:
\begin{align}
    \cos^2\theta_- = \,\, & \left[\left(1 - b_{00}\right) - \frac{b_{01}}{N_C} - \frac{b_{02}}{N_C^2} - \cdots\right] + \left[-b_{10} - \frac{b_{11}}{N_C} - \cdots\right]\left(z - \frac{1}{2N_C}\right) \\
    & + \left[-b_{20} - \frac{b_{21}}{N_C} - \cdots\right]\left(z - \frac{1}{2N_C}\right)^2 + \left[-b_{30} - \cdots\right]\left(z - \frac{1}{2N_C}\right)^3 \\
    & + \left[-b_{40} - \cdots\right]\left(z - \frac{1}{2N_C}\right)^4 + \cdots \\
    = \,\, & \left(1 - b_{00}\right) - b_{10}z + \left(-b_{01} + \frac{b_{10}}{2}\right)\frac{1}{N_C} - b_{20}z^2 + \left(-b_{11} + b_{20}\right)\frac{z}{N_C} - b_{30}z^3 \\
    & + \left(-b_{02} + \frac{b_{11}}{2} - \frac{b_{20}}{4}\right)\frac{1}{N_C^2} + \left(-b_{21} + \frac{3}{2}b_{30}\right)\frac{z^2}{N_C} - b_{40}z^4 + \cdots
\end{align}
Multiplying the two previous quantities yields
\begin{align}
    \cos^2\theta_-\sin^2\theta_+ = \,\, & b_{00}\left(1 - b_{00}\right) + b_{10}\left(1 - 2b_{00}\right)z \\
    & + \left[b_{01}\left(1 - 2b_{00}\right) + \frac{b_{10}}{2}\right]\frac{1}{N_C} + \left[b_{20}\left(1 - 2b_{00}\right) - b_{10}^2\right]z^2 \\
    & + \left[b_{11}\left(1 - 2b_{00}\right) + b_{20} - 2b_{10}b_{01}\right]\frac{z}{N_C} + \left[b_{30}\left(1 - 2b_{00}\right) - 2b_{10}b_{20}\right]z^3 \\
    & + \left[b_{02}\left(1 - 2b_{00}\right) + \frac{b_{11}}{2} + \frac{b_{20}}{4}\left(1 - 2b_{00}\right) - b_{01}^2 + \frac{b_{10}^2}{4}\right]\frac{1}{N_C^2} \\
    & + \left[b_{21}\left(1 - 2b_{00}\right) + \frac{3}{2}b_{30} - 2b_{01}b_{20} - 2b_{10}b_{11}\right]\frac{z^2}{N_C} \\
    & + \left[b_{40}\left(1 - 2b_{00}\right) - 2b_{10}b_{30} - b_{20}^2\right]z^4 + \cdots
\end{align}
Because the leading quantity $b_{00}\left(1 - b_{00}\right)$ will show up a lot, we give it a special name for convenience:
\begin{equation}
    B_{00} \equiv b_{00}\left(1 - b_{00}\right).
\end{equation}
Now we need use the following Taylor series:
\begin{equation}
    \sqrt{1 + \epsilon} = 1 + \frac{\epsilon}{2} - \frac{\epsilon^2}{8} + \frac{\epsilon^3}{16} - \frac{5}{128}\epsilon^4 + O\left(\epsilon^5\right)
\end{equation}
When we use the above Taylor series to take the square root of $\cos^2\theta_-\sin^2\theta_+$, we obtain (take a deep breath now...)
\begin{align}
    \cos\theta_-\sin\theta_+ = \,\, & \sqrt{B_{00}}\Bigg\{1 + \frac{1}{2B_{00}}\left[b_{10}\left(1 - 2b_{00}\right)\right]z + \frac{1}{2B_{00}}\left[b_{01}\left(1 - 2b_{00}\right) + \frac{b_{10}}{2}\right]\frac{1}{N_C} \\
    & + \frac{1}{2B_{00}}\left[b_{20}\left(1 - 2b_{00}\right) - b_{10}^2\right]z^2 + \frac{1}{2B_{00}}\left[b_{11}\left(1 - 2b_{00}\right) + b_{20} - 2b_{10}b_{01}\right]\frac{z}{N_C} \\
    & + \frac{1}{2B_{00}}\left[b_{30}\left(1 - 2b_{00}\right) - 2b_{10}b_{20}\right]z^3 \\
    & + \frac{1}{2B_{00}}\left[b_{02}\left(1 - 2b_{00}\right) + \frac{b_{11}}{2} + \frac{b_{20}}{4}\left(1 - 2b_{00}\right) - b_{01}^2 + \frac{b_{10}^2}{4}\right]\frac{1}{N_C^2} \\
    & + \frac{1}{2B_{00}}\left[b_{21}\left(1 - 2b_{00}\right) + \frac{3}{2}b_{30} - 2b_{01}b_{20} - 2b_{10}b_{11}\right]\frac{z^2}{N_C} \\
    & + \frac{1}{2B_{00}}\left[b_{40}\left(1 - 2b_{00}\right) - 2b_{10}b_{30} - b_{20}^2\right]z^4 - \frac{1}{8B_{00}^2}\left[b_{10}\left(1 - 2b_{00}\right)\right]^2z^2 \\
    & - \frac{1}{4B_{00}^2}\left[b_{10}\left(1 - 2b_{00}\right)\right]\left[b_{01}\left(1 - 2b_{00}\right) + \frac{b_{10}}{2}\right]\frac{z}{N_C} \\
    & - \frac{1}{4B_{00}^2}\left[b_{10}\left(1 - 2b_{00}\right)\right]\left[b_{20}\left(1 - 2b_{00}\right) - b_{10}^2\right]z^3 \\
    & - \frac{1}{4B_{00}^2}\left[b_{10}\left(1 - 2b_{00}\right)\right]\left[b_{11}\left(1 - 2b_{00}\right) + b_{20} - 2b_{10}b_{01}\right]\frac{z^2}{N_C} \\
    & - \frac{1}{4B_{00}^2}\left[b_{10}\left(1 - 2b_{00}\right)\right]\left[b_{30}\left(1 - 2b_{00}\right) - 2b_{10}b_{20}\right]z^4 \\
    & - \frac{1}{8B_{00}^2}\left[b_{01}\left(1 - 2b_{00}\right) + \frac{b_{10}}{2}\right]^2\frac{1}{N_C^2} \\
    & - \frac{1}{4B_{00}^2}\left[b_{01}\left(1 - 2b_{00}\right) + \frac{b_{10}}{2}\right]\left[b_{20}\left(1 - 2b_{00}\right) - b_{10}^2\right]\frac{z^2}{N_C} \\
    & - \frac{1}{8B_{00}^2}\left[b_{20}\left(1 - 2b_{00}\right) - b_{10}^2\right]^2z^4 + \frac{1}{16B_{00}^3}\left[b_{10}\left(1 - 2b_{00}\right)\right]^3z^3 \\
    & + \frac{3}{16B_{00}^3}\left[b_{10}\left(1 - 2b_{00}\right)\right]^2\left[b_{01}\left(1 - 2b_{00}\right) + \frac{b_{10}}{2}\right]\frac{z^2}{N_C} \\
    & + \frac{3}{16B_{00}^3}\left[b_{10}\left(1 - 2b_{00}\right)\right]^2\left[b_{20}\left(1 - 2b_{00}\right) - b_{10}^2\right]z^4 \\
    & - \frac{5}{128B_{00}^4}\left[b_{10}\left(1 - 2b_{00}\right)\right]^4z^4\Bigg\} + \cdots
\end{align}
For the sake of readability, we have organized the terms in the above expression in the following way. Within the curly braces, we start with the term $1$ in the Taylor series. Next, we put all the terms coming from the $+\epsilon/2$ term in the Taylor series. After that, we put all the terms coming from the $-\epsilon^2/8$ term in the Taylor series. Then, we put all the terms coming from the $+\epsilon^3/16$ term in the Taylor series. Finally, we put all the terms coming from the $-5\epsilon^4/128$ term in the Taylor series.

\vspace{0.5\baselineskip}

It is now time to plug in the values for $b_{00}$ and $b_{10}$, which we derived in Appendix \ref{appendix:1st-order-optimality} from $0^{\text{th}}$-order optimality and $1^{\text{st}}$-order optimality, respectively:
\begin{equation}
    \boxed{b_{00} = \frac{1 - C_{\text{out}}}{2}} \quad \boxed{b_{10} = \lambda\frac{S_{\text{out}}^2}{S_{\text{in}}^2}}.
\end{equation}
Now we recall the fidelity formula that we derived in Appendix \ref{appendix:boundary-value-problem}:
\begin{align}
    \mF &= \sum_{w=0}^{N_C}\left(\frac{1 + C_{\text{out}}}{2}\cos^2\theta_w + \frac{1 - C_{\text{out}}}{2}\sin^2\theta_w\right)P_{w,w} + \sum_{w=1}^{N_C}\cos\theta_{w-1}\sin\theta_wP_{w-1,w} \\
    &= \frac{1 + C_{\text{out}}}{2}\sum_{w=0}^{N_C}P_{w,w} - C_{\text{out}}\sum_{w=0}^{N_C}\sin^2\theta_wP_{w,w} + S_{\text{out}}\sum_{w\in\Zbb+\frac{1}{2}, \,\, 0\le w\le N_C}\cos\theta_-\sin\theta_+P_{-,+}.
\end{align}
We now plug everything we have derived so far (the asymptotic expansion, the values of $b_{00}$ and $b_{10}$, and the $\mM_p^{(\alpha)}$ values) into the fidelity formula. After a lot of tedious algebra, we find that the $\Theta(N_C^{-2})$ contribution to the fidelity is a quadratic polynomial in the variables $b_{01}$ and $b_{20}$. (This is the point at which the terms that depend on  $b_{11}$, $b_{30}$, $b_{02}$, $b_{21}$, $b_{40}$ at $\Theta(N_C^{-2})$ precision ``magically'' cancel out.) In particular, we can write
\begin{equation}
    \left[\Theta\left(N_C^{-2}\right)\text{ fidelity term}\right]  = Ab_{01}^2 + 2Bb_{01}b_{20} + Cb_{20}^2 + Db_{01} + Eb_{20} + F
\end{equation}

\begin{equation}
    A = -\frac{1}{S_{\text{out}}^2}
\end{equation}

\begin{equation}
    B = -\frac{1}{4\lambda}\frac{S_{\text{in}}^2}{S_{\text{out}}^2}
\end{equation}

\begin{equation}
    C = -\frac{3}{16\lambda^2}\frac{S_{\text{in}}^4}{S_{\text{out}}^2}
\end{equation}

\begin{equation}
    D = -\frac{1-\lambda^2}{2\lambda}\frac{C_{\text{out}}}{S_{\text{in}}^2}
\end{equation}

\begin{equation}
    E = \frac{-1 + 5\lambda^2}{8\lambda^2}C_{\text{out}} - \frac{3 - \lambda^2}{4\lambda}C_{\text{in}}
\end{equation}

\begin{align}
    F = \,\,& \frac{S_{\text{out}}^2}{16S_{\text{in}}^4\lambda^2}\Bigg\{-(1-\lambda)^2(1+\lambda)(3-\lambda) - 8\lambda^4C_{\text{out}}^2 \\
    & + 4\lambda\left(-1 - 3\lambda + \lambda^2 + \lambda^3\right)C_{\text{in}}^2 + 4\lambda\left(1 + 4\lambda^2 - \lambda^4\right)C_{\text{out}}C_{\text{in}}\Bigg\}.
\end{align}
One good sanity check is to look at the equatorial special case. In that case, the coefficients of the quadratic polynomial simplify dramatically:
\begin{align}
    & \Theta_{\text{in}} = \Theta_{\text{out}} = \frac{\pi}{2} \quad \left(C_{\text{in}} = C_{\text{out}} = 0\right) \\
    \implies & A = -1, \quad B = -\frac{1}{4\lambda}, \quad C = -\frac{3}{16\lambda^2}, \quad D = 0, \quad E = 0, \quad F = -\frac{(1-\lambda)^2(1+\lambda)(3-\lambda)}{16\lambda^2}.
\end{align}
Fortunately, a quadratic polynomial is extremely easy to optimize. If a quadratic polynomial has the form
\begin{align}
    f(x,y) &= Ax^2 + 2Bxy + Cy^2 + Dx + Ey + F
\end{align}
with $AC\neq B^2$, then it has a unique critical point $(x_0,y_0)$ that satisfies the following:
\begin{align}
    (x_0, y_0) &= \left(\frac{BE - CD}{2(AC - B^2)}, \frac{BD - AE}{2(AC - B^2)}\right) \\
    f(x_0, y_0) &= \frac{2BDE - CD^2 - AE^2}{4(AC - B^2)} + F = \frac{D}{2}x_0 + \frac{E}{2}y_0 + F.
\end{align}
In particular, if $A,C<0$ and $AC-B^2>0$, then the Hessian is negative definite, so the critical point is a local maximum, and thus also the global maximum. In this case, it is easy to verify that $A,B,C$ indeed satisfy these conditions. As a result, by using the formulas for $x_0$ and $y_0$ above, we can find the optimum values for the parameters $b_{01}$ and $b_{20}$:
\begin{equation}
    \boxed{b_{01} = \frac{1}{4\lambda}\frac{S_{\text{out}}^2}{S_{\text{in}}^2}\left[\lambda(3 - \lambda^2)C_{\text{in}} - (1 + \lambda^2)C_{\text{out}}\right]}
\end{equation}

\begin{equation}
    \boxed{b_{20} = \lambda\frac{S_{\text{out}}^4}{S_{\text{in}}^2}\left[2\lambda C_{\text{out}} - (3 - \lambda^2)C_{\text{in}}\right]}.
\end{equation}
Once again, a good sanity check is to examine the equatorial special case:
\begin{equation}
    \Theta_{\text{in}} = \Theta_{\text{out}} = \frac{\pi}{2} \quad \left(C_{\text{in}} = C_{\text{out}} = 0\right) \implies b_{01} = b_{20} = 0.
\end{equation}
As we mentioned in Appendix \ref{appendix:asymptotic-expansion}, bit-flip symmetry implies that $b_{pq}=0$ for all even $p$ (except $p=q=0$, in which case $b_{00} = \frac{1}{2}$). So the fact that we get $b_{01} = b_{20} = 0$ in this case is reassuring.

\vspace{0.5\baselineskip}

We now compute the $2^{\text{nd}}$-order contribution to the fidelity, which equals $\frac{D}{2}b_{01} + \frac{E}{2}b_{20} + F$. We first compute the following two auxiliary quantities:
\begin{equation}
    \frac{D}{2}b_{01} = -\frac{1-\lambda^2}{16\lambda^2}\frac{C_{\text{out}}S_{\text{out}}^2}{S_{\text{in}}^4}\left[\lambda(3 - \lambda^2)C_{\text{in}} - (1 + \lambda^2)C_{\text{out}}\right]
\end{equation}

\begin{equation}
    \frac{E}{2}b_{20} = \frac{1}{16\lambda}\frac{S_{\text{out}}^2}{S_{\text{in}}^4}\left[(-1 + 5\lambda^2)C_{\text{out}} - 2\lambda(3 - \lambda^2)C_{\text{in}}\right]\left[2\lambda C_{\text{out}} - (3 - \lambda^2)C_{\text{in}}\right]
\end{equation}
Adding the above two quantities and $F$ yields
\begin{align}
    \left[\Theta\left(N_C^{-2}\right)\text{ fidelity term}\right] = \frac{1-\lambda}{16\lambda^2}\frac{S_{\text{out}}^2}{S_{\text{in}}^4}\Big\{ & C_{\text{out}}^2\left(1 + \lambda - \lambda^2 - \lambda^3 + 8\lambda^4\right) \\
    + \,\, & C_{\text{in}}^2\left[2\lambda^2\left(7 + \lambda - 3\lambda^2 - \lambda^3\right)\right] \\
    + \,\, & C_{\text{out}}C_{\text{in}}\left[2\lambda^2\left(7 + \lambda - 3\lambda^2 - \lambda^3\right)\right] \\
    + \,\, & \left[-(1-\lambda)(1+\lambda)(3-\lambda)\right]\Big\}.
\end{align}
Finally, for the sake of completeness, we re-introduce the $0^{\text{th}}$-order and $1^{\text{st}}$-order contributions to write out the full fidelity to $\Theta(N_C^{-2})$ precision:
\begin{align}
    \mF = 1 - \frac{1-\lambda^2}{4\lambda}\frac{S_{\text{out}}^2}{S_{\text{in}}^2}\frac{1}{N_C} + \frac{1-\lambda}{16\lambda^2}\frac{S_{\text{out}}^2}{S_{\text{in}}^4}\Big\{ & C_{\text{out}}^2\left(1 + \lambda - \lambda^2 - \lambda^3 + 8\lambda^4\right) \\
    + \,\, & C_{\text{in}}^2\left[2\lambda^2\left(7 + \lambda - 3\lambda^2 - \lambda^3\right)\right] \\
    + \,\, & C_{\text{out}}C_{\text{in}}\left[2\lambda^2\left(7 + \lambda - 3\lambda^2 - \lambda^3\right)\right] \\
    + \,\, & \left[-(1-\lambda)(1+\lambda)(3-\lambda)\right]\Big\}\frac{1}{N_C^2} + O\left(N_C^{-3}\right).
\end{align}

\vspace{0.5\baselineskip}

Our last task is to convert this power series in $N_C^{-1}$ to a power series in $N^{-1}$. To do this, we need to compute the expectation values of both $N_C^{-1}$ and $N_C^{-2}$ to $\Theta(N^{-2})$ precision. We explain how to do this in Appendix \ref{appendix:angular-momentum-moments}. The relevant expectation values are as follows:
\begin{align}
    \mathbb{E}\left[N_C^{-1}\right] &= \frac{1}{\lambda}\frac{1}{N} + \frac{1-\lambda}{\lambda^2}\frac{1}{N^2} + O\left(N^{-3}\right) \\
    \mathbb{E}\left[N_C^{-2}\right] &= \frac{1}{\lambda^2}\frac{1}{N^2} + O\left(N^{-3}\right).
\end{align}
When we plug in these formulas, we can obtain the $2^{\text{nd}}$-order contribution to the fidelity:
\begin{align}
    \left[\Theta\left(N^{-2}\right)\text{ fidelity term}\right] = \frac{1-\lambda}{16\lambda^4}\frac{S_{\text{out}}^2}{S_{\text{in}}^4}\Big\{ & C_{\text{out}}^2\left(1 + \lambda - \lambda^2 - \lambda^3 + 8\lambda^4\right) \\
    + \,\, & C_{\text{in}}^2\left[2\lambda\left(2 + 7\lambda - \lambda^2 - 3\lambda^3 - \lambda^4\right)\right] \\
    + \,\, & C_{\text{out}}C_{\text{in}}\left[4\lambda^2\left(1 - 5\lambda - \lambda^2 + \lambda^3\right)\right] \\
    + \,\, & \left[-3(1-\lambda)(1+\lambda)^2\right]\Big\}.
\end{align}
Finally, for the sake of completeness, we re-introduce the $0^{\text{th}}$-order and $1^{\text{st}}$-order contributions to write out the full fidelity to $\Theta(N^{-2})$ precision:
\begin{align}
    \mF = 1 - \frac{1-\lambda^2}{4\lambda^2}\frac{S_{\text{out}}^2}{S_{\text{in}}^2}\frac{1}{N} + \frac{1-\lambda}{16\lambda^4}\frac{S_{\text{out}}^2}{S_{\text{in}}^4}\Big\{ & C_{\text{out}}^2\left(1 + \lambda - \lambda^2 - \lambda^3 + 8\lambda^4\right) \\
    + \,\, & C_{\text{in}}^2\left[2\lambda\left(2 + 7\lambda - \lambda^2 - 3\lambda^3 - \lambda^4\right)\right] \\
    + \,\, & C_{\text{out}}C_{\text{in}}\left[4\lambda^2\left(1 - 5\lambda - \lambda^2 + \lambda^3\right)\right] \\
    + \,\, & \left[-3(1-\lambda)(1+\lambda)^2\right]\Big\}\frac{1}{N^2} + O\left(N^{-3}\right).
\end{align}
This completes the $2^{\text{nd}}$-order analysis of single-shot qubit coherence distillation. It is unclear whether the above fidelity expression has any deeper meaning. Regardless, we find that the fact that this problem \textit{can} be solved to higher orders at all interesting in its own right.

\appsubsec{Equatorial Special Case}
{subsec:2nd-order-optimality-special-case-equatorial}

One special case we always like to check is the equatorial case. We already incorporated some of this discussion into the general derivation in Appendix \ref{appendix:2nd-order-optimality}\ref{subsec:2nd-order-optimality-derivation}, but let us restate the key results. First, the protocol is actually extremely simple to describe, as both of the variables we solved for collapse to zero:
\begin{equation}
    \boxed{b_{01} = b_{20} = 0}.
\end{equation}
This can be understood as a direct consequence of bit-flip symmetry, which actually enforces $b_{pq}=0$ for all even $p$ (except $p=q=0$, in which case $b_{00} = \frac{1}{2}$). One interesting way to state this is as follows: \textbf{for the equatorial special case, a $1^{\text{st}}$-order optimal protocol that satisfies bit-flip symmetry will automatically be $2^{\text{nd}}$-order optimal as well.}

\vspace{0.5\baselineskip}

Another interesting thing to check is whether the lower bound on infidelity imposed by purity of coherence is saturated. In particular, as shown in Supplementary Note 10 of \cite{Marvian2020}, one can exploit bit-flip symmetry to lower bound the $2^{\text{nd}}$-order infidelity as well. In particular,
\begin{equation}
    \mI(\mE_N) \ge \frac{1}{2}\left[1 - \left(1 + \frac{1-\lambda^2}{\lambda^2}\frac{1}{N}\right)^{-1/2}\right] = \frac{1-\lambda^2}{4\lambda^2}\frac{1}{N} - \frac{3\left(1-\lambda^2\right)^2}{16\lambda^2}\frac{1}{N^2} + \frac{5\left(1-\lambda^2\right)^3}{32\lambda^6}\frac{1}{N^3} + O\left(N^{-4}\right).
\end{equation}
So is this bound saturated? To find out, we take the optimal fidelity, both as a power series in $N_C^{-1}$ and as a power series in $N^{-1}$, and we simply plug in $\Theta_{\text{in}} = \Theta_{\text{out}} = \frac{\pi}{2}$:
\begin{align}
    \Theta_{\text{in}} &= \Theta_{\text{out}} = \frac{\pi}{2} \quad \left(C_{\text{in}} = C_{\text{out}} = 0\right) \\
    \implies \mF &= 1 - \frac{1-\lambda^2}{4\lambda}\frac{1}{N_C} - \frac{(1-\lambda)^2(1+\lambda)(3-\lambda)}{16\lambda^2}\frac{1}{N_C^2} + O\left(N_C^{-3}\right) \\
    &= 1 - \frac{1-\lambda^2}{4\lambda^2}\frac{1}{N} - \frac{3(1-\lambda^2)^2}{16\lambda^2}\frac{1}{N^2} + O\left(N^{-3}\right) \\
    \implies \mI(\mE_N) &= \frac{1-\lambda^2}{4\lambda^2}\frac{1}{N} + \frac{3(1-\lambda^2)^2}{16\lambda^2}\frac{1}{N^2} + O\left(N^{-3}\right).
\end{align}
As we can see, the bound on the $N^{-2}$ term coming from purity of coherence is NOT saturated. (The two terms look the same, but one has a minus sign, while the other has a plus sign!) In fact, in Appendix \ref{appendix:PH-dissipation}, we show that this $2^{\text{nd}}$-order optimal protocol indeed wastes some purity of coherence, not at the $\Theta(N)$ order (which is the leading order), but at the $\Theta(1)$ order. The fact that the bound coming from purity of coherence monotonicity is not saturated at the $2^{\text{nd}}$ order is directly tied to this wastage.

\vspace{0.5\baselineskip}

Perhaps this is unsurprising; after all, even in classical parameter estimation, the Cram\'{e}r-Rao bound (which is by far the most famous application of Fisher information) can only be saturated at the leading order in general. (However, it can be saturated at higher orders, or even exactly, in some special cases.) It remains an open question to study whether there is some other natural information-geometric consideration that motivates the $\Theta\left(N^{-2}\right)$ term in the minimum infidelity. If there were such a consideration, it would be a distillation analogue of a result in classical parameter estimation which computes the $\Theta\left(N^{-2}\right)$ term in the variance of a bias-corrected, first-order efficient estimator and thus derives the condition for second-order efficiency \cite{Amari1985}.

\appsubsec{Special Case for Pure Input States}
{subsec:2nd-order-optimality-special-case-lam1}

In addition to the equatorial special case, there is one other especially interesting case to consider more closely. This is the $\lambda=1$ case, meaning that the input qubits are already pure. (Of course, the term ``distillation'' may no longer be appropriate in this case, since the input qubits are already pure.)

\vspace{0.5\baselineskip}

The first thing we can do is to plug $\lambda=1$ into the formulas for $b_{01}$ and $b_{20}$:
\begin{equation}
    \boxed{b_{01} = \frac{S_{\text{out}}^2}{2S_{\text{in}}^2}(C_{\text{in}} - C_{\text{out}})}
\end{equation}
\begin{equation}
    \boxed{b_{20} = \frac{2S_{\text{out}}^4}{2S_{\text{in}}^2}(C_{\text{out}} - C_{\text{in}})}.
\end{equation}
In general, these are probably not that illuminating. However, in the even more restricted case $\Theta_{\text{in}} = \Theta_{\text{out}}$, these values both collapse to zero! This actually makes perfect sense, because the discarding protocol $\sin^2\theta_w = \frac{w}{N_C}$ achieves perfect fidelity in this case. Therefore, in such a setting, plugging $z=0$ into the asymptotic expansion for the optimal protocol should produce
\begin{equation}
    z = 0 \implies \sin^2\theta = \mu = \frac{1 - C_{\text{in}}}{2} + \frac{C_{\text{in}}(1-\lambda)}{2\lambda}\frac{1}{N_C} + \text{e.s.e.} = \frac{1-C_{\text{in}}}{2} + \text{e.s.e.}
\end{equation}
However, plugging $z=0$ into the general asymptotic expansion yields
\begin{equation}
    z = 0 \implies \sin^2\theta_w = b_{00} + \frac{b_{01}}{N_C} + \frac{b_{02}}{N_C^2} + \cdots
\end{equation}
Furthermore, the protocol should always have slope exactly $1$. Therefore, we should always have $b_{00} = \frac{1-C_{\text{in}}}{2} = \frac{1-C_{\text{out}}}{2}$, $b_{10} = 1$, and all other values equal to $0$, and sure enough, the values we have derived so far match that perfectly.

\vspace{0.5\baselineskip}

The second thing we can do is to compute the fidelity. When we plug $\lambda=1$ into the fidelity formulas derived in Appendix \ref{appendix:2nd-order-optimality}\ref{subsec:2nd-order-optimality-derivation}, then regardless of $\Theta_{\text{in}}$ and $\Theta_{\text{out}}$, both the $N_C^{-1}$ and $N_C^{-2}$ terms vanish (equivalently, both the $N^{-1}$ and $N^{-2}$ terms vanish), leaving us with
\begin{equation}
    \lambda = 1 \implies \mF = 1 + O\left(N_C^{-3}\right) = 1 + O\left(N^{-3}\right).
\end{equation}
Given that we already showed the connection between the $1^{\text{st}}$-order infidelity and purity of coherence, the fact that the $1^{\text{st}}$-order infidelity vanishes is perhaps unsurprising. However, here we see that even the $2^{\text{nd}}$-order infidelity vanishes! If $\Theta_{\text{in}}=\Theta_{\text{out}}$, then this is obvious, because then the target state is the same as one input copy, but as long as $\Theta_{\text{in}}\neq\Theta_{\text{out}}$, this is not so obvious. In fact, if $\lambda=1$, then both the input and target states have infinite purity of coherence, so the purity of coherence cannot say anything about this case. However, this fact does have a satisfying intuitive explanation: in the $\lambda=1$ case, there exist a protocol that achieves zero $p^{\text{th}}$-order infidelity for \textit{every} $p\in\Nbb$, since the infidelity will actually be exponentially small. We present this protocol in Appendix \ref{appendix:perfect-conversion}.

\newpage

\appsec{Perfect Conversion with Post-Selection Between Pure Coherent States at Different Latitudes}
{appendix:perfect-conversion}

One interesting observation that comes out of the second-order analysis in Appendix \ref{appendix:2nd-order-optimality} is that, when the input qubit state is pure ($\lambda=1$), the second-order infidelity is also zero, regardless of $\Theta_{\text{in}}$ and $\Theta_{\text{out}}$. This fact is obvious for $\Theta_{\text{in}} = \Theta_{\text{out}}$, since the discarding protocol achieves perfect fidelity, but it is, in fact, non-obvious for $\Theta_{\text{in}} \neq \Theta_{\text{out}}$. For instance, Appendix \ref{appendix:boundary-value-problem}\ref{subsec:solve-recurrence-pure-input-matching-target} argues that, for $\Theta_{\text{in}} = \Theta_{\text{out}}$, the discarding protocol is the unique protocol on the symmetric subspace that achieves perfect fidelity, but this argument falls apart for $\Theta_{\text{in}} \neq \Theta_{\text{out}}$. In particular, one can show that it is no longer possible to saturate all of the inequalities that upper bound the fidelity at one, meaning that one cannot achieve perfect conversion from $N$ pure coherent qubit states to even $1$ pure coherent qubit state at a different latitude.

\vspace{0.5\baselineskip}

Furthermore, considerations involving purity of coherence no longer assist us here, since both the input and target pure coherent states have infinite purity of coherence. In fact, to draw an analogy to classical statistics, there are classical parameter estimation problems where the family of probability distributions has infinite Fisher information, yet the minimum-variance unbiased estimator (MVUE) still has $\Theta(N^{-2})$ variance. (A particularly simple one is a continuum analogue of the German tank problem, where the underlying distribution is the uniform distribution on $[0,\theta]$ for an unknown $\theta > 0$.) As a result, one might be led to believe that, even with infinite RLD Fisher information, one might be forced to have $\Theta(N^{-2})$ infidelity of distillation.

\vspace{0.5\baselineskip}

As a result, it is natural to ask: is there a simple and satisfying reason for why the second-order infidelity should vanish when the input states are pure? In fact, the answer is yes! In this appendix, we will demonstrate a procedure that, with positive probability under post-selection, achieves perfect one-to-one conversion between different latitudes of pure qubit states. This can be understood as providing a positive linear rate of perfect conversion, or alternatively an exponentially small chance of failure in the case of single-shot conversion.

\vspace{0.5\baselineskip}

For convenience, we will first assume that $\Theta_{\text{out}} \le \Theta_{\text{in}}$, i.e., the output coherent states are northward of the input coherent states. (As we will describe later, an analogous strategy will work in the opposite case, where the output coherent states are southward of the input coherent states.) In this case, the conversion strategy looks as follows:
\begin{itemize}
    \item Begin with the input state $\alpha\ket{0} + \beta\ket{1}$, where $\alpha = \cos\frac{\Theta_{\text{in}}}{2}$ and $\beta = e^{i\phi}\sin\frac{\Theta_{\text{in}}}{2}$, where $\phi\in[0,2\pi)$ is arbitrary and unknown to us.
    \item Introduce an ancilla in the $\ket{0}$ state to obtain the joint state $\alpha\ket{00} + \beta\ket{10}$.
    \item Apply the energy-conserving unitary $U(\eta) = \exp\left[-i\eta(\ket{01}\bra{10} - \ket{10}\bra{01})\right]$, which is a rotation on the $2$-dimensional subspace with total Hamming weight $1$. (We will compute the value of $\eta$ later.) The resulting state is $\alpha\ket{00} + \beta c\ket{10} + \beta s\ket{01}$, where $c\equiv\cos\eta$ and $s\equiv\sin\eta$ for convenience.
    \item Measure the ancilla qubit in the computational basis.
    \begin{itemize}
        \item If the measurement yields $\ket{0}$, then the conversion succeeds. The post-measurement state of the original qubit is $\frac{\alpha\ket{0} + \beta c\ket{1}}{\abs{\alpha}^2 + \abs{\beta}^2c^2}$. This occurs with probability $\Pbb = \abs{\alpha}^2 + \abs{\beta}^2c^2 = 1 - \abs{\beta}^2s^2$.
        \item If the measurement yields $\ket{1}$, then the conversion fails. The post-measurement state of the original qubit is $\ket{0}$. This occurs with probability $1 - \Pbb = \abs{\beta}^2s^2$.
    \end{itemize}
\end{itemize}

\vspace{0.5\baselineskip}

We now compute the value of the angle $\eta$ needed to carry out the conversion. Rather than computing $\eta$ directly, we compute $c^2$ and $s^2$, with the understanding that $0\le\eta\le\frac{\pi}{2}$:
\begin{align}
    & \quad \frac{\beta}{\alpha} = e^{i\phi}\tan\frac{\Theta_{\text{in}}}{2}, \quad \frac{\beta c}{\alpha} = e^{i\phi}\tan\frac{\Theta_{\text{out}}}{2} \\
    \implies & \quad c^2 = \frac{\tan^2\frac{\Theta_{\text{out}}}{2}}{\tan^2\frac{\Theta_{\text{in}}}{2}} = \frac{\left(1 - C_{\text{out}}\right)\left(1 + C_{\text{in}}\right)}{\left(1 + C_{\text{out}}\right)\left(1 - C_{\text{in}}\right)} \\
    \implies & \quad s^2 = \frac{2\left(C_{\text{out}} - C_{\text{in}}\right)}{\left(1 + C_{\text{out}}\right)\left(1 - C_{\text{in}}\right)}.
\end{align}
It is a good sanity check to confirm that $\Theta_{\text{out}} \le \Theta_{\text{in}}$ implies that $c^2$ and $s^2$ are both nonnegative, and to see that $c^2 = 1$ and $s^2 = 0$ precisely when $\Theta_{\text{in}} = \Theta_{\text{out}}$, which renders the ancilla useless, since in that case we can just return the input qubit with no alteration.

\vspace{0.5\baselineskip}

We can then also compute the probability of success, which turns out to have a very nice form:
\begin{align}
    \Pbb &= 1 - \abs{\beta}^2s^2 \\
    &= 1 - \sin^2\frac{\Theta_{\text{in}}}{2}\frac{2\left(C_{\text{out}} - C_{\text{in}}\right)}{\left(1 + C_{\text{out}}\right)\left(1 - C_{\text{in}}\right)} \\
    &= \frac{1 + C_{\text{in}}}{1 + C_{\text{out}}}.
\end{align}
Notice that this formula makes sense in the case we are considering, because $\Theta_{\text{out}} \le \Theta_{\text{in}}$ implies that $C_{\text{out}} \ge C_{\text{in}}$, so the success probability $\Pbb$ is always in the interval $[0,1]$. It is interesting to observe that, if we measure from the south pole, $\Pbb$ is simply the ratio of the input state elevation to the output state elevation. In other words, if we define the Hamiltonian of each qubit to be $H = \ket{0}\bra{0} = \frac{I+Z}{2}$, $\Pbb$ is the ratio of the input state energy to the output state energy.

\vspace{0.5\baselineskip}

A simple modification of the above strategy works in the opposite case, namely $\Theta_{\text{out}} \ge \Theta_{\text{in}}$. First, we initialize the ancilla qubit to $\ket{1}$ instead of $\ket{0}$. Next, the unitary will now be a rotation in the opposite direction, i.e., $U(\eta) = \exp\left[+i\eta(\ket{01}\bra{10} - \ket{10}\bra{01})\right]$. Finally, the successful measurement outcome on the ancilla is $\ket{1}$ instead of $\ket{0}$ (in both cases, the successful measurement outcome matches the initial state of the ancilla). Then the trigonometric functions of the rotation angle $\eta$ and the success probability assume the new formulas
\begin{align}
    c^2 = \frac{\tan^2\frac{\Theta_{\text{in}}}{2}}{\tan^2\frac{\Theta_{\text{out}}}{2}} &= \frac{\left(1 + C_{\text{out}}\right)\left(1 - C_{\text{in}}\right)}{\left(1 - C_{\text{out}}\right)\left(1 + C_{\text{in}}\right)} \\
    s^2 = 1 - c^2 &= \frac{2\left(C_{\text{in}} - C_{\text{out}}\right)}{\left(1 - C_{\text{out}}\right)\left(1 + C_{\text{in}}\right)} \\
    \Pbb &= \frac{1 - C_{\text{in}}}{1 - C_{\text{out}}}.
\end{align}
Once again, we can confirm that these values are all sensible in the setting where $\Theta_{\text{out}} \ge \Theta_{\text{in}}$, including in the special case where $\Theta_{\text{out}} = \Theta_{\text{in}}$ and the ancilla becomes useless.

\vspace{0.5\baselineskip}

So what does this all mean? Notice that each execution of this conversion strategy consumes only one input coherent state and outputs the desired pure coherent state with a positive constant probability. We have $N$ attempts at this strategy, and we only need to succeed once, which occurs with probability $1 - (1 - \Pbb)^N$. The average fidelity is at least this amount, since if we succeed, we achieve perfect fidelity, and if we fail, we achieve some incoherent state with nonnegative fidelity that we do not really care about. Therefore, the infidelity of this protocol is $\mI(\mE) \le (1-\Pbb)^N$, which is exponentially vanishing! Therefore, even though we cannot saturate all the inequalities in Appendix \ref{appendix:boundary-value-problem}\ref{subsec:solve-recurrence-pure-input-matching-target} to achieve perfect fidelity, we can achieve something exponentially close to perfect fidelity. Furthermore, we have found a natural explanation for why the second-order infidelity vanishes for $\lambda=1$, regardless of the values of $\Theta_{\text{in}}$ and $\Theta_{\text{out}}$, as we discovered in Appendix \ref{appendix:2nd-order-optimality}. In fact, the exponentially vanishing infidelity of this protocol tells us that we would continue to find zero $p^{\text{th}}$-order infidelity at $\lambda=1$ for all $p\in\Nbb$ and all possible values of $\Theta_{\text{in}}$ and $\Theta_{\text{out}}$.

\vspace{0.5\baselineskip}

This conversion strategy also has implications for another popular problem in the resource theory of asymmetry, namely that of linear conversion rate, where we try to produce $\approx rN$ copies of the desired state for the largest possible value of $r$, which is then called the linear conversion rate. For linear-rate conversion between families of pure states with \textbf{asymptotically vanishing error}, the maximum linear conversion rate was recently found by Yamaguchi et al. to be closely related to the quantum geometric tensor, which has interesting connections to both SLD and RLD Fisher information \cite{Yamaguchi2024}. However, what we have achieved here is actually \textbf{``perfect'' linear-rate conversion} between families of pure states, or in other words, linear-rate conversion between families of pure states with \textbf{zero error}. Optimizing the linear rate of perfect conversion between families of pure states in the resource theory of asymmetry may very well be an interesting question in its own right. For example, is there some asymmetry resource that is monotone for pure state conversions but not for general CPTP maps? In fact, it is not even immediately apparent whether the rate we achieved above is the best possible for converting between pure coherent states at different latitudes.

\newpage

\appsec{Third-Order Optimality for Equatorial Distillation}
{appendix:equatorial-3rd-order-optimality}

To conclude our asymptotic analysis, we extend the equatorial special case to $3^{\text{rd}}$-order optimality. Solving for $2^{\text{nd}}$-order optimality in the general case, as we did in Appendix \ref{appendix:2nd-order-optimality}, is already quite painful. However, in Appendix \ref{appendix:2nd-order-optimality}\ref{subsec:2nd-order-optimality-special-case-equatorial}, we observe that bit-flip symmetry actually makes $2^{\text{nd}}$-order optimality somewhat trivial. As a result, we charge forward to $3^{\text{rd}}$-order optimality in the equatorial case.

\vspace{0.5\baselineskip}

To derive the conditions for $3^{\text{rd}}$-order optimality, we need to compute the relevant ``moments'' of the $P_{w-1,w}$ values to $\Theta\left(N_C^{-3}\right)$ precision. Refer to Lemma \ref{lem:equatorial-offset1-moments0246} in Appendix \ref{appendix:understanding-P-vals}\ref{subsec:P-vals-moments-offset1-equatorial} for a derivation of these formulas:
\begin{align}
    \mM_0^{(1)} &= 1 + \left(-\frac{1}{2\lambda}\right)\frac{1}{N_C} + \left(\frac{-3 + 4\lambda}{8\lambda^2}\right)\frac{1}{N_C^2} + \left(\frac{-9 + 12\lambda - 10\lambda^2}{16\lambda^3}\right)\frac{1}{N_C^3} + O\left(N_C^{-4}\right) \\
    \mM_2^{(1)} &= \left(\frac{1}{4\lambda}\right)\frac{1}{N_C} + \left(\frac{-3 + 2\lambda}{8\lambda^2}\right)\frac{1}{N_C^2} + \left(\frac{-3 + 4\lambda^2}{32\lambda^3}\right)\frac{1}{N_C^3} + O\left(N_C^{-4}\right) \\
    \mM_4^{(1)} &= \left(\frac{3}{16\lambda^2}\right)\frac{1}{N_C^2} + \left(\frac{-21 + 12\lambda + 2\lambda^2}{32\lambda^3}\right)\frac{1}{N_C^3} + O\left(N_C^{-4}\right) \\
    \mM_6^{(1)} &= \left(\frac{15}{64\lambda^3}\right)\frac{1}{N_C^3} + O\left(N_C^{-4}\right).
\end{align}
As we show in Appendix \ref{appendix:boundary-value-problem}\ref{subsec:brute-force-optimization}, we do not need to care about the $P_{w,w}$ values anymore. Furthermore, the odd centered moments are automatically zero. Hence the above values are truly all we need.

\vspace{0.5\baselineskip}

We begin by writing out our protocol, which we characterize using $\sin^2\theta_w$. As we explain in Appendix \ref{appendix:asymptotic-expansion}, bit-flip symmetry allows us to forcibly set $b_{00} = \frac{1}{2}$ and $b_{pq} = 0$ for all $(p,q)\neq(0,0)$ with $p$ even. Therefore, our protocol looks as follows, where we only show terms that can contribute at the $\Theta(N_C^{-3})$ order:
\begin{equation}
    \sin^2\theta_w = \frac{1}{2} + \left(\lambda + \frac{b_{11}}{N_C} + \frac{b_{12}}{N_C^2}\right)z + \left(b_{30} + \frac{b_{31}}{N_C}\right)z^3 + b_{50}z^5 + O\left(N_C^{-7/2}\right).
\end{equation}
Just as before, we use for convenience the notation
\begin{equation}
    \theta_{\pm} \coloneqq \theta_{N_Cx\pm\frac{1}{2}}.
\end{equation}
We need to compute $\cos\theta_-\sin\theta_+$, since this is the quantity that affects the fidelity of the protocol. We begin by computing $\sin^2\theta_+$:
\begin{align}
    \sin^2\theta_+ &= \frac{1}{2} + \left(\lambda + \frac{b_{11}}{N_C} + \frac{b_{12}}{N_C^2}\right)\left(z + \frac{1}{2N_C}\right) \\
    & \quad + \left(b_{30} + \frac{b_{31}}{N_C}\right)\left(z + \frac{1}{2N_C}\right)^3 + b_{50}\left(z + \frac{1}{2N_C}\right)^5 + O\left(N_C^{-7/2}\right) \\
    &= \frac{1}{2} + \lambda z + \frac{\lambda}{2}\frac{1}{N_C} + b_{11}\frac{z}{N_C} + b_{30}z^3 + \frac{b_{11}}{2}\frac{1}{N_C^2} + \frac{3}{2}b_{30}\frac{z^2}{N_C} \\
    & \quad + \left[b_{12} + \frac{3}{4}b_{30}\right]\frac{z}{N_C^2} + b_{31}\frac{z^3}{N_C} + b_{50}z^5 + \left[\frac{b_{12}}{2} + \frac{b_{30}}{8}\right]\frac{1}{N_C^3} \\
    & \quad + \frac{3}{2}b_{31}\frac{z^2}{N_C^2} + \frac{5}{2}b_{50}\frac{z^4}{N_C} + O\left(N_C^{-7/2}\right).
\end{align}
We now compute $\cos^2\theta_-$:
\begin{align}
    \cos^2\theta_- &= \frac{1}{2} - \left(\lambda + \frac{b_{11}}{N_C} + \frac{b_{12}}{N_C^2}\right)\left(z - \frac{1}{2N_C}\right) \\
    & \quad - \left(b_{30} + \frac{b_{31}}{N_C}\right)\left(z - \frac{1}{2N_C}\right)^3 - b_{50}\left(z - \frac{1}{2N_C}\right)^5 + O\left(N_C^{-7/2}\right) \\
    &= \frac{1}{2} - \lambda z + \frac{\lambda}{2}\frac{1}{N_C} - b_{11}\frac{z}{N_C} - b_{30}z^3 + \frac{b_{11}}{2}\frac{1}{N_C^2} + \frac{3}{2}b_{30}\frac{z^2}{N_C} \\
    & \quad + \left[-b_{12} - \frac{3}{4}b_{30}\right]\frac{z}{N_C^2} - b_{31}\frac{z^3}{N_C} - b_{50}z^5 + \left[\frac{b_{12}}{2} + \frac{b_{30}}{8}\right]\frac{1}{N_C^3} \\
    & \quad + \frac{3}{2}b_{31}\frac{z^2}{N_C^2} + \frac{5}{2}b_{50}\frac{z^4}{N_C} + O\left(N_C^{-7/2}\right).
\end{align}
We now multiply the two previous quantities together:
\begin{align}
    \cos^2\theta_-\sin^2\theta_+ &= \frac{1}{4} + \frac{\lambda}{2}\frac{1}{N_C} - \lambda^2z^2 + \left[\frac{\lambda^2}{4} + \frac{b_{11}}{2}\right]\frac{1}{N_C^2} \\
    & \quad + \left[-2\lambda b_{11} + \frac{3}{2}b_{30}\right]\frac{z^2}{N_C} - 2\lambda b_{30}z^4 \\
    & \quad + \left[\frac{\lambda b_{11}}{2} + \frac{b_{12}}{2} + \frac{b_{30}}{8}\right]\frac{1}{N_C^3} \\
    & \quad + \left[-b_{11}^2 \mathcolor{red}{+ \frac{3}{2}\lambda b_{30}} - 2\lambda b_{12} \mathcolor{red}{- \frac{3}{2}\lambda b_{30}} + \frac{3}{2}b_{31}\right]\frac{z^2}{N_C^2} \\
    & \quad + \left[\frac{5}{2}b_{50} - 2\lambda b_{31} - 2b_{11}b_{30}\right]\frac{z^4}{N_C} \\
    & \quad + \left[-b_{30}^2 - 2\lambda b_{50}\right]z^6 + O\left(N_C^{-4}\right).
\end{align}
For readability, we highlight two of the terms in \textcolor{red}{red} to show that they cancel out. (It is unclear whether there is any ``meaningful'' reason for this cancellation.) The more interesting observation at this point is that the terms with an odd power of $z$ all cancel out. This is a unique feature of the equatorial special case, and it is a direct consequence of the condition $\theta_w + \theta_{N_C-w} = \frac{\pi}{2}$.

\vspace{0.5\baselineskip}

We now wish to take the square root of the above quantity. To do so, we use the Taylor series
\begin{equation}
    \sqrt{\frac{1}{4} + \varepsilon} = \frac{1}{2} + \varepsilon - \varepsilon^2 + 2\varepsilon^3 + O(\varepsilon^4).
\end{equation}
When we use this Taylor series to compute $\cos\theta_-\sin\theta_+$, we obtain (time for another deep breath...)
\begin{align}
    \cos\theta_-\sin\theta_+ &= \frac{1}{2} + \Bigg\{\frac{\lambda}{2}\frac{1}{N_C} - \lambda^2z^2 + \left[\textcolor{red}{\frac{\lambda^2}{4}} + \frac{b_{11}}{2}\right]\frac{1}{N_C^2} \\
    & \quad + \left[-2\lambda b_{11} + \frac{3}{2}b_{30}\right]\frac{z^2}{N_C} - 2\lambda b_{30}z^4 \\
    & \quad + \left[\textcolor{orange}{\frac{\lambda b_{11}}{2}} + \frac{b_{12}}{2} + \frac{b_{30}}{8}\right]\frac{1}{N_C^3} + \left[-b_{11}^2 - 2\lambda b_{12} + \frac{3}{2}b_{31}\right]\frac{z^2}{N_C^2} \\
    & \quad + \left[\frac{5}{2}b_{50} - 2\lambda b_{31} - 2b_{11}b_{30}\right]\frac{z^4}{N_C} + \left[-b_{30}^2 - 2\lambda b_{50}\right]z^6\Bigg\} \\
    & \quad - \Bigg\{\textcolor{red}{\frac{\lambda^2}{4}\frac{1}{N_C^2}} - \lambda^3\frac{z^2}{N_C} + \lambda^4z^4 + \left[\textcolor{green}{\frac{\lambda^3}{4}} \textcolor{orange}{+ \frac{\lambda b_{11}}{2}}\right]\frac{1}{N_C^3} \\
    & \quad + \left[\textcolor{teal}{-\frac{\lambda^4}{2}} \textcolor{blue}{- \lambda^2b_{11}} \textcolor{blue}{- 2\lambda^2b_{11}} + \frac{3}{2}\lambda b_{30}\right]\frac{z^2}{N_C^2} \\
    & \quad + \left[\textcolor{violet}{-2\lambda^2b_{30}} + 4\lambda^3b_{11} \textcolor{violet}{- 3\lambda^2b_{30}}\right]\frac{z^4}{N_C} + 4\lambda^3b_{30}z^6\Bigg\} \\
    & \quad + 2\Bigg\{\textcolor{green}{\frac{\lambda^3}{8}\frac{1}{N_C^3}} \textcolor{teal}{- \frac{3}{4}\lambda^4\frac{z^2}{N_C^2}} + \frac{3}{2}\lambda^5\frac{z^4}{N_C} - \lambda^6z^6\Bigg\} + O\left(N_C^{-4}\right).
\end{align}
Once again for readability, we highlight some of the terms in \textcolor{red}{red}, \textcolor{orange}{orange}, and \textcolor{green}{green} to show that they cancel out. We also highlight some of the terms in \textcolor{teal}{teal}, \textcolor{blue}{blue}, and \textcolor{violet}{violet} to show that they are like terms and hence can be combined (but do not cancel out).

\vspace{0.5\baselineskip}

We are now ready to evaluate, and then maximize, the fidelity. When working in the equatorial special case, because the output state is also equatorial, it is common to work with the output purity parameter $\tilde{\lambda}$, rather than the output fidelity $\mF$. Of course, these two quantities are related in a very straightforward way:
\begin{equation}
    \mF = \frac{1+\tilde{\lambda}}{2}.
\end{equation}
As we showed in Appendix \ref{appendix:boundary-value-problem}, the formula for the output fidelity simplifies to no longer depend on the $P_{w,w}$ values in the equatorial case. In particular, the output purity parameter $\tilde{\lambda}$ can be written as follows:
\begin{equation}
    \tilde{\lambda} = 2\mF - 1 = 2\sum_{w=1}^{N_C}P_{w-1,w}\cos\theta_{w-1}\sin\theta_w.
\end{equation}
The above expression for $\cos\theta_-\sin\theta_+$ tells us that $\tilde{\lambda}$ will take the form
\begin{equation}
    \tilde{\lambda} = (\cdots)\mM_0^{(1)} + (\cdots)\mM_2^{(1)} + (\cdots)\mM_4^{(1)} + (\cdots)\mM_6^{(1)} + O\left(N_C^{-4}\right),
\end{equation}
where the coefficients of $\mM_p^{(1)}$ for $p=0,2,4,6$ are as follows:
\begin{align}
    \left[\mM_0^{(1)}\text{ coeff}\right] &= 1 + \lambda\frac{1}{N_C} + b_{11}\frac{1}{N_C^2} + \left[b_{12} + \frac{b_{30}}{4}\right]\frac{1}{N_C^3} \\
    \left[\mM_2^{(1)}\text{ coeff}\right] &= -2\lambda^2 + \left[-4\lambda b_{11} + 3b_{30} + 2\lambda^3\right]\frac{1}{N_C} \\
    & \quad + \left[-2b_{11} - 4\lambda b_{12} + 3b_{31} - 2\lambda^4 + 6\lambda^2b_{11} - 3\lambda b_{30}\right]\frac{1}{N_C^2} \\
    \left[\mM_4^{(1)}\text{ coeff}\right] &= \left[-4\lambda b_{30} - 2\lambda^4\right] + \Big[5b_{50} - 4\lambda b_{31} \\
    & \quad - 4b_{11}b_{30} + 10\lambda^2b_{30} - 8\lambda^3b_{11} + 6\lambda^5\Big]\frac{1}{N_C} \\
    \left[\mM_6^{(1)}\text{ coeff}\right] &= -2b_{30}^2 - 4\lambda b_{50} - 8\lambda^3b_{30} - 4\lambda^6.
\end{align}

\vspace{0.5\baselineskip}

Now comes the part where we plug in the formulas for $\mM_p^{(1)}$, which we stated at the start of this appendix, and which we show how to derive in Appendix \ref{appendix:understanding-P-vals}\ref{subsec:P-vals-moments-offset1-equatorial}. This is the part where the effect of quadratic variation in the vicinity of a local optimum will kick in, and the terms in $\tilde{\lambda}$ that depend on $b_{12}$, $b_{31}$, and $b_{50}$ will ``magically'' cancel out, leaving us with $b_{11}$ and $b_{30}$ as the variables we need to optimize. When we put everything together, we obtain
\begin{equation}
    \tilde{\lambda} = 1 + \left(-\frac{1-\lambda^2}{2\lambda}\right)\frac{1}{N_C} + \left[-\frac{(1-\lambda)^2(1+\lambda)(3-\lambda)}{8\lambda^2}\right]\frac{1}{N_C^2} + (\cdots)\frac{1}{N_C^3} + O\left(N_C^{-4}\right),
\end{equation}
where we write the coefficient of $N_C^{-3}$ separately for readability:
\begin{align}
    \left[N_C^{-3}\text{ coeff}\right] &= Ab_{11}^2 + 2Bb_{11}b_{30} + Cb_{30}^2 + Db_{11} + Eb_{30} + F \\
    A &= -\frac{1}{2\lambda} \\
    B &= -\frac{3}{8\lambda^2} \\
    C &= -\frac{15}{32\lambda^3} \\
    D &= \frac{1-\lambda}{\lambda} \\
    E &= \frac{3(1-\lambda)(2+\lambda)}{4\lambda^2} \\
    F &= \frac{(1-\lambda)^2(-9 - 6\lambda - 16\lambda^2 - 18\lambda^3 - 7\lambda^4)}{16\lambda^3}.
\end{align}

\vspace{0.5\baselineskip}

In general, a two-variable quadratic polynomial of the form
\begin{equation}
    f(x,y) = Ax^2 + 2Bxy + Cy^2 + Dx + Ey + F
\end{equation}
with $A^2 - BC \neq 0$ has the unique critical point $(x_0, y_0)$ and evaluation at that critical point as follows:
\begin{align}
    \left(x_0, y_0\right) &= \left(\frac{BE - CD}{2(AC - B^2)}, \frac{BD - AE}{2(AC - B^2)}\right) \\
    f\left(x_0, y_0\right) &= \frac{2BDE - CD^2 - AE^2}{4(AC - B^2)} + F.
\end{align}
Furthermore, since $A,C < 0$ and $AC - B^2 > 0$, the Hessian is negative definite, so this critical point is indeed a local maximum (and thus also the global maximum). Plugging in the values shown above yields the following solution for our protocol:
\begin{align}
    b_{11} = \frac{BE - CD}{2(AC - B^2)} &= -\frac{(1-\lambda)(1+3\lambda)}{2} \\
    b_{30} = \frac{BD - AE}{2(AC - B^2)} &= 2\lambda(1-\lambda^2).
\end{align}
The resulting coefficient of $N_C^{-3}$ in $\tilde{\lambda}$ is
\begin{equation}
    \left[N_C^{-3}\text{ coeff}\right] = -\frac{(1-\lambda)^3(1+\lambda)(9 + 6\lambda + 5\lambda^2)}{16\lambda^3}.
\end{equation}
Putting this all together, we obtain $\tilde{\lambda}$ and the output infidelity for a $3^{\text{rd}}$-order equatorial optimal protocol:
\begin{align}
    \tilde{\lambda} = & \,\, 1 + \left(-\frac{1-\lambda^2}{2\lambda}\right)\frac{1}{N_C} + \left[-\frac{(1-\lambda)^2(1+\lambda)(3-\lambda)}{8\lambda^2}\right]\frac{1}{N_C^2} \\
    & \,\, + \left[-\frac{(1-\lambda)^3(1+\lambda)(9 + 6\lambda + 5\lambda^2)}{16\lambda^3}\right]\frac{1}{N_C^3} + O\left(N_C^{-4}\right) \\
    \mI(\mE) = & \,\, \frac{1 - \tilde{\lambda}}{2} \\
    = & \,\, \left(\frac{1-\lambda^2}{4\lambda}\right)\frac{1}{N_C} + \left[\frac{(1-\lambda)^2(1+\lambda)(3-\lambda)}{16\lambda^2}\right]\frac{1}{N_C^2} \\
    & + \left[\frac{(1-\lambda)^3(1+\lambda)(9 + 6\lambda + 5\lambda^2)}{32\lambda^3}\right]\frac{1}{N_C^3} + O\left(N_C^{-4}\right).
\end{align}

\vspace{0.5\baselineskip}

Finally, as always, we use the negative moments of the angular momentum outcome distribution to convert the power series in $N_C^{-1}$ to a power series in $N^{-1}$. We now need to expand the expectation values of $N_C^{-p}$ for $p=1,2,3$ to the $\Theta\left(N_C^{-3}\right)$ order. As we show in Lemma \ref{lem:NC-negative-moments} in Appendix \ref{appendix:angular-momentum-moments}, the relevant moments look as follows:
\begin{align}
    \mathbb{E}\left[\frac{1}{N_C}\right] &= \left(\frac{1}{\lambda}\right)\frac{1}{N} + \left(\frac{1 - \lambda}{\lambda^2}\right)\frac{1}{N^2} + \left[\frac{(1 - \lambda)(1 + 2\lambda - \lambda^2)}{\lambda^4}\right]\frac{1}{N^3} + O\left(N^{-4}\right) \\
    \mathbb{E}\left[\frac{1}{N_C^2}\right] &= \left(\frac{1}{\lambda^2}\right)\frac{1}{N^2} + \left[\frac{(1 - \lambda)(1 + 3\lambda)}{\lambda^4}\right]\frac{1}{N^3} + O\left(N^{-4}\right) \\
    \mathbb{E}\left[\frac{1}{N_C^3}\right] &= \left(\frac{1}{\lambda^3}\right)\frac{1}{N^3} + O\left(N^{-4}\right).
\end{align}
Plugging these values into the infidelity as a power series in $N_C^{-3}$ finally gives us the infidelity series of a $3^{\text{rd}}$-order optimal protocol (of course, only the $N^{-3}$ term is new):
\begin{equation}
    \mI(\mE) = \frac{1 - \lambda^2}{4\lambda^2}\frac{1}{N} + \frac{3(1 - \lambda^2)^2}{16\lambda^4}\frac{1}{N^2} + \frac{(1 - \lambda^2)^2(15 - 7\lambda^2)}{32\lambda^6}\frac{1}{N^3} + O\left(N^{-4}\right).
\end{equation}
Just as for $2^{\text{nd}}$-order optimality, it is unclear whether there is any deeper meaning to this value. One observation is that, while the $N^{-2}$ term has a coefficient that is just a constant times a power of the purity of coherence $P_H(\rho) = \frac{4\lambda^2}{1-\lambda^2}$, the $N^{-3}$ term has a coefficient that does not satisfy this.





\newpage

\appsec{Distribution of Angular Momentum Measurement Outcomes in Schur Sampling}
{appendix:angular-momentum-moments}

Throughout this paper, we have made frequent use of the following all-important quantity:
\begin{equation}
    N_C \equiv \text{number of qubits remaining after Schur sampling.}
\end{equation}
This quantity is important to define because, as we discuss in Appendix \ref{appendix:deriving-kraus-rep}, it always makes sense to start with Schur sampling \cite{Cirac1999}, and then follow it up with another procedure that is defined by the Kraus operators on the $N_C$-qubit symmetric subspace (since the Schur-sampled state $\rho_j$ with $j = N_C/2$ has support only on the $N_C$-qubit symmetric subspace).

\vspace{0.5\baselineskip}

When studied in a ``post-Schur-sampling'' sense, a distillation protocol will produce contributions to the infidelity in terms of $N_C$. (In particular, for the asymptotic protocols we study, the infidelity will be a power series in $N_C^{-1}$.) However, we would ideally like to write our infidelity in terms of the original number of qubits $N$. As a result, we need to compute expectations of the form $\mathbb{E}\left[N_C^{-p}\right]$ for positive integers $p$. To optimize the leading-order infidelity and demonstrate the operational meaning of purity of coherence, we only need $p=1$, but to find more fine-grained contributions to the infidelity, we need higher values of $p$ as well.

\vspace{0.5\baselineskip}

Cirac et al. state the mean of $N_C$ with one extra correction term \cite{Cirac1999}, although they do not prove it:
\begin{equation}
    \mathbb{E}\left[N_C\right] = N\left[\lambda + \frac{1-\lambda}{\lambda}\frac{1}{N} + O\left(\frac{1}{N^2}\right)\right].
\end{equation}
As a brief note, the ``yield'' is generally defined as the \textit{fraction} of the original number of qubits that are present in the final state, so $N_C$ is precisely $N$ times the yield. In particular, the expected yield is the quantity in brackets above.

\vspace{0.5\baselineskip}

In this appendix, we will prove this formula and many others using a variety of methods. In the past, it has already been shown that, for large $N$, the angular momentum distribution outcome is approximately a Gaussian with mean $\sim\lambda N$ and variance $\sim(1-\lambda^2)N$ \cite{Keyl2001}. In fact, this is already good enough for our first-order results in Appendix \ref{appendix:1st-order-optimality}. However, we will dive much deeper than that. As just one example, we will prove the above formula for $\Ebb[N_C]$ including the $\Theta(N^{-1})$ correction term, but we will also show that the correction afterward is much smaller than $O(N^{-2})$.

\vspace{0.5\baselineskip}

To the best of our knowledge, this is the most detailed study into the moments of this distribution to date. As result, we believe that this appendix may be of independent interest to people who use Schur sampling as a primitive for various problems.

\vspace{0.5\baselineskip}

This appendix is broken down as follows:
\begin{itemize}
    \item In Appendix \ref{appendix:angular-momentum-moments}\ref{subsec:J2-integer-moments}, we show how to compute the positive integer moments of the total angular momentum operator $J^2$, which has eigenvalues $j(j+1) = \frac{N_C(N_C+2)}{4}$.
    \item In Appendix \ref{appendix:angular-momentum-moments}\ref{subsec:J2-moments-to-NC-moments}, we show how to use the positive moments of $N_C(N_C+2)$ to compute the positive moments of $N_C$. We suspect that this method of computing the moments of $N_C$ (by going through the moments of $J^2$) will be most pleasing to those well-versed in the representation theory of $SU(2)$. This method also has the advantage of allowing one to compute the moments of $N_C$ systematically without guessing the answer in advance.
    \item In Appendix \ref{appendix:angular-momentum-moments}\ref{subsec:positive-moments-to-negative-moments}, we show how to use the positive moments of $N_C$ to compute the negative moments of $N_C$, i.e., expressions of the form $\Ebb[N_C^{-p}]$ for $p\in\Nbb$. These negative moments are the ones used throughout this paper to convert power series in $N_C^{-1}$ to power series in $N^{-1}$.
    \item In Appendix \ref{appendix:angular-momentum-moments}\ref{subsec:NC-moments-ese}, we show an original method to compute the positive moments of $N_C$ up to exponentially small error, which we present as Lemma \ref{lem:NC-moments-exp-small-error}. The fact that we can achieve exponentially small error in these approximations is not apparent from the method we use in Appendix \ref{appendix:angular-momentum-moments}\ref{subsec:J2-moments-to-NC-moments}, and as far as we know, it is not mentioned in any prior literature that invokes the Schur transform. The only disadvantage of this method is that we had to guess the answers based on numerics before we could prove them.
    \item In Appendix \ref{appendix:angular-momentum-moments}\ref{subsec:NC-negative-moments-full-computation}, we combine the improved results for the positive moments of $N_C$ from Appendix \ref{appendix:angular-momentum-moments}\ref{subsec:NC-moments-ese} with the method shown in Appendix \ref{appendix:angular-momentum-moments}\ref{subsec:positive-moments-to-negative-moments} to finally prove the negative moments of $N_C$, which we present as Lemma \ref{lem:NC-negative-moments}.
\end{itemize}

\appsubsec{Computing Positive Integer Moments of $J^2$ Total Angular Momentum Operator}
{subsec:J2-integer-moments}

When computing the moments of $N_C$, a useful trick is to take advantage of the total angular momentum operator
\begin{equation}
    J^2 = J_x^2 + J_y^2 + J_z^2,
\end{equation}
where the term $J_x^2$ can be expanded as
\begin{equation}
    J_x = \frac{1}{2}\sum_{i=1}^{N}X_i \implies 4J_x^2 = N\mathbb{I} + 2\sum_{1\le i<j\le N}X_iX_j,
\end{equation}
and similarly for $J_y^2$ and $J_z^2$. The $J^2$ operator is known to have eigenvalues $j(j+1)$, and $N_C = 2j$, so
\begin{equation}
    \mathbb{E}\left[\left(N_C(N_C+2)\right)^p\right] = \text{Tr}\left[\rho^{\otimes N}\left(4J^2\right)^p\right].
\end{equation}
Therefore, we can more easily compute the expected value of specific polynomials of $N_C$, in particular those that can be written as polynomials of $N_C(N_C+2)$. The first $p$ equations tell us about moments of $N_C$ up to $2p$, but we would need $2p$ equations to solve for all of them, so these equations only give us ``half'' of what we want. For example, we wish to have $\mathbb{E}\left[N_C\right]$ and $\mathbb{E}\left[N_C^2\right]$ separately, but the above equation with $p=1$ only gives us a specific combination of these two quantities. However, it is certainly a step in the right direction, and if we ever compute one of these expressions using some other method, we will get the other for free. Similarly, when we move up to $p=2$, we obtain an expression for $\mathbb{E}\left[N_C^4 + 4N_C^3 + 4N_C^2\right]$. Therefore, if we compute any two of the three quantities $\mathbb{E}\left[N_C^2\right]$, $\mathbb{E}\left[N_C^3\right]$, and $\mathbb{E}\left[N_C^4\right]$, we can quickly obtain the third.

\vspace{0.5\baselineskip}

Without loss of generality, we may assume that $\rho$ has its Bloch vector pointing in the positive $x$-direction, which implies that $
\rho^{\otimes N}$ takes the following form as a linear combination of Pauli strings:
\begin{equation}
    \rho = \frac{\mathbb{I} + \lambda X}{2} \implies \rho^{\otimes N} = \frac{1}{2^N}\sum_{S\subseteq[N]}\lambda^{\abs{S}}X^{\otimes S}.
\end{equation}
From here, we can easily compute $\text{Tr}\left[\rho^{\otimes N}\left(4J_x^2\right)\right]$, $\text{Tr}\left[\rho^{\otimes N}\left(4J_y^2\right)\right]$, and $\text{Tr}\left[\rho^{\otimes N}\left(4J_z^2\right)\right]$. Recall that the trace of an $N$-qubit Pauli string is zero unless it is the identity, in which case it is $2^N$. Therefore, to compute $\text{Tr}\left[\rho^{\otimes N}\left(4J_x^2\right)\right]$, for example, we simply find all instances where a Pauli string in $\rho^{\otimes N}$ matches a Pauli string in $4J_x^2$, multiply their coefficients, add all these numbers, and multiply by $2^N$. When we do this for $4J_x^2$, $4J_y^2$, and $4J_z^2$, we obtain the following:
\begin{align}
    \text{Tr}\left[\rho^{\otimes N}\left(4J_x^2\right)\right] &= \frac{1}{2^N}\cdot \left[N + 2\binom{N}{2}\lambda^2\right]\cdot 2^N = N + N(N-1)\lambda^2 \\
    \text{Tr}\left[\rho^{\otimes N}\left(4J_y^2\right)\right] &= \frac{1}{2^N}\cdot N\cdot 2^N = N \\
    \text{Tr}\left[\rho^{\otimes N}\left(4J_z^2\right)\right] &= \frac{1}{2^N}\cdot N\cdot 2^N = N.
\end{align}
We now add these three expressions to obtain
\begin{align}
    & \text{Tr}\left[\rho^{\otimes N}\left(4J^2\right)\right] = N(N-1)\lambda^2 + 3N \\
    \iff & \boxed{\mathbb{E}\left[N_C^2 + 2N_C\right] = \lambda^2N^2 + \left(3 - \lambda^2\right)N},
\end{align}
where in the second line, we have rewritten the expression in standard polynomial form with respect to $N$ (instead of with respect to $\lambda$). We have thus obtained $\mathbb{E}\left[N_C(N_C+2)\right]$, as desired.

\vspace{0.5\baselineskip}

We now proceed to compute $\mathbb{E}\left[N_C^2(N_C+2)^2\right]$. We begin by expanding $\left(4J^2\right)^2$ as follows:
\begin{align}
    \left(4J^2\right)^2 &= 16\sum_{\text{sym}}J_x^4 + 16\sum_{\text{sym}}J_x^2J_y^2 \\
    &= 16\left(J_x^4 + J_y^4 + J_z^4 + J_x^2J_y^2 + J_y^2J_x^2 + J_x^2J_z^2 + J_z^2J_x^2 + J_y^2J_z^2 + J_z^2J_y^2\right).
\end{align}
We now expand each of these quantities in turn. For example,
\begin{align}
    16J_x^4 &= \left[N + \binom{4}{2}\binom{N}{2}\right]\mathbb{I} + 24\sum_{i<j<k<l}X_iX_jX_kX_l + \left[(N-2)\binom{4}{2}\cdot 2 + 2\cdot 4\right]\sum_{i<j}X_iX_j \\
    &= (3N^2 - 2N)\mathbb{I} + 24\sum_{i<j<k<l}X_iX_jX_kX_l + (12N - 16)\sum_{i<j}X_iX_j,
\end{align}
and similarly for $16J_y^4$ and $16J_z^4$. Furthermore,
\begin{align}
    16J_x^2J_y^2 = N^2\mathbb{I} + 2N\sum_{i<j}X_iX_j + 2N\sum_{i<j}Y_iY_j - 4\sum_{i<j}Z_iZ_j + 4i\sum_{i,j,k\text{ distinct}}X_iY_jZ_k,
\end{align}
and similarly for the other five products of two distinct operators chosen from $4J_x^2$, $4J_y^2$, and $4J_z^2$. As a result, we can derive the following expectation values:
\begin{align}
    \text{Tr}\left[\rho^{\otimes N}\left(16J_x^4\right)\right] &= (3N^2 - 2N)1 + 24\binom{N}{4}\lambda^4 + (12N - 16)\binom{N}{2}\lambda^2 \\
    &= \lambda^4N^4 + 6\lambda^2\left(1 - \lambda^2\right)N^3 + \left(3 - 11\lambda^2\right)\left(1 - \lambda^2\right)N^2 + (-2)\left(1 - 3\lambda^2\right)\left(1 - \lambda^2\right)N
\end{align}
\begin{align}
    \text{Tr}\left[\rho^{\otimes N}\left(16J_x^2J_y^2\right)\right] = \text{Tr}\left[\rho^{\otimes N}\left(16J_x^2J_z^2\right)\right] &= N^2\cdot 1 + 2N\binom{N}{2}\lambda^2 = \lambda^2N^3 + \left(1 - \lambda^2\right)N^2 \\
    \text{Tr}\left[\rho^{\otimes N}\left(16J_y^2J_x^2\right)\right] = \text{Tr}\left[\rho^{\otimes N}\left(16J_z^2J_x^2\right)\right] &= N^2\cdot 1 + 2N\binom{N}{2}\lambda^2 = \lambda^2N^3 + \left(1 - \lambda^2\right)N^2
\end{align}
\begin{align}
    \text{Tr}\left[\rho^{\otimes N}\left(16J_y^4\right)\right] = \text{Tr}\left[\rho^{\otimes N}\left(16J_z^4\right)\right] &= 3N^2 - 2N \\
    \text{Tr}\left[\rho^{\otimes N}\left(16J_y^2J_z^2\right)\right] = \text{Tr}\left[\rho^{\otimes N}\left(16J_z^2J_y^2\right)\right] &= (1-2\lambda^2)N^2 + 2\lambda^2N.
\end{align}

Adding these all together yields
\begin{equation}
    \text{Tr}\left[\rho^{\otimes N}\left(4J^2\right)^2\right] = \lambda^4N^4 + 2\lambda^2\left(5 - 3\lambda^2\right)N^3 + \left(15 - 22\lambda^2 + 11\lambda^4\right)N^2 - 6(1-\lambda)^2(1+\lambda)^2N,
\end{equation}
from which we conclude that
\begin{equation}
    \boxed{\mathbb{E}\left[N_C^4 + 4N_C^3 + 4N_C^2\right] = \lambda^4N^4 + 2\lambda^2\left(5 - 3\lambda^2\right)N^3 + \left(15 - 18\lambda^2 + 11\lambda^4\right)N^2 + 2\left(-3 + 4\lambda^2 - 3\lambda^4\right)N}.
\end{equation}
We do not need any higher moments of $N_C(N_C+2)$ for the computations we do in this paper, so we stop here.

\appsubsec{Going from Moments of $J^2$ to Moments of $N_C$}
{subsec:J2-moments-to-NC-moments}

So far, we have only shown how to compute the moments of $N_C(N_C+2)$. But how do we compute the moments of $N_C$ itself? As we observed in the previous subsection, considering the first $p$ moments of $N_C(N_C+2)$ yields $p$ equations for the first $2p$ moments of $N_C$, but we would need $2p$ equations to tease apart all these moments.

\vspace{0.5\baselineskip}

Here is an intriguing method that skirts the need for more equations. Since $N_C$ is a function of $N_C(N_C+2)$ via the identity
\begin{equation}
    N_C = \sqrt{N_C(N_C+2) + 1} - 1,
\end{equation}
we can consider the following question: given the moments of a random variable $X$, how do we compute the moments of $\sqrt{X}$? It turns out that there is a simple trick we can use. First, we use the following Taylor series expansion:
\begin{align}
    \sqrt{X} &= \sqrt{\mu}\sqrt{1 + \frac{X - \mu}{\mu}} \\
    &= \sqrt{\mu}\left[1 + \frac{1}{2}\frac{X - \mu}{\mu} + \left(-\frac{1}{8}\right)\frac{(X - \mu)^2}{\mu^2} + \frac{1}{16}\frac{(X - \mu)^3}{\mu^3} + \cdots\right] \\
    \therefore \mathbb{E}\left[\sqrt{X}\right] &= \sqrt{\mu}\left[1 + \frac{1}{2}\frac{\mathbb{E}[X - \mu]}{\mu} + \left(-\frac{1}{8}\right)\frac{\mathbb{E}\left[(X - \mu)^2\right]}{\mu^2} + \frac{1}{16}\frac{\mathbb{E}\left[(X - \mu)^3\right]}{\mu^3} + \cdots\right]
\end{align}
In the above formula, $\mu$ can in fact be anything. But a particularly natural choice is
\begin{equation}
    \mu \coloneq \mathbb{E}[X] \implies \mathbb{E}\left[\sqrt{X}\right] = \sqrt{\mu}\left[1 + \left(-\frac{1}{8}\right)\frac{\text{Var}(X)}{\mu^2} + \frac{1}{16}\frac{\mathbb{E}\left[(X - \mu)^3\right]}{\mu^3} + \cdots\right].
\end{equation}
In this way, we have successfully written a moment of $\sqrt{X}$ in terms of the mean of $X$ and the positive centered moments of $X$. Of course, this series may converge very slowly or not at all. However, if you have a distribution that is fairly concentrated around a positive value far from zero and which is also bounded away from zero, you can have a reasonable convergence rate on this series. In particular, for large $N$, the distribution of $N_C$ converges to a Gaussian with mean and variance both $\Theta(N)$, and such a distribution will broadly exhibit reasonable convergence.

\vspace{0.5\baselineskip}

We begin by defining $X$ and $\mu$ as follows:
\begin{align}
    X \coloneq \left(N_C + 1\right)^2 = N_C\left(N_C + 2\right) + 1 \\
    \mu \coloneq \mathbb{E}[X] = \text{Tr}\left[\rho^{\otimes N}\left(4J^2\right)\right] + 1.
\end{align}
We now compute the mean and variance of $X$:
\begin{align}
    \mathbb{E}\left[X\right] &= \text{Tr}\left[\rho^{\otimes N}\left(4J^2\right)\right] + 1 \\
    &= \lambda^2N^2 + \left(3 - \lambda^2\right)N + 1 \\
    \text{Var}\left[X\right] &= \text{Var}\left[N_C\left(N_C + 2\right)\right] \\
    &= \text{Tr}\left[\rho^{\otimes N}\left(4J^2\right)^2\right] - \text{Tr}\left[\rho^{\otimes N}\left(4J^2\right)\right]^2 \\
    &= \left[\lambda^4N^4 + 2\lambda^2\left(5 - 3\lambda^2\right)N^3 + \left(15 - 18\lambda^2 + 11\lambda^4\right)N^2 + 2\left(-3 + 4\lambda^2 - 3\lambda^4\right)N\right] \\
    & \quad - \left[\lambda^2N^2 + \left(3 - \lambda^2\right)N\right]^2 \\
    &= 4\lambda^2\left(1 - \lambda^2\right)N^3 + 2\left(3 - 6\lambda^2 + 5\lambda^4\right)N^2 + 2\left(-3 + 4\lambda^2 - 3\lambda^4\right)N.
\end{align}
We can then compute the quantity $\frac{\text{Var}(X)}{\mu^2}$ that appears in the second term in the infinite series:
\begin{equation}
    \frac{\text{Var}[X]}{\mu^2} = \frac{4\left(1 - \lambda^2\right)}{\lambda^2}N^{-1} + O\left(N^{-2}\right).
\end{equation}
We also compute $\sqrt{\mu}$ using the approximation $\sqrt{1+\epsilon} = 1 + \frac{\epsilon}{2} + O\left(\epsilon^2\right)$:
\begin{align}
    \mu &= \lambda^2N^2 + \left(3 - \lambda^2\right)N + 1 \\
    &= \lambda^2N^2\left[1 + \frac{3 - \lambda^2}{\lambda^2}\frac{1}{N} + \frac{1}{\lambda^2}\frac{1}{N^2}\right] \\
    \implies \sqrt{\mu} &= \lambda N\left[1 + \frac{3 - \lambda^2}{2\lambda^2}N^{-1} + O\left(N^{-2}\right)\right].
\end{align}
We can now put everything together as follows:
\begin{align}
    \mathbb{E}\left[\sqrt{X}\right] &= \sqrt{\mu}\left[1 - \frac{\text{Var}[X]}{8\mu^2} + O\left(N^{-2}\right)\right] \\
    &= \lambda N\left[1 + \frac{3 - \lambda^2}{2\lambda^2}N^{-1} + O\left(N^{-2}\right)\right]\left[1 - \frac{1 - \lambda^2}{2\lambda^2}N^{-1} + O\left(N^{-2}\right)\right] \\
    &= \lambda N\left[1 + \frac{1}{\lambda^2}N^{-1} + O\left(N^{-2}\right)\right] \\
    &= \lambda N + \frac{1}{\lambda} + O\left(N^{-1}\right).
\end{align}
Finally, since $\sqrt{X} = N_C-1$, all that remains is to subtract $1$ from $\mathbb{E}[X]$:
\begin{align}
    \mathbb{E}\left[N_C + 1\right] &= \lambda N + \frac{1}{\lambda} + O\left(N^{-1}\right) \\
    \implies \mathbb{E}\left[N_C\right] &= \lambda N + \frac{1-\lambda}{\lambda} + O\left(N^{-1}\right).
\end{align}
We have thus derived the $\mathbb{E}[N_C]$ term to $\Theta(1)$ precision and demonstrated the yield formula that is shown without proof in \cite{Cirac1999}. As a reminder, they define the yield as the \textit{fraction} of qubits remaining after the purification procedure, so they show the above expression divided by $N$.

\vspace{0.5\baselineskip}

This method has two key advantages. The first is that it computes the moments of $N_C$ using the moments of $N_C(N_C+2)$, without the need for the ``extra equations'' that we previously thought we might need. Another advantage is that it shows how one can systematically compute the moments of $N_C$ without guessing the answers in advance (via numerics or some other method). We also find this method to be pleasing for two other reasons. First, this method invokes the $J^2$ operator, which is a familiar object of interest to those well-versed in representation theory and quantum information theory. Second, this method bounds the error terms using the asymptotic normality of the distribution of $N_C$, which is another broadly important concept. In fact, the asymptotic normality of $N_C$ is a special case of a more general asymptotic normality result for the distribution of invariant subspaces of $\rho^{\otimes N}$ for a qudit state $\rho$, which are indexed by Young diagrams with at most $d$ rows \cite{Keyl2001}.

\vspace{0.5\baselineskip}

Unfortunately, this method also has two disadvantages. First, computing the moments of $J^2$ beyond the first two that we showed above becomes increasingly tedious. Second, it leaves open the possibility for inverse polynomial error terms in terms of $N$. However, it turns out that each positive integer moment of $N_C$ equals a polynomial in $N$ plus an \textit{exponentially small error}. We would never be able to see that using the above method. In fact, even revealing that the $\Theta\left(N^{-1}\right)$ term in $\mathbb{E}[N_C]$ is zero would require us to compute some non-obvious information about the third moment of $J^2$. To demonstrate that the positive integer moments of $N_C$ exhibit exponentially small errors beyond polynomials in $N$, we must use a different technique, which we show later in Appendix \ref{appendix:angular-momentum-moments}\ref{subsec:NC-moments-ese}.

\appsubsec{Going from Positive Moments of $N_C$ to Negative Moments of $N_C$}
{subsec:positive-moments-to-negative-moments}

So far, we have only shown positive moments of $N_C$. But how do we compute the negative moments of $N_C$? It turns out that we can recycle the infinite series trick we used in Appendix \ref{appendix:angular-momentum-moments}\ref{subsec:J2-moments-to-NC-moments}. First, we use the following geometric series expansion:
\begin{align}
    \frac{1}{X} &= \frac{1}{\mu}\frac{1}{1 + \frac{X - \mu}{\mu}} \\
    &= \frac{1}{\mu}\left[1 - \frac{X - \mu}{\mu} + \frac{(X - \mu)^2}{\mu^2} - \frac{(X - \mu)^3}{\mu^3} + \cdots\right] \\
    \therefore \mathbb{E}\left[\frac{1}{X}\right] &= \frac{1}{\mu}\left[1 - \frac{\mathbb{E}[X - \mu]}{\mu} + \frac{\mathbb{E}\left[(X - \mu)^2\right]}{\mu^2} - \frac{\mathbb{E}\left[(X - \mu)^3\right]}{\mu^3} + \cdots\right]
\end{align}
Once again, in the equation above, $\mu$ can be anything. But we again make a natural choice as follows:
\begin{equation}
    \mu \coloneq \mathbb{E}[X] \implies \mathbb{E}\left[\frac{1}{X}\right] = \frac{1}{\mu}\left[1 + \frac{\text{Var}(X)}{\mu^2} - \frac{\mathbb{E}\left[(X - \mu)^3\right]}{\mu^3} + \cdots\right].
\end{equation}
In this way, we have successfully written the mean of $X^{-1}$ in terms of the mean of $X$ and the positive centered moments of $X$.

\vspace{0.5\baselineskip}

To compute higher negative moments of $X$, we use the fact that, for all $p\in\Nbb$,
\begin{equation}
    (1-x)^{-p} = \sum_{n=0}^{\infty}\binom{n+p-1}{p-1}x^n,
\end{equation}
which can be obtained by differentiating the geometric series $(p-1)$ times (among many other ways). Therefore, using the same sequence of manipulations as we did to compute $\Ebb[X^{-1}]$, we conclude that
\begin{equation}
    \mathbb{E}\left[\frac{1}{X^p}\right] = \frac{1}{\mu^p}\left[1 - p\frac{\mathbb{E}[X - \mu]}{\mu} + \frac{p(p+1)}{2}\frac{\mathbb{E}\left[(X - \mu)^2\right]}{\mu^2} - \frac{p(p+1)(p+2)}{6}\frac{\mathbb{E}\left[(X - \mu)^3\right]}{\mu^3} + \cdots\right],
\end{equation}
which can again be simplified using a smart choice of $\mu$:
\begin{equation}
    \mu \coloneq \mathbb{E}[X] \implies\mathbb{E}\left[\frac{1}{X^p}\right] = \frac{1}{\mu^p}\left[1 + \frac{p(p+1)}{2}\frac{\text{Var}(X)}{\mu^2} - \frac{p(p+1)(p+2)}{6}\frac{\mathbb{E}\left[(X - \mu)^3\right]}{\mu^3} + \cdots\right].
\end{equation}

\vspace{0.5\baselineskip}

Similarly to the infinite series we showed in Subsection \ref{appendix:angular-momentum-moments}\ref{subsec:J2-moments-to-NC-moments}, these series may converge very slowly or not at all for a general distribution. However, the asymptotic normality of $N_C$ with mean and variance both $\Theta(N)$ will come to the rescue again and provide the same type of term-by-term decay that we saw previously. In particular, the $p^{\text{th}}$ centered moment of $N_C$ is $\Theta\left(N_C^{\lfloor p/2\rfloor}\right)$. As a result, in the above infinite series, the second term in the square brackets will be $O(1/N)$, the third and fourth terms will be $O(1/N^2)$, the fifth and sixth terms will be $O(1/N^3)$, and so on.

\vspace{0.5\baselineskip}

Using this method, we can finally compute the formulas for $\Ebb[N_C^{-p}]$ that we invoke in many other parts of this paper. In particular, using Appendices \ref{appendix:angular-momentum-moments}\ref{subsec:J2-integer-moments} and \ref{appendix:angular-momentum-moments}\ref{subsec:J2-moments-to-NC-moments}, we can systematically compute the positive moments of $N_C$, and thus also the positive centered moments of $N_C$. Once we have all these moments, we can plug them into the formulas above to compute $\Ebb\left[N_C^{-p}\right]$ to whatever precision we like. Up to $3^{\text{rd}}$-order precision, these negative moments look as follows:

\begin{lemma}[Negative moments of $N_C$ up to $\Theta(N^{-3})$ precision]
\label{lem:NC-negative-moments}
The first three negative moments of the distribution of $N_C$ look as follows:
\begin{align}
    \mathbb{E}\left[\frac{1}{N_C}\right] &= \left(\frac{1}{\lambda}\right)\frac{1}{N} + \left(\frac{1 - \lambda}{\lambda^2}\right)\frac{1}{N^2} + \left[\frac{(1 - \lambda)(1 + 2\lambda - \lambda^2)}{\lambda^4}\right]\frac{1}{N^3} + O\left(N^{-4}\right) \\
    \mathbb{E}\left[\frac{1}{N_C^2}\right] &= \left(\frac{1}{\lambda^2}\right)\frac{1}{N^2} + \left[\frac{(1 - \lambda)(1 + 3\lambda)}{\lambda^4}\right]\frac{1}{N^3} + O\left(N^{-4}\right) \\
    \mathbb{E}\left[\frac{1}{N_C^3}\right] &= \left(\frac{1}{\lambda^3}\right)\frac{1}{N^3} + O\left(N^{-4}\right).
\end{align}
\end{lemma}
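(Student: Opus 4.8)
The plan is to treat Lemma~\ref{lem:NC-negative-moments} as a direct application of the general series identity for negative moments derived in Appendix~\ref{appendix:angular-momentum-moments}\ref{subsec:positive-moments-to-negative-moments}. Setting $X = N_C$ and $\mu = \Ebb[N_C]$ in that identity, so that the linear term drops out, gives
\begin{equation}
    \Ebb\!\left[\frac{1}{N_C^p}\right] = \frac{1}{\mu^p}\left[1 + \frac{p(p+1)}{2}\frac{m_2}{\mu^2} - \frac{p(p+1)(p+2)}{6}\frac{m_3}{\mu^3} + \frac{p(p+1)(p+2)(p+3)}{24}\frac{m_4}{\mu^4} - \cdots\right],
\end{equation}
where $m_k \coloneqq \Ebb[(N_C-\mu)^k]$. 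Thus the problem reduces entirely to knowing the positive (centered) moments of $N_C$ to sufficient precision and then expanding the right-hand side as a power series in $N^{-1}$.

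First I would assemble the positive integer moments of $N_C$ to the accuracy demanded by each target order. For $\Ebb[N_C^{-3}]$ the prefactor $\mu^{-3} = \Theta(N^{-3})$ already supplies the leading order, so only $\mu \sim \lambda N$ is needed; for $\Ebb[N_C^{-2}]$ one needs $\mu$ through its constant term together with the leading term of $m_2 = \Var(N_C)$; and for $\Ebb[N_C^{-1}]$ one needs $\mu$ through its $\Theta(N^{-1})$ coefficient, $m_2$ to two orders, and the leading terms of $m_3$ and $m_4$. These raw and centered moments are exactly the content of Lemma~\ref{lem:NC-moments-exp-small-error} from Appendix~\ref{appendix:angular-momentum-moments}\ref{subsec:NC-moments-ese}, which writes each positive integer moment as a polynomial in $N$ plus exponentially small error; alternatively one may feed the $J^2$-moment computations of Appendices~\ref{appendix:angular-momentum-moments}\ref{subsec:J2-integer-moments} and \ref{appendix:angular-momentum-moments}\ref{subsec:J2-moments-to-NC-moments} through the $\sqrt{X}$ trick. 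With $\mu = \lambda N + \tfrac{1-\lambda}{\lambda} + \text{e.s.e.}$ and the corresponding expressions for $m_2, m_3, m_4$ in hand, I would substitute into the identity above, expand $\mu^{-p}$ and each ratio $m_k/\mu^k$ as a binomial series in $N^{-1}$, collect terms order by order, and truncate at $\Theta(N^{-3})$, reading off the three displayed formulas.

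The truncation is legitimate because of the asymptotic normality of $N_C$ (mean and variance both $\Theta(N)$), which forces $m_k = \Theta(N^{\lfloor k/2\rfloor})$; after division by $\mu^{k} = \Theta(N^{k})$ the successive terms in the bracket decay as $\Theta(N^{-1}), \Theta(N^{-2}), \Theta(N^{-2}), \Theta(N^{-3}),\dots$, so every contribution beyond those listed is $O(N^{-4})$ and may be discarded. The main obstacle is \emph{not} this final bookkeeping but the precision of the positive moments that feed it: the $J^2$ route of Appendix~\ref{appendix:angular-momentum-moments}\ref{subsec:J2-moments-to-NC-moments} only controls $\Ebb[N_C]$ up to possible inverse-polynomial corrections, and pinning down, for instance, that the $\Theta(N^{-1})$ term of $\Ebb[N_C]$ genuinely vanishes (rather than contributing a spurious $N^{-3}$ term to $\Ebb[N_C^{-1}]$) requires the sharper exponentially-small-error analysis of Appendix~\ref{appendix:angular-momentum-moments}\ref{subsec:NC-moments-ese}. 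Hence the real work sits in establishing Lemma~\ref{lem:NC-moments-exp-small-error}; once its exact polynomial coefficients are available, the negative moments follow by the routine series manipulation sketched above, which I would carry out explicitly in Appendix~\ref{appendix:angular-momentum-moments}\ref{subsec:NC-negative-moments-full-computation}.
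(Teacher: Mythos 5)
Your proposal is correct and follows essentially the same route as the paper: the paper likewise specializes the geometric-series identity for $\mathbb{E}[X^{-p}]$ with $\mu = \mathbb{E}[N_C]$, feeds in the exponentially-small-error positive moments of Lemma~\ref{lem:NC-moments-exp-small-error} to form the centered moments $m_2, m_3, m_4$ and the ratios $m_k/\mu^k$, and truncates using the same $m_k = \Theta(N^{\lfloor k/2\rfloor})$ order counting. You also correctly pinpoint the paper's own observation that the real difficulty lies in the precision of the positive moments (e.g., knowing the $\Theta(N^{-1})$ term of $\mathbb{E}[N_C]$ vanishes), which is exactly why the paper defers this lemma until after the exponentially-small-error analysis.
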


However, we will hold off on proving Lemma \ref{lem:NC-negative-moments} for now, because although we \textit{can} prove it using only the information we have provided so far, it would be extremely tedious to do so. For example, computing $\mu^{-1}$ to three terms (to obtain $\Theta(N^{-3})$ precision) requires computing $\mu$ to $3$ terms as well (which corresponds to $\Theta(N^{-1})$ precision). The more precision we want on the negative moments, the more precision we must demand on the positive moments that get used in the above infinite series, but at the moment, those positive moments are themselves infinite series based on the method shown in Appendix \ref{appendix:angular-momentum-moments}\ref{subsec:J2-moments-to-NC-moments}.

\vspace{0.5\baselineskip}

To prove Lemma \ref{lem:NC-negative-moments} more efficiently, we ideally want an improved way to compute the positive moments of $N_C$, and this is precisely what we will show in Appendix \ref{appendix:angular-momentum-moments}\ref{subsec:NC-moments-ese}. Subsequently, in Appendix \ref{appendix:angular-momentum-moments}\ref{subsec:NC-negative-moments-full-computation}, we will plug these improved results (shown in Lemma \ref{lem:NC-moments-exp-small-error}) into the above infinite series to finally prove Lemma \ref{lem:NC-negative-moments}.

\appsubsec{Computing Positive Moments of $N_C$ with Exponentially Small Error}
{subsec:NC-moments-ese}

The method we used in Appendix \ref{appendix:angular-momentum-moments}\ref{subsec:J2-moments-to-NC-moments} to compute $\Ebb[N_C^p]$ may lead you to believe that $\Ebb[N_C^p]$ has an infinite power series expansion, in the sense that it has $N_C^q$ terms for all $-\infty < q\le p$. However, for positive integers $p$, this is actually not the case! It turns out that, for positive integers $p$, $\Ebb[N_C^p]$ is a degree-$p$ polynomial $P^{(p)}_\lambda(N)$ in $N$, up to an \textit{exponentially small error} (which we will commonly abbreviate to ``e.s.e.'' for convenience). In particular, we prove the following result for the first few values of $p$:

\begin{lemma}[Positive moments of $N_C$ with exponentially small error]
\label{lem:NC-moments-exp-small-error}
The first four positive moments of the distribution of $N_C$ can each be written as a polynomial in $N$ plus an exponentially small error. In particular:
\begin{align}
    \mathbb{E}\left[N_C\right] &= \lambda N + \frac{1-\lambda}{\lambda} + O\left(\left(1-\lambda^2\right)^{N/2}\right) \\
    \mathbb{E}\left[N_C^2\right] &= \lambda^2N^2 + (1-\lambda)(3+\lambda)N - \frac{2(1-\lambda)}{\lambda} + O\left(\left(1-\lambda^2\right)^{N/2}\right) \\
    \mathbb{E}\left[N_C^3\right] &= \lambda^3N^3 + 3\lambda(1-\lambda)(2+\lambda)N^2 + \frac{(1-\lambda)\left(3 - 6\lambda - 5\lambda^2 - 2\lambda^3\right)}{\lambda}N \\
    & \quad\quad + \frac{4(1-\lambda)}{\lambda} + O\left(\left(1-\lambda^2\right)^{N/2}\right) \\
    \mathbb{E}\left[N_C^4\right] &= \lambda^4N^4 + 2\lambda^2(1-\lambda)(5+3\lambda)N^3 + (1-\lambda)\left(15 - 9\lambda - 23\lambda^2 - 11\lambda^3\right)N^2 \\
    & \quad\quad + \frac{2(1-\lambda)(2+\lambda)(-3 + 3\lambda + \lambda^2 + 3\lambda^3)}{\lambda}N - \frac{8(1-\lambda)}{\lambda} + O\left(\left(1-\lambda^2\right)^{N/2}\right).
\end{align}
\end{lemma}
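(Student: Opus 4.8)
The plan is to start from an exact closed form for the distribution of $N_C$ and reduce each positive moment to an ordinary binomial moment up to an exponentially small tail. Writing $c_1 = \tfrac{1+\lambda}{2}$ and $c_0 = \tfrac{1-\lambda}{2}$ for the eigenvalues of $\rho$, the probability of the angular-momentum outcome $j = N_C/2$ is $P(N_C) = d_{N,N_C/2}\sum_{m=-j}^{j} c_1^{N/2+m}c_0^{N/2-m}$, where $d_{N,N_C/2} = \binom{N}{(N-N_C)/2} - \binom{N}{(N-N_C)/2 - 1}$ is the $SU(2)$ multiplicity (each $\ket{jm}$ is a superposition of computational states of Hamming weight $N/2-m$, so $\rho^{\otimes N}$ acts by the scalar $c_1^{N/2+m}c_0^{N/2-m}$). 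I would verify this against the $\lambda=0$ case, where it collapses to $d_{N,N_C/2}(N_C+1)/2^N$. Summing the geometric progression and using $c_0 c_1 = \tfrac{1-\lambda^2}{4}$, the weight factor becomes $\tfrac{1}{\lambda}\left(\tfrac{1-\lambda^2}{4}\right)^{s}\!\left(c_1^{N_C+1}-c_0^{N_C+1}\right)$ with $s = (N-N_C)/2$, which already exposes the base $(1-\lambda^2)^{N/2}$ of the claimed error.

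The next step is to distribute this factor and split $\mathbb{E}[N_C^p] = \sum_{N_C} N_C^p P(N_C)$ into two pieces. Using $(c_0c_1)^{s}c_1^{N_C+1} = c_1\, c_0^{s}c_1^{N-s}$ and the analogous identity for the $c_0$ term, one gets $\mathbb{E}[N_C^p] = \tfrac{c_1}{\lambda}\Sigma_1 - \tfrac{c_0}{\lambda}\Sigma_2$, where $\Sigma_1 = \sum_{s\le N/2}(N-2s)^p\,\big[\binom{N}{s}-\binom{N}{s-1}\big]\,c_0^{s}c_1^{N-s}$ and $\Sigma_2$ is the same sum with $c_0$ and $c_1$ interchanged. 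The key observation is that $c_0^{s}c_1^{N-s}$ and $c_1^{s}c_0^{N-s}$ are unnormalized binomial weights: in $\Sigma_1$ the mass of $\mathrm{Bin}(N,c_0)$ sits at $s\approx c_0 N < N/2$, so the cut $s\le N/2$ removes only an upper tail, whereas in $\Sigma_2$ the mass of $\mathrm{Bin}(N,c_1)$ sits at $s\approx c_1 N > N/2$, so $\Sigma_2$ is entirely a wrong-side tail.

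I would then show $\tfrac{c_0}{\lambda}\Sigma_2 = O\!\left((1-\lambda^2)^{N/2}\right)$ and that extending $\Sigma_1$ to the full range $0\le s\le N$ costs only $O\!\left((1-\lambda^2)^{N/2}\right)$, both by a large-deviation estimate whose rate $e^{-N D(1/2\,\|\,c_1)} = (1-\lambda^2)^{N/2}$ produces exactly the stated base (the polynomial prefactors from $(N-2s)^p$ and the ballot difference are killed by the geometric decay of the binomial tail, with the boundary term at $s=N/2$ vanishing since $(N-2s)^p=0$ there). The full-range sum is exact: shifting the index in the $\binom{N}{s-1}$ term gives
$$\Sigma_1^{\mathrm{full}} = \mathbb{E}\big[(N-2S)^p\big] - \tfrac{c_0}{c_1}\,\mathbb{E}\big[(N-2-2S)^p\big],\qquad S\sim\mathrm{Bin}(N,c_0),$$
and raw moments of $N-2S = \sum_i\sigma_i$ (with $\mathbb{E}[\sigma_i]=\lambda$) are genuine degree-$p$ polynomials in $N$. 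Hence $\mathbb{E}[N_C^p] = \tfrac{c_1}{\lambda}\Sigma_1^{\mathrm{full}} + O\!\left((1-\lambda^2)^{N/2}\right)$ is a polynomial plus exponentially small error. As a check, for $p=1$ this yields $(c_1-c_0)N + \tfrac{2c_0}{\lambda} = \lambda N + \tfrac{1-\lambda}{\lambda}$, matching the lemma.

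The remaining work is purely computational: evaluate $\mathbb{E}[(N-2S)^p]$ and $\mathbb{E}[(N-2-2S)^p]$ for $p=2,3,4$ and simplify $\tfrac{c_1}{\lambda}\Sigma_1^{\mathrm{full}}$ into the displayed polynomials. I expect the main obstacle to be making the two tail estimates uniform and clean enough to land exactly the base $(1-\lambda^2)^{N/2}$, rather than a slightly larger base from loose handling of the $(N-2s)^p$ prefactor; the moment algebra for $p\le 4$ is routine but lengthy, and it is what forces the paper's "guess from numerics, then verify" remark, since one naturally checks the closed forms against exact small-$N$ evaluations before committing to the general argument.
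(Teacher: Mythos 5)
Your proposal is correct, and it takes a genuinely different route from the paper's own proof. Both arguments start from the same Cirac--Keyl closed form for $P(N_C)$, but the paper's method is a \emph{verification}: it first guesses the polynomials $P^{(p)}_\lambda(N)$ from numerics, expands both $\mathbb{E}[N_C^p]$ and the ansatz as degree-$N$ homogeneous polynomials in $(c_1,c_0)$ (allowing negative powers of $c_1$ to encode the $1/\lambda$ terms), proves that the coefficients of $c_1^{N-k}c_0^k$ agree for all $k\le\lceil N/2\rceil$ via telescoping and Pascal-type binomial identities, and bounds the mismatch at $k>\lceil N/2\rceil$ by a geometric series; it also shortcuts the even-$p$ cases through the exact $J^2$ traces, e.g. $\mathbb{E}[N_C^2]=\mathbb{E}[N_C(N_C+2)]-2\,\mathbb{E}[N_C]$. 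Your method is \emph{constructive}: splitting $\mathbb{E}[N_C^p]=\frac{c_1}{\lambda}\Sigma_1-\frac{c_0}{\lambda}\Sigma_2$, discarding $\Sigma_2$ and the extension of $\Sigma_1$ past $s=N/2$ as binomial tails (the geometric decay with ratio $c_0/c_1$, together with the vanishing of $(N-2s)^p$ at the boundary, does land the base $(1-\lambda^2)^{N/2}$ with a $(p,\lambda)$-dependent constant, which is all the $O(\cdot)$ in the lemma requires --- the paper's own constants $2^{p-1}\tfrac{1-\lambda}{\lambda}$ likewise depend on $\lambda$), and then evaluating the full-range sum via the index shift to $\mathbb{E}[(N-2S)^p]-\frac{c_0}{c_1}\mathbb{E}[(N-2-2S)^p]$ with $S\sim\mathrm{Bin}(N,c_0)$, you obtain the polynomial directly as a combination of binomial raw moments, with no guessing required. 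That is the main thing your approach buys: it eliminates the paper's admitted reliance on numerics to find the ansatz and treats all $p$ uniformly; what the paper's route buys in exchange is signed, explicit error bounds (e.g. $-\tfrac{1-\lambda}{\lambda}(1-\lambda^2)^{N/2}\le\mathbb{E}[N_C]-P^{(1)}_\lambda(N)\le 0$) and a cross-check against the $J^2$-moment computations. Two small points: your identity for $\Sigma_1^{\mathrm{full}}$ is exact provided the full-range sum is taken over all $s\ge 0$ with out-of-range binomial coefficients set to zero (so the shifted index reaches $s'=N$; otherwise a harmless $O(c_0^N)$ boundary term appears), and a complete write-up still owes the routine evaluation of the binomial moments for $p=2,3,4$ --- though the paper omits the corresponding computations for $p=3,4$ as well.
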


Before we prove these formulas, we will first observe some general patterns they exhibit, which in fact apply to $\mathbb{E}\left[N_C^p\right]$ for all nonnegative integers $p$. We will additionally point out when these patterns extend even to negative integers:
\begin{itemize}
    \item $\mathbb{E}\left[N_C^p\right]$ is a degree-$p$ polynomial in $N_C$ plus $O\left(\left(1-\lambda^2\right)^{N/2}\right)$. In fact, as we will see later, the error magnitude can be bounded above by $2^{p-1}\left(\frac{1-\lambda}{\lambda}\right)\left(1-\lambda^2\right)^{N/2}$, and the error term is positive for even values of $p$ and negative for odd values of $p$. \textcolor{red}{(nonnegative $p$ only)}
    \item The leading term in $\mathbb{E}\left[N_C^p\right]$ is $\lambda^pN^p$. This can be understood as a consequence of the fact that the typical values of $N_C$ are $\sim\lambda N$. \textcolor{blue}{(negative $p$ also)}
    \item All non-leading terms have a $(1-\lambda)$ factor. This is because, when $\lambda=1$, $N_C = N$ with probability $1$, so $\mathbb{E}\left[N_C^p\right]$ collapses to $N^p$. \textcolor{blue}{(negative $p$ also)}
    \item For all $q\ge\frac{p}{2}$, the coefficient of $N^q$ in $\mathbb{E}\left[N_C^p\right]$ has $\lambda^{p-2q}$ in the denominator. This can be understood as a consequence of the fact that the quantity $\lambda^2N$ serves as a good measure of whether one is in the asymptotic regime for Schur sampling. \textcolor{red}{(nonnegative $p$ only)}
    \item For all $q < \frac{p}{2}$, the coefficient of $N^q$ in $\mathbb{E}\left[N_C^p\right]$ has $\lambda$ in the denominator. This can be at least partially explained by the fact that, when $\lambda$ is \textit{exactly} zero, $\mathbb{E}\left[N_C^p\right] = \Theta\left(N^{p/2}\right)$, regardless of whether $p$ is odd or even. So there is no harm in having these terms diverge as $\lambda\rightarrow 0$. In fact, when $p$ is odd, the $N^{(p-1)/2}$ term becomes leading as $\lambda\rightarrow 0$, since by the previous item, all the higher powers of $N$ have coefficients going to $0$. Hence, the coefficient of $N^{(p-1)/2}$ must diverge, or else $\mathbb{E}\left[N_C^p\right]$ would be $O\left(N^{(p-1)/2}\right)$, contradicting the above fact. \textcolor{red}{(nonnegative $p$ only)}
\end{itemize}

\vspace{0.5\baselineskip}

The central idea is to write everything in terms of degree-$N$ homogeneous ``polynomials'' in $c_0 = \frac{1-\lambda}{2}$ and $c_1 = \frac{1+\lambda}{2}$. We put ``polynomials'' in quotation marks because we will also need to permit negative powers of $c_1$. In other words, we will write everything in terms of quantities of the form $c_1^{N-k}c_0^k = c_1^N(c_0/c_1)^k$ for nonnegative integers $k$. Since $c_0 + c_1 = 1$, if there is ever an expression of degree less than $N$, we will multiply it by the appropriate power of $(c_0+c_1)$ to make it have degree $N$. For the sake of brevity, we will only present the details for $\Ebb[N_C]$ and $\Ebb[N_C^2]$, but the same general method works for $\Ebb[N_C^3]$ and $\Ebb[N_C^4]$ as well.

\vspace{0.5\baselineskip}

For convenience, we will use the following two notations:
\begin{align}
    (a_0,a_1,a_2,\cdots,a_N) &= \sum_{k=0}^{N}a_kc_1^{N-k}c_0^k \\
    (a_0,a_1,a_2,\cdots,a_N \,|\, a_{N+1},a_{N+2},\cdots) &= \sum_{k=0}^{\infty}a_kc_1^{N-k}c_0^k.
\end{align}
The first notation does not permit negative powers of $c_1$, meaning that is a proper degree-$N$ homogeneous polynomial of $c_0$ and $c_1$. In contrast, the second notation does permit negative powers of $c_1$, meaning that it is an infinite power series. For convenience, we will often call these ``$(c_1,c_0)$ polynomials'' and ``$(c_1,c_0)$ extended polynomials'', respectively.

\vspace{0.5\baselineskip}

This notation is motivated by the fact that the probabilities of different $N_C$ values are given by such polynomials. In particular, the probability of getting $N_C=2j$ qubits when performing the Schur transform on $N=2J$ identical uncorrelated qubits with purity parameter $\lambda$ is as follows \cite{Cirac1999}:
\begin{align}
    p(N,N_C,\lambda) &= d_j(c_1c_0)^{J-j}\frac{c_1^{2j+1} - c_0^{2j+1}}{c_1-c_0} = d_j\sum_{k=0}^{2j}c_1^{J+j-k}c_0^{J-j+k} \\
    d_j &= \binom{2J}{J-j} - \binom{2J}{J-j-1}.
\end{align}

\vspace{0.5\baselineskip}

These values form a very nice pattern for computing moments of $N_C$. For example, consider $N=5$. The possible values of $N_C$ are $N_C=1,3,5$, and they have the following probabilities:
\begin{align}
    p(5,1,\lambda) = 5(c_1c_0)^2(c_1+c_0) &= (0,0,5,5,0,0) \\
    p(5,3,\lambda) = 4(c_1c_0)^1(c_1^3 + c_1^2c_0 + c_1c_0^2 + c_0^3) &= (0,4,4,4,4,0) \\
    p(5,5,\lambda) = 1(c_1c_0)^0(c_1^5 + c_1^4c_0 + c_1^3c_0^2 + c_1^2c_0^3 + c_1c_0^4 + c_0^5) &= (1,1,1,1,1,1).
\end{align}
We can now compute $\Ebb[N_C]$ as a $(c_1,c_0)$ polynomial by multiplying each $p(N,N_C,\lambda)$ by $N_C$ and adding them up:
\begin{align}
    \Ebb[N_C] &= 1\cdot p(5,1,\lambda) + 3\cdot p(5,3,\lambda) + 5\cdot p(5,5,\lambda) \\
    &= 1\cdot(0,0,5,5,0,0) + 3\cdot(0,4,4,4,4,0) + 5\cdot(1,1,1,1,1,1) \\
    &= (5,17,22,22,17,5).
\end{align}
In general, it is easy to see that $\Ebb[N_C^p]$ will always be symmetric $(c_1,c_0)$ polynomial $(a_0,\cdots,a_N)$, in the sense that $a_k = a_{N-k}$. In fact, we can easily write down the $(c_1,c_0)$ polynomial as follows:
\begin{align}
    \Ebb[N_C^p] &= \left(a^{(p)}_0, \cdots, a^{(p)}_N\right) \\
    a^{(p)}_k &= \sum_{j=J-k}^{J}d_j(2j)^p = \sum_{j=J-k}^{J}(2j)^p\left[\binom{2J}{J-j} - \binom{2J}{J-j-1}\right] \quad \left(k\le \lceil J\rceil\right) \\
    a^{(p)}_k &= a^{(p)}_{N-k} \quad \left(\lceil J\rceil+1\le k\le N\right).
\end{align}
Notice that we write $a^{(p)}_k$ as a convenient formula only for $0\le k\le \lceil J\rceil$, which accounts for just over half of the $N+1$ terms in the $(c_0,c_1)$. (For the rest of the values, we use the symmetry that we previously observed.) This fact will become important later.

\vspace{0.5\baselineskip}

On the other side, we also want to write our formulas for these expectation values (ignoring the exponentially small error) as $(c_1,c_0)$ polynomials. However, there are two wrinkles. The first is that we will generally deal with expressions of degree less than $N$, meaning that we will need to multiply them by suitable powers of $(c_1+c_0)$ to raise their degree to $N$. The second is that we will have expressions with $\lambda$ in the denominator, which can only be expressed with an extended $(c_1,c_0)$ polynomial. For example, the quantity $c_1^{N+1}/\lambda$ can be written as an infinite geometric series
\begin{equation}
    \frac{c_1^{N+1}}{\lambda} = \frac{c_1^N}{1-(c_0/c_1)} = \sum_{k=0}^{\infty}c_1^N\left(\frac{c_0}{c_1}\right)^k = \left(1,1,1,\cdots,1 \,|\, 1,1,1,\cdots\right).
\end{equation}
Once again, let us take $N=5$ as an example. We have the polynomial ansatz
\begin{equation}
    P^{(1)}_{\lambda}(N) = \lambda N + \frac{1-\lambda}{\lambda},
\end{equation}
which is what is claimed in Lemma \ref{lem:NC-moments-exp-small-error} to match $\Ebb[N_C]$ up to an exponentially small error. Then we can write
\begin{align}
    \lambda N = 5(c_1-c_0) &= 5(c_1-c_0)(c_1+c_0)^4 \\
    &= 5(c_1-c_0)(c_1^4 + 4c_1^3c_0 + 6c_1^2c_0^2 + 4c_1c_0^3 + c_0^4) \\
    &= 5(c_1^5 + 3c_1^4c_0 + 2c_1^3c_0^2 - 2c_1^2c_0^3 - 3c_1c_0^4 - c_0^5) \\
    &= (5,15,10,-10,-15,-5) \\
    \frac{1-\lambda}{\lambda} = \frac{2c_0}{\lambda} &= \frac{2c_0}{c_1-c_0}(c_1+c_0)^5 \\
    &= 2(c_1^5 + 5c_1^4c_0 + 10c_1^3c_0^2 + 10c_1^2c_0^3 + 5c_1c_0^4 + c_0^5)\sum_{k=1}^{\infty}\left(\frac{c_0}{c_1}\right)^k \\
    &= 2(0,1,6,16,26,31 \,|\, 32,32,32,\cdots) \\
    &= (0,2,12,32,52,62 \,|\, 64,64,64,\cdots) \\
    \therefore \lambda N + \frac{1-\lambda}{\lambda} &= (5,15,10,-10,-15,-5) + (0,2,12,32,52,62 \,|\, 64,64,64,\cdots) \\
    &= (5,17,22,22,37,57 \,|\, 64,64,64,\cdots).
\end{align}
In this way, we have expressed $\lambda N + \frac{1-\lambda}{\lambda}$ for $N=5$ as a $(c_1,c_0)$ extended polynomial.

\vspace{0.5\baselineskip}

Now let us compare our two $(c_1,c_0)$ (extended) polynomials:
\begin{align}
    \Ebb[N_C] &= (5,17,22,22,17,5) \\
    \lambda N + \frac{1-\lambda}{\lambda} &= (5,17,22,22,37,57 \,|\, 64,64,64,\cdots).
\end{align}
Notice that the first $4$ terms (that is, from $k=0$ to $k=3$) match. Furthermore, these are precisely the most dominant terms, since each successive term has an extra factor of $c_0/c_1$. Therefore, these two quantities are very nearly equal!

\vspace{0.5\baselineskip}

This is precisely the more general strategy we will adopt. We have already written $\Ebb[N_C^p]$ as a $(c_1,c_0)$ polynomial (see the formula for $a^{(p)}_k$ above). We will now write each polynomial ansatz $P^{(p)}_{\lambda}(N)$ as a $(c_1,c_0)$ extended polynomial. We will then show that the terms from $k=0$ to $k=\lceil J\rceil$ all match. (Recall that our above summation formula for $a^{(p)}_k$ only applies to $0\le k\le \lceil J\rceil$, with the rest being defined by the symmetry between $k$ and $N-k$. This is where that fact will come into play.) Finally, we will use the discrepancy from $k=\lceil J\rceil+1$ onward to show that the error must be exponentially small.

\vspace{0.5\baselineskip}

Let us first tackle the first moment, namely $\Ebb[N_C]$. Let $b^{(1)}_k$ denote the coefficient of $c_1^{N-k}c_0^k$ in the polynomial ansatz
\begin{equation}
    P^{(1)}_\lambda(N) = \lambda N + \frac{1-\lambda}{\lambda}.
\end{equation}
Generalizing the $N=5$ example above, we have the following:
\begin{align}
    \lambda N &= N(c_1-c_0)(c_1+c_0)^{N-1} \\
    &= N(c_1-c_0)\sum_{k=0}^{N-1}\binom{N-1}{k}c_1^{N-1-k}c_0^k \\
    &= \sum_{k=0}^{N}N\left[\binom{N-1}{k}-\binom{N-1}{k-1}\right]c_1^{N-k}c_0^k \\
    \frac{1-\lambda}{\lambda} &= 2(c_1+c_0)^N\sum_{k=1}^{\infty}\left(\frac{c_0}{c_1}\right)^k \\
    &= 2\left[\sum_{k=0}^{N}\binom{N}{k}c_1^{N-k}c_0^k\right]\left[\sum_{k=1}^{\infty}\left(\frac{c_0}{c_1}\right)^k\right] \\
    &= \sum_{k=0}^{\infty}2\left[\sum_{j=0}^{\min(k-1,N)}\binom{N}{j}\right]c_1^{N-k}c_0^k.
\end{align}
Adding these two expressions together and replacing $N$ with $2J$ for convenience gives us our $(c_1,c_0)$ polynomial:
\begin{align}
    P^{(1)}_\lambda(N) &= \lambda N + \frac{1-\lambda}{\lambda} = \left(b^{(1)}_0, \cdots, b^{(1)}_N \,|\, b^{(1)}_{N+1}, b^{(1)}_{N+2}, \cdots\right) \\
    b^{(1)}_k &= 2J\left[\binom{2J-1}{k} - \binom{2J-1}{k-1}\right] + 2\sum_{j=0}^{\min(k-1,2J)}\binom{2J}{j}.
\end{align}

\vspace{0.5\baselineskip}

We want to show that $a^{(1)}_k=b^{(1)}_k$ for all $0\le k\le\lceil N/2\rceil$. In this domain, our formulas simplify slightly as follows:
\begin{align}
    a^{(1)}_k &= \sum_{j=J-k}^{J}2j\left[\binom{2J}{J-j} - \binom{2J}{J-j-1}\right] \\
    b^{(1)}_k &= 2J\left[\binom{2J-1}{k} - \binom{2J-1}{k-1}\right] + 2\sum_{j=0}^{k-1}\binom{2J}{j}.
\end{align}
We now construct the following chain of equalities:
\begin{align}
    a^{(1)}_k &= \sum_{j=J-k}^{J}2j\left[\binom{2J}{J-j} - \binom{2J}{J-j-1}\right] \\
    &= 2(J-k)\binom{2J}{J-(J-k)} + 2\sum_{j=J-k+1}^{J}\binom{2J}{J-j} - 2J\binom{2J}{J-J-1} \\
    &= 2(J-k)\binom{2J}{k} + 2\sum_{j=0}^{k-1}\binom{2J}{j} \\
    &= 2J\left[\binom{2J-1}{k-1} + \binom{2J-1}{k}\right] - 2k\binom{2J}{k} + 2\sum_{j=0}^{k-1}\binom{2J}{j} \\
    &= 2J\left[\binom{2J-1}{k-1} + \binom{2J-1}{k}\right] - 4J\binom{2J-1}{k-1} + 2\sum_{j=0}^{k-1}\binom{2J}{j} \\
    &= 2J\left[\binom{2J-1}{k} - \binom{2J-1}{k-1}\right] + 2\sum_{j=0}^{k-1}\binom{2J}{j} \\
    &= b^{(1)}_k.
\end{align}
Let us clarify each step. The first equality matches $a^{(1)}_k$ with its formula. The second equality takes advantage of the fact that the previous expression is a telescoping sum. The third equality simplifies the previous expression and makes the substitution $j\mapsto J-j$ for the dummy index. The fourth equality uses Pascal's identity. The fifth equality uses the fact that $\binom{2J}{k} = \frac{2J}{k}\binom{2J-1}{k-1}$. The sixth equality merely simplifies the previous expression. The seventh equality matches $b^{(1)}_k$ with its formula.

\vspace{0.5\baselineskip}

The last task is to bound the error, which we call $e^{(1)}(N,\lambda)$ for convenience:
\begin{equation}
    e^{(1)}(N,\lambda) \coloneqq \Ebb[N_C] - P^{(1)}_{\lambda}(N).
\end{equation}
From the formulas for $a^{(1)}_k$ and $b^{(1)}_k$, we can infer that
\begin{align}
    \lceil J\rceil + 1 \le k \le N &\implies -2^{N+1} < a^{(2)}_k - b^{(2)}_k < 0 \\
    k \ge N+1 &\implies a^{(2)}_k - b^{(2)}_k = -2^{N+1}.
\end{align}
Therefore, the error $e^{(1)}(N,\lambda)$ is clearly negative, and we can upper bound its magnitude as follows:
\begin{align}
    \abs{e^{(1)}(N,\lambda)} &= \sum_{k=\lceil J\rceil+1}^{\infty}\left[b^{(2)}_k - a^{(2)}_k\right]c_1^{N-k}c_0^k \\
    &\le 2^{N+1}\sum_{k=\lceil J\rceil+1}^{\infty}c_1^{N-k}c_0^k \\
    &= 2^{N+1}c_1^{N-\lceil J\rceil-1}c_0^{\lceil J\rceil+1}\sum_{k=0}^{\infty}\left(\frac{c_0}{c_1}\right)^k \\
    &= 2^{N+1}c_1^{N-\lceil J\rceil-1}c_0^{\lceil J\rceil+1}\frac{c_1}{c_1-c_0} \\
    &= 2^{N+1}c_1^{N-\lceil J\rceil}c_0^{\lceil J\rceil}\frac{c_0}{c_1-c_0} \\
    &= \frac{1-\lambda}{\lambda}2^Nc_1^{N-\lceil J\rceil}c_0^{\lceil J\rceil} \\
    &\le \frac{1-\lambda}{\lambda}(4c_1c_0)^{N/2} \\
    &= \frac{1-\lambda}{\lambda}\left(1-\lambda^2\right)^{N/2}.
\end{align}
We thus conclude that the polynomial ansatz $P^{(1)}_\lambda(N)$ matches $\Ebb[N_C]$ up to $O\left(\left(1-\lambda^2\right)^{N/2}\right)$ error. In fact, we can say even more precisely that
\begin{equation}
    \Ebb[N_C] = \lambda N + \frac{1-\lambda}{\lambda} + e^{(1)}(N,\lambda),
\end{equation}
where the error is bounded above and below as follows:
\begin{equation}
    -\frac{1-\lambda}{\lambda}\left(1-\lambda^2\right)^{N/2} \le e^{(1)}(N,\lambda) \le 0.
\end{equation}
This concludes our proof of the first equation in Lemma \ref{lem:NC-moments-exp-small-error}.

\vspace{0.5\baselineskip}

Let us now tackle the second moment, namely $\Ebb[N_C^2]$. The efficient thing to do at this point is actually to invoke the formula for $\Ebb[4J^2] = \Ebb[N_C(N_C+2)]$ as shown in Appendix \ref{appendix:angular-momentum-moments}\ref{subsec:J2-integer-moments}:
\begin{align}
    \Ebb[N_C(N_C+2)] &= \lambda^2N^2 + \left(3 - \lambda^2\right)N \\
    \Ebb[N_C] &= \lambda N + \frac{1-\lambda}{\lambda} + e^{(1)}(N,\lambda) \\
    \therefore \Ebb[N_C^2] &= \Ebb[N_C(N_C+2)] - 2\cdot\Ebb[N_C] \\
    &= \lambda^2N^2 + (1-\lambda)(3+\lambda)N - \frac{2(1-\lambda)}{\lambda} - 2e^{(1)}(N,\lambda).
\end{align}
Therefore, we conclude that
\begin{equation}
    \Ebb[N_C^2] = \lambda^2N^2 + (1-\lambda)(3+\lambda)N - \frac{2(1-\lambda)}{\lambda} + e^{(2)}(N,\lambda),
\end{equation}
where the error $e^{(2)}(N,\lambda)$ satisfies
\begin{equation}
    e^{(2)}(\lambda) = -2e^{(1)}(\lambda) \implies 0 \le e^{(2)}(N,\lambda) \le \frac{2(1-\lambda)}{\lambda}\left(1-\lambda^2\right)^{N/2}.
\end{equation}
In this way, we quickly obtain $\Ebb[N_C^2]$ up to exponentially small error and bounds on the error. More generally, if we want to compute $\Ebb[N_C^p]$ for $1\le p\le P$ as efficiently as possible, we should start from $p=1$ and work upward, and we should alternate between using the $(c_1,c_0)$ extended polynomial method for odd $p$ values and using $\Ebb\left[(N_C(N_C+2))^{p/2}\right]$ for even $p$ values.

\vspace{0.5\baselineskip}

However, for the sake of thoroughness, let us prove the formula for $\Ebb[N_C^2]$ using the $(c_1,c_0)$ extended polynomial method anyway. Let $b^{(2)}_k$ denote the coefficient of $c_1^{N-k}c_0^k$ in the polynomial ansatz
\begin{equation}
    P^{(2)}_\lambda(N) = \lambda^2N^2 + (1-\lambda)(3+\lambda)N - \frac{2(1-\lambda)}{\lambda}.
\end{equation}
The first two terms can be written as $(c_1,c_0)$ polynomials. Furthermore, the last term can be written as a $(c_1,c_0)$ extended polynomial (other than the factor of $-2$, we already wrote it in this manner previously when computing $b^{(1)}_k$). We proceed as follows:
\begin{align}
    \lambda^2N^2 &= N^2(c_1-c_0)^2(c_1+c_0)^{N-2} \\
    &= N^2(c_1^2 - 2c_1c_0 + c_0^2)\left[\sum_{k=0}^{N-2}\binom{N-2}{k}c_1^{N-2-k}c_0^k\right] \\
    &= \sum_{k=0}^{N}N^2\left[\binom{N-2}{k} - 2\binom{N-2}{k-1} + \binom{N-2}{k-2}\right]c_1^{N-k}c_0^k \\
    (1-\lambda)(3+\lambda)N &= 4Nc_0(2c_1+c_0)(c_1+c_0)^{N-2} \\
    &= 4N(2c_1c_0 + c_0^2)\left[\sum_{k=0}^{N-2}\binom{N-2}{k}c_1^{N-2-k}c_0^k\right] \\
    &= \sum_{k=0}^{N}4N\left[2\binom{N-2}{k-1} + \binom{N-2}{k-2}\right]c_1^{N-k}c_0^k \\
    -\frac{2(1-\lambda)}{\lambda} &= \sum_{k=0}^{\infty}-4\left[\sum_{j=0}^{\min(k-1,N)}\binom{N}{j}\right]c_1^{N-k}c_0^k.
\end{align}
Adding these three expressions together and replacing $N$ with $2J$ for convenience gives us our $(c_1,c_0)$ polynomial:
\begin{align}
    P^{(2)}_\lambda(N) &= \lambda^2N^2 + (1-\lambda)(3+\lambda)N - \frac{2(1-\lambda)}{\lambda} = \left(b^{(2)}_0, \cdots, b^{(2)}_N \,|\, b^{(2)}_{N+1}, b^{(2)}_{N+2}, \cdots\right) \\
    b^{(2)}_k &= 4J^2\left[\binom{2J-2}{k} - 2\binom{2J-2}{k-1} + \binom{2J-2}{k-2}\right] + 8J\left[2\binom{2J-2}{k-1} + \binom{2J-2}{k-2}\right] - 4\sum_{j=0}^{\min(k-1,2J)}\binom{2J}{j}.
\end{align}

\vspace{0.5\baselineskip}

We want to show that $a^{(2)}_k=b^{(2)}_k$ for all $0\le k\le\lceil N/2\rceil$. In this domain, our formulas simplify slightly as follows:
\begin{align}
    a^{(2)}_k &= \sum_{j=J-k}^{J}(2j)^2\left[\binom{2J}{J-j} - \binom{2J}{J-j-1}\right] \\
    b^{(2)}_k &= N^2\left[\binom{2J-2}{k} - 2\binom{2J-2}{k-1} + \binom{2J-2}{k-2}\right] + N\left[8\binom{2J-2}{k-1} + 4\binom{2J-2}{k-2}\right] - 4\sum_{j=0}^{k-1}\binom{2J}{j}.
\end{align}
We now construct the following chain of equalities:
\begin{align}
    a^{(2)}_k &= \sum_{j=J-k}^{J}(2j)^2\left[\binom{2J}{J-j} - \binom{2J}{J-j-1}\right] \\
    &= 4(J-k)^2\binom{2J}{J-(J-k)} + 4\sum_{j=J-k+1}^{J}(2j-1)\binom{2J}{J-j} - 4J^2\binom{2J}{J-J-1} \\
    &= 4(J-k)^2\binom{2J}{k} + 4\sum_{j=0}^{k-1}(2(J-j)-1)\binom{2J}{j} \\
    &= 4(J-k)^2\binom{2J}{k} + 8J\sum_{j=0}^{k-1}\binom{2J}{j} - 16J\sum_{j=0}^{k-1}\binom{2J-1}{j-1} - 4\sum_{j=0}^{k-1}\binom{2J}{j} \\
    &= 4(J-k)^2\binom{2J}{k} + 8J\sum_{j=0}^{k-1}\left[\binom{2J-1}{j} - \binom{2J-1}{j-1}\right] - 4\sum_{j=0}^{k-1}\binom{2J}{j} \\
    &= 4(J-k)^2\binom{2J}{k} + 8J\binom{2J-1}{k-1} - 4\sum_{j=0}^{k-1}\binom{2J}{j} \\
    &= 4J^2\binom{2J}{k} - 4k(2J-k)\binom{2J}{k} + 8J\binom{2J-1}{k-1} - 4\sum_{j=0}^{k-1}\binom{2J}{j} \\
    &= 4J^2\binom{2J}{k} - 8J(2J-k)\binom{2J-1}{k-1} + 8J\binom{2J-1}{k-1} - 4\sum_{j=0}^{k-1}\binom{2J}{j} \\
    &= 4J^2\binom{2J}{k} - 16J^2\binom{2J-1}{k-1} + 8J(k-1)\binom{2J-1}{k-1} + 16J\binom{2J-1}{k-1} - 4\sum_{j=0}^{k-1}\binom{2J}{j} \\
    &= 4J^2\binom{2J}{k} - 16J(J-1)\binom{2J-1}{k-1} + 8J(2J-1)\binom{2J-2}{k-2} - 4\sum_{j=0}^{k-1}\binom{2J}{j} \\
    &= 4J^2\binom{2J-2}{k} + (-8J^2 + 16J)\binom{2J-2}{k-1} + (4J^2 + 8J)\binom{2J-2}{k-2} - 4\sum_{j=0}^{k-1}\binom{2J}{j} \\
    &= 4J^2\left[\binom{2J-2}{k} - 2\binom{2J-2}{k-1} + \binom{2J-2}{k-2}\right] + 8J\left[2\binom{2J-2}{k-1} + \binom{2J-2}{k-2}\right] - 4\sum_{j=0}^{k-1}\binom{2J}{j} \\
    &= b^{(2)}_k.
\end{align}
We do not explain each step in detail, but the manipulations used are quite similar to those used to prove that $a^{(1)}_k = b^{(1)}_k$.

\vspace{0.5\baselineskip}

The last task is to bound the error, which we call $e^{(2)}(N,\lambda)$ for convenience:
\begin{equation}
    e^{(1)}(N,\lambda) \coloneqq \Ebb[N_C^2] - P^{(2)}_{\lambda}(N).
\end{equation}From the formulas for $a^{(2)}_k$ and $b^{(2)}_k$, we can infer that
\begin{align}
    \lceil J\rceil + 1 \le k \le N &\implies 0 < a^{(2)}_k - b^{(2)}_k < 2^{N+2} \\
    k \ge N+1 &\implies a^{(2)}_k - b^{(2)}_k = 2^{N+2}.
\end{align}
Notice that these bounds are just $-2$ times the bounds used for $a^{(1)}_k - b^{(1)}_k$. Therefore, the error $e^{(2)}(N,\lambda)$ is clearly positive, and we can upper bound its magnitude as just double of what we obtained for $e^{(1)}(N,\lambda)$:
\begin{align}
    \abs{e^{(2)}(N,\lambda)} &= \sum_{k=\lceil J\rceil+1}^{\infty}\left[a^{(2)}_k - b^{(2)}_k\right]c_1^{N-k}c_0^k \\
    &\le 2^{N+2}\sum_{k=\lceil J\rceil+1}^{\infty}c_1^{N-k}c_0^k \\
    &\le \frac{2(1-\lambda)}{\lambda}\left(1-\lambda^2\right)^{N/2}.
\end{align}
We thus conclude that the polynomial ansatz $P^{(2)}_\lambda(N)$ matches $\Ebb[N_C^2]$ up to $O\left(\left(1-\lambda^2\right)^{N/2}\right)$ error. In fact, we can say even more precisely that
\begin{equation}
    \Ebb[N_C^2] = \lambda^2N^2 + (1-\lambda)(3+\lambda)N - \frac{2(1-\lambda)}{\lambda} + e^{(2)}(N,\lambda),
\end{equation}
where the error $e^{(2)}(N,\lambda)$ is bounded above and below as follows:
\begin{equation}
    0 \le e^{(2)}(N,\lambda) \le \frac{2(1-\lambda)}{\lambda}\left(1-\lambda^2\right)^{N/2}.
\end{equation}
This concludes our proof of the second equation in Lemma \ref{lem:NC-moments-exp-small-error}.

\vspace{0.5\baselineskip}

We do not present the details here, because they are very tedious, but the same general method works for $\Ebb[N_C^3]$ and $\Ebb[N_C^4]$ as well. Of course, it is best to only apply the $(c_1,c_0)$ polynomial method to $\Ebb[N_C^3]$, and then use $\Ebb\left[((N_C(N_C+2))^2\right]$ from Appendix \ref{appendix:angular-momentum-moments}\ref{subsec:J2-integer-moments} to compute $\Ebb[N_C^4]$.

\appsubsec{Full Computation of the Negative Moments of $N_C$}
{subsec:NC-negative-moments-full-computation}

To conclude this appendix, we combine the results from Appendix \ref{appendix:angular-momentum-moments}\ref{subsec:positive-moments-to-negative-moments} and Appendix \ref{appendix:angular-momentum-moments}\ref{subsec:NC-moments-ese} to prove Lemma \ref{lem:NC-negative-moments}. In particular, based on the infinite series we show in Appendix \ref{appendix:angular-momentum-moments}\ref{subsec:positive-moments-to-negative-moments}, we need to compute $\Ebb\left[N_C^{-p}\right]$, we need to compute the leading factor $\Ebb[N_C]^{-p}$, as well as the terms in the infinite series, which take the form
\begin{equation}
    \frac{\Ebb\left[\left(N_C - \Ebb[N_C]\right)^q\right]}{\Ebb[N_C]^q}.
\end{equation}

\vspace{0.5\baselineskip}

We begin by restating the first moment (average) of $N_C$ as shown in Lemma \ref{lem:NC-moments-exp-small-error}:
\begin{equation}
    \Ebb[N_C] = \lambda N + \frac{1-\lambda}{\lambda} + \text{e.s.e.}
\end{equation}
We can take the reciprocal of this quantity as follows:
\begin{align}
    \Ebb[N_C]^{-1} &= \left(\lambda N + \frac{1-\lambda}{\lambda} + \text{e.s.e.}\right)^{-1} \\
    &= (\lambda N)^{-1}\left(1 + \frac{1-\lambda}{\lambda^2}\frac{1}{N} + \text{e.s.e.}\right)^{-1} \\
    &= \frac{1}{\lambda N}\left[1 - \frac{1-\lambda}{\lambda^2}\frac{1}{N} + \frac{(1-\lambda)^2}{\lambda^4}\frac{1}{N^2} + O\left(N^{-3}\right)\right] \\
    &= \frac{1}{\lambda}\frac{1}{N} - \frac{1-\lambda}{\lambda^3}\frac{1}{N^2} + \frac{(1-\lambda)^2}{\lambda^5}\frac{1}{N^3} + O\left(N^{-4}\right)
\end{align}
We can now also compute the centered second moment (variance) of $N_C$ as follows, using both the first and second equations in Lemma \ref{lem:NC-moments-exp-small-error}:
\begin{align}
    \Ebb\left[\left(N_C - \Ebb[N_C]\right)^2\right] &= \Ebb[N_C^2] - \Ebb[N_C]^2 \\
    &= \left[\lambda^2N^2 + (1-\lambda)(3+\lambda)N - \frac{2(1-\lambda)}{\lambda} + \text{e.s.e.}\right] \\
    & \quad - \left(\lambda N + \frac{1-\lambda}{\lambda} + \text{e.s.e.}\right)^2 \\
    &= \left[\lambda^2N^2 + (1-\lambda)(3+\lambda)N - \frac{2(1-\lambda)}{\lambda} + \text{e.s.e.}\right] \\
    & \quad - \left[\lambda^2N^2 + 2(1-\lambda)N + \frac{(1-\lambda)^2}{\lambda^2} + \text{e.s.e.}\right] \\
    &= (1-\lambda^2)N - \frac{1-\lambda^2}{\lambda^2} + \text{e.s.e.}
\end{align}
We additionally compute the inverse square of the average of $N_C$:
\begin{align}
    \Ebb[N_C]^{-2} &= (\lambda N)^{-2}\left(1 + \frac{1-\lambda}{\lambda^2}\frac{1}{N} + \text{e.s.e.}\right)^{-2} \\
    &= \frac{1}{\lambda^2N^2}\left[1 - \frac{2(1-\lambda)}{\lambda^2}\frac{1}{N} + O\left(N^{-2}\right)\right] \\
    &= \frac{1}{\lambda^2}\frac{1}{N^2} - \frac{2(1-\lambda)}{\lambda^4}\frac{1}{N^3} + O\left(N^{-4}\right).
\end{align}
Combining the previous two results yields
\begin{align}
    \frac{\Ebb\left[\left(N_C - \Ebb[N_C]\right)^2\right]}{\Ebb[N_C]^2} &= \left[\frac{1}{\lambda^2}\frac{1}{N^2} - \frac{2(1-\lambda)}{\lambda^4}\frac{1}{N^3} + O\left(N^{-4}\right)\right]\left[(1-\lambda^2)N - \frac{1 - \lambda^2}{\lambda^2} + \text{e.s.e.}\right] \\
    &= \frac{1-\lambda^2}{\lambda^2}\frac{1}{N} - \frac{(1-\lambda^2)(3-2\lambda)}{\lambda^4}\frac{1}{N^2} + O\left(N^{-3}\right).
\end{align}
We now proceed to the third centered moment. This requires us to use the first three equations from Lemma \ref{lem:NC-moments-exp-small-error}:
\begin{align}
    \Ebb\left[\left(N_C - \Ebb[N_C]\right)^3\right] &= \Ebb[N_C^3] - 3\Ebb\left[N_C^2\right]\Ebb[N_C] + 3\Ebb[N_C]\Ebb[N_C]^2 - \Ebb[N_C]^3 \\
    &= \Ebb[N_C^3] - 3\Ebb\left[N_C^2\right]\Ebb[N_C] - 2\Ebb[N_C]^3 \\
    &= -2\lambda(1-\lambda^2)N + \frac{2(1-\lambda^2)}{\lambda^3} + \text{e.s.e.}
\end{align}
We additionally compute the inverse cube of the average of $N_C$:
\begin{align}
    \Ebb[N_C]^{-3} &= (\lambda N)^{-3}\left(1 + \frac{1-\lambda}{\lambda^2}\frac{1}{N} + \text{e.s.e.}\right)^{-3} \\
    &= \frac{1}{\lambda^3N^3}\left[1 + O\left(N^{-1}\right)\right] \\
    &= \frac{1}{\lambda^3}\frac{1}{N^3} + O\left(N^{-4}\right).
\end{align}
Combining the previous two results yields
\begin{align}
    \frac{\Ebb\left[(N_C - \Ebb[N_C])^3\right]}{\Ebb[N_C]^3} &= \left[\frac{1}{\lambda^3}\frac{1}{N^3} + O\left(N^{-4}\right)\right]\left[-2\lambda(1-\lambda^2)N + O(1)\right] \\
    &= -\frac{2(1-\lambda^2)}{\lambda^2}\frac{1}{N^2} + O\left(N^{-3}\right).
\end{align}
Finally, we proceed to the fourth centered moment. This requires us to use all four equations from Lemma \ref{lem:NC-moments-exp-small-error}:
\begin{align}
    \Ebb\left[\left(N_C - \Ebb[N_C]\right)^4\right] &= \Ebb[N_C^4] - 4\Ebb\left[N_C^3\right]\Ebb[N_C] + 6\Ebb\left[N_C^2\right]\Ebb[N_C]^2 \\
    & \quad - 4\Ebb[N_C]\Ebb[N_C]^3 + \Ebb[N_C]^4 \\
    &= \Ebb[N_C^4] - 4\Ebb\left[N_C^3\right]\Ebb[N_C] + 6\Ebb\left[N_C^2\right]\Ebb[N_C]^2 - 3\Ebb[N_C]^4 \\
    &= 3(1-\lambda^2)^2N^2 + \frac{2(1-\lambda^2)(-3 + 2\lambda^2 + 3\lambda^4)}{\lambda^2}N - \frac{(1-\lambda^2)(3+\lambda^2)}{\lambda^4} + \text{e.s.e.}
\end{align}
We additionally compute the inverse fourth power of the average of $N_C$:
\begin{align}
    \Ebb[N_C]^{-4} &= (\lambda N)^{-4}\left(1 + \frac{1-\lambda}{\lambda^2}\frac{1}{N} + \text{e.s.e.}\right)^{-4} \\
    &= \frac{1}{\lambda^4N^4}\left[1 + O\left(N^{-1}\right)\right] \\
    &= \frac{1}{\lambda^4}\frac{1}{N^4} + O\left(N^{-5}\right).
\end{align}
Combining the previous two results yields
\begin{align}
    \frac{\Ebb\left[(N_C - \Ebb[N_C])^4\right]}{\Ebb[N_C]^4} &= \left[\frac{1}{\lambda^4}\frac{1}{N^4} + O\left(N^{-5}\right)\right]\left[3(1-\lambda^2)^2N^2 + O(N)\right] \\
    &= \frac{3(1-\lambda^2)^2}{\lambda^4}\frac{1}{N^2} + O\left(N^{-3}\right).
\end{align}
A useful sanity check for all the above results is the Gaussian approximation for the distribution of $N_C$, which has mean $\sim\lambda N$ and variance $\sim(1-\lambda^2)N$ \cite{Keyl2001}. As a result, the centered moments of $N_C$ should always have leading-order behavior
\begin{equation}
    \Ebb\left[\left(N_C - \Ebb[N_C]\right)^p\right] = \begin{cases}
        (1-\lambda^2)^{p/2}N^{p/2} + O\left(N^{\frac{p}{2}-1}\right) & p\text{ even} \\
        O\left(N^{\frac{p-1}{2}}\right) & p\text{ odd}
    \end{cases}
\end{equation}
for any nonnegative integer $p$. This in turn implies that the terms in the infinite series always have leading-order behavior
\begin{equation}
    \frac{\Ebb\left[\left(N_C - \Ebb[N_C]\right)^p\right]}{\Ebb[N_C]^p} = \begin{cases}
        \frac{(1-\lambda^2)^{p/2}}{\lambda^p}N^{-p/2} + O\left(N^{-\frac{p}{2}-1}\right) & p\text{ even} \\
        O\left(N^{-\frac{p+1}{2}}\right) & p\text{ odd}
    \end{cases}
\end{equation}
for any positive integer $p\ge 2$.

\vspace{0.5\baselineskip}

We can now combine all of the above results to compute $\Ebb\left[N_C^{-1}\right]$ to $\Theta\left(N_C^{-3}\right)$ precision. Since we need to compute the first three terms of this expansion, this specific calculation is the setting where all the fancy tricks (and especially the positive moments up to exponentially small error, as shown in Lemma \ref{lem:NC-moments-exp-small-error}) are the most useful:
\begin{align}
    & \quad \Ebb\left[N_C^{-1}\right] \\
    &= \Ebb[N_C]^{-1}\left[1 + \frac{\Ebb\left[(N_C - \Ebb[N_C])^2\right]}{\Ebb[N_C]^2} - \frac{\Ebb\left[(N_C - \Ebb[N_C])^3\right]}{\Ebb[N_C]^3} + \frac{\Ebb\left[(N_C - \Ebb[N_C])^4\right]}{\Ebb[N_C]^4} + O\left(N^{-3}\right)\right] \\
    &= \left[\frac{1}{\lambda}\frac{1}{N} - \frac{1-\lambda}{\lambda^3}\frac{1}{N^2} + \frac{(1-\lambda)^2}{\lambda^5}\frac{1}{N^3} + O\left(N^{-4}\right)\right]\Bigg\{1 + \left[\frac{1-\lambda^2}{\lambda^2}\frac{1}{N} - \frac{(1-\lambda^2)(3-2\lambda)}{\lambda^4}\frac{1}{N^2} + O\left(N^{-3}\right)\right] \\
    & \quad - \left[-\frac{2(1-\lambda^2)}{\lambda^2}\frac{1}{N^2} + O\left(N^{-3}\right)\right] + \left[\frac{3(1-\lambda^2)^2}{\lambda^4}\frac{1}{N^2} + O\left(N^{-3}\right)\right] + O\left(N^{-3}\right)\Bigg\} \\
    &= \left[\frac{1}{\lambda}\frac{1}{N} - \frac{1-\lambda}{\lambda^3}\frac{1}{N^2} + \frac{(1-\lambda)^2}{\lambda^5}\frac{1}{N^3} + O\left(N^{-4}\right)\right]\Bigg\{1 + \frac{1-\lambda^2}{\lambda^2}\frac{1}{N} + \frac{(1-\lambda^2)(2-\lambda)}{\lambda^3}\frac{1}{N^2} + O\left(N^{-3}\right)\Bigg\} \\
    &= \frac{1}{\lambda}\frac{1}{N} + \frac{1-\lambda}{\lambda^2}\frac{1}{N^2} + \frac{(1-\lambda)(1+2\lambda-\lambda^2)}{\lambda^4}\frac{1}{N^3} + O\left(N^{-4}\right).
\end{align}
We conclude that
\begin{equation}
    \boxed{\Ebb\left[N_C^{-1}\right] = \frac{1}{\lambda}\frac{1}{N} + \frac{1-\lambda}{\lambda^2}\frac{1}{N^2} + \frac{(1-\lambda)(1+2\lambda-\lambda^2)}{\lambda^4}\frac{1}{N^3} + O\left(N^{-4}\right)},
\end{equation}
which precisely matches the first result in Lemma \ref{lem:NC-negative-moments}. Next, we compute $\Ebb\left[N_C^{-2}\right]$ to $\Theta\left(N_C^{-3}\right)$ precision, which is actually easier than the previous calculation, since here we only need to compute the first two terms of this expansion:
\begin{align}
    \Ebb\left[N_C^{-2}\right] &= \Ebb[N_C]^{-2}\left[1 + 3\frac{\Ebb\left[(N_C - \Ebb[N_C])^2\right]}{\Ebb[N_C]^2} + O\left(N^{-2}\right)\right] \\
    &= \left[\frac{1}{\lambda^2}\frac{1}{N^2} - \frac{2(1-\lambda)}{\lambda^4}\frac{1}{N^3} + O\left(N^{-4}\right)\right]\left[1 + \frac{3(1-\lambda^2)}{\lambda^2}\frac{1}{N} + O\left(N^{-3}\right)\right] \\
    &= \frac{1}{\lambda^2}\frac{1}{N^2} + \frac{(1-\lambda)(1+3\lambda)}{\lambda^4}\frac{1}{N^3} + O\left(N^{-4}\right).
\end{align}
We conclude that
\begin{equation}
    \boxed{\Ebb\left[N_C^{-2}\right] = \frac{1}{\lambda^2}\frac{1}{N^2} + \frac{(1-\lambda)(1+3\lambda)}{\lambda^4}\frac{1}{N^3} + O\left(N^{-4}\right)},
\end{equation}
which precisely matches the second result in Lemma \ref{lem:NC-negative-moments}. Finally, we compute $\Ebb\left[N_C^{-3}\right]$ to $\Theta\left(N_C^{-3}\right)$ precision, which of course is the easiest of all:
\begin{align}
    \Ebb\left[N_C^{-3}\right] &= \Ebb[N_C]^{-3}\left[1 + O\left(N^{-1}\right)\right] \\
    &= \left[\frac{1}{\lambda^3}\frac{1}{N^3} + O\left(N^{-4}\right)\right]\left[1 + O\left(N^{-1}\right)\right] \\
    &= \frac{1}{\lambda^3}\frac{1}{N^3} + O\left(N^{-4}\right).
\end{align}
We conclude that
\begin{equation}
    \boxed{\Ebb\left[N_C^{-3}\right] = \frac{1}{\lambda^3}\frac{1}{N^3} + O\left(N^{-4}\right)},
\end{equation}
which precisely matches the third and final result in Lemma \ref{lem:NC-negative-moments}.

\newpage

\appsec{Understanding the $P_{w,w'}$ Values}
{appendix:understanding-P-vals}

Throughout this paper, we have to invoke the matrix representation of the post-Schur-sampling state $\rho_{N_C/2}$ when written in the basis of symmetrized Hamming weight states. In particular, we commonly refer to ``the $P_{w,w'}$ values'', which are defined as follows:
\begin{equation}
    P_{w,w'} \coloneqq \bra{w^{(s)}}\rho_{N_C/2}\ket{(w')^{(s)}}.
\end{equation}
This appendix is dedicated to proving the various useful facts about the $P_{w,w'}$ values that appear throughout this paper. This appendix is organized as follows:
\begin{itemize}
    \item In Appendix \ref{appendix:understanding-P-vals}\ref{subsec:P-vals-miscellaneous-facts}, we will prove some miscellaneous useful formulas for these values.
    \begin{itemize}
        \item This includes one formula that will be useful in Appendix \ref{appendix:perturbative-protocols} when studying the behavior of the optimal protocol in the equatorial case for $\lambda\approx 1$.
        \item This also includes one formula that is helpful for the numerical analysis of these coherence distillation protocols.
    \end{itemize}
    \item In Appendix \ref{appendix:understanding-P-vals}\ref{subsec:P-vals-moments-defining}, we explain why we organize the $P_{w,w'}$ values into those with a fixed ``offset'' $\alpha$ (meaning values of the form $P_{w-\alpha,w}$), and we explain why we only focus on $\alpha=0$ and $\alpha=1$ as the ones that matter for the asymptotic analysis of optimal single-shot coherence distillation, as explained in Appendices \ref{appendix:asymptotic-expansion}, \ref{appendix:1st-order-optimality}, \ref{appendix:2nd-order-optimality}, and \ref{appendix:equatorial-3rd-order-optimality}.
    \item In Appendix \ref{appendix:understanding-P-vals}\ref{subsec:P-vals-moments-offset0}, we show how to compute the relevant ``moments'' of the $P_{w,w}$ values (corresponding to offset $\alpha=0$).
    \item In Appendix \ref{appendix:understanding-P-vals}\ref{subsec:P-vals-moments-offset1-equatorial}, we show how to compute the relevant ``moments'' of the $P_{w-1,w}$ values (corresponding to offset $\alpha=1$) in the equatorial special case.
    \item In Appendix \ref{appendix:understanding-P-vals}\ref{subsec:P-vals-moments-offset1-general}, we show how to compute the relevant ``moments'' of the $P_{w-1,w}$ values (corresponding to offset $\alpha=1$) in the general case.
\end{itemize}

\appsubsec{Miscellaneous Useful Facts}
{subsec:P-vals-miscellaneous-facts}

Since we can write the Schur-sampled state as
\begin{equation}
    \rho_{N_C/2} = \frac{c_1-c_0}{c_1^{N_C+1}-c_0^{N_C+1}}\sum_{k=0}^{N_C}c_1^kc_0^{N-k}\ket{k_{\hat{n}}^{(s)}}\bra{k_{\hat{n}}^{(s)}},
\end{equation}
we can write $P_{w,w'}$ as follows:
\begin{equation}
    P_{w,w'} = \frac{c_1-c_0}{c_1^{N_C+1}-c_0^{N_C+1}}\sum_{k=0}^{N_C}c_1^kc_0^{N-k}Q_{wk}Q_{w'k},
\end{equation}
where for convenience, we have defined
\begin{align}
    Q_{wk} &\coloneqq \bra{w^{(s)}}\ket{k_{\hat{n}}^{(s)}}.
\end{align}
As a reminder, the computational basis states for a specific direction $\hat{n}$ are defined so that $\ket{0_{\hat{n}}}$ is a $-1$ eigenstate of $\hat{n}\cdot\vec{\sigma}$, while $\ket{1_{\hat{n}}}$ is a $+1$ eigenstate of $\hat{n}\cdot\vec{\sigma}$. (This convention may seem backward, but it is done to keep in line with the convention used by Cirac et al. \cite{Cirac1999}.) In particular, let $\hat{n} = \langle\sin\Theta,0,\cos\Theta\rangle$ for some $0\le\Theta\le\pi$ be some unit vector with azimuthal angle $\phi=0$. Then the computational basis states for direction $\hat{n}$ are
\begin{align}
    \ket{0_{\hat{n}}} &= \sin\frac{\Theta}{2}\ket{0} - \cos\frac{\Theta}{2}\ket{1} \\
    \ket{1_{\hat{n}}} &= \cos\frac{\Theta}{2}\ket{0} + \sin\frac{\Theta}{2}\ket{1}.
\end{align}
As a result, the inner product of a single-bit computational basis state with a single-bit state in some other direction given by polar angle $\Theta$ (assuming that it has azimuthal angle $\phi=0$) is
\begin{equation}
    \bra{b}\ket{b'_{\hat{n}}} = \begin{cases}
        \sin\frac{\Theta}{2} & b=0, b'=0 \\
        \cos\frac{\Theta}{2} & b=0, b'=1 \\
        -\cos\frac{\Theta}{2} & b=1, b'=0 \\
        \sin\frac{\Theta}{2} & b=1, b'=1.
    \end{cases}
\end{equation}
We can thus compute the $Q_{wk}$ values as follows:
\begin{align}
    Q_{wk} &= \bra{w^{(s)}}\ket{k_{\hat{n}}^{(s)}} \\
    &= \binom{N_C}{w}^{-1/2}\binom{N_C}{k}^{-1/2}\sum_{w(b)=w}\sum_{w(b')=k}\bra{b}\ket{b'_{\hat{n}}} \\
    &= \binom{N_C}{w}^{-1/2}\binom{N_C}{k}^{-1/2}\sum_{i=0}^{w}(-1)^i\left(\sin\frac{\Theta}{2}\right)^{N_C-k+w-2i}\left(\cos\frac{\Theta}{2}\right)^{k-w+2i}\binom{N_C}{k}\binom{N_C-k}{i}\binom{k}{w-i} \\
    &= \sqrt{\frac{\binom{N_C}{k}}{\binom{N_C}{w}}}\left(\sin\frac{\Theta}{2}\right)^{N_C-k+w}\left(\cos\frac{\Theta}{2}\right)^{k-w}\sum_{i=0}^{w}(-1)^i\left(\frac{1+\cos\Theta}{1-\cos\Theta}\right)^{i}\binom{N_C-k}{i}\binom{k}{w-i}.
\end{align}
For clarity, what we have done here is let the dummy index $i$ refer to the number of times a $\bra{1}$ from $\bra{b}$ collides with a $\ket{0_{\hat{n}}}$ from $\ket{b'_{\hat{n}}}$, which is what governs the number of $-1$ factors. There are $\binom{N_C}{k}$ ways to choose the string $b'$, then $\binom{N_C-k}{i}$ ways to choose the locations at which a $\bra{1}$ from $\bra{b}$ collides with a $\ket{0_{\hat{n}}}$ from $\ket{b'_{\hat{n}}}$, and then $\binom{k}{w-i}$ ways to choose the locations at which a $\bra{1}$ from $\bra{b}$ collides with a $\ket{1_{\hat{n}}}$ from $\ket{b'_{\hat{n}}}$.

\vspace{0.5\baselineskip}

In the equatorial case $\hat{n} = \hat{x}$, this formula simplifies somewhat as follows:
\begin{align}
    Q_{wk} &= \bra{w^{(s)}}\ket{k_X^{(s)}} \\
    &= \left(\frac{1}{\sqrt{\binom{N_C}{w}}}\sum_{w(b)=w}\bra{b}\right)\left(\frac{1}{\sqrt{\binom{N_C}{k}}}\sum_{w(b')=k}\ket{b'_X}\right) \\
    &= \sqrt{\frac{\binom{N_C}{k}}{\binom{N_C}{w}}}\sum_{w(b)=w}\bra{b}\left(\ket{+}^{\otimes k}\ket{-}^{\otimes(N_C-k)}\right) \\
    &= \sqrt{\frac{\binom{N_C}{k}}{\binom{N_C}{w}2^{N_C}}}\sum_{i=0}^{w}(-1)^i\binom{N_C-k}{i}\binom{k}{w-i}.
\end{align}
Notice that plugging $\Theta = \frac{\pi}{2}$ into the previous formula yields this result as well. This specific formula will actually find its greatest use in Appendix \ref{appendix:perturbative-protocols}, where we compute the behavior of the optimal coherence distillation protocol in the equatorial case for $\lambda$ very close to $1$. In that setting, we treat the protocol as a power series in $c_0 = \frac{1-\lambda}{2}$ up to second order, meaning that we only need $k=N_C,N_C-1,N_C-2$. In that setting, the somewhat complicated summation for $Q_{wk}$ simplifies considerably, since it only has $1,2,3$ nonzero terms (respectively).

\vspace{0.5\baselineskip}

At the moment, our formula for the $P_{w,w'}$ values may appear to be a triple summation, since it is a summation of the product $Q_{wk}Q_{w'k}$, and since each of $Q_{wk}$ and $Q_{w'k}$ is its own summation. However, it turns out that there is a much simpler formula for $P_{w,w'}$, which only requires a single summation:

\begin{lemma}
\label{lem:P-val-single-summation}
Consider an input state of the form
\begin{equation}
    \rho = \frac{\mathbb{I} + \lambda_xX + \lambda_zZ}{2} \quad\quad\quad \lambda_x = \lambda\sin\Theta_{\text{in}} \quad\quad \lambda_z = \lambda\cos\Theta_{\text{in}}.
\end{equation}
Then the resulting $P_{w-\alpha,w}$ values take the following form for all $0\le\alpha\le w\le N$:
\begin{equation}
    P_{w-\alpha,w} = \frac{c_1-c_0}{c_1^{N_C+1} - c_0^{N_C+1}}\left(\frac{\lambda_x}{1-\lambda_z}\right)^\alpha\left(\frac{1+\lambda_z}{2}\right)^N\left(\frac{1-\lambda_z}{1+\lambda_z}\right)^w\sqrt{\frac{\binom{N}{w}}{\binom{N}{w-\alpha}}}\sum_{j=0}^{N}\binom{w}{j+\alpha}\binom{N-w}{j}\left(\frac{\lambda_x^2}{1-\lambda_z^2}\right)^j.
\end{equation}
\end{lemma}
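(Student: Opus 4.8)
The plan is to avoid the apparent triple summation altogether by first recognizing that the Schur-sampled state is, up to normalization, simply $\rho^{\otimes N_C}$ restricted to the symmetric subspace. Writing $\rho = c_1\ketbra{1_{\hat n}}{1_{\hat n}} + c_0\ketbra{0_{\hat n}}{0_{\hat n}}$ with eigenvalues $c_1 = \tfrac{1+\lambda}{2}$, $c_0 = \tfrac{1-\lambda}{2}$, the operator $\rho^{\otimes N_C}$ is diagonal in the $\hat n$-product basis, and each $\hat n$-Dicke state satisfies $\rho^{\otimes N_C}\ket{k_{\hat n}^{(s)}} = c_1^k c_0^{N_C-k}\ket{k_{\hat n}^{(s)}}$, since it is a uniform superposition of product eigenvectors all carrying that eigenvalue. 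Comparing with the stated diagonal form of $\rho_{N_C/2}$ and using the geometric sum $\sum_k c_1^k c_0^{N_C-k} = (c_1^{N_C+1}-c_0^{N_C+1})/(c_1-c_0)$, I obtain
\[
\rho_{N_C/2} = \frac{c_1-c_0}{c_1^{N_C+1}-c_0^{N_C+1}}\,\Pi_{\text{sym}}\,\rho^{\otimes N_C}\,\Pi_{\text{sym}}.
\]
Because the Dicke states $\ket{w^{(s)}}$ already lie in the symmetric subspace, this reduces the claim to evaluating a single matrix element, $P_{w-\alpha,w} = \tfrac{c_1-c_0}{c_1^{N_C+1}-c_0^{N_C+1}}\bra{(w-\alpha)^{(s)}}\rho^{\otimes N_C}\ket{w^{(s)}}$, of a tensor power (here the lemma's $N$ is to be read as the post-Schur-sampling count $N_C$).

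Next I would expand this matrix element in the computational basis. Writing the single-qubit entries $\rho_{00} = \tfrac{1+\lambda_z}{2}$, $\rho_{11} = \tfrac{1-\lambda_z}{2}$, $\rho_{01}=\rho_{10}=\tfrac{\lambda_x}{2}$ and inserting $\ket{w^{(s)}} = \binom{N_C}{w}^{-1/2}\sum_{w(b')=w}\ket{b'}$ together with the analogous expansion of the bra, the product structure of $\rho^{\otimes N_C}$ makes the matrix element factorize as $\prod_i \rho_{b_i b_i'}$. I would classify the $N_C$ positions by the pair $(b_i,b_i')$, with multiplicities $n_{00},n_{01},n_{10},n_{11}$. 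The two weight constraints $n_{10}+n_{11}=w-\alpha$ and $n_{01}+n_{11}=w$ leave one free index; setting $j=n_{10}$ yields $n_{11}=w-\alpha-j$, $n_{01}=\alpha+j$, $n_{00}=N_C-w-j$, so the monomial is exactly $\rho_{00}^{N_C-w-j}\rho_{01}^{\alpha+2j}\rho_{11}^{w-\alpha-j}$, and the number of arrangements is the multinomial $N_C!/(n_{00}!\,n_{01}!\,n_{10}!\,n_{11}!)$.

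The collapse to a single sum then comes from two standard steps. First, the multinomial factors as $\binom{N_C}{w}\binom{w}{n_{11}}\binom{N_C-w}{n_{10}}$ — choose the $w$ ket-ones, then which of those have bra-one, then which of the remaining $N_C-w$ positions have bra-one — which with $n_{11}=w-\alpha-j$, $n_{10}=j$ becomes $\binom{N_C}{w}\binom{w}{j+\alpha}\binom{N_C-w}{j}$, reproducing the lemma's binomials. Second, the two Dicke normalizations $\binom{N_C}{w-\alpha}^{-1/2}\binom{N_C}{w}^{-1/2}$ combine with the leftover $\binom{N_C}{w}$ to give precisely $\sqrt{\binom{N_C}{w}/\binom{N_C}{w-\alpha}}$. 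Finally I would rewrite the monomial using $\rho_{01}/\rho_{11}=\lambda_x/(1-\lambda_z)$, $\rho_{00}=(1+\lambda_z)/2$, $\rho_{11}/\rho_{00}=(1-\lambda_z)/(1+\lambda_z)$, and $\rho_{01}^2/(\rho_{00}\rho_{11})=\lambda_x^2/(1-\lambda_z^2)$, pulling the $j$-independent factors out of the sum and leaving exactly $(\lambda_x^2/(1-\lambda_z^2))^j$ inside, which matches the stated formula. The range $0\le j\le\min(w-\alpha,N_C-w)$ is enforced automatically by the vanishing of the binomials, so the sum may be written as $\sum_{j=0}^{N}$.

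The only genuine insight is the opening identification of $\rho_{N_C/2}$ with $\rho^{\otimes N_C}$ on the symmetric subspace; everything afterward is bookkeeping. I expect the main place to be careful is keeping the eigenvalues $c_0,c_1$ (which appear only in the prefactor) distinct from the computational-basis entries $\rho_{00},\rho_{11},\rho_{01}$ that build the monomial, and verifying that the position-type counting reproduces both the binomials and the normalization without an off-by-one in the index $j$. None of these steps should present a real obstacle once the factorization is in place.
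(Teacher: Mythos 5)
Your proof is correct, and its opening move coincides exactly with the paper's: both identify $\rho_{N_C/2}$ with the projection of $\rho^{\otimes N_C}$ onto the symmetric subspace up to the normalization $p(j=N_C/2) = (c_1^{N_C+1}-c_0^{N_C+1})/(c_1-c_0)$, thereby reducing the lemma to the single matrix element $\bra{(w-\alpha)^{(s)}}\rho^{\otimes N_C}\ket{w^{(s)}}$ (and your reading of the lemma's $N$ as $N_C$ is the paper's reading too). Where you genuinely diverge is in how that matrix element is computed. The paper expands $\rho^{\otimes N_C}$ in the Pauli basis, $\rho^{\otimes N_C} = 2^{-N_C}\sum_{P\in\{I,X,Z\}^{\otimes N_C}}\lambda_x^{w_X(P)}\lambda_z^{w_Z(P)}P$, counts the placements of $X$ symbols needed to connect a weight-$w$ string to a weight-$(w-\alpha)$ string (this produces the binomials $\binom{w}{j+\alpha}\binom{N_C-w}{j}$ with $j=\frac{w_X(P)-\alpha}{2}$), and then resums the $I/Z$ assignments on the remaining sites, tracking the sign of $Z$ acting on $\ket{1}$, to generate the factors $(1+\lambda_z)^{N_C-w-j}(1-\lambda_z)^{w-\alpha-j}$. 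You bypass the Pauli expansion entirely: the matrix element factorizes as $\prod_i \rho_{b_i b_i'}$, positions are classified by the four bit-pair types, and the two weight constraints leave the single free index $j=n_{10}$, so the diagonal entries $\rho_{00}=(1+\lambda_z)/2$ and $\rho_{11}=(1-\lambda_z)/2$ enter directly rather than emerging from a signed resummation over $I/Z$ choices. The two bookkeeping schemes are equivalent---the paper's $I/Z$ resummation is precisely reconstructing the matrix entries you use from the start---but yours is the more elementary route, avoids the sign tracking, and makes the multinomial-to-binomial factorization behind the prefactor $\sqrt{\binom{N_C}{w}/\binom{N_C}{w-\alpha}}$ transparent. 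What the paper's route buys is alignment with the modes-of-asymmetry language used elsewhere in its Appendix B: each $X$ shifts the frequency $w'-w$ by one, so the Pauli expansion exhibits directly which modes of $\rho^{\otimes N_C}$ feed each off-diagonal of the output.
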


\begin{proof}
In this case, it actually helps to return to the perspective of the original tensor product state. In particular, we know that
\begin{equation}
    \left(\Pi_{j=N_C/2}\right)\rho^{\otimes N_C}\left(\Pi_{j=N_C/2}\right) = p\left(j=\frac{N_C}{2}\right)\rho_{N_C/2},
\end{equation}
where $p\left(j=N_C/2\right)$ equals the probability of getting the highest possible total angular momentum when performing Schur sampling \cite{Cirac1999}:
\begin{equation}
    p\left(j=\frac{N_C}{2}\right) = \frac{c_1^{N_C+1} - c_0^{N_C+1}}{c_1-c_0}.
\end{equation}
Therefore, we can write
\begin{align}
    p\left(j=\frac{N_C}{2}\right)P_{w-\alpha,w} &= p\left(j=\frac{N_C}{2}\right)\bra{(w-\alpha)^{(s)}}\rho_{N_C/2}\ket{w^{(s)}} \\
    &= \bra{(w-\alpha)^{(s)}}\left(\Pi_{j=N_C/2}\right)\rho^{\otimes N_C}\left(\Pi_{j=N_C/2}\right)\ket{w^{(s)}} \\
    &= \bra{(w-\alpha)^{(s)}}\rho^{\otimes N_C}\ket{w^{(s)}}.
\end{align}
From now on, we can consider the tensor product state $\rho^{\otimes N_C}$. We start by writing
\begin{equation}
    \rho^{\otimes N_C} = \frac{1}{2^{N_C}}\sum_{P\in\{I,X,Z\}^{\otimes N_C}}\lambda_x^{w_X(P)}\lambda_z^{w_Z(P)}P,
\end{equation}
where the summation is taken over all $N$-qubit Pauli strings $P$ that have only $I,X,Z$ symbols, and $w_X(P)$ denotes the $X$-weight of $P$, namely the number of $X$ symbols in $P$ (and $w_Z(P)$ is defined analogously). It follows that
\begin{align}
    p\left(j=\frac{N_C}{2}\right)P_{w-\alpha,w} &= \frac{1}{2^{N_C}}\sum_{P\in\{I,X,Z\}^{\otimes N_C}}\lambda_x^{w_X(P)}\lambda_z^{w_Z(P)}\bra{(w-\alpha)^{(s)}}P\ket{w^{(s)}} \\
    &= \frac{1}{2^{N_C}}\sum_{P\in\{I,X,Z\}^{\otimes N_C}}\lambda_x^{w_X(P)}\lambda_z^{w_Z(P)}\frac{1}{\sqrt{\binom{N_C}{w-\alpha}\binom{N_C}{w}}}\sum_{w(b)=w-\alpha}\sum_{w(b')=w}\bra{b}P\ket{b'}.
\end{align}

\vspace{0.5\baselineskip}

As a result, for each bit string $b'$ with Hamming weight $w$, we need to count the number of Pauli strings $P$ such that $P\ket{b'}$ is (up to sign) a computational basis state $b$ with Hamming weight $w-\alpha$. Since the $X$ symbols are the only ones (out of $I$, $X$, $Z$) that change the Hamming weight, the number of $X$ symbols acting on a $1$ bit has to be exactly $\alpha$ more than the number of $X$ symbols acting on a $0$ bit. As a result, there are
\begin{equation}
    \binom{w}{\frac{w_X(P)+\alpha}{2}}\binom{N_C-w}{\frac{w_X(P)-\alpha}{2}}
\end{equation}
ways to choose the locations of the $X$ symbols in the Pauli strings.

\vspace{0.5\baselineskip}

Let us consider what happens when we consider a Pauli string $P$ that has just $I$ and $X$ symbols, and then choose some subset of the $I$ symbols to replace with $Z$ symbols. Due to the $\lambda_z^{w_Z(P)}$ factor in the summation, and due to the fact that $Z\ket{0} = \ket{0}$ but $Z\ket{1} = -\ket{1}$, each $I$ replaced with a $Z$ acting on a $0$ bit in $b'$ changes the term associated with that Pauli string by a factor of $+\lambda_z$, while each $I$ replaced with a $Z$ acting on a $1$ bit in $b'$ changes the term associated with that Pauli string by a factor of $-\lambda_z$. As a result, when we sum over all Pauli strings $P$ with a fixed set of locations for the $X$ symbols but with all $2^{N-w_X(P)}$ possible choices of $I$ and $Z$ symbols, we accumulate a factor of
\begin{equation}
    (1+\lambda_z)^{N_C-w-j}(1-\lambda_z)^{w-j-\alpha},
\end{equation}
where we define $j\equiv \frac{w_X(P)-\alpha}{2}$ as a convenient summation index. By summing over all possible values of $j$, we can simplify our summation to
\begin{equation}
    p\left(j=\frac{N_C}{2}\right)P_{w-\alpha,w} = \frac{1}{2^{N_C}}\frac{1}{\sqrt{\binom{N_C}{w-\alpha}\binom{N_C}{w}}}\sum_{w(b')=w}\sum_{j=0}^{N_C}(1+\lambda_z)^{N_C-w-j}(1-\lambda_z)^{w-j-\alpha}\lambda_x^{2j+\alpha}.
\end{equation}

\vspace{0.5\baselineskip}

Finally, we note that there are $\binom{N_C}{w}$ input bit strings with Hamming weight $w$, and each of them contributes equally to the sum. We thus obtain
\begin{align}
    p\left(j=\frac{N_C}{2}\right)P_{w-\alpha,w} &= \frac{1}{2^{N_C}}\frac{\binom{N_C}{w}}{\sqrt{\binom{N_C}{w}\binom{N_C}{w-\alpha}}}\sum_{j=0}^{N_C}\binom{w}{j+\alpha}\binom{N_C-w}{j}(1+\lambda_z)^{N_C-w-j}(1-\lambda_z)^{w-j-\alpha}\lambda_x^{2j+\alpha} \\
    &= \frac{\lambda_x^\alpha}{2^{N_C}}\sqrt{\frac{\binom{N_C}{w}}{\binom{N_C}{w-\alpha}}}\sum_{j=0}^{N_C}\binom{w}{j+\alpha}\binom{N_C-w}{j}(1+\lambda_z)^{N_C-w-j}(1-\lambda_z)^{w-j-\alpha}\lambda_x^{2j} \\
    &= \left(\frac{\lambda_x}{1-\lambda_z}\right)^\alpha\left(\frac{1+\lambda_z}{2}\right)^{N_C}\left(\frac{1-\lambda_z}{1+\lambda_z}\right)^w\sqrt{\frac{\binom{N}{w}}{\binom{N_C}{w-\alpha}}}\sum_{j=0}^{N_C}\binom{w}{j+\alpha}\binom{N_C-w}{j}\left(\frac{\lambda_x^2}{1-\lambda_z^2}\right)^j,
\end{align}
exactly as desired.
\end{proof}

As a corollary, we can simplify the formula even a bit further in the equatorial case:

\begin{corollary}
\label{cor:P-val-single-summation-equatorial}
The equatorial case $\Theta_{\text{in}} = \frac{\pi}{2}$, which implies $\lambda_x = \lambda$ and $\lambda_z = 0$, yields the simplified formula
\begin{equation}
    P_{w-\alpha,w} = \frac{c_1-c_0}{c_1^{N_C+1} - c_0^{N_C+1}}\frac{\lambda^\alpha}{2^{N_C}}\sqrt{\frac{\binom{N_C}{w}}{\binom{N}{w-\alpha}}}\sum_{j=0}^{w-\alpha}\binom{w}{j+\alpha}\binom{N_C-w}{j}\lambda^{2j}.
\end{equation}
\end{corollary}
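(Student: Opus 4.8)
The plan is to specialize the general single-summation formula of Lemma~\ref{lem:P-val-single-summation} to the equatorial case and verify that all factors collapse as claimed. First I would substitute $\Theta_{\text{in}} = \frac{\pi}{2}$, which by the definitions $\lambda_x = \lambda\sin\Theta_{\text{in}}$ and $\lambda_z = \lambda\cos\Theta_{\text{in}}$ immediately gives $\lambda_x = \lambda$ and $\lambda_z = 0$. These are exactly the hypotheses under which I want to read off the simplified expression, so the bulk of the work is tracking how each $\lambda_z$-dependent factor degenerates.

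The key steps are purely algebraic simplifications of the four structural pieces of the general formula. I would handle them in the order they appear: the prefactor $\left(\frac{\lambda_x}{1-\lambda_z}\right)^\alpha$ becomes $\lambda^\alpha$ since the denominator is $1$; the elevation factor $\left(\frac{1+\lambda_z}{2}\right)^N$ becomes $2^{-N}$ (here I must be careful that the exponent should read $N_C$, matching the corollary statement, as the lemma was stated on the $N_C$-qubit symmetric subspace); the ratio $\left(\frac{1-\lambda_z}{1+\lambda_z}\right)^w$ becomes $1^w = 1$ and drops out entirely; and finally the summand weight $\left(\frac{\lambda_x^2}{1-\lambda_z^2}\right)^j$ becomes $\lambda^{2j}$. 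The binomial coefficients $\binom{w}{j+\alpha}\binom{N_C-w}{j}$ and the square-root normalization $\sqrt{\binom{N_C}{w}/\binom{N_C}{w-\alpha}}$ are untouched by the substitution. Collecting these, together with the unchanged Schur-sampling normalization $\frac{c_1-c_0}{c_1^{N_C+1}-c_0^{N_C+1}}$, reproduces the corollary.

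The one genuinely substantive point is the upper limit of the summation. The general lemma writes $\sum_{j=0}^{N}$ (or $\sum_{j=0}^{N_C}$), whereas the corollary tightens this to $\sum_{j=0}^{w-\alpha}$. I would justify this by noting that the binomial coefficient $\binom{w}{j+\alpha}$ vanishes whenever $j+\alpha > w$, i.e.\ whenever $j > w-\alpha$, so every term with $j > w-\alpha$ is identically zero and the upper limit may be truncated without changing the sum. This truncation is harmless and valid regardless of the equatorial specialization, so it is really an observation about the support of the summand rather than a computation.

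The main obstacle, such as it is, is notational bookkeeping rather than mathematical difficulty: the general lemma mixes $N$ and $N_C$ in its exponents and binomials (the statement as transcribed writes $\left(\frac{1+\lambda_z}{2}\right)^N$ and $\binom{N}{w}$ in places where the derivation clearly intends $N_C$), so I would take care to use $N_C$ consistently throughout, since the corollary is a statement about the $N_C$-qubit symmetric-subspace state $\rho_{N_C/2}$. Once the $N \leftrightarrow N_C$ convention is fixed and the vanishing-term truncation is noted, the proof is a one-line substitution with no remaining content, and I would present it as such.
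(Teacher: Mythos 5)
Your proposal is correct and matches the paper's treatment: the paper presents Corollary~\ref{cor:P-val-single-summation-equatorial} as an immediate specialization of Lemma~\ref{lem:P-val-single-summation} with $\lambda_x = \lambda$, $\lambda_z = 0$, exactly the substitutions you carry out, and your justification for truncating the sum at $j = w-\alpha$ (vanishing of $\binom{w}{j+\alpha}$) is the right one. Your observation that the $N$ versus $N_C$ mismatches are typographical and that the statement should read $N_C$ throughout is also consistent with the paper's derivation, which works entirely on the $N_C$-qubit symmetric subspace.
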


The primary advantage of Lemma \ref{lem:P-val-single-summation} is actually \textit{numerical}, rather than \textit{analytical}. In particular, even though the single summation is quite complicated, it is much more efficient to evaluate numerically. Furthermore, the binomial coefficients in the summation have an especially nice structure. In particular, one can write the summation in the $P_{w,w'}$ value in the form of a nested evaluation structure. For example, in the equatorial case (see Corollary \ref{cor:P-val-single-summation-equatorial}, the summation can be written as follows:
\begin{align}
    & A_0\Big\{1 + A_1\lambda^2\left[1 + A_2\lambda^2\left(1 + \cdots\{1 + A_{w-1}\lambda^2\}\cdots\right)\right]\Big\} \\
    & A_0 = w, \quad A_j = \frac{(N_C-w-j+1)(w-j)}{j(j+1)} \quad (1\le j\le w-1).
\end{align}
Using this nested evaluation structure, one can write a computer program to evaluate these $P_{w,w'}$ values very efficiently and also without directly invoking any large binomial coefficients. This is very nice, because it substantially increases the values of $N$ for which one can carry out effective numerics; otherwise, the act of computing large binomial coefficients explicitly would introduce numerical errors much sooner.

\appsubsec{Defining the ``Moments'' of the $P_{w-\alpha,w}$ Values}
{subsec:P-vals-moments-defining}

Now we discuss the properties of the $P_{w,w'}$ values that are the most important for our asymptotic analysis of single-shot qubit coherence distillation. In this appendix, we will always fix an integer $\alpha\in\{0,1\}$, which we call the \textbf{offset}, and then we will study the values $P_{w-\alpha,w}$ for $\alpha\le w\le N_C$. These values form a ``distribution'', with the caveat that they are not normalized for $\alpha\neq 0$. (For $\alpha=0$, these values are the diagonal entries, and hence they must add up to $1$. For small values of $\alpha$, they add up to only slightly less than $1$.) Hence we can study the moments of these ``distributions'', which turn out to play a crucial role in the asymptotic analysis of single-shot qubit coherence distillation, as shown in Appendices \ref{appendix:asymptotic-expansion}, \ref{appendix:1st-order-optimality}, \ref{appendix:2nd-order-optimality}, and \ref{appendix:equatorial-3rd-order-optimality}.

\vspace{0.5\baselineskip}

So why do we separate the $P_{w,w'}$ values by their offset, and why do we only focus on $\alpha=0$ and $\alpha=1$? As shown in Appendix \ref{appendix:boundary-value-problem}, because our output state is a single qubit, the only $P_{w,w'}$ values we care about are actually $P_{w-1,w}$ and $P_{w,w}$. In general, when a state with $U(1)$ asymmetry is written in the energy eigenbasis, each matrix entry will have an integer frequency at which it time evolves, and this is what we refer to as the ``modes'' of a coherent state. The state $\rho^{\otimes N}$ has modes ranging from $-N$ to $+N$ (and similarly, the Schur-sampled state $\rho_{N_C/2}$ has modes ranging from $-N_C$ to $+N_C$), but our single-qubit output state only has modes $-1,0,+1$. (These ideas are shown more explicitly in the first proof of the Kraus representation of an optimal coherence distillation protocol, which is shown in Appendix \ref{appendix:deriving-kraus-rep}\ref{subsec:distillation-protocol-form-proof-choi-matrices}.) It turns out that $P_{w,w'}$ corresponds precisely to a matrix entry that time evolves with frequency $w'-w$ (in other words, the offset is the same as the frequency of time evolution). However, since the $P_{w,w'}$ values are known to be symmetric in $w$ and $w'$, the distribution with offset $-\alpha$ is the same as the distribution with offset $+\alpha$.  Hence the only values that we need to study are $P_{w-1,w}$ (offset $\alpha=1$) and $P_{w,w}$ (offset $\alpha=0$).

\vspace{0.5\baselineskip}

In general, we would like to study centered moments of these distributions. Furthermore, it is a little bit nicer to consider moments of $w/N_C$, rather than just $w$ itself, since that helps make the decay of higher moments more apparent. As a result, it is natural to define the following quantity, which is the mean (first moment) of $w/N_C$ under the distribution with offset $0$:
\begin{equation}
    \mu \coloneqq \sum_{w=0}^{N_C}\frac{w}{N_C}P_{w,w}.
\end{equation}
Since the $P_{w,w}$ values concentrate around $w\approx\mu N$, the performance of a coherence distillation protocol is controlled primarily by how it acts in the vicinity of $w\approx\mu N$. Therefore, it makes sense to define this quantity as an anchor point and optimize the behavior of the protocol in this range. We explain how to do this more systematically in Appendix \ref{appendix:asymptotic-expansion}, and we carry out that optimization of the protocol in Appendices \ref{appendix:1st-order-optimality}, \ref{appendix:2nd-order-optimality}, and \ref{appendix:equatorial-3rd-order-optimality}.

\vspace{0.5\baselineskip}

As we will see in Appendix \ref{appendix:understanding-P-vals}\ref{subsec:P-vals-moments-offset0}, the quantity $\mu$ we defined above has the following form (where we will use ``e.s.e.'' from now on as an abbreviation for ``exponentially small error''):
\begin{equation}
    \mu = \frac{1 - C_{\text{in}}}{2} + \frac{C_{\text{in}}(1-\lambda)}{2\lambda}\frac{1}{N_C} + \text{e.s.e.}
\end{equation}
Based on this, we can define the $p^{\text{th}}$ centered moment of the distribution with offset $\alpha$ as follows:
\begin{equation}
    \mM_p^{(\alpha)} \coloneqq \sum_{w=0}^{N_C}\left(\frac{w-\frac{\alpha}{2}}{N_C}-\mu\right)^pP_{w-\alpha,w}.
\end{equation}
Notice that we subtract $\frac{\alpha}{2}$ from $w$ because we want the quantity whose moments we are computing to be symmetric in the indices $w-\alpha$ and $w$.

\appsubsec{Moments of the $P_{w,w}$ Values (Offset $\alpha=0$)}
{subsec:P-vals-moments-offset0}

Now that we have defined the moments of the $P_{w-\alpha,w}$ values, we are ready to state our first main result, which concerns the centered moments for offset $\alpha=0$:

\begin{lemma}[Centered moments of the $P_{w,w}$ values]
\label{lem:centered-moments-offset-0}
The centered moments for offset $\alpha=0$ take the following values up to $\Theta(N_C^{-2})$ precision:
\begin{align}
    \mM_0^{(0)} &= 1 \\
    \mM_1^{(0)} &= 0 \\
    \mM_2^{(0)} &= \frac{S_{\text{in}}^2}{4\lambda^2}\frac{1}{N_C} + \frac{(1-\lambda)\left(\left(C_{\text{in}}^2 - S_{\text{in}}^2\right) + C_{\text{in}}^2\lambda\right)}{4\lambda^2}\frac{1}{N_C^2} + \text{e.s.e.} \\
    \mM_3^{(0)} &= \frac{C_{\text{in}}S_{\text{in}}^2(3 - \lambda^2)}{8\lambda^2}\frac{1}{N_C^2} + O\left(N_C^{-3}\right) \\
    \mM_4^{(0)} &= \frac{3S_{\text{in}}^4}{16\lambda^2}\frac{1}{N_C^2} + O\left(N_C^{-3}\right).
\end{align}
\end{lemma}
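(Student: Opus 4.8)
The plan is to compute the centered moments $\mM_p^{(0)}$ directly from the explicit single-summation formula for the $P_{w,w}$ values established in Corollary \ref{cor:P-val-single-summation-equatorial} (or more generally Lemma \ref{lem:P-val-single-summation} with $\alpha=0$), reducing everything to expectations under an explicit discrete distribution. The first two entries, $\mM_0^{(0)}=1$ and $\mM_1^{(0)}=0$, are immediate: the $P_{w,w}$ are the diagonal entries of a density operator restricted to the symmetric subspace, so they sum to the trace $1$; and the centering point $\mu$ is \emph{defined} to be $\sum_w (w/N_C)P_{w,w}$, so the first centered moment vanishes identically. No asymptotics are needed for these two.

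For the higher moments the approach I would take is to recognize the $P_{w,w}$ distribution as (up to the known normalization prefactor) a sum of binomial-type contributions, and to pass to a generating-function or moment-transform computation. Concretely, I would first establish the stated formula for $\mu$ itself, namely $\mu = \tfrac{1-C_{\text{in}}}{2} + \tfrac{C_{\text{in}}(1-\lambda)}{2\lambda}\tfrac{1}{N_C} + \text{e.s.e.}$, since every subsequent centered moment is taken about this point. The cleanest route is probably to compute the raw (uncentered) moments $\sum_w (w/N_C)^p P_{w,w}$ for $p=1,2,3,4$ and then convert to centered moments by the standard binomial expansion $\mM_p^{(0)} = \sum_{k}\binom{p}{k}(-\mu)^{p-k}\,\Ebb[(w/N_C)^k]$. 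The raw moments are accessible because, as the derivation of Lemma \ref{lem:P-val-single-summation} shows, $\rho^{\otimes N_C}$ expands in Pauli strings and the quantity $\sum_w w^p P_{w,w}$ is essentially $\Tr[\rho^{\otimes N_C} (J_z\text{-like operator})^p]$ restricted to the top-$j$ sector; the top-sector projection contributes only the exponentially small corrections flagged by the ``e.s.e.'' notation. I would then expand each raw moment as a power series in $N_C^{-1}$, tracking terms to $\Theta(N_C^{-2})$ precision.

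The key quantitative inputs are the leading variance scale $\sigma^2 \sim \tfrac{S_{\text{in}}^2}{4\lambda}N_C^{-1}$ (consistent with the asymptotic normality of the Schur-sampling distribution noted after Lemma \ref{lem:NC-moments-exp-small-error}), together with the $\Theta(N_C^{-2})$ corrections to the variance and the leading $\Theta(N_C^{-2})$ values of the third and fourth centered moments. The asymptotic-Gaussian heuristic predicts $\mM_3^{(0)}=O(N_C^{-2})$ and $\mM_4^{(0)} \approx 3(\sigma^2)^2 = \tfrac{3S_{\text{in}}^4}{16\lambda^2}N_C^{-2}$, which matches the claimed $\mM_4^{(0)}$ leading term; but I would verify these from the exact formula rather than relying on Gaussianity, since the stated precision ($\Theta(N_C^{-2})$ with exponentially small error in $\mM_2^{(0)}$) is finer than a Gaussian approximation alone guarantees. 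The cleanest way to get the exponentially small error bound on $\mM_2^{(0)}$ is to argue, as in Appendix \ref{appendix:angular-momentum-moments}\ref{subsec:NC-moments-ese}, that the top-$j$ projector only modifies polynomial-in-$N_C$ expressions by $O((1-\lambda^2)^{N_C/2})$ terms, so the offset-$0$ moments inherit the ``polynomial plus e.s.e.'' structure.

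The main obstacle I anticipate is the bookkeeping in extracting the $\Theta(N_C^{-2})$ \emph{subleading} coefficient of $\mM_2^{(0)}$, namely $\tfrac{(1-\lambda)((C_{\text{in}}^2-S_{\text{in}}^2)+C_{\text{in}}^2\lambda)}{4\lambda^2}$. This requires simultaneously (i) the $\Theta(N_C^{-2})$ correction to the raw second moment, (ii) the $\Theta(N_C^{-1})$ correction term $\tfrac{C_{\text{in}}(1-\lambda)}{2\lambda}N_C^{-1}$ in $\mu$ entering through the $-\mu^2$ and $-2\mu\Ebb[w/N_C]$ cross terms, and (iii) careful cancellation of the $\Theta(N_C^0)$ and $\Theta(N_C^{-1})$ pieces, which must vanish since $\mM_2^{(0)}$ starts at order $N_C^{-1}$. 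Getting these cross terms and cancellations exactly right — while confirming that the residual error is genuinely exponentially small rather than merely $O(N_C^{-3})$ — is the delicate part; the leading terms of all five moments, by contrast, should fall out routinely from the variance scale and the Gaussian-moment pattern.
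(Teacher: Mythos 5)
Your handling of $\mM_0^{(0)}=1$ and $\mM_1^{(0)}=0$ is fine and matches the paper, and your raw-to-centered conversion is standard. The genuine gap is in the engine you propose for the higher raw moments. You claim that $\sum_w w^p P_{w,w}$ is ``essentially $\Tr[\rho^{\otimes N_C}(J_z\text{-like operator})^p]$ restricted to the top-$j$ sector,'' with the top-sector projection contributing only exponentially small corrections. That claim is false, and it is exactly the point where the $1/\lambda$ factors in the lemma originate. The $P_{w,w}$ are diagonal entries of the \emph{normalized} projected state $\rho_{N_C/2} = \Pi_{j=N_C/2}\,\rho^{\otimes N_C}\,\Pi_{j=N_C/2}\,/\,p(j=N_C/2)$, and the probability $p(j=N_C/2)$ is itself exponentially small, so the projection is a drastic reweighting of the Hamming-weight distribution, not a perturbation. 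Concretely, the unprojected product state $\rho^{\otimes N_C}$ has a binomial Hamming-weight distribution with variance $\frac{1-\lambda^2C_{\text{in}}^2}{4}N_C$ (polynomial in $\lambda$), whereas the Schur-sampled state has variance $\frac{S_{\text{in}}^2}{4\lambda}N_C$ (this $4\lambda$ coefficient, appearing in the paper's proof and in Theorem \ref{thm:P-vals-centered-moments-1st-order}, is the correct one); in the equatorial case these are $\frac{N_C}{4}$ versus $\frac{N_C}{4\lambda}$, differing by a factor $1/\lambda$ that no amount of e.s.e.\ bookkeeping can bridge. Your method, as stated, would return the binomial answer, which is wrong at leading order.

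The paper's proof avoids this trap by never computing against the unprojected state: it writes $\mU_p^{(0)} = (-2N_C)^{-p}\Tr\left[\rho_{N_C/2}S_z^p\right]$ and evaluates the trace in the eigenbasis of $\rho_{N_C/2}$, namely the rotated Dicke states $\ket{k_{\hat{n}}^{(s)}}$, where $\rho_{N_C/2}$ has explicit geometric weights proportional to $c_1^kc_0^{N_C-k}$ and $S_z$ acts tridiagonally, so that $\bra{k_{\hat{n}}^{(s)}}S_z^p\ket{k_{\hat{n}}^{(s)}}$ is computed by enumerating ``up-down-middle'' paths of length $p$. The exponentially small errors then arise only from approximating the normalization $c_1^{N_C+1}-c_0^{N_C+1}\approx c_1^{N_C+1}$ and from extending geometric sums over $k$ to infinity; this is simultaneously what produces the e.s.e.\ control and the $\lambda$-denominators. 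If you want to salvage your route, you would have to extract asymptotics directly from the exact single-sum formula of Lemma \ref{lem:P-val-single-summation} (which does encode the projection correctly), but that is a different and substantially harder computation than the Pauli-string trace you describe: Pauli-string traces against $\rho^{\otimes N}$ are the right tool for the $J^2$ moments of Appendix \ref{appendix:angular-momentum-moments}, not for the intra-sector moments needed here.
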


To prove this, we will instead compute moments that are centered at $\frac{N_C}{2}$, instead of at the value $\mu$ stated previously. In particular, define the following quantity:
\begin{equation}
    \mU_p^{(0)} \coloneqq \sum_{w=0}^{N_C}\left(\frac{w}{N_C}-\frac{1}{2}\right)^pP_{w,w}
\end{equation}
The reason this quantity is nice is that, if we consider the total $z$-spin operator
\begin{equation}
    S_z \coloneqq 2J_z = \sum_{j=1}^{N_C}Z_j,
\end{equation}
then this operator satisfies
\begin{equation}
    S_z\ket{w^{(s)}} = (N_C-2w)\ket{w^{(s)}} \quad (0\le w\le N_C).
\end{equation}
Therefore, we can immediately write
\begin{equation}
    \mU_p^{(0)} = \sum_{w=0}^{N_C}\left(\frac{w}{N_C}-\frac{1}{2}\right)^pP_{w,w} = (-2N_C)^{-p}\sum_{w=0}^{N_C}(N_C-2w)^pP_{w,w} = (-2N_C)^{-p}\text{Tr}\left[\rho_{N_C/2}S_z^p\right].
\end{equation}

\vspace{0.5\baselineskip}

Now we need to find another way to compute the above trace. We can do it by having the powers of $S_z$ act on $\rho_{N_C/2}$ instead. Note that $\rho_{N_C/2}$ is diagonal in the basis of $\ket{k_{\hat{n}}^{(s)}}$ states for $0\le k\le N_C$. As a result, we just need to know how $S_z$ acts on such states. Fortunately, the action of $S_z$ on such states is actually relatively simple. This is because, in the spin-$j$ representation of $SU(2)$, $S_z$ is a diagonal operator, while $S_x$ and $S_y$ have entries only on the superdiagonal and subdiagonal. In this way, they only produce interactions between adjacent Hamming weights.

\vspace{0.5\baselineskip}

In particular, since we are considering symmetrized Hamming weight states based on the Bloch vector $\hat{n} = S_{\text{in}}\hat{x} + C_{\text{in}}\hat{z}$, the operator $S_z$ actually acts as $S_{\text{in}}S_x - C_{\text{in}}S_z$ on the basis of $\ket{k_{\hat{n}}^{(s)}}$ states. Therefore, we have the following equation:
\begin{align}
    S_z\ket{k_{\hat{n}}^{(s)}} &= S_{\text{in}}\sqrt{k(N_C-k+1)}\ket{(k-1)_{\hat{n}}^{(s)}} \\
    & \quad - C_{\text{in}}(N_C-2k)\ket{k_{\hat{n}}^{(s)}} \\
    & \quad + S_{\text{in}}\sqrt{(k+1)N_C-k)}\ket{(k+1)_{\hat{n}}^{(s)}}.
\end{align}
Due to the above fact, $S_z^p\ket{k_{\hat{n}}^{(s)}}$ will contain terms from $\ket{(k-p)_{\hat{n}}^{(s)}}$ to $\ket{(k+p)_{\hat{n}}^{(s)}}$. Each term can be produced by some path of length $p$, where at each step, you choose to increase the Hamming weight by $1$, decrease it by $1$, or keep the Hamming weight the same. We will call these options ``up'' (U), ``down'' (D), and ``middle'' (M), respectively. However, only the $\ket{k_{\hat{n}}^{(s)}}$ term will contribute to the trace, which corresponds to all possible paths that have net zero Hamming weight change (in other words, an equal number of U's and D's).

\vspace{0.5\baselineskip}

This yields a general strategy for computing $\mU_p^{(0)}$:
\begin{itemize}
    \item First, tabulate all paths of length $p$ of the form above (each step is drawn from $\{U,D,M\}$, and there are an equal number of U's and D's).
    \item Second, for each such path, write out the multiplicative factor that comes from each step. (For example, each ``up'' step produces a $\sqrt{(k+1)(N_C-k)}$ factor when moving from Hamming weight $k$ to Hamming weight $k+1$.)
    \item Third, multiply all these factors to obtain the overall contribution for that path, and add up these contributions over all the paths. This yields the quantity
    \begin{equation}
        \mT_p^{(0)} \coloneqq \bra{k_{\hat{n}}^{(s)}}S_z^p\ket{k_{\hat{n}}^{(s)}}.
    \end{equation}
    \item Fourth and finally, multiply this quantity by the prefactor
    \begin{equation}
        \bra{k_{\hat{n}}^{(s)}}\rho_{N_C/2}\ket{k_{\hat{n}}^{(s)}} = \frac{c_1-c_0}{c_1^{N_C+1} - c_0^{N_C+1}}c_1^kc_0^{N_C-k},
    \end{equation}
    and sum over all $0\le k\le N_C$.
\end{itemize}
Interestingly, if the input states are equatorial ($\Theta_{\text{in}} = \frac{\pi}{2}$), then the ``middle'' path closes, and only ``up'' and ``down'' steps are permitted. In this setting, the options reduce considerably. For example, $\mU_p^{(0)}$ is automatically equal to $0$ for odd $p$. If the output states are equatorial, then this simplification actually does not help us anyway, since bit-flip symmetry actually renders the $P_{w,w}$ values irrelevant, such that only the $P_{w-1,w}$ values contribute to the fidelity in a nontrivial way. However, if the input states are equatorial and the output states are non-equatorial, then this could provide some simplification.

\vspace{0.5\baselineskip}

Let us begin with the zeroth uncentered moment. Of course, this is actually trivial:
\begin{equation}
    \mU_0^{(0)} = \text{Tr}\left[\rho_{N_C/2}\right] = 1.
\end{equation}
And of course, converting this uncentered moment into a centered moment is also trivial, since $\mM^{(0)} = \mU_0^{(0)}$:
\begin{equation}
    \boxed{\mM_0^{(0)} = 1}.
\end{equation}

\vspace{0.5\baselineskip}

Now we proceed to the calculation of the first moment. By the definition of the centered moments, it is trivial to say that
\begin{equation}
    \boxed{\mM_1^{(0)} = 0}.
\end{equation}
So here we actually only care about computing the uncentered moment $\mU_1^{(0)}$, since that quantity will inform the mean $\mu$, which is then used for the centering for all other moments.

\vspace{0.5\baselineskip}

In this case, the only path of length $1$ that stays at the same Hamming weight is a single M step. As a result, we get an overall factor of $-(N_C-2k)C_{\text{in}}$, leading to the following summation:
\begin{align}
    \mU_1^{(0)} &= (-2N_C)^{-1}\frac{c_1-c_0}{c_1^{N_C+1}-c_0^{N_C+1}}\sum_{k=0}^{N_C}c_1^kc_0^{N_C-k}(-C_{\text{in}})(N_C-2k) \\
    &= \frac{\lambda}{c_1^{N_C+1}-c_0^{N_C+1}}\frac{C_{\text{in}}}{2N_C}\left[-N_C\sum_{k=0}^{N_C}c_0^kc_1^{N_C-k} + 2\sum_{k=0}^{N_C}kc_0^kc_1^{N_C-k}\right].
\end{align}
Note that we also invert the summation order with the mapping $k\mapsto N_C-k$, so that the exponent of $c_0$ increases with $k$, while the exponent of $c_1$ decreases with $k$.

\vspace{0.5\baselineskip}

Now we need to make a number of approximations. Fortunately, all of these approximations will have only exponentially small error. The first approximation is to approximate the denominator in the leading prefactor as follows (we will make this approximation repeatedly):
\begin{equation}
    c_1^{N_C+1} - c_0^{N_C+1} = c_1^{N_C+1}(1 + \text{e.s.e.}).
\end{equation}
The second approximation is to extend all of the summations to $k=\infty$, instead of terminating them at $k=N_C$. All of these summations have a decaying geometric sequence with common ratio $(c_0/c_1)$, times some polynomial of $k$, and these are all summations that can be computed using standard manipulations of the infinite geometric series. Furthermore, truncating any such summation after $N_C$ terms yields only exponentially small error. In particular, for the evaluation of $\mU_1^{(0)}$, we need the following summations:
\begin{align}
    \sum_{k=0}^{\infty}x^k &= \frac{1}{1-x} \\
    \sum_{k=0}^{\infty}kx^k &= \frac{x}{(1-x)^2}.
\end{align}
It follows that
\begin{align}
    \sum_{k=0}^{N_C}c_0^kc_1^{N_C-k} &= \frac{c_1^{N_C+1}}{\lambda}(1 + \text{e.s.e.}) \\
    \sum_{k=0}^{N_C}kc_0^kc_1^{N_C-k} &= \frac{c_1^{N_C+1}c_0}{\lambda^2}(1 + \text{e.s.e.}).
\end{align}
Plugging these results into the formula for $\mU_1^{(0)}$ and simplifying yields
\begin{align}
    \mU_1^{(0)} &= \frac{\lambda}{c_1^{N_C+1}-c_0^{N_C+1}}\frac{C_{\text{in}}}{2N_C}\left[-N_C\sum_{k=0}^{N_C}c_0^kc_1^{N_C-k} + 2\sum_{k=0}^{N_C}kc_0^kc_1^{N_C-k}\right] \\
    &= \frac{\lambda}{c_1^{N_C+1}}(1 + \text{e.s.e.})\frac{C_{\text{in}}}{2N_C}\left[-N_C\left(\frac{c_1^{N_C+1}}{\lambda}\right)(1 + \text{e.s.e.}) + 2\left(\frac{c_1^{N_C+1}c_0}{\lambda^2}\right)(1 + \text{e.s.e.})\right] \\
    &= \frac{C_{\text{in}}}{2N_C}\left[-N_C(1) + 2\left(\frac{c_0}{\lambda}\right)\right] + \text{e.s.e.} \\
    &= -\frac{C_{\text{in}}}{2} + \frac{(1-\lambda)C_{\text{in}}}{2\lambda}\frac{1}{N_C} + \text{e.s.e.}
\end{align}
Finally, note that $\mu$ is related to the uncentered moment $\mU_1^{(0)}$ as follows:
\begin{equation}
    \mu \coloneqq \sum_{w=0}^{N_C}\frac{w}{N_C}P_{w,w} = \mU_1^{(0)} + \frac{1}{2}.
\end{equation}
We conclude that $\mu$, the mean value of $w/N_C$ under the distribution given by the $P_{w,w}$ values, takes the following form:
\begin{equation}
    \boxed{\mu = \frac{1-C_{\text{in}}}{2} + \frac{(1-\lambda)C_{\text{in}}}{2\lambda}\frac{1}{N_C} + \text{e.s.e.}}
\end{equation}

\vspace{0.5\baselineskip}

We now proceed to computing the second uncentered and centered moments. Now, there are $3$ possible up-down-middle paths that have net zero Hamming weight change, and each of them contributes a term as follows:
\begin{align}
    (U,D): &\quad\quad (k+1)(N_C-k)S_{\text{in}}^2 \\
    (M,M): &\quad\quad (N_C-2k)C_{\text{in}}^2 \\
    (D,U): &\quad\quad k(N_C-k+1)S_{\text{in}}^2 \\
\end{align}
Adding these three terms yields
\begin{align}
    \mT_2^{(0)} &\coloneqq \bra{k_{\hat{n}}^{(s)}}S_z^2\ket{k_{\hat{n}}^{(s)}} \\
    &= \frac{1}{2}\left[N_C(N_C+2) - (N_C-2k)^2\right]S_{\text{in}}^2 + (N_C-2k)^2C_{\text{in}}^2 \\
    &= \left[N_C\left(-1 +(N_C+1)C_{\text{in}}^2\right)\right] + \left[2N_C\left(1 - 3C_{\text{in}}^2\right)\right]k + \left[2\left(-1 + 3C_{\text{in}}^2\right)\right]k^2.
\end{align}
This implies that
\begin{align}
    \mU_2^{(0)} &= (-2N_C)^{-2}\frac{c_1-c_0}{c_1^{N_C+1}-c_0^{N_C+1}}\quad \Bigg\{N_C\left(-1 +(N_C+1)C_{\text{in}}^2\right)\sum_{k=0}^{N_C}c_1^{N_C-k}c_0^k \\
    & \quad + 2N_C\left(1 - 3C_{\text{in}}^2\right)\sum_{k=0}^{N_C}kc_1^{N_C-k}c_0^k + 2\left(-1 + 3C_{\text{in}}^2\right)\sum_{k=0}^{N_C}k^2c_1^{N_C-k}c_0^k\Bigg\}.
\end{align}
Once again, we now need to make some approximations. We actually use exactly the same approximations  that we used for $\mU_1^{(0)}$, except now we add one extra. In particular, the infinite sum
\begin{equation}
    \sum_{k=0}^{\infty}k^2x^k = \frac{x(1+x)}{(1-x)^3}
\end{equation}
implies that
\begin{equation}
    \sum_{k=0}^{N_C}k^2c_0^kc_1^{N_C-k} = \frac{c_1^{N_C+1}c_0}{\lambda^3}(1 + \text{e.s.e.}).
\end{equation}
Plugging in all these approximations and simplifying yields 
\begin{equation}
    \mU_2^{(0)} = \frac{C_{\text{in}}^2}{4} + \left[\frac{1 + (-3+2\lambda)C_{\text{in}}^2}{2\lambda}\right]\frac{1}{N_C} + \left[\frac{(1-\lambda)(-2 + 3C_{\text{in}}^2)}{8\lambda^2}\right]\frac{1}{N_C^2} + \text{e.s.e.}
\end{equation}
Finally, we wish to relate the uncentered moment $\mU_2^{(0)}$ to the centered moment $\mM_2^{(0)}$. We can do so using the formula $\text{Var}[X] = \Ebb[X^2] - \Ebb[X]^2$, which implies that
\begin{equation}
    \mM_2^{(0)} = \mU_2^{(0)} - \left(\mU_1^{(0)}\right)^2.
\end{equation}
Simplifying this quantity reveals that the second centered moment of the $P_{w,w}$ values looks as follows:
\begin{equation}
    \boxed{\mM_2^{(0)} = \frac{S_{\text{in}}^2}{4\lambda}\frac{1}{N_C} + \frac{(1-\lambda)\left(\left(C_{\text{in}}^2 - S_{\text{in}}^2\right) + C_{\text{in}}^2\lambda\right)}{4\lambda^2}\frac{1}{N_C^2} + \text{e.s.e.}}
\end{equation}
This completes the proof of the $\mM_2^{(0)}$ formula that we stated in Lemma \ref{lem:centered-moments-offset-0}. The first term in this formula can actually be understood as a straightforward consequence of the fact that the $P_{w,w}$ values, as a distribution for the Hamming weight $w$, approximate a Gaussian with mean $\sim\frac{1-C_{\text{in}}}{2}N_C$ and variance $\sim\frac{S_{\text{in}}^2}{4\lambda}N_C$. However, the second term in this formula requires an understanding of the deviations away from this Gaussian behavior.

\vspace{0.5\baselineskip}

Now we proceed to the third moment. Here are all the up-down-middle paths of length $3$ and net zero Hamming weight change, along with their respective contributions:
\begin{align}
    (M,M,M): &\quad\quad -(N_C-2k)^3C_{\text{in}}^3 \\
    (U,M,D): &\quad\quad -(k+1)(N_C-k)(N_C-2(k+1))C_{\text{in}}S_{\text{in}}^2 \\
    (D,M,U): &\quad\quad -k(N_C-k+1)(N_C-2(k-1))C_{\text{in}}S_{\text{in}}^2 \\
    (U,D,M): &\quad\quad -(k+1)(N_C-k)(N_C-2k)C_{\text{in}}S_{\text{in}}^2 \\
    (M,D,U): &\quad\quad -(k+1)(N_C-k)(N_C-2k)C_{\text{in}}S_{\text{in}}^2 \\
    (D,U,M): &\quad\quad -k(N_C-k+1)(N_C-2k)C_{\text{in}}S_{\text{in}}^2 \\
    (M,D,U): &\quad\quad -k(N_C-k+1)(N_C-2k)C_{\text{in}}S_{\text{in}}^2
\end{align}
When we add these quantities together, we obtain
\begin{align}
    \mT_3^{(0)} &\coloneqq \bra{k_{\hat{n}}^{(s)}}S_z^3\ket{k_{\hat{n}}^{(s)}} \\
    &= -C_{\text{in}}\left(4k + 12k^3 - 2N_C - 6kN_C - 18k^2N_C + 3N_C^2 + 6kN_C^2\right) \\
    & \quad - C_{\text{in}}^3\left(-4k - 20k^3 + 2N_C + 6kN_C + 30k^2N_C - 3N_C^2 - 12kN_C^2 + N_C^3\right).
\end{align}
We now plug this quantity into our infinite summation:
\begin{equation}
    \mU_3^{(0)} = (-2N_C)^3\frac{c_1-c_0}{c_1^{N_C+1} - c_0^{N_C+1}}\sum_{k=0}^{N_C}c_1^kc_0^{N_C-k}\mT_3^{(0)}.
\end{equation}
After making the usual approximations and invoking the formulas for $\sum_{k=0}^{\infty}k^px^k$ for $p=0,1,2,3$, we eventually obtain
\begin{align}
    \mU_3^{(0)} &= \frac{C_{\text{in}}^3}{8} + \frac{3C_{\text{in}}\left[1 + C_{\text{in}}^2(-2+\lambda)\right]}{8\lambda}\frac{1}{N_C} \\
    & \quad + \frac{C_{\text{in}}\left[(-9 + 6\lambda + \lambda^2) + C_{\text{in}}^2(15 - 12\lambda - \lambda^2)\right]}{8\lambda^2}\frac{1}{N_C^2} \\
    & \quad + \frac{C_{\text{in}}(1-\lambda)\left[(9 - \lambda^2) + 3C_{\text{in}}^2(5 - \lambda^2)\right]}{8\lambda^3}\frac{1}{N_C^3} + \text{e.s.e.}
\end{align}
Finally, we wish to relate the uncentered moment $\mU_3^{(0)}$ to the centered moment $\mM_3^{(0)}$. We can do so using the following formula:
\begin{align}
    \Ebb\left[(X - \Ebb[X])^3\right] &= \Ebb[X^3] - 3\Ebb[X^2]\Ebb[X] + 3\Ebb[X]\Ebb[X]^2 - \Ebb[X]^3 \\
    &= \Ebb[X^3] - 3\Ebb[X^2]\Ebb[X] + 2\Ebb[X]^3,
\end{align}
which in turn implies that
\begin{equation}
    \mM_3^{(0)} = \mU_3^{(0)} - 3\mU_2^{(0)}\mU_1^{(0)} + 2\left(\mU_1^{(0)}\right)^3.
\end{equation}
Plugging in the values that we have already computed yields
\begin{equation}
    \boxed{\mM_3^{(0)} = \frac{C_{\text{in}}S_{\text{in}}^2(3 - \lambda^2)}{8\lambda^2}\frac{1}{N_C^2} + \frac{C_{\text{in}}(1-\lambda)\left[(-6 - 3\lambda + \lambda^2) + C_{\text{in}}^2(8 + 5\lambda - \lambda^2)\right]}{8\lambda^3}\frac{1}{N_C^3} + \text{e.s.e.}}
\end{equation}
We do not actually use the $N_C^{-3}$ term for anything in this paper, so in Appendix \ref{appendix:2nd-order-optimality}, we use the simplified formula
\begin{equation}
    \boxed{\mM_3^{(0)} = \frac{C_{\text{in}}S_{\text{in}}^2(3 - \lambda^2)}{8\lambda^2}\frac{1}{N_C^2} + O\left(N_C^{-3}\right)}.
\end{equation}
This completes the proof of the $\mM_3^{(0)}$ formula that we stated in Lemma \ref{lem:centered-moments-offset-0}. The curious reader may be interested to know that the computation of this third moment was the breaking point that finally compelled the first author of this paper to swallow his pride and use Mathematica.

\vspace{0.5\baselineskip}

Finally, we proceed to the fourth moment. We begin by tabulating all the up-down-middle paths of length $4$ and net zero Hamming weight change, along with their respective contributions. Now the number of paths is considerably larger. They can be organized as follows: $1$ path with four M's; 12 paths with one U, one D, and two M's; and $6$ paths with two U's and two D's. Here is the full list:
\begin{align}
    (M,M,M,M): &\quad\quad (N_C-2k)^4C_{\text{in}}^4 \\
    (M,M,U,D): &\quad\quad (k+1)(N_C-k)(N_C-2k)^2C_{\text{in}}^2S_{\text{in}}^2 \\
    (M,M,D,U): &\quad\quad k(N_C-k+1)(N_C-2k)^2C_{\text{in}}^2S_{\text{in}}^2 \\
    (U,D,M,M): &\quad\quad (k+1)(N_C-k)(N_C-2k)^2C_{\text{in}}^2S_{\text{in}}^2 \\
    (D,U,M,M): &\quad\quad k(N_C-k+1)(N_C-2k)^2C_{\text{in}}^2S_{\text{in}}^2 \\
    (M,U,D,M): &\quad\quad (k+1)(N_C-k)(N_C-2k)^2C_{\text{in}}^2S_{\text{in}}^2 \\
    (M,D,U,M): &\quad\quad k(N_C-k+1)(N_C-2k)^2C_{\text{in}}^2S_{\text{in}}^2 \\
    (M,U,M,D): &\quad\quad (k+1)(N_C-k)(N_C-2k)(N_C-2(k+1))C_{\text{in}}^2S_{\text{in}}^2 \\
    (M,D,M,U): &\quad\quad k(N_C-k+1)(N_C-2(k-1))(N_C-2k)C_{\text{in}}^2S_{\text{in}}^2 \\
    (U,M,M,D): &\quad\quad (k+1)(N_C-k)(N_C-2(k+1))^2C_{\text{in}}^2S_{\text{in}}^2 \\
    (D,M,M,U): &\quad\quad k(N_C-k+1)(N_C-2(k-1))^2C_{\text{in}}^2S_{\text{in}}^2 \\
    (U,M,D,M): &\quad\quad (k+1)(N_C-k)(N_C-2k)(N_C-2(k+1))C_{\text{in}}^2S_{\text{in}}^2 \\
    (D,M,U,M): &\quad\quad k(N_C-k+1)(N_C-2(k-1))(N_C-2k)C_{\text{in}}^2S_{\text{in}}^2 \\
    (U,U,D,D): &\quad\quad (k+1)(N_C-k)(k+2)(N_C-k-1)S_{\text{in}}^4 \\
    (U,D,U,D): &\quad\quad [(k+1)(N_C-k)]^2S_{\text{in}}^4 \\
    (U,D,D,U): &\quad\quad k(N_C-k+1)(k+1)(N_C-k)S_{\text{in}}^4 \\
    (D,U,U,D): &\quad\quad k(N_C-k+1)(k+1)(N_C-k)S_{\text{in}}^4 \\
    (D,U,D,U): &\quad\quad [k(N_C-k+1)]^2S_{\text{in}}^4 \\
    (D,D,U,U): &\quad\quad k(N_C-k+1)(k-1)(N_C-k+2)S_{\text{in}}^4
\end{align}
When we add these quantities together, we obtain
\begin{align}
    \mT_4^{(0)} &\coloneqq \bra{k_{\hat{n}}^{(s)}}S_z^4\ket{k_{\hat{n}}^{(s)}} \\
    &= \Big\{10k^2 + 6k^4 - 2N_C - 10kN_C + 6k^2N_C - 12k^3N_C \\
    & \quad + 3N_C^2 + 6kN_C^2 + 6k^2N_C^2\Big\} \\
    & \quad + C_{\text{in}}^2\Big\{-60k^2 - 60k^4 + 8N_C + 60kN_C + 36k^2N_C + 120k^3N_C \\
    & \quad - 14N_C^2 - 36kN_C^2 - 72k^2N_C^2 + 6N_C^3 + 12kN_C^3\Big\} \\
    & \quad + C_{\text{in}}^4\Big\{50k^2 + 70k^4 - 6N_C - 50kN_C - 30k^2N_C - 140k^3N_C \\
    & \quad + 11N_C^2 + 30kN_C^2 + 90k^2N_C^2 - 6N_C^3 - 20kN_C^3 + N_C^4\Big\}.
\end{align}
We now plug this quantity into our infinite summation:
\begin{equation}
    \mU_4^{(0)} = (-2N_C)^4\frac{c_1-c_0}{c_1^{N_C+1} - c_0^{N_C+1}}\sum_{k=0}^{N_C}c_1^kc_0^{N_C-k}\mT_4^{(0)}.
\end{equation}
After making the usual approximations and invoking the formulas for $\sum_{k=0}^{\infty}k^px^k$ for $p=0,1,2,3,4$, we eventually obtain
\begin{align}
    \mU_4^{(0)} &= \frac{C_{\text{in}}^4}{16} + \frac{C_{\text{in}}^2\left[3 + C_{\text{in}}^2(-5 + 2\lambda)\right]}{8\lambda}\frac{1}{N_C} \\
    & \quad + \frac{3 + 2C_{\text{in}}^2(-18 + 9\lambda + 2\lambda^2) + C_{\text{in}}^4(45 - 30\lambda - 4\lambda^2)}{16\lambda^2}\frac{1}{N_C^2} \\
    & \quad + \frac{(-9 + 6\lambda + \lambda^2) + 2C_{\text{in}}^2(45 - 36\lambda - 9\lambda^2 + 4\lambda^3) + C_{\text{in}}^4(-105 + 90\lambda + 25\lambda^2 - 16\lambda^3)}{16\lambda^3}\frac{1}{N_C^3} \\
    & \quad + \frac{(1-\lambda)\left[(9 - \lambda^2) + 30C_{\text{in}}^2(-3 + \lambda^2) + 15C_{\text{in}}^4(7 - 3\lambda^2)\right]}{16\lambda^4}\frac{1}{N_C^4} + \text{e.s.e.}
\end{align}
Finally, we wish to relate the uncentered moment $\mU_4^{(0)}$ to the centered moment $\mM_4^{(0)}$. We can do so using the following formula:
\begin{align}
    & \quad \Ebb\left[(X - \Ebb[X])^4\right] \\
    &= \Ebb[X^4] - 4\Ebb[X^3]\Ebb[X] + 6\Ebb[X^2]\Ebb[X]^2 - 4\Ebb[X]\Ebb[X]^3 + \Ebb[X]^4 \\
    &= \Ebb[X^4] - 4\Ebb[X^3]\Ebb[X] + 6\Ebb[X^2]\Ebb[X]^2 - 3\Ebb[X]^4,
\end{align}
which in turn implies that
\begin{equation}
    \mM_4^{(0)} = \mU_4^{(0)} - 4\mU_3^{(0)}\mU_1^{(0)} + 6\mU_2^{(0)}\left(\mU_1^{(0)}\right)^2 - 3\left(\mU_1^{(0)}\right)^4.
\end{equation}
Plugging in the values that we have already computed yields
\begin{subequations}
\begin{empheq}[box=\widefbox]{align}
    \mM_4^{(0)} &= \frac{3S_{\text{in}}^4}{16\lambda^2}\frac{1}{N_C^2} + \frac{S_{\text{in}}^2\left[(-9 + 6\lambda + \lambda^2) + 3C_{\text{in}}^2(9 - 2\lambda - 5\lambda^2)\right]}{16\lambda^3}\frac{1}{N_C^3} \\
    & \quad + \frac{(1-\lambda)\left[(9 - \lambda^2) + 4C_{\text{in}}^2(-15 - 6\lambda + 5\lambda^2 + \lambda^3) + 3C_{\text{in}}^4(20 + 11\lambda - 8 \lambda^2 - 3\lambda^3)\right]}{16\lambda^4}\frac{1}{N_C^4} + \text{e.s.e.}
\end{empheq}
\end{subequations}
We do not actually use the $N_C^{-3}$ and $N_C^{-4}$ terms for anything in this paper, so in Appendix \ref{appendix:2nd-order-optimality}, we use the simplified formula
\begin{equation}
    \boxed{\mM_4^{(0)} = \frac{3S_{\text{in}}^4}{16\lambda^2}\frac{1}{N_C^2} + O\left(N_C^{-3}\right)}.
\end{equation}
This completes the proof of the $\mM_4^{(0)}$ formula that we stated in Lemma \ref{lem:centered-moments-offset-0}. At this level of precision, the formula can actually be understood directly from the Gaussian approximation for the $P_{w,w}$ values.

\appsubsec{Moments of the $P_{w-1,w}$ Values (Offset $\alpha=1$) in the Equatorial Case}
{subsec:P-vals-moments-offset1-equatorial}

Our second main result shows the centered moments for offset $\alpha=1$ in the equatorial case, up to $\Theta(N_C^{-3})$ precision:

\begin{lemma}[Centered moments of the $P_{w-1,w}$ values in the equatorial case]
\label{lem:equatorial-offset1-moments0246}
In the equatorial special case, the even centered moments of the $P_{w-1,w}$ values take the following values up to $\Theta\left(N^{-3}\right)$ precision:
\begin{align}
    \mM_0^{(1)} &= 1 + \left(-\frac{1}{2\lambda}\right)\frac{1}{N_C} + \left(\frac{-3 + 4\lambda}{8\lambda^2}\right)\frac{1}{N_C^2} + \left(\frac{-9 + 12\lambda - 10\lambda^2}{16\lambda^3}\right)\frac{1}{N_C^3} + O\left(N_C^{-4}\right) \\
    \mM_2^{(1)} &= \left(\frac{1}{4\lambda}\right)\frac{1}{N_C} + \left(\frac{-3 + 2\lambda}{8\lambda^2}\right)\frac{1}{N_C^2} + \left(-\frac{3 + 4\lambda^2}{32\lambda^3}\right)\frac{1}{N_C^3} + O\left(N_C^{-4}\right) \\
    \mM_4^{(1)} &= \left(\frac{3}{16\lambda^2}\right)\frac{1}{N_C^2} + \left(\frac{-21 + 12\lambda + 2\lambda^2}{32\lambda^3}\right)\frac{1}{N_C^3} + O\left(N_C^{-4}\right) \\
    \mM_6^{(1)} &= \left(\frac{15}{64\lambda^3}\right)\frac{1}{N_C^3} + O\left(N_C^{-4}\right).
\end{align}
\end{lemma}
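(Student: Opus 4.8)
The plan is to compute the offset-one moments by the same spin-operator technique used for the offset-zero moments in Lemma~\ref{lem:centered-moments-offset-0}, exploiting two equatorial simplifications. First, since $C_{\text{in}}=0$, the mean of the offset-zero distribution collapses to $\mu=\tfrac12+\text{e.s.e.}$, so all centering is about $w=N_C/2$. Second, bit-flip symmetry $X^{\otimes N_C}$ fixes the equatorial Schur-sampled state and sends $|w^{(s)}\rangle\mapsto|(N_C-w)^{(s)}\rangle$, hence $P_{w-1,w}=P_{N_C-w,\,N_C-w+1}$; under the index reflection $w\mapsto N_C-w+1$ the centered variable $z_w=\tfrac{w-1/2}{N_C}-\tfrac12$ flips sign while $P_{w-1,w}$ is invariant. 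This forces all odd $\mM_p^{(1)}$ to vanish and pins the centering point, so only $p=0,2,4,6$ remain and (since $z=O(N_C^{-1/2})$) the stated precision follows: three subleading corrections for $p=0,2,4$ and only the leading term for $p=6$.

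The key reduction is as follows. Writing $u\coloneqq N_C-2w+1$, so that $z_w=-u/(2N_C)$ and $\mM_p^{(1)}=(-2N_C)^{-p}\sum_w u^p P_{w-1,w}$, I would first access the square-root-weighted sums $W_q\coloneqq\sum_w u^q\sqrt{w(N_C-w+1)}\,P_{w-1,w}$, which are exactly computable. The engine is the operator identity $\Tr[\rho_{N_C/2}\,S_x\,S_z^p]=\sum_w\big[(N_C-2w)^p+(N_C-2w+2)^p\big]\sqrt{w(N_C-w+1)}\,P_{w-1,w}$, obtained by letting $S_x=\sum_j X_j$ act between adjacent $Z$-Hamming weights and reindexing. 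Because the equatorial $\rho_{N_C/2}$ is diagonal in the $X$-basis and $S_x$ is diagonal there, the left-hand trace equals $\frac{c_1-c_0}{c_1^{N_C+1}-c_0^{N_C+1}}\sum_k c_1^k c_0^{N_C-k}(N_C-2k)\,\mT_p^{(0)}(k)$, which reuses the very path-counting quantities $\mT_p^{(0)}(k)=\langle k_{\hat n}^{(s)}|S_z^p|k_{\hat n}^{(s)}\rangle$ from Lemma~\ref{lem:centered-moments-offset-0} (with only up/down steps in the equatorial case) and the same geometric sums $\sum_k k^m (c_0/c_1)^k$, now carrying one extra linear factor $(N_C-2k)$. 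Expanding $(u-1)^p+(u+1)^p$ then gives a triangular linear system expressing each $W_p$ through these traces. Finally I would strip off the square-root weight by the concentration-region expansion $[w(N_C-w+1)]^{-1/2}=2\big[(N_C+1)^2-u^2\big]^{-1/2}$ in powers of $u^2/(N_C+1)^2$, converting $\sum_w u^p P_{w-1,w}$ into the finite combination $\tfrac{2}{N_C+1}\sum_r\binom{-1/2}{r}(-1)^r (N_C+1)^{-2r}W_{p+2r}$ up to exponentially small tail error, and normalize by $(-2N_C)^{-p}$.

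The main obstacle is bookkeeping at the demanded precision rather than any conceptual difficulty. Obtaining $\mM_0^{(1)}$ to $\Theta(N_C^{-3})$ requires three subleading corrections, which forces $W_0,W_2,W_4,W_6$ — hence $\mT_p^{(0)}(k)$ for $p$ up to $6$ together with the associated $\sum_k k^m (c_0/c_1)^k$ — to matching order, while the square-root series must be carried to enough powers of $u^2/N_C^2$; it is precisely the interplay of these two expansions where terms proliferate (the excerpt already notes that Mathematica became necessary at the third offset-zero moment). Two further checks must be discharged: that every truncation of the geometric sums and of the square-root series contributes only exponentially small error, guaranteed by the common ratio $c_0/c_1<1$; and that the leading behavior reproduces the Gaussian profile with mean $N_C/2$, variance $\tfrac{N_C}{4\lambda}$, and amplitude $\mM_0^{(1)}=1-\tfrac{1}{2\lambda N_C}+\cdots$, in agreement with Theorem~\ref{thm:P-vals-centered-moments-1st-order} and with the general offset-one formula of Appendix~\ref{subsec:P-vals-moments-offset1-general} specialized to $C_{\text{in}}=0$, $S_{\text{in}}=1$.
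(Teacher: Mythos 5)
Your proposal is correct and follows essentially the same route as the paper's proof: both evaluate $\Tr\left[\rho_{N_C/2}S_z^pS_x\right]$ for $p=0,2,4,6$ in two ways (once via the $P_{w-1,w}$ expansion carrying the weight $\sqrt{w(N_C-w+1)}$, once via up-down path counting in the $X$-eigenbasis together with the geometric sums), and then remove the square-root weight through a triangular inversion; your reorganization (solving for the weighted sums $W_q$ first and then expanding $\left[w(N_C-w+1)\right]^{-1/2}$, versus the paper's forward expansion into the quantities $\mV_{2p}^{(1)}$ followed by inversion to $\mU_{2p}^{(1)}$ and rescaling to $\mM_{2p}^{(1)}$) is equivalent bookkeeping. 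The only details to pin down are the sign of the $S_x$ eigenvalue on $\ket{k_{X}^{(s)}}$ (with the paper's convention it is $2k-N_C$, not $N_C-2k$) and the claim that truncating the inverse-square-root series leaves an exponentially small tail: each dropped term costs one power of $N_C^{-1}$ (the exponentially small part comes only from the edge region where the $P_{w-1,w}$ are negligible), which suffices for the stated precision but is not exponentially small.
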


It is worth highlighting some consistent patterns in these formulas:
\begin{itemize}
    \item The leading-order asymptotic of $\mM_{2p}^{(1)}$ is given by $\mM_{2p}^{(1)} \sim \frac{(2p-1)!!}{(4\lambda)^p}\frac{1}{N_C^p}$. This is a direct consequence of the fact that the ``distribution'' of the $P_{w-1,w}$ values tends to a Gaussian with normalization $\sim 1$ and variance $\sim\frac{1}{4\lambda N_C}$.
    \item In general, each coefficient of $N_C^{-p}$ is a rational function of $\lambda$, where the denominator is a power of $2$ times $\lambda^p$, and the numerator is an integer-coefficient polynomial in $\lambda$ of degree at most $p-1$.
    \item The coefficient of $N_C^{-p}$ always has $\lambda^p$ in the denominator. This is a result of the fact that the quantity $\lambda N_C \sim \lambda^2N$ is an indicator of whether you should consider yourself to be in the ``large $N$'' asymptotic regime for this problem.
\end{itemize}

Let us now explain how we derive these formulas. The strategy we show here for the equatorial case will also apply to the general case, but it will become noticeably more complicated, which is why we first show it in this simpler setting. In Appendix \ref{appendix:understanding-P-vals}\ref{subsec:P-vals-moments-offset1-general}, we will explain how this strategy has to be adjusted for the general case.

\vspace{0.5\baselineskip}

First, define the total spin operator in the $\hat{n}$ direction:
\begin{equation}
    S_{\hat{n}} \coloneqq 2J_{\hat{n}} = \sum_{j=1}^{N_C}\left(\hat{n}\cdot\vec{\sigma}\right)_j.
\end{equation}
The central idea is that we will use two different methods to evaluate the quantity
\begin{equation}
    \text{Tr}\left[\rho_{N_C/2}S_z^pS_{\hat{n}}\right].
\end{equation}

\vspace{0.5\baselineskip}

For the first evaluation method, let us see how $S_{\hat{n}}$ acts on a symmetrized Hamming weight state in the computational basis:
\begin{align}
    S_{\hat{n}}\ket{w^{(s)}} &= S_{\text{in}}\sqrt{(w+1)(N_C-w)}\ket{(w+1)^{(s)}} \\
    & \quad + C_{\text{in}}(N_C-2w)\ket{w^{(s)}} \\
    & \quad + S_{\text{in}}\sqrt{w(N_C-w+1)}\ket{(w-1)^{(s)}}.
\end{align}
In the equatorial case $\hat{n} = \hat{x}$, this simplifies to
\begin{equation}
    S_x\ket{w^{(s)}} = \sqrt{(w+1)(N_C-w)}\ket{(w+1)^{(s)}} + \sqrt{w(N_C-w+1)}\ket{(w-1)^{(s)}}.
\end{equation}
Furthermore, the operator $S_z^p$ acts in a predictable way on such states:
\begin{equation}
    S_z^p\ket{w^{(s)}} = (N_C-2w)^p\ket{w^{(s)}}.
\end{equation}
Therefore, if we evaluate the trace of interest by expanding it in the symmetrized computational basis, we obtain the following:
\begin{align}
    & \quad \text{Tr}\left[\rho_{N_C/2}S_z^pS_x\right] \\
    &= \sum_{w=0}^{N_C}\bra{w^{(s)}}\rho_{N_C/2}S_z^pS_x\ket{w^{(s)}} \\
    &= \sum_{w=1}^{N_C}\sqrt{w(N_C-w+1)}\left[(N_C-2(w-1))^p + (N_C-2w)^p\right]\bra{(w-1)^{(s)}}\rho_{N_C/2}\ket{w^{(s)}} \\
    &= \sum_{w=1}^{N_C}\sqrt{w(N_C-w+1)}\left[(N_C-2(w-1))^p + (N_C-2w)^p\right]P_{w-1,w}.
\end{align}
We now define for convenience
\begin{equation}
    z \coloneqq \frac{1}{N_C+1}\left[\left(w - \frac{1}{2}\right) - \frac{N_C}{2}\right] = \frac{w}{N_C+1} - \frac{1}{2}.
\end{equation}
The result is that the quantity
\begin{equation}
    (N_C-2(w-1))^p + (N_C-2w)^p = [-2(N_C+1)z+1]^p + [-2(N_C+1)z-1]^p
\end{equation}
is some degree-$p$ polynomial in $z$ that only has terms of degree matching the parity of $p$. Furthermore, using a Taylor series, the quantity $\sqrt{w(N_C-w+1)}$ can be written as a power series in $z$. Therefore, the second term in the above quantity can be evaluated as some infinite series with higher and higher centered moments $\mM_p^{(1)}$. We can then put everything together to get a result for this trace.

\vspace{0.5\baselineskip}

Now we must evaluate this same trace in a different way. Note that $S_{\hat{n}}$ acts as follows on the symmetrized Hamming weight states with direction $\hat{n}$:
\begin{equation}
    S_{\hat{n}}\ket{k_{\hat{n}}^{(s)}} = (2k-N_C)\ket{k_{\hat{n}}^{(s)}}.
\end{equation}
Therefore, we can compute the trace of interest as follows:
\begin{equation}
    \text{Tr}\left[\rho_{N_C/2}S_z^pS_{\hat{n}}\right] = \frac{c_1-c_0}{c_1^{N_C+1} - c_0^{N_C+1}}\sum_{k=0}^{N_C}c_1^kc_0^{N_C-k}(2k-N_C)\bra{k_{\hat{n}}^{(s)}}S_z^p\ket{k_{\hat{n}}^{(s)}}.
\end{equation}
We already know how to compute $\bra{k_{\hat{n}}^{(s)}}S_z^p\ket{k_{\hat{n}}^{(s)}}$ using ``up-down-middle'' paths of length $p$, as shown in Appendix \ref{appendix:understanding-P-vals}\ref{subsec:P-vals-moments-offset0}. (In fact, as we explained there, in the equatorial case where $\hat{n} = \hat{x}$, the ``middle'' option closes, so we actually just need to consider ``up-down'' paths of length $p$. This also implies that the result is trivially zero for all odd $p$.) The result will be some polynomial in $N_C$. We can then plug that polynomial into the summation, and use the usual approximations where we invert the summation index as $k\mapsto N_C-k$ and then let the new summation index $k$ go to infinity instead of stopping at $N_C$. Using standard manipulations of the infinite geometric series, these quantities can all be evaluated up to exponentially small error.

\vspace{0.5\baselineskip}

We can now compare our two different ways of evaluating $\text{Tr}\left[\rho_{N_C/2}S_z^pS_x\right]$. On one hand, we wrote it as an infinite series of decaying moments. On the other hand, we evaluated it up to exponentially small error using ``up-down'' paths. In general, to get the moments up to $\Theta(N_C^{-p})$ precision, you need to evaluate this trace for all $0\le q\le p$ and put together the results, because in any one of these results, the $\Theta(N_C^{-q})$ contributions from different moments $\mM_r^{(1)}$ will collide, so you need to solve for them by collecting multiple equations.

\vspace{0.5\baselineskip}

This is all fairly involved, but it will become clearer once we actually go through the calculations. To this end, let us begin with $p=0$. Using the first method of evaluation, the trace we are interested in is
\begin{align}
    \text{Tr}\left[\rho_{N_C/2}S_x\right] &= \sum_{w=0}^{N_C}\bra{w^{(s)}}\rho_{N_C/2}S_x\ket{w^{(s)}} \\
    &= 2\sum_{w=1}^{N_C}\sqrt{w(N_C-w+1)}\bra{(w-1)^{(s)}}\rho_{N_C/2}\ket{w^{(s)}} \\
    &= 2\sum_{w=1}^{N_C}\sqrt{w(N_C-w+1)}P_{w-1,w}.
\end{align}
We can expand the square root as
\begin{align}
    \sqrt{w(N_C-w+1)} &= \frac{N_C+1}{2}\sqrt{1-4z^2} \\
    &= \frac{N_C+1}{2}\left[1 - 2z^2 - 2z^4 - 4z^6 + \cdots\right],
\end{align}
where we used the Taylor series
\begin{equation}
    \sqrt{1+\varepsilon} = 1 + \frac{\varepsilon}{2} - \frac{\varepsilon^2}{8} + \frac{\varepsilon^3}{16} + \cdots
\end{equation}
We now define the adjusted moments as follows. They are actually the same as the centered moments in the equatorial case, but we use powers of $(N_C+1)$ in the denominator instead of powers of $N_C$:
\begin{equation}
    \mU_p^{(1)} \coloneqq (N_C+1)^{-p}\sum_{w=1}^{N_C}\left(w-\frac{N_C+1}{2}\right)^pP_{w-1,w} = \left(\frac{N_C}{N_C+1}\right)^p\mM_p^{(1)}.
\end{equation}
In fact, it is nicer to derive most quantities in this subsection as power series in $(N_C+1)$ or $(N_C+1)^{-1}$, which is why we will do so from this point forward. Only at the end will we convert our final results into power series in $N_C^{-1}$, for the sake of consistency with the presentation of all other results throughout this paper.

\vspace{0.5\baselineskip}

Using this definition, and using the Taylor series above, we can express the trace as an infinite series of moments:
\begin{align}
    \text{Tr}\left[\rho_{N_C/2}S_x\right] &= \sum_{w=0}^{N_C}\bra{w^{(s)}}\rho_{N_C/2}S_x\ket{w^{(s)}} \\
    &= (N_C+1)\left[\mU_0^{(1)} - 2\mU_2^{(1)} - 2\mU_4^{(1)} - 4\mU_6^{(1)} + \cdots\right].
\end{align}
Due to this result, we define a new quantity that will encode this infinite series of adjusted moments:
\begin{align}
    \mV_{2p}^{(1)} &\coloneqq \sum_{k=0}^{\infty}(-4)^k\binom{1/2}{k}\mU_{2(p+k)}^{(1)} \\
    &= \mU_{2p}^{(1)} - 2\mU_{2(p+1)}^{(1)} - 2\mU_{2(p+2)}^{(1)} - 4\mU_{2(p+3)}^{(1)} + \cdots.
\end{align}
In particular, the coefficient of $\mU_{2(p+k)}^{(1)}$ is always the same as the coefficient of $z^k$ in the Taylor series for $\sqrt{1-4z^2}$ centered at $z=0$.

\vspace{0.5\baselineskip}

The evaluation of $\text{Tr}\left[\rho_{N_C/2}S_x\right]$ only involves $\mV_0^{(1)}$, but as we will see later, the evaluation of $\text{Tr}\left[\rho_{N_C/2}S_z^{2p}S_x\right]$ will involve $\mV_{2q}^{(1)}$ for all $0\le q\le p$. This is why we define $\mV_{2p}^{(1)}$ for all integers $p\ge 0$.

\vspace{0.5\baselineskip}

Now we turn to the second method of evaluation, which tells us that the trace we are interested in is
\begin{equation}
    \text{Tr}\left[\rho_{N_C/2}S_x\right] = \frac{c_1-c_0}{c_1^{N_C+1} - c_0^{N_C+1}}\sum_{k=0}^{N_C}(2k-N_C)c_1^kc_0^{N_C-k}.
\end{equation}
In particular, we need to use the summation
\begin{equation}
    \sum_{k=0}^{N_C}(2k-N_C)c_1^kc_0^{N_C-k} = \frac{c_1^{N_C+1}}{\lambda^2}\left[\lambda(N_C+1) - 1 + \text{e.s.e.}\right]
\end{equation}
to eventually obtain the result
\begin{equation}
    \text{Tr}\left[\rho_{N_C/2}S_x\right] = (N_C+1) - \frac{1}{\lambda} + \text{e.s.e.}
\end{equation}

\vspace{0.5\baselineskip}

Now that we have evaluated $\text{Tr}\left[\rho_{N_C/2}S_x\right]$ in two different ways, we can equate these two results. This gives us a formula for the quantity $\mV_0^{(1)}$, which encodes information about the moments of the $P_{w-1,w}$ values:
\begin{subequations}
\begin{empheq}[box=\widefbox]{align}
    \mV_0^{(1)} &\coloneqq \mU_0^{(1)} - 2\mU_2^{(1)} - 2\mU_4^{(1)} - 4\mU_6^{(1)} + \cdots \\
    &= 1 - \frac{1}{\lambda}\frac{1}{N_C+1} + \text{e.s.e.}
\end{empheq}
\end{subequations}
It may not be immediately clear what we can get from this. However, the concentration phenomenon means that $\mU_{2p}^{(\alpha)} = \Theta((N_C+1)^{-p})$. Therefore, when we match the $\Theta(1)$ terms on both sides of the above equation, we obtain
\begin{equation}
    \boxed{\mU_0^{(1)} = 1 + O\left((N_C+1)^{-1}\right)}.
\end{equation}
In this way, we have found the $\Theta(1)$ term in the adjusted moment $\mU_0^{(1)}$. In fact, if we also match the $\Theta\left((N_C+1)^{-1}\right)$ terms on both sides, we get
\begin{equation}
    \boxed{\mU_0^{(1)} - 2\mU_2^{(1)} = 1 - \frac{1}{\lambda}\frac{1}{N_C+1} + O\left((N_C+1)^{-2}\right)}.
\end{equation}
In this sense, we also have some information about the $\Theta\left((N_C+1)^{-1}\right)$ terms in both $\mU_0^{(1)}$ and $\mU_2^{(1)}$. However, we will need a second equation to fully solve for them.

\vspace{0.5\baselineskip}

This is the part where continuing this process further gives us more information. When we proceed to $p=2$, meaning that we evaluate $\text{Tr}\left[\rho_{N_C/2}S_z^2S_x\right]$, we will obtain another equation. That equation will also have the $\Theta\left((N_C+1)^{-1}\right)$ terms of $\mU_0^{(1)}$ and $\mU_2^{(1)}$ mixed together. But now that we have two different equations relating the $\Theta\left((N_C+1)^{-1}\right)$ terms of $\mU_0^{(1)}$ and $\mU_2^{(1)}$, we can solve for both of them. Similarly, when we proceed to $p=4$, we can solve for the $\Theta\left((N_C+1)^{-2}\right)$ terms of $\mU_0^{(1)}$, $\mU_2^{(1)}$, and $\mU_4^{(2)}$. Using this process, we can eventually compute all the moments to whatever precision we desire, and Lemma \ref{lem:equatorial-offset1-moments0246} is the result.

\vspace{0.5\baselineskip}

Let us proceed to $p=2$ now. Using the first method of evaluation, the trace of interest is
\begin{align}
    \text{Tr}\left[\rho_{N_C/2}S_z^2S_x\right] &= \sum_{w=1}^{N_C}\sqrt{w(N_C-w+1)}\left[(N_C-2(w-1))^2 + (N_C-2w)^2\right]P_{w-1,w}.
\end{align}
The polynomial in the summation is
\begin{equation}
    (N_C-2(w-1))^2 + (N_C-2w)^2 = 2\left[4(N_C+1)^2z^2 + 1\right].
\end{equation}
Using this and the previously derived expression for $\sqrt{w(N_C-w+1)}$ as a Taylor series in $z$, we obtain
\begin{align}
    & \quad \sqrt{w(N_C-w+1)}\left[(N_C-2(w-1))^2 + (N_C-2w)^2\right] \\
    &= \left[4(N_C+1)^3z^2 + (N_C+1)\right]\left[1 - 2z^2 - 2z^4 - 4z^6 + \cdots\right] \\
    &= 4(N_C+1)^3\left[z^2 - 2z^4 - 2z^6 - 4z^8 + \cdots\right] \\
    & \quad + (N_C+1)\left[1 - 2z^2 - 2z^4 - 4z^6 + \cdots\right].
\end{align}
Therefore, the trace of interest is
\begin{align}
    \text{Tr}\left[\rho_{N_C/2}S_z^2S_x\right] = 4(N_C+1)^3 & \left[\mU_2^{(1)} - 2\mU_4^{(1)} - 2\mU_6^{(1)} - 4\mM_8^{(1)} + \cdots\right],
\end{align}
which can be rewritten succinctly as
\begin{equation}
    \text{Tr}\left[\rho_{N_C/2}S_z^2S_x\right] = 4(N_C+1)^3\mV_2^{(1)} + (N_C+1)\mV_0^{(1)}.
\end{equation}
As you can see, whereas $\text{Tr}\left[\rho_{N_C/2}S_z^2S_x\right]$ only involved $\mV_0^{(1)}$, $\text{Tr}\left[\rho_{N_C/2}S_z^2S_x\right]$ involves both $\mV_0^{(1)}$ and $\mV_2^{(1)}$. When we continue the pattern, we will find that $\text{Tr}\left[\rho_{N_C/2}S_z^{2p}S_x\right]$ involves $\mV_{2q}^{(1)}$ for all $0\le q\le p$. This is why it made sense earlier to define $\mV_{2p}^{(1)}$ for all $p\ge 0$.

\vspace{0.5\baselineskip}

Now we evaluate the trace of interest using the second method. We begin as follows:
\begin{align}
    \text{Tr}\left[\rho_{N_C/2}S_z^2S_x\right] &= \text{Tr}\left[S_x\rho_{N_C/2}S_z^2\right] \\
    &= \frac{c_1-c_0}{c_1^{N_C+1} - c_0^{N_C+1}}\sum_{k=0}^{N_C}c_1^kc_0^{N-k}(2k-N_C)\text{Tr}\left[\ket{k_X^{(s)}}\bra{k_X^{(s)}}S_z^2\right]
\end{align}
We can evaluate $\text{Tr}\left[\ket{k_X^{(s)}}\bra{k_X^{(s)}}S_z^2\right]$ using ``up-down'' paths of length $2$. In this case, there are just $2$ possible paths, which have the following contributions:
\begin{align}
    (U,D): &\quad\quad (k+1)(N_C-k) \\
    (D,U): &\quad\quad k(N_C-k+1)
\end{align}
Adding these quantities together yields
\begin{equation}
    \text{Tr}\left[\ket{k_X^{(s)}}\bra{k_X^{(s)}}S_z^2\right] = -2k^2 + N_C + 2kN_C.
\end{equation}
Also, as always, we will make frequent use of the approximation
\begin{equation}
    \frac{c_1-c_0}{c_1^{N_C+1} - c_0^{N_C+1}} = \frac{\lambda}{c_1^{N_C+1}}\left(1 + \text{e.s.e.}\right).
\end{equation}
Putting this all together yields
\begin{equation}
    \text{Tr}\left[\rho_{N_C/2}S_z^2S_x\right] = \frac{\lambda}{c_1}\sum_{k=0}^{\infty}\left(\frac{c_0}{c_1}\right)^k(N_C-2k)(-2k^2 + N_C + 2kN_C) + \text{e.s.e.}
\end{equation}
We now employ the same set of tricks that served us back in Appendix \ref{appendix:understanding-P-vals}\ref{subsec:P-vals-moments-offset0}. In particular, we extend the summation to $k=\infty$ (which only imposes an exponentially small error) and use the formulas for $\sum_{k=0}^{\infty}k^qx^k$ for $q=0,1,2$, which can be obtained using standard manipulations of the infinite geometric series. The end result is
\begin{equation}
    \text{Tr}\left[\rho_{N_C/2}S_z^2S_x\right] = \frac{1}{\lambda}(N_C+1)^2 + \frac{-3 + \lambda^2}{\lambda^2}(N_C+1) + \frac{3 - 2\lambda^2}{\lambda^3} + \text{e.s.e.}
\end{equation}
Therefore, equating our two different expressions for $\text{Tr}\left[\rho_{N_C/2}S_z^2S_x\right]$, and using the fact that we have already solved for $\mV_0^{(1)}$, we can solve for $\mV_2^{(1)}$:
\begin{subequations}
\begin{empheq}[box=\widefbox]{align}
    \mV_2^{(1)} &\coloneqq \mU_2^{(1)} - 2\mU_4^{(1)} - 2\mU_6^{(1)} - 4\mU_8^{(1)} + \cdots \\
    &= \frac{1}{4\lambda}\frac{1}{N_C+1} - \frac{3}{4\lambda^2}\frac{1}{(N_C+1)^2} + \frac{3 - \lambda^2}{4\lambda^3}\frac{1}{(N_C+1)^3} + \text{e.s.e.}
\end{empheq}
\end{subequations}
At this point, we now have enough information to solve for the $\Theta\left(N_C+1)^{-1}\right)$ terms in both $\mU_0^{(1)}$ and $\mU_2^{(1)}$. However, we will actually save this task for later, because as we will show at the very end, once we compute $\mV_{2q}^{(1)}$ for all $0\le q\le p$ up to exponentially small error, there is a more clever way to solve for $\mU_{2q}^{(1)}$ for all $0\le q\le p$ up to $\Theta\left((N_C+1)^{-p}\right)$ precision.

\vspace{0.5\baselineskip}

Let us proceed to $p=4$ now. Using the first method of evaluation, the trace of interest is
\begin{align}
    \text{Tr}\left[\rho_{N_C/2}S_z^4S_x\right] &= \sum_{w=1}^{N_C}\sqrt{w(N_C-w+1)}\left[(N_C-2(w-1))^4 + (N_C-2w)^4\right]P_{w-1,w}.
\end{align}
The polynomial in the summation is
\begin{equation}
    (N_C-2(w-1))^4 + (N_C-2w)^4 = 2\left[16(N_C+1)^4z^4 + 24(N_C+1)^2z^2 + 1\right].
\end{equation}
Using this and the previously derived expression for $\sqrt{w(N_C-w+1)}$ as a Taylor series in $z$, we obtain
\begin{align}
    & \quad \sqrt{w(N_C-w+1)}\left[(N_C-2(w-1))^4 + (N_C-2w)^4\right] \\
    &= \left[16(N_C+1)^5z^4 + 24(N_C+1)^3z^2 + (N_C+1)\right]\left[1 - 2z^2 - 2z^4 - 4z^6 + \cdots\right] \\
    &= 16(N_C+1)^5\left[z^4 - 2z^6 - 2z^8 - 4z^{10} + \cdots\right] \\
    & \quad + 24(N_C+1)^3\left[z^2 - 2z^4 - 2z^6 - 4z^8 + \cdots\right] \\
    & \quad + (N_C+1)\left[1 - 2z^2 - 2z^4 - 4z^6 + \cdots\right].
\end{align}
Therefore, the trace of interest is
\begin{align}
    \text{Tr}\left[\rho_{N_C/2}S_z^4S_x\right] = 16(N_C+1)^5 & \left[\mU_4^{(1)} - 2\mU_6^{(1)} - 2\mU_8^{(1)} - 4\mM_{10}^{(1)} + \cdots\right] \\
    + 24(N_C+1)^3 & \left[\mU_2^{(1)} - 2\mU_4^{(1)} - 2\mU_6^{(1)} - 4\mM_8^{(1)} + \cdots\right] \\
    + (N_C+1) & \left[\mU_0^{(1)} - 2\mU_2^{(1)} - 2\mU_4^{(1)} - 4\mM_6^{(1)} + \cdots\right],
\end{align}
which can be rewritten succinctly as
\begin{equation}
    \text{Tr}\left[\rho_{N_C/2}S_z^4S_x\right] = 16(N_C+1)^5\mV_4^{(0)} + 24(N_C+1)^3\mV_2^{(1)} + (N_C+1)\mV_0^{(1)}.
\end{equation}

\vspace{0.5\baselineskip}

Now we evaluate the trace of interest using the second method. Just as before, we begin as follows:
\begin{align}
    \text{Tr}\left[\rho_{N_C/2}S_z^4S_x\right] &= \text{Tr}\left[S_x\rho_{N_C/2}S_z^4\right] \\
    &= \frac{c_1-c_0}{c_1^{N_C+1} - c_0^{N_C+1}}\sum_{k=0}^{N_C}c_1^kc_0^{N-k}(2k-N_C)\text{Tr}\left[\ket{k_X^{(s)}}\bra{k_X^{(s)}}S_z^4\right].
\end{align}
We can evaluate $\text{Tr}\left[\ket{k_X^{(s)}}\bra{k_X^{(s)}}S_z^4\right]$ using ``up-down'' paths of length $4$. In this case, there are $\binom{4}{2} = 6$ possible paths, which have the following contributions:
\begin{align}
    (U,U,D,D): &\quad\quad [(k+1)(N_C-k)][(k+2)(N_C-k-1)] \\
    (U,D,U,D): &\quad\quad [(k+1)(N_C-k)]^2 \\
    (U,D,D,U): &\quad\quad [k(N_C-k+1)][(k+1)(N_C-k)] \\
    (D,U,U,D): &\quad\quad [k(N_C-k+1)][(k+1)(N_C-k)] \\
    (D,U,D,U): &\quad\quad [k(N_C-k+1)]^2 \\
    (D,D,U,U): &\quad\quad [(k-1)(N_C-k+2)][k(N_C-k+1)]
\end{align}
Adding these quantities together yields
\begin{equation}
    \text{Tr}\left[\ket{k_X^{(s)}}\bra{k_X^{(s)}}S_z^4\right] = 10k^2 + 6k^4 - 2N_C - 10kN_C - 6k^2N_C - 12k^3N_C + 3N_C^2 + 6kN_C^2 + 6k^2N_C^2.
\end{equation}
Combining this with the usual approximation for the prefactor yields
\begin{align}
    \text{Tr}\left[\rho_{N_C/2}S_z^4S_x\right] &= \frac{\lambda}{c_1}\sum_{k=0}^{\infty}(N_C-2k)(10k^2 + 6k^4 - 2N_C - 10kN_C  \\
    & \quad - 6k^2N_C - 12k^3N_C + 3N_C^2 + 6kN_C^2 + 6k^2N_C^2) + \text{e.s.e.}
\end{align}
By extending the summation to $k=\infty$ and invoking the formulas for $\sum_{k=0}^{\infty}k^qx^k$ for integers $0\le q\le 4$, we obtain
\begin{align}
    \text{Tr}\left[\rho_{N_C/2}S_z^4S_x\right] &= \frac{3}{\lambda^2}(N_C+1)^3 + \frac{-18 + 7\lambda^2}{\lambda^3}(N_C+1)^2 \\
    & \quad + \frac{45 - 33\lambda^2 + \lambda^4}{\lambda^4}(N_C+1) + \frac{-45 + 48\lambda^2 - 8\lambda^4}{\lambda^5} + \text{e.s.e.}
\end{align}
Therefore, equating our two different expressions for $\text{Tr}\left[\rho_{N_C/2}S_z^4S_x\right]$, and using the fact that we have already solved for $\mV_0^{(1)}$ and $\mV_2^{(1)}$, we can solve for $\mV_4^{(1)}$:
\begin{subequations}
\begin{empheq}[box=\widefbox]{align}
    \mV_4^{(1)} &\coloneqq \mU_4^{(1)} - 2\mU_6^{(1)} - 2\mU_8^{(1)} - 4\mU_{10}^{(1)} + \cdots \\
    &= \frac{3}{16\lambda^2}\frac{1}{(N_C+1)^2} + \frac{-18 + \lambda^2}{16\lambda^3}\frac{1}{(N_C+1)^3} \\
    & \quad + \frac{15(3 - \lambda^2)}{16\lambda^4}\frac{1}{(N_C+1)^4} + \frac{-45 + 30\lambda^2 - \lambda^4}{16\lambda^5}\frac{1}{(N_C+1)^5} + \text{e.s.e.}
\end{empheq}
\end{subequations}

\vspace{0.5\baselineskip}

Finally, we proceed to $p=6$. Using the first method of evaluation, the trace of interest is
\begin{align}
    \text{Tr}\left[\rho_{N_C/2}S_z^6S_x\right] &= \sum_{w=1}^{N_C}\sqrt{w(N_C-w+1)}\left[(N_C-2(w-1))^6 + (N_C-2w)^6\right]P_{w-1,w}.
\end{align}
The polynomial in the summation is
\begin{equation}
    (N_C-2(w-1))^6 + (N_C-2w)^6 = 2\left[64(N_C+1)^6z^6 + 240(N_C+1)^4z^4 + 60(N_C+1)^2z^2 + 1\right].
\end{equation}
Using this and the previously derived expression for $\sqrt{w(N_C-w+1)}$ as a Taylor series in $z$, we obtain
\begin{align}
    & \quad \sqrt{w(N_C-w+1)}\left[(N_C-2(w-1))^6 + (N_C-2w)^6\right] \\
    &= \left[64(N_C+1)^6z^6 + 240(N_C+1)^4z^4 + 60(N_C+1)^2z^2 + 1\right]\left[1 - 2z^2 - 2z^4 - 4z^6 + \cdots\right] \\
    &= 64(N_C+1)^7\left[z^6 - 2z^8 - 2z^{10} - 4z^{12} + \cdots\right] \\
    & \quad + 240(N_C+1)^5\left[z^4 - 2z^6 - 2z^8 - 4z^{10} + \cdots\right] \\
    & \quad + 60(N_C+1)^3\left[z^2 - 2z^4 - 2z^6 - 4z^8 + \cdots\right] \\
    & \quad + (N_C+1)\left[1 - 2z^2 - 2z^4 - 4z^6 + \cdots\right].
\end{align}
Therefore, the trace of interest is
\begin{align}
    \text{Tr}\left[\rho_{N_C/2}S_z^6S_x\right] = 64(N_C+1)^7 & \left[\mU_6^{(1)} - 2\mU_8^{(1)} - 2\mU_{10}^{(1)} - 4\mM_{12}^{(1)} + \cdots\right] \\
    + 240(N_C+1)^5 & \left[\mU_4^{(1)} - 2\mU_6^{(1)} - 2\mU_8^{(1)} - 4\mM_{10}^{(1)} + \cdots\right] \\
    + 60(N_C+1)^3 & \left[\mU_2^{(1)} - 2\mU_4^{(1)} - 2\mU_6^{(1)} - 4\mM_8^{(1)} + \cdots\right] \\
    + (N_C+1) & \left[\mU_0^{(1)} - 2\mU_2^{(1)} - 2\mU_4^{(1)} - 4\mM_6^{(1)} + \cdots\right],
\end{align}
which can be rewritten succinctly as
\begin{equation}
    \text{Tr}\left[\rho_{N_C/2}S_z^6S_x\right] = 64(N_C+1)^7\mV_6^{(1)} + 240(N_C+1)^5\mV_4^{(0)} + 60(N_C+1)^3\mV_2^{(1)} + (N_C+1)\mV_0^{(1)}.
\end{equation}

\vspace{0.5\baselineskip}

Now we evaluate the trace of interest using the second method. As always, we begin as follows:
\begin{align}
    \text{Tr}\left[\rho_{N_C/2}S_z^6S_x\right] &= \text{Tr}\left[S_x\rho_{N_C/2}S_z^6\right] \\
    &= \frac{c_1-c_0}{c_1^{N_C+1} - c_0^{N_C+1}}\sum_{k=0}^{N_C}c_1^kc_0^{N-k}(2k-N_C)\text{Tr}\left[\ket{k_X^{(s)}}\bra{k_X^{(s)}}S_z^6\right].
\end{align}
We can evaluate $\text{Tr}\left[\ket{k_X^{(s)}}\bra{k_X^{(s)}}S_z^6\right]$ using ``up-down'' paths of length $4$. In this case, there are $\binom{6}{3} = 20$ possible paths, which have the following contributions:
\begin{align}
    (U,U,U,D,D,D): &\quad\quad [(k+1)(N_C-k)][(k+2)(N_C-k-1)][(k+3)(N_C-k-2)] \\
    (U,U,D,U,D,D): &\quad\quad [(k+1)(N_C-k)][(k+2)(N_C-k-2)]^2 \\
    (U,D,U,U,D,D): &\quad\quad [(k+1)(N_C-k)]^2[(k+2)(N_C-k-2)] \\
    (D,U,U,U,D,D): &\quad\quad [k(N_C-k+1)][(k+1)(N_C-k)][(k+2)(N_C-k-2)] \\
    (U,U,D,D,U,D): &\quad\quad [(k+1)(N_C-k)]^2[(k+2)(N_C-k-2)] \\
    (U,D,U,D,U,D): &\quad\quad [(k+1)(N_C-k)]^3 \\
    (D,U,U,D,U,D): &\quad\quad [k(N_C-k+1)][(k+1)(N_C-k)]^2 \\
    (U,D,D,U,U,D): &\quad\quad [k(N_C-k+1)][(k+1)(N_C-k)]^2 \\
    (D,U,D,U,U,D): &\quad\quad [k(N_C-k+1)]^2[(k+1)(N_C-k)] \\
    (D,D,U,U,U,D): &\quad\quad [(k-1)(N_C-k+2)][k(N_C-k+1)][(k+1)(N_C-k)] \\
    (U,U,D,D,D,U): &\quad\quad [k(N_C-k+1)][(k+1)(N_C-k)][(k+2)(N_C-k-1)] \\
    (U,D,U,D,D,U): &\quad\quad [k(N_C-k+1)][(k+1)(N_C-k)]^2 \\
    (D,U,U,D,D,U): &\quad\quad [k(N_C-k+1)]^2[(k+1)(N_C-k)] \\
    (U,D,D,U,D,U): &\quad\quad [k(N_C-k+1)]^2[(k+1)(N_C-k)] \\
    (D,U,D,U,D,U): &\quad\quad [k(N_C-k+1)]^3 \\
    (D,D,U,U,D,U): &\quad\quad [(k-1)(N_C-k+2)][k(N_C-k+1)]^2 \\
    (U,D,D,D,U,U): &\quad\quad [(k-1)(N_C-k+2)][k(N_C-k+1)][(k+1)(N_C-k)] \\
    (D,U,D,D,U,U): &\quad\quad [(k-1)(N_C-k+2)][k(N_C-k+1)]^2 \\
    (D,D,U,D,U,U): &\quad\quad [(k-1)(N_C-k+2)]^2[k(N_C-k+1)] \\
    (D,D,D,U,U,U): &\quad\quad [(k-2)(N_C-k+3)][(k-1)(N_C-k+2)][k(N_C-k+1)]
\end{align}
Adding these quantities together yields
\begin{align}
    \text{Tr}\left[\ket{k_X^{(s)}}\bra{k_X^{(s)}}S_z^6\right] &= -112k^2 - 140k^4 - 20k^6 \\
    & \quad + 16N_C + 112kN_C + 90k^2N_C + 280k^3N_C + 30k^4N_C + 60k^5N_C \\
    & \quad - 30N_C^2 - 90kN_C^2 - 180k^2N_C^2 - 60k^3N_C^2 - 60k^4N_C^2 \\
    & \quad + 15N_C^3 + 40kN_C^3 + 30k^2N_C^3 + 20k^3N_C^3.
\end{align}
Combining this with the usual approximation for the prefactor yields
\begin{align}
    \text{Tr}\left[\rho_{N_C/2}S_z^6S_x\right] &= \frac{\lambda}{c_1}\sum_{k=0}^{\infty}(N_C-2k)(-112k^2 - 140k^4 - 20k^6 \\
    & \quad + 16N_C + 112kN_C + 90k^2N_C + 280k^3N_C + 30k^4N_C + 60k^5N_C \\
    & \quad - 30N_C^2 - 90kN_C^2 - 180k^2N_C^2 - 60k^3N_C^2 - 60k^4N_C^2 \\
    & \quad + 15N_C^3 + 40kN_C^3 + 30k^2N_C^3 + 20k^3N_C^3) + \text{e.s.e.}
\end{align}
By extending the summation to $k=\infty$ and invoking the formulas for $\sum_{k=0}^{\infty}k^qx^k$ for integers $0\le q\le 6$, we obtain
\begin{align}
    \text{Tr}\left[\rho_{N_C/2}S_z^6S_x\right] &= \frac{15}{\lambda^3}(N_C+1)^4 + \frac{30(-5 + 2\lambda^2)}{\lambda^4}(N_C+1)^3 + \frac{675 -510\lambda^2 + 31\lambda^4}{\lambda^5}(N_C+1)^2 \\
    & \quad + \frac{-1575 + 1725\lambda^2 - 333\lambda^4 + \lambda^6}{\lambda^6}(N_C+1) + \frac{1575 - 2250\lambda^2 + 768\lambda^4 - 32\lambda^6}{\lambda^7} + \text{e.s.e.}
\end{align}
Therefore, equating our two different expressions for $\text{Tr}\left[\rho_{N_C/2}S_z^6S_x\right]$ and invoking the previously derived results for $\mV_0^{(1)}$, $\mV_2^{(1)}$, and $\mV_4^{(1)}$, we can solve for $\mV_6^{(1)}$:
\begin{subequations}
\begin{empheq}[box=\widefbox]{align}
    \mV_6^{(1)} &\coloneqq \mU_6^{(1)} - 2\mU_8^{(1)} - 2\mU_{10}^{(1)} - 4\mU_{12}^{(1)} + \cdots \\
    &= \frac{15}{64\lambda^3}\frac{1}{(N_C+1)^3} + \frac{15(-10 + \lambda^2)}{64\lambda^4}\frac{1}{(N_C+1)^4} \\
    & \quad + \frac{675 - 240\lambda^2 + \lambda^4}{64\lambda^5}\frac{1}{(N_C+1)^5} + \frac{21(-75 + 50\lambda^2 - 3\lambda^4)}{64\lambda^6}\frac{1}{(N_C+1)^6} \\
    & \quad + \frac{1575 - 1575\lambda^2 + 273\lambda^4 - \lambda^6}{64\lambda^7}\frac{1}{(N_C+1)^7} + \text{e.s.e.}
\end{empheq}
\end{subequations}

\vspace{0.5\baselineskip}

Now that we have computed all the $\mV_{2p}^{(1)}$ quantities that are relevant at $\Theta\left((N_C+1)^{-3}\right)$ precision, we will explain how to convert them into the adjusted moments $\mU_{2p}^{(1)}$ up to $\Theta\left((N_C+1)^{-3}\right)$ precision. Recall that
\begin{align}
    \mV_0^{(1)} &\coloneqq \mU_0^{(1)} - 2\mU_2^{(1)} - 2\mU_4^{(1)} - 4\mU_6^{(1)} + \cdots \\
    \mV_2^{(1)} &\coloneqq \mU_2^{(1)} - 2\mU_4^{(1)} - 2\mU_6^{(1)} - 4\mU_8^{(1)} + \cdots \\
    \mV_4^{(1)} &\coloneqq \mU_4^{(1)} - 2\mU_6^{(1)} - 2\mU_8^{(1)} - 4\mU_{10}^{(1)} + \cdots \\
    \mV_6^{(1)} &\coloneqq \mU_6^{(1)} - 2\mU_8^{(1)} - 2\mU_{10}^{(1)} - 4\mU_{12}^{(1)} + \cdots
\end{align}
These equations can be inverted to solve for the adjusted moments $\mU_{2p}^{(1)}$ as follows:
\begin{align}
    \mU_0^{(1)} &= \mV_0^{(1)} + 2\mV_2^{(1)} + 6\mV_4^{(1)} + 20\mV_6^{(1)} + \cdots \\
    \mU_2^{(1)} &= \mV_2^{(1)} + 2\mV_4^{(1)} + 6\mV_6^{(1)} + 20\mV_8^{(1)} + \cdots \\
    \mU_4^{(1)} &= \mV_4^{(1)} + 2\mV_6^{(1)} + 6\mV_8^{(1)} + 20\mV_{10}^{(1)} + \cdots \\
    \mU_6^{(1)} &= \mV_6^{(1)} + 2\mV_8^{(1)} + 6\mV_{10}^{(1)} + 20\mV_{12}^{(1)} + \cdots
\end{align}
For example, to solve for $\mU_0^{(1)}$, start with $\mV_0^{(1)}$, which has the desired $\mU_0^{(1)}$ as its first term. Then add a suitably chosen multiple of $\mV_2^{(1)}$ to eliminate the presence of $\mU_2^{(1)}$. Since $\mV_0^{(1)}$ contains $-2\mU_2^{(1)}$, we need to add $2\mV_2^{(1)}$. Then add a suitably chosen multiple of $\mV_4^{(1)}$ to eliminate the presence of $\mU_4^{(1)}$. Since $\mV_0^{(1)}$ contains $-2\mU_4^{(1)}$ and $\mV_2^{(1)}$ contains $-2\mU_4^{(1)}$, we need to add $6\mV_4^{(1)}$. We can repeat this process as many times as needed to write $\mU_0^{(1)}$ as an infinite series in terms of the $\mV_{2p}^{(1)}$ quantities.

\vspace{0.5\baselineskip}

In our case, because $\mV_{2p}^{(1)} = O\left((N_C+1)^{-p}\right)$, the relevant expressions up to $\Theta\left((N_C+1)^{-3}\right)$ precision are as follows:
\begin{align}
    \mU_0^{(1)} &= \mV_0^{(1)} + 2\mV_2^{(1)} + 6\mV_4^{(1)} + 20\mV_6^{(1)} + O\left((N_C+1)^{-4}\right) \\
    \mU_2^{(1)} &= \mV_2^{(1)} + 2\mV_4^{(1)} + 6\mV_6^{(1)} O\left((N_C+1)^{-4}\right) \\
    \mU_4^{(1)} &= \mV_4^{(1)} + 2\mV_6^{(1)} + O\left((N_C+1)^{-4}\right) \\
    \mU_6^{(1)} &= \mV_6^{(1)} + O\left((N_C+1)^{-4}\right).
\end{align}
After plugging our solutions for $\mV_{2p}^{(1)}$ for $q=0,1,2,3$ into the above equations, we obtain the adjusted moments $\mU_{2p}^{(1)}$ for $p=0,1,2,3$ up to $\Theta\left((N_C+1)^{-3}\right)$ precision, exactly as desired:
\begin{equation}
    \boxed{\mU_6^{(1)} = \frac{15}{64\lambda^3}\frac{1}{(N_C+1)^3} + O\left((N_C+1)^{-4}\right)}
\end{equation}

\begin{equation}
    \boxed{\mU_4^{(1)} = \frac{3}{16\lambda^2}\frac{1}{(N_C+1)^2} + \frac{-21 + 2\lambda^2}{32\lambda^3}\frac{1}{(N_C+1)^3} + O\left((N_C+1)^{-4}\right)}
\end{equation}

\begin{equation}
    \boxed{\mU_2^{(1)} = \frac{1}{4\lambda}\frac{1}{N_C+1} - \frac{3}{8\lambda^2}\frac{1}{(N_C+1)^2} - \frac{3 + 4\lambda^2}{32\lambda^3}\frac{1}{(N_C+1)^3} + O\left((N_C+1)^{-4}\right)}
\end{equation}

\begin{equation}
    \boxed{\mU_0^{(1)} = 1 - \frac{1}{2\lambda}\frac{1}{N_C+1} - \frac{3}{8\lambda^2}\frac{1}{(N_C+1)^2} - \frac{9 + 2\lambda^2}{16\lambda^3}\frac{1}{(N_C+1)^3} + O\left((N_C+1)^{-4}\right)}.
\end{equation}

\vspace{0.5\baselineskip}

Our final task is to convert the adjusted moments $\mU_p^{(1)}$ into the desired centered moments $\mM_p^{(1)}$. (As a reminder, in the equatorial case, we have $\mU_p^{(1)} = \mM_p^{(1)} = 0$ for odd $p$, so we do not need to worry about those.) The formula relating the two is very simple:
\begin{equation}
    \mM_p^{(1)} = \left(\frac{N_C+1}{N_C}\right)^p\mU_p^{(1)}
\end{equation}
So all we need to do is to apply this formula and also rewrite the powers of $(N_C+1)$ in terms of powers of $N_C$ as needed. We begin with $p=6$:
\begin{align}
    \mM_6^{(1)} &= \left(\frac{N_C+1}{N_C}\right)^6\mU_6^{(1)} \\
    &= \frac{15}{64\lambda^3}\frac{(N_C+1)^3}{N_C^6} + O\left((N_C+1)^2N_C^{-6}\right) \\
    &= \frac{15}{64\lambda^3}\frac{1}{N_C^3} + O\left(N_C^{-4}\right).
\end{align}
We now handle $p=4$:
\begin{align}
    \mM_4^{(1)} &= \left(\frac{N_C+1}{N_C}\right)^4\mU_4^{(1)} \\
    &= \frac{3}{16\lambda^2}\frac{(N_C+1)^2}{N_C^4} + \frac{-21 + 2\lambda^2}{32\lambda^3}\frac{(N_C+1)}{N_C^4} + O\left(N_C^{-4}\right) \\
    &= \left[\frac{3}{16\lambda^2}\frac{1}{N_C^2} + \frac{3}{8\lambda^2}\frac{1}{N_C^3} + O\left(N_C^{-4}\right)\right] + \left[\frac{-21 + 2\lambda^2}{32\lambda^3}\frac{1}{N_C^3} + O\left(N_C^{-4}\right)\right] + O\left(N_C^{-4}\right) \\
    &= \frac{3}{16\lambda^2}\frac{1}{N_C^2} + \frac{-21 + 12\lambda + 2\lambda^2}{32\lambda^3}\frac{1}{N_C^3} + O\left(N_C^{-4}\right).
\end{align}
Next, we move on to $p=2$:
\begin{align}
    \mM_2^{(1)} &= \left(\frac{N_C+1}{N_C}\right)^2\mU_2^{(1)} \\
    &= \frac{1}{4\lambda}\frac{(N_C+1)}{N_C^2} - \frac{3}{8\lambda^2}\frac{1}{N_C^2} - \frac{3 + 4\lambda^2}{32\lambda^3}\frac{1}{N_C^2(N_C+1)} + O\left((N_C+1)^{-2}N_C^{-2}\right) \\
    &= \left[\frac{1}{4\lambda}\frac{1}{N_C} + \frac{1}{4\lambda}\frac{1}{N_C^2}\right] + \left[-\frac{3}{8\lambda^2}\frac{1}{N_C^2}\right] + \left[-\frac{3 + 4\lambda^2}{32\lambda^3}\frac{1}{N_C^3} + O\left(N_C^{-4}\right)\right] + O\left(N_C^{-4}\right) \\
    &= \frac{1}{4\lambda}\frac{1}{N_C} + \frac{-3 + 2\lambda}{8\lambda^2}\frac{1}{N_C^2} - \frac{3 + 4\lambda^2}{32\lambda^3}\frac{1}{N_C^3} + O\left(N_C^{-4}\right).
\end{align}
Finally, we tackle $p=0$:
\begin{align}
    \mM_0^{(1)} &= \mU_0^{(1)} \\
    &= 1 - \frac{1}{2\lambda}\frac{1}{N_C+1} + \frac{3}{8\lambda^2}\frac{1}{(N_C+1)^2} - \frac{9 + 2\lambda^2}{16\lambda^3}\frac{1}{(N_C+1)^3} + O\left((N_C+1)^{-4}\right) \\
    &= 1 - \frac{1}{2\lambda}\left[\frac{1}{N_C} - \frac{1}{N_C^2} + \frac{1}{N_C^3} + O\left(N_C^{-4}\right)\right] - \frac{3}{8\lambda^2}\left[\frac{1}{N_C^2} - \frac{2}{N_C^3} + O\left(N_C^{-4}\right)\right] \\
    & \quad - \frac{9 + 2\lambda^2}{16\lambda^3}\left[\frac{1}{N_C^3} + O\left(N_C^{-4}\right)\right] + O\left(N_C^{-4}\right) \\
    &= 1 - \frac{1}{2\lambda}\frac{1}{N_C} + \frac{-3 + 4\lambda}{8\lambda^2}\frac{1}{N_C^2} + \frac{-9 + 12\lambda - 10\lambda^2}{16\lambda^3}\frac{1}{N_C^3} + O\left(N_C^{-4}\right).
\end{align}
Let us write out our final results for these centered moments, just for cleanliness:
\begin{equation}
    \boxed{\mM_6^{(1)} = \left(\frac{15}{64\lambda^3}\right)\frac{1}{N_C^3} + O\left(N_C^{-4}\right)}
\end{equation}

\begin{equation}
    \boxed{\mM_4^{(1)} = \left(\frac{3}{16\lambda^2}\right)\frac{1}{N_C^2} + \left(\frac{-21 + 12\lambda + 2\lambda^2}{32\lambda^3}\right)\frac{1}{N_C^3} + O\left(N_C^{-4}\right)}
\end{equation}

\begin{equation}
    \boxed{\mM_2^{(1)} = \left(\frac{1}{4\lambda}\right)\frac{1}{N_C} + \left(\frac{-3 + 2\lambda}{8\lambda^2}\right)\frac{1}{N_C^2} + \left(-\frac{3 + 4\lambda^2}{32\lambda^3}\right)\frac{1}{N_C^3} + O\left(N_C^{-4}\right)}
\end{equation}

\begin{equation}
    \boxed{\mM_0^{(1)} = 1 + \left(-\frac{1}{2\lambda}\right)\frac{1}{N_C} + \left(\frac{-3 + 4\lambda}{8\lambda^2}\right)\frac{1}{N_C^2} + \left(\frac{-9 + 12\lambda - 10\lambda^2}{16\lambda^3}\right)\frac{1}{N_C^3} + O\left(N_C^{-4}\right)}.
\end{equation}
These four equations agree precisely with the formulas stated in Lemma \ref{lem:equatorial-offset1-moments0246}, so the proof is complete.

\appsubsec{Moments of the $P_{w-1,w}$ Values (Offset $\alpha=1$) in the General Case}
{subsec:P-vals-moments-offset1-general}

Our third and final main result shows the centered moments for offset $\alpha=1$ in the general case, up to $\Theta(N_C^{-2})$ precision:

\begin{lemma}[Centered moments of the $P_{w-1,w}$ values in the general case]
\label{lem:centered-moments-offset-1-general}
The centered moments for offset $\alpha=1$ take the following values up to $\Theta(N_C^{-2})$ precision:
\begin{align}
    \mM_0^{(1)} &= 1 - \frac{1}{2S_{\text{in}}^2\lambda}\frac{1}{N_C} + \left(-\frac{C_{\text{in}}^2}{2S_{\text{in}}^4} + \frac{1}{2S_{\text{in}}^2\lambda} - \frac{3}{8S_{\text{in}}^4\lambda^2}\right)\frac{1}{N_C^2} + O\left(N_C^{-3}\right) \\
    \mM_1^{(1)} &= \frac{C_{\text{in}}(1 + \lambda^2)}{4S_{\text{in}}^2\lambda^2}\frac{1}{N_C^2} + O\left(N_C^{-3}\right) \\
    \mM_2^{(1)} &= \frac{S_{\text{in}}^2}{4\lambda}\frac{1}{N_C} + \left[-\frac{1}{8} + \frac{(1-\lambda)\left(-1 + \left(C_{\text{in}}^2 - S_{\text{in}}^2\right)(2+\lambda)\right)}{8\lambda^2}\right]\frac{1}{N_C^2} + O\left(N_C^{-3}\right) \\
    \mM_3^{(1)} &= \frac{C_{\text{in}}S_{\text{in}}^2(3 - \lambda^2)}{8\lambda^2}\frac{1}{N_C^2} + O\left(N_C^{-3}\right) \\
    \mM_4^{(1)} &= \frac{3S_{\text{in}}^4}{16\lambda^2}\frac{1}{N_C^2} + O\left(N_C^{-3}\right).
\end{align}
\end{lemma}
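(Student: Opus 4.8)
The plan is to follow the same two-evaluation template used for the equatorial case in Appendix~\ref{subsec:P-vals-moments-offset1-equatorial}: compute the trace $\text{Tr}[\rho_{N_C/2}S_z^p S_x]$ in two independent ways, equate the results for a range of $p$, and solve the resulting linear system for the moments. The operator $S_x=\sum_i X_i$ is still the correct probe, since in the $\hat z$-Dicke basis its only nonzero matrix elements connect weights $w-1$ and $w$, so it isolates exactly the offset-$1$ entries $P_{w-1,w}$. The first evaluation then proceeds verbatim as in the equatorial derivation,
\begin{equation}
\text{Tr}[\rho_{N_C/2}S_z^p S_x]=\sum_{w=1}^{N_C}\sqrt{w(N_C-w+1)}\,\bigl[(N_C-2(w-1))^p+(N_C-2w)^p\bigr]\,P_{w-1,w},
\end{equation}
because both $\bra{(w-1)^{(s)}}S_x\ket{w^{(s)}}=\sqrt{w(N_C-w+1)}$ and the $S_z^p$ eigenvalues depend only on the $\hat z$-Dicke structure and are insensitive to $\Theta_{\text{in}}$.

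The first genuine modification appears when I convert this sum into an expansion in the centered moments $\mM_q^{(1)}$. Unlike the equatorial case, the ``distribution'' $P_{w-1,w}$ now concentrates at $w\approx\mu N_C$ with $\mu=\tfrac{1-C_{\text{in}}}{2}+O(N_C^{-1})\neq\tfrac12$, so I must Taylor-expand both $\sqrt{w(N_C-w+1)}$ and the bracketed polynomial around the shifted center $z=\tfrac{w-1/2}{N_C}-\mu$ matching the definition of $\mM_q^{(1)}$. Because $N_C-2w\approx C_{\text{in}}N_C$ is a nonzero multiple of $N_C$, the bracket is no longer even in $z$, and the absence of bit-flip symmetry means the odd moments $\mM_1^{(1)},\mM_3^{(1)}$ survive and must be carried. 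I also have to propagate the $O(N_C^{-2})$ correction to $\mu$ (the $\tfrac{C_{\text{in}}(1-\lambda)}{2\lambda}N_C^{-1}$ term from Appendix~\ref{subsec:P-vals-moments-offset0}) through the expansion, since it feeds the subleading terms of the low-order moments.

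The second evaluation diagonalizes $\rho_{N_C/2}$ in the $\hat n$-Dicke basis and computes $\bra{k_{\hat n}^{(s)}}S_z^p S_x\ket{k_{\hat n}^{(s)}}$ by the up--down--middle path method of the offset-$0$ calculation (Lemma~\ref{lem:centered-moments-offset-0}). Rotating into the $\hat n$ frame re-expresses $S_z$ as $S_{\text{in}}S_x-C_{\text{in}}S_z$ (as stated in Appendix~\ref{subsec:P-vals-moments-offset0}) and $S_x$ as the companion combination $C_{\text{in}}S_x+S_{\text{in}}S_z$ acting on the ordinary Dicke states, with signs fixed by the rotation convention. The matrix element then becomes a sum over length-$(p+1)$ words in \{up, down, middle\} of net-zero displacement, where each of the first $p$ letters carries $S_{\text{in}}$ (up/down) or $C_{\text{in}}$ (middle) and the final appended letter carries the swapped coefficients. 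Weighting by $\tfrac{c_1-c_0}{c_1^{N_C+1}-c_0^{N_C+1}}c_1^k c_0^{N_C-k}$, inverting the index via $k\mapsto N_C-k$, extending the sum to $k=\infty$ up to exponentially small error, and using the closed forms for $\sum_k k^q x^k$ yields each trace as a Laurent polynomial in $N_C$ with $\lambda,C_{\text{in}},S_{\text{in}}$ coefficients, just as before.

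Equating the two forms for $p=0,1,2,3,4$ produces enough relations among $\mM_0^{(1)},\dots,\mM_4^{(1)}$ to solve for all of them to $\Theta(N_C^{-2})$, recovering Lemma~\ref{lem:centered-moments-offset-1-general}; the leading terms should reproduce the Gaussian asymptotics shared with Lemma~\ref{lem:centered-moments-offset-0}, giving a built-in sanity check. The main obstacle is bookkeeping rather than conceptual difficulty: with $C_{\text{in}}\neq 0$ the middle steps reactivate, so the path sums in the second method proliferate (as at length $4$ in the offset-$0$ proof, but now with an extra appended factor and asymmetric coefficients), while in the first method one must simultaneously handle the non-even Taylor series, the surviving odd moments, and the $N_C$-dependent shift of the center. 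Keeping all $\Theta(N_C^{-2})$ contributions consistent across the five coupled moment equations --- where leading Gaussian terms from higher moments collide with subleading non-Gaussian corrections from lower ones --- is the most error-prone step, and is precisely where (as in the offset-$0$ calculation) a computer algebra system becomes essentially mandatory.
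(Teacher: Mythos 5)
Your proposal follows the same two-evaluation template as the paper but makes a genuinely different choice of probe operator, and both routes work. The paper does \emph{not} keep $S_x$ as the probe in the general case: it evaluates $\Tr\bigl[\rho_{N_C/2}(S_z-a\Ibb)^p S_{\hat{n}}\bigr]$, with the spin along the Bloch direction $\hat{n}$ as the final factor and a shift $a$ in the powers of $S_z$. The point of that choice is that $S_{\hat{n}}$ is diagonal in the $\hat{n}$-Dicke basis where $\rho_{N_C/2}$ is diagonal, so the second evaluation only ever needs $\bra{k_{\hat{n}}^{(s)}}(S_z-a\Ibb)^p\ket{k_{\hat{n}}^{(s)}}$, i.e.\ up--down--middle paths of length $p$ (binomially expanded over lengths $l\le p$ when $a\neq 0$). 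The price is paid in the first evaluation: $S_{\hat{n}}$ acting on computational Dicke states has a diagonal $C_{\text{in}}$ piece, so the trace picks up $P_{w,w}$ terms, which the paper absorbs using the already-proven offset-$0$ moments of Lemma~\ref{lem:centered-moments-offset-0}. The paper also exploits the shift, choosing $a=0$ (easy paths) and $a=N_C(1-2\mu)$ (which restores the even/odd parity of the bracket polynomial in $z$), and explicitly needs \emph{two} values of $a$ per $p$ to get enough equations. Your choice trades these features the other way: your first evaluation stays pure in the offset-$1$ entries, but your second evaluation requires $\bra{k_{\hat{n}}^{(s)}}S_z^p S_x\ket{k_{\hat{n}}^{(s)}}$ with a non-diagonal final factor, hence length-$(p+1)$ path sums with swapped coefficients on the last step.

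Two details deserve attention. First, with the paper's inverted $\hat{n}$-basis convention ($\ket{0_{\hat{n}}}$ the $-1$ eigenstate), a direct computation gives that $S_x$ acts on the $\ket{k_{\hat{n}}^{(s)}}$ labels as $-\left(C_{\text{in}}S_x + S_{\text{in}}S_z\right)$, i.e.\ the \emph{negative} of your ``companion combination''; you flagged the sign as convention-dependent, but it must be fixed correctly or every odd-$p$ trace flips sign. Second, your claim that $p=0,1,2,3,4$ with a single probe family ``produces enough relations'' is true but needs justification, since you use five identities where the paper uses two per $p$: at the deepest required order the unknowns $\left\{m_{q,2}\right\}_{q=0}^{4}$ are determined by a $5\times 5$ system whose rows are (to leading order) the first five Taylor coefficients of $F(z)\left(C_{\text{in}}-2z\right)^p$, where $F(z)=\sqrt{(\mu_0+z)(1-\mu_0-z)}$ and $\mu_0=\tfrac{1-C_{\text{in}}}{2}$. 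That matrix is invertible because $F(0)\neq 0$ for $0<\Theta_{\text{in}}<\pi$, and a polynomial of degree at most $4$ cannot vanish to order $5$ at $z=0$; so any vanishing combination forces all coefficients to zero. With that rank argument supplied and the sign pinned down, your route recovers Lemma~\ref{lem:centered-moments-offset-1-general}.
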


Once again, it is worth highlighting some consistent patterns in these formulas:
\begin{itemize}
    \item The leading-order asymptotic of each even moment $\mM_{2p}^{(1)}$ is given by $\mM_{2p}^{(1)} \sim \frac{(2p-1)!!S_{\text{in}}^{2p}}{(4\lambda)^p}\frac{1}{N_C^p}$. This is a direct consequence of the fact that the ``distribution'' of the $P_{w-1,w}$ values tends to a Gaussian with normalization $\sim 1$ and variance $\sim\frac{S_{\text{in}}^2}{4\lambda N_C}$.
    \item Plugging in $C_{\text{in}} = 0$ and $S_{\text{in}} = 1$ immediately recovers the formulas in the equatorial special case as stated in Lemma \ref{lem:equatorial-offset1-moments0246}. This includes the fact that all the odd moments $\mM_{2p+1}^{(1)}$ vanish in this case.
    \item In general, each coefficient of $N_C^{-p}$ is a rational function of $\lambda$, where the denominator is a power of $2$ times $\lambda^p$, and the numerator is a polynomial in $\lambda$ of degree at most $p$ (where the coefficients can depend on $\Theta_{\text{in}}$). Furthermore, in the equatorial case $\Theta_{\text{in}} = \frac{\pi}{2}$, the degree of the numerator becomes at most $p-1$. (For example, the coefficient of $N_C^{-2}$ in $\mM_2^{(1)}$ can be written as a rational function of $\lambda$ with denominator $8\lambda^2$ and numerator $(4C_{\text{in}}^2 - 3) + 2S_{\text{in}}^2\lambda - 2C_{\text{in}}^2\lambda^2$, which is quadratic in $\lambda$. Plugging in $\Theta_{\text{in}} = \frac{\pi}{2}$ makes the numerator reduce to $-3 + 2\lambda$, which is linear in $\lambda$.)
    \item The coefficient of $N_C^{-p}$ always has $\lambda^p$ in the denominator. This is a result of the fact that the quantity $\lambda N_C \sim \lambda^2N$ is an indicator of whether you should consider yourself to be in the ``large $N$'' asymptotic regime for this problem.
\end{itemize}

\vspace{0.5\baselineskip}

Let us now explain how we derive these formulas. Deriving these formulas was truly the most painful part of this work by far, which is why we forgo the full details. However, we at least explain the general strategy, which is very similar to what we showed in Appendix \ref{appendix:understanding-P-vals}\ref{subsec:P-vals-moments-offset1-equatorial}, but with some slight complications.

\vspace{0.5\baselineskip}

In the equatorial case, we computed the trace
\begin{equation}
    \text{Tr}\left[\rho_{N_C/2}S_z^pS_x\right]
\end{equation}
in two different ways. Here, we adjust this quantity to
\begin{equation}
    \text{Tr}\left[\rho_{N_C/2}(S_z-a\Ibb)^pS_{\hat{n}}\right],
\end{equation}
where $a$ is some quantity that we will define later.

\vspace{0.5\baselineskip}

For the first evaluation method, we begin by observing how $S_{\hat{n}}$ acts on a symmetrized Hamming weight state in the computational basis:
\begin{align}
    S_{\hat{n}}\ket{w^{(s)}} &= S_{\text{in}}\sqrt{(w+1)(N_C-w)}\ket{(w+1)^{(s)}} \\
    & \quad + C_{\text{in}}(N_C-2w)\ket{w^{(s)}} \\
    & \quad + S_{\text{in}}\sqrt{w(N_C-w+1)}\ket{(w-1)^{(s)}}.
\end{align}
In the equatorial case, we did not have to worry about the middle term. Furthermore, the operator $(S_z-a\Ibb)$ acts in a predictable way on such states:
\begin{equation}
    (S_z-a\Ibb)^p\ket{w^{(s)}} = (N_C-2w-a)^p\ket{w^{(s)}}.
\end{equation}
Therefore, if we evaluate this trace by expanding it in the symmetrized computational basis, we obtain the following:
\begin{align}
    & \quad \text{Tr}\left[\rho_{N_C/2}(S_z-a\Ibb)^pS_{\hat{n}}\right] \\
    &= \sum_{w=0}^{N_C}\bra{w^{(s)}}\rho_{N_C/2}(S_z-a\Ibb)^pS_{\hat{n}}\ket{w^{(s)}} \\
    &= C_{\text{in}}\sum_{w=0}^{N_C}(N_C-2w)(N_C-2w-a)^p\bra{w^{(s)}}\rho_{N_C/2}\ket{w^{(s)}} \\
    & \quad + S_{\text{in}}\sum_{w=1}^{N_C}\sqrt{w(N_C-w+1)}\left[(N_C-2(w-1)-a)^p + (N_C-2w-a)^p\right]\bra{(w-1)^{(s)}}\rho_{N_C/2}\ket{w^{(s)}} \\
    &= C_{\text{in}}\sum_{w=0}^{N_C}\left(N_C-2w-a)^{p+1} + a(N_C-2w-a)^p\right]P_{w,w} \\
    & \quad + S_{\text{in}}\sum_{w=1}^{N_C}\sqrt{w(N_C-w+1)}\left[(N_C-2(w-1)-a)^p + (N_C-2w-a)^p\right]P_{w-1,w} \\
    &= C_{\text{in}}(-2N_C)^p\left[(-2N_C)\mM_{p+1}^{(0)} + a\mM_p^{(0)}\right] \\
    & \quad + S_{\text{in}}\sum_{w=1}^{N_C}\sqrt{w(N_C-w+1)}\left[(N_C-2(w-1)-a)^p + (N_C-2w-a)^p\right]P_{w-1,w}.
\end{align}
The first line in this final expression was not present in the equatorial case. Fortunately, it is written in terms of $\mM_p^{(0)}$ values, which we already know how to evaluate from Appendix \ref{appendix:understanding-P-vals}\ref{subsec:P-vals-moments-offset0}. Evaluating the second line in this final expression will be similar to the equatorial case. The factor $(N_C-2(w-1)-a)^p + (N_C-2w-a)^p$ is some degree-$p$ polynomial in $z$, and the factor $\sqrt{w(N_C-w+1)}$ can be written as a Taylor series in $z$. Therefore, the second term in the above quantity can be evaluated as some infinite series with higher and higher centered moments $\mM_p^{(1)}$. We can then put everything together to get a result for this trace as an infinite series of higher and higher centered moments $\mM_p^{(1)}$. In the equatorial case, both of these quantities were even functions of $z$, but now this will not be the case, so the infinite series will be considerably more complicated.

\vspace{0.5\baselineskip}

For the second evaluation method, we proceed as follows:
\begin{equation}
    \text{Tr}\left[\rho_{N_C/2}(S_z-a\Ibb)^pS_{\hat{n}}\right] = \frac{c_1-c_0}{c_1^{N_C+1} - c_0^{N_C+1}}\sum_{k=0}^{N_C}c_1^kc_0^{N_C-k}(2k-N_C)\bra{k_{\hat{n}}^{(s)}}(S_z-a\Ibb)^p\ket{k_{\hat{n}}^{(s)}}.
\end{equation}
Notice that
\begin{equation}
    \bra{k_{\hat{n}}^{(s)}}(S_z-a\Ibb)^p\ket{k_{\hat{n}}^{(s)}} = \sum_{l=0}^{p}\binom{p}{l}(-a)^{p-l}\bra{k_{\hat{n}}^{(s)}}S_z^l\ket{k_{\hat{n}}^{(s)}},
\end{equation}
where we know how to compute $\bra{k_{\hat{n}}^{(s)}}S_z^l\ket{k_{\hat{n}}^{(s)}}$ using ``up-down-middle'' paths of length $l$, as shown in Appendix \ref{appendix:understanding-P-vals}\ref{subsec:P-vals-moments-offset0}. In this case, we need to consider ``up-down-middle'' paths of every length $0\le l\le p$. The result will be some polynomial in $N_C$. We can then plug that polynomial into the summation, and use the usual approximations where we invert the summation index as $k\mapsto N_C-k$ and then let the new summation index $k$ go to infinity instead of stopping at $N_C$. These quantities can all be evaluated using standard manipulations of the infinite geometric series. The conclusion is that we know how to evaluate $\text{Tr}\left[\rho_{N_C/2}(S_z-a\Ibb)^pS_{\hat{n}}\right]$ up to exponentially small error.

\vspace{0.5\baselineskip}

Since we need to solve for a greater number of quantities, we actually need to perform both evaluation methods above for two different values of $a$ for each nonnegative integer $p$. It is now finally time to make smart choices for the value of $a$. The first convenient choice is $a=0$, because it makes the calculation using ``up-down-middle'' paths the easiest in the second evaluation method. In particular, we only need to consider ``up-down-middle'' paths of length $p$, as opposed to having to consider them for all lengths $0\le l\le p$. The second convenient choice of is $a = N_C(1-2\mu)$. The reason is that the degree-$p$ polynomial in $z$ that appears in the first evaluation method now simplifies as follows:
\begin{equation}
    (N_C-2(w-1)-a)^p + (N_C-2w-a)^p = [-2(N_C+1)z+1]^p + [-2(N_C+1)z-1]^p.
\end{equation}
In particular, it will now only have terms of degree matching the parity of $p$, thus simplifying the calculation somewhat.

\vspace{0.5\baselineskip}

We can now compare our two different ways of evaluating $\text{Tr}\left[\rho_{N_C/2}(S_z-a\Ibb)^pS_{\hat{n}}\right]$. On one hand, we wrote it as an infinite series of decaying moments. On the other hand, we evaluated it up to exponentially small error using ``up-down-middle'' paths. In general, to get the moments up to $\Theta(N_C^{-p})$ precision, you need to evaluate this trace for all $0\le q\le p$ and put together the results, because in any one of these results, the $\Theta(N_C^{-q})$ contributions from different moments $\mM_r^{(1)}$ will collide, so you need to solve for them by collecting multiple equations. However, once all these calculations are done, we can (with great difficulty) extract the moments $\mM_p^{(1)}$ from them, and Lemma \ref{lem:centered-moments-offset-1-general} is the result.

\newpage

\appsec{Dissipation of Purity of Coherence}
{appendix:PH-dissipation}

We know that purity of coherence is a resource that cannot be created under TI channels, only destroyed. As such, when distilling coherence, we should closely watch how much of this valuable resource we waste. This raises a natural question: how much purity of coherence is ``dissipated'' by our optimal distillation protocol?

\vspace{0.5\baselineskip}

Given that the bound on fidelity imposed by monotonicity of purity of coherence is tight at the leading order, one might intuitively expect that purity of coherence is essentially conserved. This intuition turns out to be correct, but we will compute the loss of purity of coherence to slightly higher precision to allow us to observe at least some wastage. For simplicity, we will restrict this discussion to the equatorial special case $\Theta_{\text{in}} = \Theta_{\text{out}} = \frac{\pi}{2}$.

\vspace{0.5\baselineskip}

There is a very important subtlety that we must observe. Because purity of coherence and fidelity are both monotonically increasing functions of the purity parameter, maximizing the output qubit fidelity indeed maximizes the purity of coherence of the output state. However, while fidelity is a linear function of the purity parameter, purity of coherence is a nonlinear function of the purity parameter. As a result, the expected value of the purity of coherence is not just the purity of coherence of the average output state. Therefore, the average purity of coherence of the output state is a function of not only the quantum channel but also the specific realization of that quantum channel. In particular, post-selection plays a key role. If our output state is a weighted average of output states resulting from different measurement outcomes, then we retain more purity of coherence if we remember the measurement outcome, since purity of coherence is convex.

\vspace{0.5\baselineskip}

For the realization of our equatorial distillation protocol, we will consider two possibilities: one in which we keep track of the value $j$ we obtained in the total angular momentum measurement as part of the initial Schur sampling step \cite{Cirac1999}, and one in which we forget this value. We begin with the naive implementation, in which we forget this value:

\begin{theorem}[Dissipation of purity of coherence, assuming naive implementation]
\label{thm:PH-dissipation-naive}
If one forgets the outcome of the angular momentum measurement, then the optimal $N\rightarrow 1$ equatorial distillation protocol produces an average dissipation of purity of coherence of $6 + O\left(N^{-1}\right)$.
\end{theorem}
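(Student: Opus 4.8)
The plan is to compute the dissipation $\Delta P = P_{H_{\text{in}}}(\rho^{\otimes N}) - P_H(\text{output})$ directly from quantities already established in this work. Since purity of coherence is additive and each equatorial input qubit has $P_H(\rho) = 4\lambda^2/(1-\lambda^2)$, the input side is the deterministic quantity $P_{H_{\text{in}}}(\rho^{\otimes N}) = N\,4\lambda^2/(1-\lambda^2)$. For the output, the crucial observation is that forgetting the angular-momentum outcome collapses the ensemble into the single mixed state $\bar\rho = \Ebb_{N_C}[\tilde\rho(N_C)]$. Because the protocol is equatorial and time-translation covariant, for a fixed (unknown) input phase every branch $\tilde\rho(N_C)$ is an equatorial qubit pointing along the common target direction, differing only in Bloch length $\tilde\lambda(N_C)$; hence $\bar\rho$ is itself an equatorial qubit with $\bar\lambda = \Ebb_{N_C}[\tilde\lambda(N_C)]$, and $P_H(\bar\rho) = 4\bar\lambda^2/(1-\bar\lambda^2)$, which is independent of the phase because $P_H$ is invariant along the time-translation orbit. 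The dissipation is therefore $\Delta P = N\,4\lambda^2/(1-\lambda^2) - 4\bar\lambda^2/(1-\bar\lambda^2)$.

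Next I would evaluate $\bar\lambda$ to $O(N^{-2})$ precision. I would take the third-order equatorial expansion of $\tilde\lambda(N_C)$ in $N_C^{-1}$ derived in \cref{appendix:equatorial-3rd-order-optimality}, retain its $N_C^{-1}$ and $N_C^{-2}$ terms, and average against the negative-moment formulas $\Ebb[N_C^{-1}]$ and $\Ebb[N_C^{-2}]$ from \cref{lem:NC-negative-moments}. Writing $\bar\lambda = 1 - \epsilon$ with $\epsilon = a/N + b/N^2 + O(N^{-3})$, this gives
\[
a = \frac{1-\lambda^2}{2\lambda^2}, \qquad b = \frac{(1-\lambda^2)(1-\lambda)}{2\lambda^3} + \frac{(1-\lambda)^2(1+\lambda)(3-\lambda)}{8\lambda^4}.
\]

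The key step is then to expand $P_H(\bar\rho)$ about $\epsilon = 0$. Substituting $\bar\lambda = 1-\epsilon$ into $4\bar\lambda^2/(1-\bar\lambda^2)$ and expanding yields $4\bar\lambda^2/(1-\bar\lambda^2) = 2/\epsilon - 3 + O(\epsilon)$. Inserting $\epsilon = a/N + b/N^2 + \cdots$ gives $2/\epsilon = 2N/a - 2b/a^2 + O(N^{-1})$, and since $2/a = 4\lambda^2/(1-\lambda^2)$, the $\Theta(N)$ term of $P_H(\bar\rho)$ cancels $P_{H_{\text{in}}}(\rho^{\otimes N})$ exactly; this cancellation is precisely the leading-order conservation of purity of coherence guaranteed by first-order optimality. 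What survives is the constant $\Delta P = 2b/a^2 + 3 + O(N^{-1})$.

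Finally I would simplify $b$. Factoring out $(1-\lambda)^2(1+\lambda)$ collapses the bracket to $3(\lambda+1)/(8\lambda)$, so that $b = 3(1-\lambda^2)^2/(8\lambda^4)$, while $a^2 = (1-\lambda^2)^2/(4\lambda^4)$; hence $2b/a^2 = 3$ for every $\lambda \in (0,1)$, and $\Delta P = 3 + 3 = 6 + O(N^{-1})$, as claimed. The main obstacle is conceptual rather than computational: because $P_H$ is a nonlinear (in fact convex) function of the purity parameter, the $\Theta(N)$ cancellation hides two distinct $\Theta(1)$ contributions that must be tracked simultaneously — the $+3$ arising from the curvature of $P_H$ near the pure state, and the $2b/a^2 = 3$ arising from the genuine second-order ($N_C^{-2}$) deficit of the distilled Bloch length surviving the average over $N_C$. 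Maintaining consistent precision across the two nested expansions (of $\tilde\lambda(N_C)$ in $N_C^{-1}$ and of $\Ebb[N_C^{-p}]$ in $N^{-1}$) is the delicate bookkeeping that makes the exact constant $6$ emerge, and it is also what distinguishes this naive implementation from the outcome-tracking one, where convexity of $P_H$ predicts strictly less dissipation.
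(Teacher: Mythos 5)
Your proposal is correct and follows essentially the same route as the paper's proof: both evaluate the purity of coherence of the outcome-averaged output state by inserting the $O(N^{-2})$-accurate averaged purity parameter (obtained from the equatorial infidelity series together with the negative moments of $N_C$) into $P_H(\tilde\lambda) = 4\tilde\lambda^2/(1-\tilde\lambda^2)$, and both observe that the $\Theta(N)$ term cancels the input's purity of coherence, leaving the constant $6$. The only difference is presentational --- the paper expands the quotient directly from $\tilde\lambda_{\mathrm{opt}} = 1 - \tfrac{2}{\eta N} - \tfrac{6}{\eta^2 N^2} + O(N^{-3})$, whereas you organize the identical expansion as $P_H = \tfrac{2}{\epsilon} - 3 + O(\epsilon)$, which has the minor virtue of exhibiting the constant as $3$ (curvature of $P_H$ near the pure state) plus $3$ (the surviving second-order deficit $2b/a^2$).
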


\begin{proof}

For convenience, we will define the quantity
\begin{equation}
    \eta \equiv P_H(\rho) = \frac{4\lambda^2}{1-\lambda^2}.
\end{equation}
The purity of coherence of the input state is
\begin{equation}
    P_{H_{\text{in}}}\left(\rho^{\otimes N}\right) = \frac{4\lambda^2}{1-\lambda^2}N = \eta N.
\end{equation}
We now compute the purity of coherence of the output state. We can naively evaluate it by taking the optimal purity parameter $\tilde{\lambda}$ that we expressed as a power series in $1/N$ and then passing that value through the purity-of-coherence function $f(x) = \frac{4x^2}{1-x^2}$:
\begin{align}
    \tilde{\lambda}_{\text{opt}} &= 1 - \frac{2}{\eta}\frac{1}{N} - \frac{6}{\eta^2}\frac{1}{N^2} + O\left(N^{-3}\right) \\
    \implies \tilde{\lambda}_{\text{opt}}^2 &= 1 - \frac{4}{\eta}\frac{1}{N} - \frac{8}{\eta^2}\frac{1}{N^2} + O\left(N^{-3}\right) \\
    \therefore P_H\left(\rho\left(\tilde{\lambda}_{\text{opt}}\right)\right) &= \frac{4\tilde{\lambda}_{\text{opt}}^2}{1 - \tilde{\lambda}_{\text{opt}}^2} \\
    &= \frac{4\left[1 - \frac{4}{\eta}\frac{1}{N} - \frac{8}{\eta^2}\frac{1}{N^2} + O\left(N^{-3}\right)\right]}{\frac{4}{\eta}\frac{1}{N} + \frac{8}{\eta^2}\frac{1}{N^2} + O\left(N^{-3}\right)} \\
    &= \eta N\left[1 - \frac{4}{\eta}\frac{1}{N} - \frac{8}{\eta^2}\frac{1}{N^2} + O\left(N^{-3}\right)\right]\left[1 - \frac{2}{\eta}\frac{1}{N} + O\left(N^{-2}\right)\right] \\
    &= \eta N\left[1 - \frac{6}{\eta}\frac{1}{N} + O\left(N^{-2}\right)\right] \\
    &= \frac{4\lambda^2}{1-\lambda^2}N - 6 + O\left(N^{-1}\right).
\end{align}
The first term is in fact just the purity of coherence of the input state, which tells us that the optimal distillation protocol conserves purity of coherence at the leading order, as expected. Subtracting the above amount from the original purity of coherence, we see that the dissipated purity of coherence is $6 + O\left(N^{-1}\right)$, as desired.

\end{proof}

The fact that there is nonzero purity of coherence wastage at the $\Theta(1)$ order (which is the order immediately following the leading order), is directly tied to a fact that we previously observed, namely that the minimum infidelity fails to saturate the bound set by purity of coherence at the $\Theta(N^{-2})$ order (which is again the order immediately following the leading order), as we showed in Appendix \ref{appendix:2nd-order-optimality}\ref{subsec:2nd-order-optimality-special-case-equatorial}.

\vspace{0.5\baselineskip}

However, one can object to this naive calculation. The optimal purity parameter $\tilde{\lambda}_{\text{opt}}$, when written in terms of $N$, is in fact an average over the different angular momentum outcomes. In other words, we average $\tilde{\lambda}_{\text{opt}}(N_C)$ over different outcomes $N_C = 2j$ to obtain $\tilde{\lambda}_{\text{opt}}$, and then pass this value through the purity-of-coherence function $f(x) = \frac{4x^2}{1-x^2}$. However, we always know what the angular momentum measurement outcome was, so when computing the average purity of coherence, we should not blindly average together the purity parameters resulting from the protocol following the angular momentum measurement. In particular, $f$ is a convex function in $x$ (this is a special case of the broader fact that purity of coherence is a convex resource). So by Jensen's inequality, $\mathbb{E}[f(X)] \ge f\left(\mathbb{E}[X]\right)$. This means that we should actually take our values $\tilde{\lambda}_{\text{opt}}(N_C)$ that are written as a power series in $1/N_C$, pass these values through the purity-of-coherence function $f(x) = \frac{4x^2}{1-x^2}$, and then average those values together to get the average purity of coherence of the distilled output qubit. Assuming that we remember our angular momentum outcome (which of course we would), this is a more accurate reflection of the average purity of coherence that our distilled qubit contains.

\begin{theorem}[Dissipation of purity of coherence, assuming angular momentum post-selection]
\label{thm:PH-dissipation-post-cirac}
If one keeps track of the outcome of the angular momentum measurement but implements the post-Schur-sampling distillation protocol naively, then the optimal $N\rightarrow 1$ equatorial distillation protocol produces an average dissipation of purity of coherence of $2 + O\left(N^{-1}\right)$.
\end{theorem}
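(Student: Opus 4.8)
The plan is to mirror the naive computation in the proof of Theorem~\ref{thm:PH-dissipation-naive}, but to interchange the order in which we (a) apply the purity-of-coherence function $f(x) = \frac{4x^2}{1-x^2}$ and (b) average over the Schur-sampling outcome $N_C$. Because $f$ is convex, Jensen's inequality already guarantees that remembering $N_C$ can only retain more purity of coherence, so we expect strictly less dissipation than the naive value of $6$; the entire content of the theorem is to pin down the precise constant $2$. First I would take the optimal equatorial output purity parameter expressed as a power series in $N_C^{-1}$, as supplied by the third-order analysis of Appendix~\ref{appendix:equatorial-3rd-order-optimality}:
\begin{equation}
    \tilde{\lambda}_{\text{opt}}(N_C) = 1 - \frac{a}{N_C} - \frac{b}{N_C^2} + O(N_C^{-3}), \qquad a = \frac{1-\lambda^2}{2\lambda}, \quad b = \frac{(1-\lambda)^2(1+\lambda)(3-\lambda)}{8\lambda^2}.
\end{equation}
The key point is that this is the $N_C$-series, \emph{not} the $N$-series used in the naive proof; keeping it in the variable $N_C$ is exactly what allows us to post-select.

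Next I would pass this through $f$ at fixed $N_C$. Since $1 - \tilde{\lambda}_{\text{opt}}^2 = \Theta(N_C^{-1})$, the quantity $P_H(\rho(\tilde{\lambda}_{\text{opt}}(N_C))) = f(\tilde{\lambda}_{\text{opt}}(N_C))$ is linear in $N_C$ at leading order, and a short expansion (identical in spirit to the one in the naive proof) gives
\begin{equation}
    f(\tilde{\lambda}_{\text{opt}}(N_C)) = \frac{2}{a}N_C - 3 - \frac{2b}{a^2} + O(N_C^{-1}).
\end{equation}
The one genuinely new algebraic simplification is $\frac{2b}{a^2} = \frac{3-\lambda}{1+\lambda}$, which I would verify by direct substitution using $a^2 = \frac{(1-\lambda)^2(1+\lambda)^2}{4\lambda^2}$.

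The crucial step — and the one that distinguishes this result from the naive $6$ — is the averaging. Now I would take the expectation over the Schur-sampling distribution of $N_C$. The leading term $\frac{2}{a}N_C$ is \emph{linear} in $N_C$, so its expectation is sensitive to the $\Theta(1)$ correction in the mean of $N_C$. Invoking the exponentially-small-error form of Lemma~\ref{lem:NC-moments-exp-small-error}, namely $\Ebb[N_C] = \lambda N + \frac{1-\lambda}{\lambda} + \text{e.s.e.}$, and using $\frac{2\lambda}{a} = \eta$ with $\eta = P_H(\rho) = \frac{4\lambda^2}{1-\lambda^2}$, yields
\begin{equation}
    \Ebb\left[\frac{2}{a}N_C\right] = \eta N + \frac{4}{1+\lambda} + O(N^{-1}).
\end{equation}
Collecting all the $\Theta(1)$ contributions then gives $\frac{4}{1+\lambda} - 3 - \frac{3-\lambda}{1+\lambda} = -2$, so $\Ebb[P_H(\text{output})] = \eta N - 2 + O(N^{-1})$, and subtracting from the input value $\eta N = P_{H_{\text{in}}}(\rho^{\otimes N})$ leaves a dissipation of $2 + O(N^{-1})$, as claimed.

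The main obstacle I anticipate is bookkeeping the two distinct sources of the surviving $\Theta(1)$ term and making sure neither is double-counted nor dropped: one $\Theta(1)$ contribution comes from the subleading terms of $f(\tilde{\lambda}_{\text{opt}}(N_C))$ at fixed $N_C$ (the $-3 - \frac{2b}{a^2}$ piece, which needs the second-order coefficient $b$ from the third-order protocol), while a second, equally important, $\Theta(1)$ contribution is generated when the expectation acts on the leading linear-in-$N_C$ term through the constant correction $\frac{1-\lambda}{\lambda}$ in $\Ebb[N_C]$. It is precisely the partial cancellation between these that turns the naive $6$ into $2$; the remainder is routine verification of the two displayed expansions and the single identity $\frac{2b}{a^2} = \frac{3-\lambda}{1+\lambda}$.
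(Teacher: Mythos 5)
Your proposal is correct and follows essentially the same route as the paper's own proof: expand $\tilde{\lambda}_{\text{opt}}$ as a power series in $N_C^{-1}$, apply $f(x)=\frac{4x^2}{1-x^2}$ at fixed $N_C$ (your constant $-3-\frac{2b}{a^2}$ agrees with the paper's $-\frac{2(3+\lambda)}{1+\lambda}$), and only then average over $N_C$ using $\Ebb[N_C]=\lambda N+\frac{1-\lambda}{\lambda}+\text{e.s.e.}$, with the two $\Theta(1)$ contributions combining to $-2$. The only differences are notational.
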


\begin{proof}

We start with the power series for $\tilde{\lambda}_{\text{opt}}(N_C)$, the optimal purity parameter for a specific angular momentum outcome, which we derived in a previous section:
\begin{align}
    \tilde{\lambda}_{\text{opt}}(N_C) &= 1 - \frac{1-\lambda^2}{2\lambda}\frac{1}{N_C} - \frac{(1-\lambda)^2(1+\lambda)(3-\lambda)}{8\lambda^2}\frac{1}{N_C^2} + O\left(N_C^{-3}\right) \\
    \implies \tilde{\lambda}_{\text{opt}}(N_C)^2 &= 1 - \frac{1-\lambda^2}{\lambda}\frac{1}{N_C} - \frac{(1-\lambda)^3(1+\lambda)}{2\lambda^2}\frac{1}{N_C^2} + O\left(N_C^{-3}\right).
\end{align}
We now compute $P(N_C)$, the purity of coherence for a specific total angular momentum outcome by passing this value through the purity-of-coherence function:
\begin{align}
    P(N_C) &= f\left(\tilde{\lambda}_{\text{opt}}(N_C)\right) \\
    &= \frac{4\tilde{\lambda}_{\text{opt}}(N_C)^2}{1 - \tilde{\lambda}_{\text{opt}}(N_C)^2} \\
    &= \frac{4\left[1 - \frac{1-\lambda^2}{\lambda}\frac{1}{N_C} - \frac{(1-\lambda)^3(1+\lambda)}{2\lambda^2}\frac{1}{N_C^2} + O\left(N_C^{-3}\right)\right]}{\frac{1-\lambda^2}{\lambda}\frac{1}{N_C} + \frac{(1-\lambda)^3(1+\lambda)}{2\lambda^2}\frac{1}{N_C^2} + O\left(N_C^{-3}\right)} \\
    &= \frac{4\lambda}{1-\lambda^2}N_C\left[1 - \frac{1-\lambda^2}{\lambda}\frac{1}{N_C} - \frac{(1-\lambda)^3(1+\lambda)}{2\lambda^2}\frac{1}{N_C^2} + O\left(N_C^{-3}\right)\right]\left[1 - \frac{(1-\lambda)^2}{2\lambda}\frac{1}{N_C} + O\left(N_C^{-2}\right)\right] \\
    &= \frac{4\lambda}{1-\lambda^2}N_C - \frac{2(3+\lambda)}{1+\lambda} + O\left(N_C^{-1}\right).
\end{align}
And only now do we take the average over different angular momentum outcomes. This requires us to use the formula for $\Ebb[N_C]$, which we derive up to exponentially small error (abbreviated ``e.s.e.'' for convenience) in Appendix \ref{appendix:angular-momentum-moments}. The resulting average purity of coherence is
\begin{align}
    \mathbb{E}\left[P(N_C)\right] &= \frac{4\lambda}{1-\lambda^2}\mathbb{E}\left[N_C\right] - \frac{2(3+\lambda)}{1+\lambda} + \mathbb{E}\left[O\left(N_C^{-1}\right)\right] \\
    &= \frac{4\lambda}{1-\lambda^2}\left[\lambda N + \frac{1-\lambda}{\lambda} + \text{e.s.e.}\right] - \frac{2(3+\lambda)}{1+\lambda} + O\left(N^{-1}\right) \\
    &= \frac{4\lambda^2}{1-\lambda^2}N - 2 + O\left(N^{-1}\right).
\end{align}
Just as before, the first term is the purity of coherence of the input state. Subtracting the above quantity from the initial purity of coherence yields a dissipation of $2 + O\left(N^{-1}\right)$, as desired.

\end{proof}

As an aside, we find the fact that the leading-order dissipation of purity of coherence has no dependence on $\lambda$ or $N$ and is just a plain number in both cases ($6$ for naive implementation, and $2$ for angular momentum post-selection) oddly amusing.

\vspace{0.5\baselineskip}

The dissipated purity of coherence dropping from $6 + O\left(N^{-1}\right)$ to $2 + O\left(N^{-1}\right)$ shows that there is some advantage, however slight, of keeping track of the angular momentum measurement outcome along with the single purified coherent qubit, rather than naively forgetting that result. In general, for any convex resource such as purity of coherence, the power of post-selection is that it preserves more of the resource. Naively averaging together states from different measurement outcomes wastes the resource. Another useful intuition is that it is better to optimize distillation procedures for each individual measurement outcome, rather than optimizing just one distillation procedure for the average.

\vspace{0.5\baselineskip}

But this naturally raises the question: why stop at only one ``layer'' of post-selection? Right now, we have the protocol (following Schur sampling) written abstractly in terms of Kraus operators. However, if we could find a more physical interpretation of the post-Schur-sampling protocol, where the last step would involve averaging over different outcomes of some additional measurement, then we could take advantage of post-selection on those measurement outcomes too and thereby dissipate even less purity of coherence. It would be especially interesting if there were a way to use this additional post-selection to completely eliminate the dominant term (that is, the $\Theta(1)$ term) in the purity of coherence wastage, and thereby produce only $O\left(N^{-1}\right)$ wastage.

\newpage

\appsec{Entanglement-Breaking Distillation Protocols}
{appendix:entanglement-breaking}

One interesting question we might ask is the following: how well can we do if we restrict ourselves to only entanglement-breaking (aka measure-and-prepare) protocols? At the leading order (that is, minimizing the $\Theta\left(N^{-1}\right)$ term of the infidelity), this question was already answered by \cite{Marvian2020}, which describes a TI measure-and-prepare channel $\mE_{\text{TI-MP}}$ whose infidelity factor is $\delta_1\left(\mE_{\text{TI-MP}}\right) = \frac{1}{4\lambda^2}$. This is known to be the smallest possible infidelity factor for a single-shot entanglement-breaking distillation protocol, due to a bound that states that the RLD Fisher information of the output for an entanglement-breaking protocol will be upper-bounded by the SLD Fisher information of the input \cite{Marvian2020}. However, we can analyze this question at higher orders as well. For simplicity, we restrict this discussion to the equatorial case $\Theta_{\text{in}} = \Theta_{\text{out}} = \frac{\pi}{2}$.

\appsubsec{Characterizing Entanglement-Breaking Permutation-Invariant Distillation Protocols}
{subsec:characterizing-eb-protocols}

Let us first characterize which single-shot protocols are entanglement-breaking. We will still restrict our attention to permutation-invariant protocols, so that we can focus on the $N_C$-qubit symmetric subspace and think about the protocol as something that we do after performing Schur sampling and obtaining an $N_C$-qubit fully symmetric state. We start by writing the Choi matrix on the $N_C$-qubit symmetric subspace for a TI distillation protocol, which we derived in Appendix \ref{appendix:deriving-kraus-rep}\ref{subsec:distillation-protocol-form-proof-choi-matrices}:

\begin{equation}
    J(\mE) = \begin{array}{|cc|cc|cc|cc|cc|cc|}
        \hline
        \cos^2\theta_0 & & & A_1 & & & & & & & & \\
        & \sin^2\theta_0 & & & & & & & & & & \\
        \hline
        & & \cos^2\theta_1 & & & A_2 & & & & & & \\
        \overline{A}_1 & & & \sin^2\theta_1 & & & & & & & & \\
        \hline
        & & & & \cos^2\theta_2 & & & \ddots & & & & \\
        & & \overline{A}_2 & & & \sin^2\theta_2 & & & & & & \\
        \hline
        & & & & & & \ddots & & & A_{N_C-1} & & \\
        & & & & \ddots & & & \ddots & & & & \\
        \hline
        & & & & & & & & \cos^2\theta_{N_C-1} & & & A_{N_C} \\
        & & & & & & \overline{A}_{N_C-1} & & & \sin^2\theta_{N_C-1} & & \\
        \hline
        & & & & & & & & & & \cos^2\theta_{N_C} & \\
        & & & & & & & & \overline{A}_{N_C} & & & \sin^2\theta_{N_C} \\
        \hline
    \end{array}.
\end{equation}

As a brief aside, it is worth mentioning that this channel does NOT necessarily have the Kraus representation stated in Theorem \ref{thm:optimal-protocol-kraus-rep}. The reason is that, because we now want to ensure that the channel is entanglement-breaking, we must sacrifice some of the ``obvious'' optimizations we made from the above form. In general, as we showed in Appendix \ref{appendix:deriving-kraus-rep}\ref{subsec:distillation-protocol-form-proof-kraus-operators}, a TI channel $\mE$ mapping the $N_C$ qubits in the symmetric subspace to $1$ qubit will necessarily have Kraus operators
\begin{equation}
    K_{(w,\alpha)} = c(w-1,\alpha)\ket{0}\bra{(w-1)^{(s)}} + d(w,\alpha)\ket{1}\bra{w^{(s)}} \quad (0\le w\le N_C+1)
\end{equation}
for some complex numbers $c(w,\alpha)$ and $d(w,\alpha)$, and where we set $\ket{(-1)^{(s)}} = \ket{(N_C+1)^{(s)}} = 0$ by convention. However, the multiplicity index $\alpha$ means that this channel does not match the form stated in Theorem \ref{thm:optimal-protocol-kraus-rep}.

\vspace{0.5\baselineskip}

Let us return to studying the Choi matrix above. This Choi matrix is block-diagonal with $2$ blocks of size $1\times 1$ and $N_C$ blocks of size $2\times 2$. From the requirement that the Choi matrix be positive semidefinite (PSD) to represent a completely positive (CP) map, we derive the conditions
\begin{equation}
    \abs{A_w} \le \cos\theta_{w-1}\sin\theta_w \quad 1\le w\le N_C.
\end{equation}

\vspace{0.5\baselineskip}

It is known that a Choi matrix represents an entanglement-breaking channel, also known as a super-positive (SP) map, if and only if it corresponds to a separable state (up to normalization). Furthermore, a necessary condition for this to hold is the positive partial transpose (PPT) condition, namely that the partial transpose of the Choi matrix is also PSD. The partial transpose of the above Choi matrix can be computed easily by taking the transpose of each block (this corresponds to the partial transpose on the output qubit):

\begin{equation}
    J(\mE)^{T_{\text{out}}} = \begin{array}{|cc|cc|cc|cc|cc|cc|}
        \hline
        \cos^2\theta_0 & & & & & & & & & & & \\
        & \sin^2\theta_0 & A_1 & & & & & & & & & \\
        \hline
        & \overline{A}_1 & \cos^2\theta_1 & & & & & & & & & \\
        & & & \sin^2\theta_1 & A_2 & & & & & & & \\
        \hline
        & & & \overline{A}_2 & \cos^2\theta_2 & & & \ddots & & & & \\
        & & & & & \sin^2\theta_2 & & & & & & \\
        \hline
        & & & & & & \ddots & & & & & \\
        & & & & \ddots & & & \ddots & A_{N_C-1} & & & \\
        \hline
        & & & & & & & \overline{A}_{N_C-1} & \cos^2\theta_{N_C-1} & & & \\
        & & & & & & & & & \sin^2\theta_{N_C-1} & A_{N_C} & \\
        \hline
        & & & & & & & & & \overline{A}_{N_C} & \cos^2\theta_{N_C} & \\
        & & & & & & & & & & & \sin^2\theta_{N_C} \\
        \hline
    \end{array}.
\end{equation}
Amazingly, this matrix is also block-diagonal with $2$ blocks of size $1\times 1$ and $N_C$ blocks of size $2\times 2$, but the blocks are a bit different. This matrix is PSD if and only if
\begin{equation}
    \abs{A_w} \le \sin\theta_{w-1}\cos\theta_w \quad (1\le w\le N_C).
\end{equation}
Therefore, for us to have a valid protocol that is also entanglement-breaking, we must have
\begin{equation}
    \abs{A_w} \le \text{min}\{\cos\theta_{w-1}\sin\theta_w, \sin\theta_{w-1}\cos\theta_w\} \quad (1\le w\le N_C).
\end{equation}
Of course, if we want the best possible performance, we should have $A_w$ be nonnegative real and saturate this bound:
\begin{equation}
    A_w = \text{min}\{\cos\theta_{w-1}\sin\theta_w, \sin\theta_{w-1}\cos\theta_w\} \quad (1\le w\le N_C).
\end{equation}

\vspace{0.5\baselineskip}

So now we can ask: what is the best possible entanglement-breaking channel? Recall that the output purity parameter $\tilde{\lambda} = 1 - 2\mI(\mE)$ is given by
\begin{equation}
    \tilde{\lambda} = 2\sum_{w=1}^{N_C}P_{w-1,w}A_w.
\end{equation}
However, we can easily construct the upper bound
\begin{align}
    A_w &= \text{min}\{\cos\theta_{w-1}\sin\theta_w, \sin\theta_{w-1}\cos\theta_w\} \\
    &\le \sqrt{\left(\cos\theta_{w-1}\sin\theta_w\right)\left(\sin\theta_{w-1}\cos\theta_w\right)} \\
    &= \frac{1}{2}\sqrt{\sin\left(2\theta_{w-1}\right)\sin\left(2\theta_w\right)} \\
    &\le \frac{1}{2}.
\end{align}
Clearly this bound is saturated if and only if $\theta_w = \frac{\pi}{4}$, such that $\cos^2\theta_w = \sin^2\theta_w = \frac{1}{2}$, for all $0\le w\le N_C$. Hence we immediately obtain the optimal time-translation invariant, permutation-invariant, measure-and-prepare channel:
\begin{equation}
    \sin^2\theta_w = \frac{1}{2} \quad (0\le w\le N_C).
\end{equation}
In general, the PPT condition is merely a necessary condition for a channel to be entanglement-breaking, not a sufficient one. However, in Appendix \ref{appendix:entanglement-breaking}\ref{subsec:explicit-separability}, we close this gap in this specific case by showing a manifestly separable expression for the above Choi matrix. As a result, the protocol given by $\sin^2\theta_w = \frac{1}{2}$ for all $0\le w\le N_C$ is truly the optimal entanglement-breaking protocol in the equatorial case.

\vspace{0.5\baselineskip}

Interestingly, if we consider the optimal first-order protocol in the asymptotic regime, which we derived in Appendix \ref{appendix:1st-order-optimality}, and we take the $\lambda\rightarrow 0$ limit of that protocol, we actually get $\sin^2\theta_w = \frac{1-\cos\Theta_{\text{out}}}{2}$ for all $0\le w\le N_C$. In the equatorial case, this corresponds precisely to the protocol we derived above. Intuitively, as $\lambda\rightarrow 0$, the protocol becomes closer to entanglement-breaking, which agrees with the observation made by Marvian that the minimum infidelity factor of an entanglement-breaking protocol $\delta_1(\mE_{\text{opt},EB}) = \frac{1}{4\lambda^2}$ gets closer and closer to saturating the bound for a more general protocol $\delta_1(\mE_{\text{opt}}) = \frac{1-\lambda^2}{4\lambda^2}$ in the high-noise limit $\lambda\rightarrow 0$ \cite{Marvian2020}.

\vspace{0.5\baselineskip}

From the above protocol, we obtain the following output purity parameter:
\begin{equation}
    \tilde{\lambda} = \sum_{w=1}^{N_C}P_{w-1,w}.
\end{equation}
This is precisely the ``zeroth moment'' of the $P_{w-1,w}$ values, as computed in Appendix \ref{appendix:understanding-P-vals}\ref{subsec:P-vals-moments-offset1-equatorial}. Hence this zeroth moment is more than just a mathematically useful quantity that we use to compute the optimal distillation protocol. It has an operational meaning as the optimal purity parameter of a measure-and-prepare distillation protocol. In particular, in Appendix \ref{appendix:understanding-P-vals}, we state that
\begin{equation}
    \mM_0^{(1)} = 1 + \left(-\frac{1}{2\lambda}\right)\frac{1}{N_C} + \left(\frac{-3 + 4\lambda}{8\lambda^2}\right)\frac{1}{N_C^2} + \left(\frac{-9 + 12\lambda - 10\lambda^2}{16\lambda^3}\right)\frac{1}{N_C^3} + O\left(N_C^{-4}\right).
\end{equation}
Therefore, the minimal infidelity of an entanglement-breaking protocol in the equatorial case is
\begin{equation}
    \mI(\mE_{\text{opt,EB}}) = \frac{1-\mM_0^{(1)}}{2} = \left(\frac{1}{4\lambda}\right)\frac{1}{N_C} + \left(\frac{3 - 4\lambda}{16\lambda^2}\right)\frac{1}{N_C^2} + \left(\frac{9 - 12\lambda + 10\lambda^2}{32\lambda^3}\right)\frac{1}{N_C^3} + O\left(N_C^{-4}\right).
\end{equation}
We now use the expressions for $\Ebb\left[N_C^{-p}\right]$ for $p=1,2,3$, which we take from Lemma \ref{lem:NC-negative-moments}:
\begin{align}
    \mathbb{E}\left[\frac{1}{N_C}\right] &= \left(\frac{1}{\lambda}\right)\frac{1}{N} + \left(\frac{1 - \lambda}{\lambda^2}\right)\frac{1}{N^2} + \left[\frac{(1 - \lambda)(1 + 2\lambda - \lambda^2)}{\lambda^4}\right]\frac{1}{N^3} + O\left(N^{-4}\right) \\
    \mathbb{E}\left[\frac{1}{N_C^2}\right] &= \left(\frac{1}{\lambda^2}\right)\frac{1}{N^2} + \left[\frac{(1 - \lambda)(1 + 3\lambda)}{\lambda^4}\right]\frac{1}{N^3} + O\left(N^{-4}\right) \\
    \mathbb{E}\left[\frac{1}{N_C^3}\right] &= \left(\frac{1}{\lambda^3}\right)\frac{1}{N^3} + O\left(N^{-4}\right).
\end{align}
Using the above formulas, we can convert the aforementioned power series in $N_C^{-1}$ into a power series in $N^{-1}$:
\begin{align}
    \mI(\mE_{\text{opt,EB}}) &= \left(\frac{1}{4\lambda}\right)\Bigg\{\left(\frac{1}{\lambda}\right)\frac{1}{N} + \left(\frac{1 - \lambda}{\lambda^2}\right)\frac{1}{N^2} + \left[\frac{(1 - \lambda)(1 + 2\lambda - \lambda^2)}{\lambda^4}\right]\frac{1}{N^3}\Bigg\} \\
    & \quad\quad + \left(\frac{3 - 4\lambda}{16\lambda^2}\right)\Bigg\{\left(\frac{1}{\lambda^2}\right)\frac{1}{N^2} + \left[\frac{(1 - \lambda)(1 + 3\lambda)}{\lambda^4}\right]\frac{1}{N^3}\Bigg\} \\
    & \quad\quad + \left(\frac{9 - 12\lambda + 10\lambda^2}{32\lambda^3}\right)\Bigg\{\left(\frac{1}{\lambda^3}\right)\frac{1}{N^3}\Bigg\} + O\left(N^{-4}\right) \\
    &= \left(\frac{1}{4\lambda^2}\right)\frac{1}{N} + \left(\frac{3 - 4\lambda^2}{16\lambda^4}\right)\frac{1}{N^2} + \left(\frac{15 - 16\lambda^2 + 8\lambda^4}{32\lambda^6}\right)\frac{1}{N^3} + O(N^{-4}).
\end{align}

\vspace{0.5\baselineskip}

Another interesting question we can ask is the following: suppose we choose the true optimal $A_w$, disregarding the entanglement-breaking condition:
\begin{equation}
    A_w = \cos\theta_{w-1}\sin\theta_w \quad (1\le w\le N).
\end{equation}
In that case, what conditions must the $\theta_w$ values satisfy for the protocol to be entanglement-breaking?

\vspace{0.5\baselineskip}

It turns out that the PPT condition provides an amazingly simple necessary condition. In particular, if we set $A_w = \cos\theta_{w-1}\sin\theta_w$, then we must have
\begin{align}
    \cos\theta_{w-1}\sin\theta_w &\le \sin\theta_{w-1}\cos\theta_w \quad 1\le w\le N \\
    \iff \tan\theta_w &\le \tan\theta_{w-1} \quad 1\le w\le N \\
    \iff \theta_{w-1} &\le \theta_w \quad 1\le w\le N.
\end{align}
This is equivalent to saying that the $\theta_w$ values are decreasing. Hence we reach the following necessary condition: \textbf{if a permutation-invariant TI channel that satisfies all ``obvious'' optimizations (namely that the $A_w$ values saturate the bounds needed for the channel to be completely positive) is entanglement-breaking, then the $\theta_w$ values must be decreasing.} (We suspect that this condition is also sufficient, although we do not yet have a proof.)

\vspace{0.5\baselineskip}

For now, we have assumed that the $\theta_w$ values can be anything in $\left[0,\frac{\pi}{2}\right]$, and we have just maximized the off-diagonal elements $A_w$ conditioned on the $\theta_w$ values. However, the last ``obvious'' optimization (before we are forced to do the complicated boundary value problem stuff) is $\theta_0 = 0$ and $\theta_N = \frac{\pi}{2}$. But this immediately implies that the $\theta_w$ values cannot be decreasing for the optimal distillation protocol! We thus achieve the following corollary: \textbf{an absolutely optimal single-shot qubit coherence distillation protocol is NEVER entanglement-breaking.}

\appsubsec{Explicit Demonstration of Separability of Choi Matrix}
{subsec:explicit-separability}

In general, it is very hard to discern whether a quantum channel is entanglement-breaking just from the Choi matrix. However, for the protocol we derived in the previous subsection, which is optimal among all those satisfying the PPT condition, we can write out a manifestly separable formula for the Choi matrix. In particular, we will write $J(\mE)$ as a convex combination of pure states that are tensor products of pure states on the $(N_C+1)$-dimensional input space and pure states on the $2$-dimensional output space.

\vspace{0.5\baselineskip}

For convenience, define the following families of states on the input and output Hilbert spaces, respectively:
\begin{align}
    \ket{\Psi(\phi)} &\equiv \frac{1}{\sqrt{N_C+1}}\sum_{w=0}^{N_C}e^{i\phi w}\ket{w^{(s)}} \\
    \ket{\eta(\phi)} &\equiv \frac{\ket{0} + e^{i\phi}\ket{1}}{\sqrt{2}}.
\end{align}
Notice that $\ket{\Psi(\phi)}$ is the \textbf{forward} time evolution of $\ket{\Psi(0)}$ under the Hamiltonian $H = Z_1 + \cdots + Z_{N_C}$, and $\ket{\eta(\phi)}$ is similarly the \textbf{forward} time evolution of $\ket{\eta(0)}$ under the Hamiltonian $H = Z$. However, let us now consider $\ket{\eta(-\phi)}$, which is the \textbf{backward} time evolution of $\ket{\eta(0)}$ under the Hamiltonian $H = Z$. The tensor product of $\ket{\Psi(\phi)}$ and $\ket{\eta(-\phi)}$ looks as follows:
\begin{align}
    & \quad \ket{\Psi(\phi)}\bra{\Psi(\phi)}\otimes\ket{\eta(-\phi)}\bra{\eta(-\phi)} \\
    &= \frac{1}{2(N_C+1)}\sum_{w=0}^{N_C}\sum_{w'=0}^{N_C}\sum_{b=0}^{1}\sum_{b'=0}^{1}e^{i\phi\left[(w-w') - (b-b')\right]}\ket{w^{(s)}}\bra{(w')^{(s)}}\otimes\ket{b}\bra{b'}.
\end{align}
Now we integrate the above state over the unit circle. Using the fact that $\int_{0}^{2\pi}e^{i\phi n}d\phi = 2\pi\delta_{n,0}$, we conclude that
\begin{equation}
    \int_{0}^{2\pi}\ket{\Psi(\phi)}\bra{\Psi(\phi)}\otimes\ket{\eta(-\phi)}\bra{\eta(-\phi)}\,d\phi = \frac{\pi}{N_C+1}\sum_{w-w'=b-b'}\ket{w^{(s)}}\bra{(w')^{(s)}}\otimes\ket{b}\bra{b'}.
\end{equation}
But notice that this is just a rescaling of our desired Choi matrix! This matrix has $\frac{\pi}{N_C+1}$ for all entries such that $w - w' = b - b'$, while $J(\mE)$ has $\frac{1}{2}$ for all such entries. Therefore,
\begin{equation}
    J(\mE) = \frac{N_C+1}{2\pi}\int_{0}^{2\pi}\ket{\Psi(\phi)}\bra{\Psi(\phi)}\otimes\ket{\eta(-\phi)}\bra{\eta(-\phi)}\,d\phi,
\end{equation}
which is a manifestly separable expression for the Choi matrix, since it is $(N_C+1)$ times a convex combination of the pure product states $\ket{\Psi(\phi)}\bra{\Psi(\phi)}\otimes\ket{\eta(-\phi)}\bra{\eta(-\phi)}$. We conclude that the distillation channel we derived in the previous subsection is indeed entanglement-breaking.

\vspace{0.5\baselineskip}

This manifestly separable expression for the Choi matrix also immediately implies a measure-and-prepare implementation scheme for the optimal entanglement-breaking channel. More generally, a separable Choi matrix (representing an entanglement-breaking channel) can be written in the form
\begin{equation}
    \mJ(\mE) = \sum_{j}c_j\ket{\Psi_j}\bra{\Psi_j}\otimes\ket{\eta_j}\bra{\eta_j},
\end{equation}
where $\ket{\Psi_j}$ are pure states on the input Hilbert space, $\ket{\eta_j}$ are pure states on the output Hilbert space, and $c_j$ are nonnegative real numbers satisfying $\sum_{j}c_j = d_{\text{in}}$, where $d_{\text{in}}$ is the dimension of the Hilbert space. Furthermore, suppose that
\begin{equation}
    \sum_{j}\frac{c_j}{d_{\text{in}}}\ket{\Psi_j}\bra{\Psi_j} = \Ibb_{d_{\text{in}}}.
\end{equation}
Then such a quantum channel has a Kraus representation of the form
\begin{equation}
    K_j = \sqrt{\frac{c_j}{d_{\text{in}}}}\ket{\eta_j^*}\bra{\Psi_j},
\end{equation}
where $\ket{\eta_j^*}$ denotes the complex conjugate of $\ket{\eta_j}$. Furthermore, this channel can be implemented by applying the POVM
\begin{equation}
    M_j = \frac{c_j}{d_{\text{in}}}\ket{\Psi_j}\bra{\Psi_j}
\end{equation}
and then preparing the state $\ket{\eta_j^*}\bra{\eta_j^*}$ based on the POVM outcome $j$. The optimal entanglement-breaking single-shot qubit coherence distillation protocol shown above has exactly this form (except we have an integral instead of a sum). Therefore, we can implement this protocol by applying the continuous POVM
\begin{equation}
    dM = \frac{N_C+1}{2\pi}\ket{\Psi(\phi)}\bra{\Psi(\phi)}\,d\phi
\end{equation}
and then preparing the state $\ket{\eta(\phi)}\bra{\eta(\phi)}$ based on the POVM outcome $\phi$.



\newpage

\appsec{Optimal Equatorial Protocols in the Low-Noise Limit}
{appendix:perturbative-protocols}

One interesting aside we can consider is what the optimal protocol looks like in the regime of $\lambda$ values extremely close to $1$ (low-noise limit) or $0$ (high-noise limit). In particular, instead of treating $N$ perturbatively (consider the limit $N\rightarrow\infty$, and expand in powers of $N^{-1}$), we will treat $\lambda$ perturbatively. As far as we know, there is no obvious information-theoretic consequence for the behavior of the optimal protocol in these regimes. However, we believe it offers some nice insight into the single-shot qubit coherence distillation problem in its own right. For convenience, we will restrict to the equatorial special case $\Theta_{\text{in}} = \Theta_{\text{out}} = \frac{\pi}{2}$.

In this appendix, we will focus on the low-noise limit $\lambda\rightarrow 1$, since that limit is easier to analyze. Numerically solving the boundary value problem in Appendix \ref{appendix:boundary-value-problem} reveals that, for $\lambda\approx 1$, the $\sin^2\theta_w$ values follow a relatively smooth curve. This naturally motivates the idea of expanding the protocol perturbatively in powers of $c_0 = \frac{1-\lambda}{2}$. In particular, we will prove the following result:

\begin{theorem}[Optimal equatorial distillation protocol in the low-noise limit]
\label{thm:equatorial-lam1-perturbative}
In the low-noise limit, the optimal $\lambda\rightarrow 1$ equatorial distillation protocol can be expanded as a power series in $c_0 = \frac{1-\lambda}{2}$, where the zeroth, first, and second orders look as follows:
\begin{align}
    S_w = \sin^2\theta_w &= f_0(w) + f_1(w)c_0 + f_2(w)c_0^2 + O\left(c_0^3\right) \\
    f_0(w) &= \frac{w}{N_C} \\
    f_1(w) &= \frac{4}{N_C^2(N_C-2)}w(N_C-w)(N_C-2w) \\
    f_2(w) &= Aw(N_C-w)(N_C-2w)(w^2 - N_Cw + B) \\
    A &= \frac{16(3N_C-4)}{N_C^3(N_C-1)(N_C-2)(N_C-4)} \\
    B &= \frac{3N_C^3 - 7N_C^2 + 16}{4(3N_C-4)}.
\end{align}
\end{theorem}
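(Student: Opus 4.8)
The plan is to work directly with the boundary value problem of Appendix \ref{appendix:boundary-value-problem}\ref{subsec:brute-force-optimization}, specialized to the equatorial case. Writing $S_w = \sin^2\theta_w$, the three-angle optimality relation is equivalent to the second-order recurrence
\[
S_{w+1} = (1 - S_{w-1})\left(\frac{1}{S_w} - 1\right)R_w^2, \qquad R_w = \frac{P_{w-1,w}}{P_{w,w+1}},
\]
with boundary conditions $S_0 = 0$ and $S_{N_C} = 1$. Since the excerpt establishes that this boundary value problem has a unique solution, and since by Corollary \ref{cor:P-val-single-summation-equatorial} each $R_w^2$ is a rational function of $\lambda$ with integer coefficients (hence analytic in $c_0 = \tfrac{1-\lambda}{2}$ near $c_0 = 0$), the solution admits a Taylor expansion $S_w = f_0(w) + f_1(w)c_0 + f_2(w)c_0^2 + O(c_0^3)$. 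My approach is to substitute this ansatz into the recurrence and match powers of $c_0$, solving the resulting linear difference equations order by order. Throughout I would exploit bit-flip symmetry, which forces $S_w + S_{N_C-w} = 1$ and hence $f_0(w) + f_0(N_C-w) = 1$ while $f_k(w) = -f_k(N_C-w)$ for $k \ge 1$; in particular each $f_k$ with $k\ge 1$ vanishes at $w = 0, N_C$ and is antisymmetric about $w = N_C/2$, matching the stated odd polynomial forms.

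The first concrete task is to expand $R_w^2$ in $c_0$. Using Corollary \ref{cor:P-val-single-summation-equatorial}, the constant prefactors cancel in the ratio and I would obtain
\[
R_w^2 = \frac{(N_C-w+1)(w+1)}{w(N_C-w)}\left(\frac{\sum_j \binom{w}{j+1}\binom{N_C-w}{j}\lambda^{2j}}{\sum_j \binom{w+1}{j+1}\binom{N_C-w-1}{j}\lambda^{2j}}\right)^2.
\]
At $c_0 = 0$ (i.e.\ $\lambda = 1$) each sum collapses by Vandermonde's identity, $\sum_j \binom{w}{j+1}\binom{N_C-w}{j} = \binom{N_C}{w-1}$ and $\sum_j \binom{w+1}{j+1}\binom{N_C-w-1}{j} = \binom{N_C}{w}$, giving $R_w^2\big|_{\lambda=1} = \tfrac{(w+1)w}{(N_C-w+1)(N_C-w)}$, which makes the recurrence telescope and confirms the leading order $f_0(w) = w/N_C$. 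To reach $O(c_0^2)$ I would write $\lambda^{2j} = 1 - 4jc_0 + (8j^2 - 4j)c_0^2 + O(c_0^3)$ and evaluate the weighted binomial convolutions $\sum_j j^m \binom{w}{j+1}\binom{N_C-w}{j}$ (and the denominator analogues) for $m = 0,1,2$ in closed form. These are the computational core; each reduces, via standard manipulations of Vandermonde's convolution, to explicit rational functions of $w$ and $N_C$, yielding $R_w^2 = R_w^2\big|_{\lambda=1}\bigl(1 + \rho_1(w)c_0 + \rho_2(w)c_0^2 + O(c_0^3)\bigr)$ with explicit $\rho_1, \rho_2$.

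With the source terms in hand, I would linearize the recurrence. At order $c_0^1$, substituting $S_w = f_0 + f_1 c_0$ gives a linear inhomogeneous second-order difference equation for $f_1$ whose inhomogeneity is built from $f_0$ and $\rho_1$; since the source is polynomial in $w$, I would seek a polynomial solution, and the boundary and symmetry constraints ($f_1(0) = f_1(N_C) = 0$, antisymmetry about $N_C/2$) pin down the unique answer $f_1(w) = \frac{4}{N_C^2(N_C-2)}\,w(N_C-w)(N_C-2w)$. Order $c_0^2$ proceeds identically but with an inhomogeneity that also contains quadratic-in-$f_1$ contributions and $\rho_2$; solving the analogous difference equation with a degree-five antisymmetric polynomial ansatz and matching coefficients determines $f_2$ together with the constants $A$ and $B$. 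The main obstacle is the second-order bookkeeping: assembling the $O(c_0^2)$ inhomogeneity accurately requires the full $c_0^2$ data for $R_w^2$ (and hence the second weighted moments $\sum_j j^2(\cdots)$), together with the cross terms arising from expanding $(1-S_{w-1})(\tfrac{1}{S_w}-1)$, after which one must verify that the resulting polynomial difference equation is indeed solved by the claimed degree-five form. This is heavy enough that, as the authors remark for the related moment computations, it is most safely carried out with computer algebra. Finally, uniqueness of the boundary value problem certifies that the polynomials so obtained are exactly the optimal protocol through second order in $c_0$.
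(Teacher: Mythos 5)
Your proposal is correct and follows the same overall strategy as the paper's proof in Appendix \ref{appendix:perturbative-protocols}: expand the equatorial boundary value problem perturbatively in $c_0$, match powers order by order, and pin down each correction with a polynomial ansatz constrained by the boundary conditions and bit-flip antisymmetry (including, at second order, promoting the naive scale-invariant cubic guess $\propto(N_C-2w)^3$ to the quintic form with free constants $A$ and $B$). Where you differ is in the bookkeeping of the computational core. First, you work with the rational form of the recurrence, $S_{w+1} = (1-S_{w-1})\left(\tfrac{1}{S_w}-1\right)R_w^2$, whereas the paper expands the trigonometric three-angle relation $\tan\theta_w = \cos\theta_{w-1}\csc\theta_{w+1}\,R_w$ and must therefore Taylor-expand $\tan$, $\cos$, and $\csc$ of the perturbed angles; your form avoids the square roots and is arguably cleaner. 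Second, you obtain the $c_0$-expansion of $R_w^2$ from the single-summation formula of Corollary \ref{cor:P-val-single-summation-equatorial} by expanding $\lambda^{2j}$ and evaluating weighted Vandermonde convolutions $\sum_j j^m\binom{w}{j+1}\binom{N_C-w}{j}$, $m=0,1,2$; the paper instead expands the double-sum representation through the $Q_{wk}$ overlaps, where the weight $c_1^k c_0^{N_C-k}$ truncates everything to $k\in\{N_C, N_C-1, N_C-2\}$, so only three overlaps are ever needed. Both routes yield the same series for $R_w$ (your leading-order Vandermonde collapse reproduces the paper's $R_w\vert_{\lambda=1}$), so this is largely a matter of taste. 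One small caution: your opening inference that uniqueness of the boundary value problem plus rationality of $R_w^2$ in $\lambda$ guarantees a Taylor expansion of $S_w$ in $c_0$ is not airtight as stated (uniqueness at each fixed $\lambda$ does not by itself give analyticity in $\lambda$; one would want an implicit-function-theorem argument, or the exact rationality of $S_w$ in $\lambda$ established for odd $N_C$ in Appendix \ref{appendix:boundary-value-problem}\ref{subsec:solve-recurrence-odd-N-equatorial}), but the paper itself only motivates the expansion from numerics, so this does not put you behind the paper's own level of rigor.
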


\appsubsec{First-Order Perturbative Protocol for $\lambda\rightarrow 1$}
{subsec:lam1-order1}

To study the first-order perturbative protocol, we should write everything in terms of powers of $c_0$ and ignore all quantities that are quadratic or higher in $c_0$. In particular, we want to write the protocol as
\begin{equation}
    \sin^2\theta_w = f_0(w) + f_1(w)c_0 + O\left(c_0^2\right).
\end{equation}
From Appendix \ref{appendix:boundary-value-problem}, we already know that $f_0(w) = \frac{w}{N_C}$, so we now need to find $f_1(w)$.

\vspace{0.5\baselineskip}

To accomplish this, we must first write the $P_{w,w'}$ values to the first order in $c_0$. We begin by approximating the prefactor corresponding to the probability of the $j=N_C/2$ outcome in the angular momentum measurement when performing Schur sampling:
\begin{equation}
    \frac{c_1 - c_0}{c_1^{N_C+1} - c_0^{N_C+1}} = \frac{1 - 2c_0}{(1 - c_0)^{N_C+1} - c_0^{N_C+1}} = 1 + (N_C-1)c_0 + O\left(c_0^2\right).
\end{equation}
Next, we need to write the $Q_{wk}$ values (which are defined in Appendix \ref{appendix:understanding-P-vals}\ref{subsec:P-vals-miscellaneous-facts}) perturbatively as well. Because the formula for $P_{w,w'}$ has a factor of $c_0^{N_C-k}$ in front of the term involving $Q_{wk}$ and $Q_{w'k}$, we only need to consider $k=N_C$ and $k=N_C-1$, because all other $Q_{wk}$ values will be quadratic or higher in $c_0$ and can thus be ignored. We thus obtain
\begin{align}
    Q_{w,N_C} &= \sqrt{\frac{1}{\binom{N_C}{w}2^{N_C}}}\left[\binom{N_C}{w}\right] = \sqrt{\frac{\binom{N_C}{w}}{2^{N_C}}} \\
    Q_{w,N_C-1} &= \sqrt{\frac{N_C}{\binom{N_C}{w}2^{N_C}}}\left[\binom{N_C}{w} - \binom{N_C-1}{w-1}\right] = \sqrt{\frac{\binom{N_C}{w}}{2^{N_C}}}\frac{N_C-2w}{\sqrt{N_C}}.
\end{align}
Putting this all together, we get the desired first-order perturbative formula for the $P_{w,w'}$ values:
\begin{align}
    P_{w,w'} = \,\, & \left[1 + (N_C-1)c_0 + O\left(c_0^2\right)\right] \\
    & \left[\left(1 - N_Cc_0 + O\left(c_0^2\right)\right)\frac{\sqrt{\binom{N_C}{w}\binom{N_C}{w'}}}{2^{N_C}} + \left(c_0 + O\left(c_0^2\right)\right)\frac{\sqrt{\binom{N_C}{w}\binom{N_C}{w'}}}{2^{N_C}}(N_C-2w)(N_C-2w')\right] \\
    = \,\, & \frac{\sqrt{\binom{N_C}{w}\binom{N_C}{w'}}}{2^{N_C}}\left[1 + c_0\left(\frac{(N_C-2w)(N_C-2w')}{N_C} - 1\right) + O\left(c_0^2\right)\right].
\end{align}
Finally, we take the ratios of suitably chosen $P_{w,w'}$ values to find the $R_w$ values to first order in $c_0$:
\begin{align}
    R_w = \frac{P_{w-1,w}}{P_{w,w+1}} &= \sqrt{\frac{\binom{N}{w-1}}{\binom{N}{w+1}}}\times\frac{1 + c_0\left(\frac{(N-2(w-1))(N-2w)}{N} - 1\right) + O\left(c_0^2\right)}{1 + c_0\left(\frac{(N-2w)(N-2(w+1))}{N} - 1\right) + O\left(c_0^2\right)} \\
    &= \sqrt{\frac{w(w+1)}{(N-w)(N-w+1)}}\left[1 + \frac{4(N-2w)}{N}c_0 + O\left(c_0^2\right)\right].
\end{align}
As we show in Appendix \ref{appendix:boundary-value-problem}\ref{subsec:brute-force-optimization}, the optimal protocol in the equatorial case is fully determined by the $R_w$ values, so these are really the only values we need.

\vspace{0.5\baselineskip}

We must now expand various trigonometric functions of the $\theta_w$ values to first order in $c_0$. We need to do this because we need to see how the first-order changes in the $R_w$ values, when plugged into the three-angle relation, shifts the $\sin^2\theta_w$ values. Since the three-angle relation involves $\tan\theta_w$, $\cos\theta_{w-1}$, and $\sin\theta_{w+1}$, we want to write these quantities in terms of $\sin^2\theta_w$. We do so as follows:
\begin{align}
    \sin^2\theta_w &= \frac{w}{N_C} + f_1(w)c_0 + O\left(c_0^2\right) \\
    \therefore \tan\theta_w &= \sqrt{\frac{w + N_Cf_1(w)c_0 + O\left(c_0^2\right)}{N_C - w - N_Cf_1(w)c_0 + O\left(c_0^2\right)}} \\
    &= \sqrt{\frac{w}{N_C-w}}\left[1 + \frac{N_Cf_1(w)}{2w}c_0 + O\left(c_0^2\right)\right]\left[1 + \frac{N_Cf_1(w)}{2(N_C-w)}c_0 + O\left(c_0^2\right)\right] \\
    &= \sqrt{\frac{w}{N_C-w}}\left[1 + \frac{N_C^2f_1(w)}{2w(N_C-w)}c_0 + O\left(c_0^2\right)\right] \\
    \cos\theta_{w-1} &= \sqrt{\frac{N_C-w+1}{N_C} - f_1(w-1)c_0 + O\left(c_0^2\right)} \\
    &= \sqrt{\frac{N_C-w+1}{N_C}}\left[1 - \frac{N_Cf_1(w-1)}{2(N_C-w+1)}c_0 + O\left(c_0^2\right)\right] \\
    \sin\theta_{w+1} &= \sqrt{\frac{w+1}{N_C} + f(w+1)c_0 + O\left(c_0^2\right)} \\
    &= \sqrt{\frac{w+1}{N_C}}\left[1 + \frac{N_Cf_1(w+1)}{2(w+1)}c_0 + O\left(c_0^2\right)\right].
\end{align}
Plugging these formulas into the three-angle relation yields
\begin{align}
    \tan\theta_w &= \frac{\cos\theta_{w-1}}{\sin\theta_{w+1}}R_w \\
    \iff 1 + \frac{N_C^2f_1(w)}{2w(N_C-w)}c_0 + O\left(c_0^2\right) &= \frac{1 - \frac{N_Cf_1(w-1)}{2(N_C-w+1)}c_0 + O\left(c_0^2\right)}{1 + \frac{N_Cf_1(w+1)}{2(w+1)}c_0 + O\left(c_0^2\right)}\left[1 + \frac{4(N_C-2w)}{N_C}c_0 + O\left(c_0^2\right)\right] \\
    \iff \frac{N_C^2f_1(w)}{2w(N_C-w)} &= -\frac{N_Cf_1(w-1)}{2(N_C-w+1)} - \frac{N_Cf_1(w+1)}{2(w+1)} + \frac{4(N_C-2w)}{N_C}.
\end{align}

\vspace{0.5\baselineskip}

In other words, we have turned our three-angle relation for $\sin^2\theta_w$ into a three-value relation for $f_1(w)$. Furthermore, the boundary conditions and the bit-flip symmetry condition for the equatorial special case also immediately translate as follows:
\begin{itemize}
    \item boundary conditions: $\theta_0 = 0, \, \theta_{N_C} = \frac{\pi}{2} \implies f_1(0) = f_1(N) = 0$
    \item bit-flip symmetry: $\theta_w + \theta_{N_C-w} = \frac{\pi}{2} \implies f_1(N_C-w) = -f_1(w)$.
\end{itemize}
Turning one boundary value problem into another may not feel like progress. But the boundary value problem for $f_1(w)$ is a lot easier than the one for $\sin^2\theta_w$, because its solution is much more guessable. One way to guess the solution, inspired by the numerics, is to guess that $f_1(w)$ is scale-invariant, in the sense that it only depends on $w$ as a fraction of $N_C$. Doing this dramatically simplifies the three-value relation above:
\begin{align}
    f_1(w) &\approx h\left(\frac{w}{N}\right) \\
    \implies \frac{h(x)}{2x(1-x)} &= -\frac{h(x)}{2(1-x)} - \frac{h(x)}{2x} + 4(1-2x) \\
    \implies h(x) &= 4x(1-x)(1-2x).
\end{align}
If you try to plug in this result for $f_1(w)$, it will not work. However, it is also natural to try this functional form, but with an adjustment to the leading factor that can depend on $N_C$. Such an ansatz looks as follows:
\begin{align}
    & f(w) = c(N_C)w(N_C-w)(N_C-2w) \\
    \iff & \frac{c(N_C)N_C^2(N_C-2w)}{2} = -\frac{c(N_C)N_C(w-1)(N_C-2w+2)}{2} \\
    & \quad\quad\quad\quad\quad\quad\quad\quad\quad\quad\quad - \frac{c(N_C)N_C(N_C-w-1)(N_C-2w-2)}{2} + \frac{4(N_C-2w)}{N_C} \\
    \iff & c(N_C)N_C\left[N_C(N_C-2w) + (w-1)(N_C-2w+2) + (N_C-w-1)(N_C-2w-2)\right] = \frac{8(N_C-2w)}{N_C} \\
    \iff & 2N_C(N_C-2)(N_C-2w)c(N_C) = \frac{8(N_C-2w)}{N_C} \\
    \iff & c(N_C) = \frac{4}{N_C^2(N_C-2)}.
\end{align}
Sure enough, adjusting the leading factor allowed us to find a solution to the boundary value problem! We conclude that the first-order perturbation to the optimal protocol in the low-noise limit looks as follows:
\begin{align}
    f_1(w) &= \frac{4w(N_C-w)(N_C-2w)}{N_C^2(N_C-2)} \\
    \sin^2\theta_w &= \frac{w}{N_C} + \frac{4w(N_C-w)(N_C-2w)}{N_C^2(N_C-2)}c_0 + O\left(c_0^2\right).
\end{align}

\vspace{0.5\baselineskip}

It is worthwhile to make a few nice observations regarding $f_1(w)$:
\begin{itemize}
    \item First, it is indeed a cubic function in $w$, which one may have suspected from the numerics.
    \item Second, this cubic has roots $w=0,\frac{N_C}{2},N_C$, which are natural consequences of the boundary conditions and bit-flip symmetry.
    \item Third, although it is not exactly scale-invariant (in the sense of depending only on $\frac{w}{N_C}$), it approaches scale invariance in the $N_C\rightarrow\infty$ limit, which one may also have suspected from the numerics.
    \item Fourth, it is well-defined for all $N_C$ except for $N_C=0$ and $N_C=2$. This is fine because the cases with $N_C\le 2$ do not allow for perturbations anyway, in the sense that the optimal protocol is the same regardless of $\lambda$. For all interesting cases ($N_C\ge 3$), this perturbation is perfectly well-behaved.
\end{itemize}

\appsubsec{Second-Order Perturbative Protocol for $\lambda\rightarrow 1$}
{subsec:lam1-order2}

It turns out that computing the second-order perturbative protocol in the $\lambda\rightarrow 1$ limit, while more tedious than computing the first-order perturbative protocol, is indeed doable. Furthermore, it has a fairly satisfying form. The zeroth-order protocol (expressed using $\sin^2\theta_w$) is linear in $w$, while the first-order perturbative protocol is cubic in $w$. This loosely suggests that the second-order protocol is quintic (i.e., a degree-$5$ polynomial) in $w$. This idea turns out to be precisely correct, as we will now show.

\vspace{0.5\baselineskip}

We will follow much the same workflow that we used to compute the first-order perturbative protocol in the $\lambda\rightarrow 1$ limit. We begin by approximating the prefactor corresponding to the probability of the $j=N_C/2$ outcome in the angular momentum measurement when performing Schur sampling: 
\begin{align}
    \frac{c_1 - c_0}{c_1^{N_C+1} - c_0^{N_C+1}} &= \frac{1 - 2c_0}{(1 - c_0)^{N_C+1} - c_0^{N_C+1}} \\
    &= \frac{1 - 2c_0}{1 - (N_C+1)c_0 + \binom{N_C+1}{2}c_0^2 + O\left(c_0^3\right)} \\
    &= \left(1 - 2c_0\right)\left(1 + (N_C+1)c_0 + \binom{N_C+2}{2}c_0^2 + O\left(c_0^3\right)\right) \\
    &= 1 + (N_C-1)c_0 + \frac{(N_C+1)(N_C-2)}{2}c_0^2 + O\left(c_0^3\right).
\end{align}
Next, we need to compute how $c_1^kc_0^{N_C-k}$ is written up to second order in $c_0$. Since we discard all terms that are cubic or higher in $c_0$, we only need to consider $k=N_C$, $k=N_C-1$, and $k=N_C-2$:
\begin{align}
    k = N_C \rightarrow c_1^kc_0^{N_C-k} &= \left(1 - c_0\right)^N_C = 1 - N_Cc_0 + \frac{N_C(N_C-1)}{2}c_0^2 + O\left(c_0^3\right) \\
    k = N_C-1 \rightarrow c_1^kc_0^{N_C-k} &= \left(1 - c_0\right)^{N_C-1}c_0 = c_0 - (N_C-1)c_0^2 + O\left(c_0^3\right) \\
    k = N_C-2 \rightarrow c_1^kc_0^{N_C-k} &= \left(1 - c_0\right)^{N_C-2}c_0^2 = c_0^2 + O\left(c_0^3\right).
\end{align}
In Appendix \ref{appendix:perturbative-protocols}\ref{subsec:lam1-order1}, we already computed $Q_{wk}$ for $k=N_C$ and $k=N_C-1$:
\begin{align}
    Q_{w,N_C} &= \sqrt{\frac{\binom{N_C}{w}}{2^{N_C}}} \\
    Q_{w,N_C-1} &= \sqrt{\frac{\binom{N_C}{w}}{N_C\cdot 2^{N_C}}}(N_C-2w).
\end{align}
Now that we are studying the second-order perturbation, we need to introduce $k=N_C-2$ as well:
\begin{align}
    Q_{w,N_C-2} &= \sqrt{\frac{\binom{N_C}{2}}{\binom{N_C}{w}2^{N_C}}}\left[\binom{2}{0}\binom{N_C-2}{w} - \binom{2}{1}\binom{N_C-2}{w-1} + \binom{2}{2}\binom{N_C-2}{w-2}\right] \\
    &= \sqrt{\frac{\binom{N_C}{2}}{\binom{N_C}{w}2^{N_C}}}\binom{N_C-2}{w}\left[1 - 2\cdot\frac{w}{N_C-w-1} + \frac{w(w-1)}{(N_C-w)(N_C-w-1)}\right] \\
    &= \sqrt{\frac{\binom{N_C}{2}}{\binom{N_C}{w}2^{N_C}}}\binom{N_C}{w}\frac{(N_C-w)(N_C-w-1)}{N_C(N_C-1)}\left[1 - \frac{2w}{N_C-w-1} + \frac{w(w-1)}{(N_C-w)(N_C-w-1)}\right] \\
    &= \sqrt{\frac{\binom{N_C}{w}}{2N_C(N_C-1)2^{N_C}}}\left[(N_C-w)(N_C-w-1) - 2w(N_C-w) + w(w-1)\right] \\
    &= \sqrt{\frac{\binom{N_C}{w}}{2N_C(N_C-1)2^{N_C}}}\left[(N_C-2w)^2 - N_C\right].
\end{align}

We can now put everything together and compute $P_{w,w'}$ to second order in $c_0$:
\begin{align}
    P_{w,w'} = \,\, & \frac{\sqrt{\binom{N_C}{w}\binom{N_C}{w'}}}{2^{N_C}}\left[1 + (N_C-1)c_0 + \frac{(N_C+1)(N_C-2)}{2}c_0^2 + O\left(c_0^3\right)\right] \\
    & \Bigg\{\left[1 - N_Cc_0 + \frac{N_C(N_C-1)}{2}c_0^2 + O\left(c_0^3\right)\right] \\
    & + \left[c_0 - (N_C-1)c_0^2 + O\left(c_0^3\right)\right]\frac{1}{N_C}(N_C-2w)(N_C-2w') \\
    & + \left[c_0^2 + O\left(c_0^3\right)\right]\frac{1}{2N_C(N_C-1)}[(N_C-2w)^2 - N_C][(N_C-2w')^2 - N_C]\Bigg\}.
\end{align}
The above formula simplifies as follows:
\begin{align}
    P_{w,w'} &= \frac{\sqrt{\binom{N_C}{w}\binom{N_C}{w'}}}{2^{N_C}}\Bigg\{1 + \left[\frac{(N_C-2w)(N_C-2w')}{N_C} - 1\right]c_0 + (\cdots)c_0^2 + O\left(c_0^3\right)\Bigg\} \\
    \left[c_0^2\text{ coeff}\right] &= \frac{[(N_C-2w)^2 - N_C][(N_C-2w')^2 - N_C]}{2N_C(N_C-1)} - 1,
\end{align}
where we write the $c_0^2$ coefficient on a separate line to make it easier to read.

\vspace{0.5\baselineskip}

Finally, we need to take the ratios of suitable $P_{w,w'}$ values to compute the $R_w$ values. For two general quantities written as second-order perturbation series in $c_0$, their quotient looks as follows:
\begin{align}
    \frac{1 + \alpha_1c_0 + \alpha_2c_0^2 + O\left(c_0^3\right)}{1 + \beta_1c_0 + \beta_2c_0^2 + O\left(c_0^3\right)} &= \left[1 + \alpha_1c_0 + \alpha_2c_0^2 + O\left(c_0^3\right)\right]\left[1 - \beta_1c_0 + \left(\beta_1^2 - \beta_2\right)c_0^2 + O\left(c_0^3\right)\right] \\
    &= 1 + \left(\alpha_1 - \beta_1\right)c_0 + \left[\left(\alpha_2 - \beta_2\right) - \beta_1\left(\alpha_1 - \beta_1\right)\right]c_0^2 + O\left(c_0^3\right).
\end{align}
After some tedious but ultimately straightforward calculation, the quotient formula and the formula for the $P_{w,w'}$ values yield
\begin{align}
    R_w = \frac{P_{w-1,w}}{P_{w,w+1}} &= \sqrt{\frac{w(w+1)}{(N_C-w)(N_C-w+1)}}\left[1 + \frac{4(N_C-2w)}{N_C}c_0 + (\cdots)c_0^2 + O\left(c_0^3\right)\right] \\
    \left[c_0^2\text{ coeff}\right] &= \frac{4(N_C-2w)}{N_C}\left[\frac{(N_C-2w)^2}{N_C(N_C-1)} + \frac{2(N_C-2w)}{N_C} - \frac{1}{N_C-1}\right],
\end{align}
where we once again separate out the $c_0^2$ coefficient to make it easier to read. Just as in the first-order perturbation analysis, the $R_w$ values are the only ones we need to focus on, since they fully determine the optimal protocol.

\vspace{0.5\baselineskip}

We now need to examine the perturbation in the optimal protocol itself. First, we write $\sin^2\theta_w$ as a second-order perturbation series in $c_0$, where we derived the zeroth-order contribution in Appendix \ref{appendix:boundary-value-problem}, and we derived the first-order contribution in Appendix \ref{appendix:perturbative-protocols}\ref{subsec:lam1-order1}:
\begin{align}
    \sin^2\theta_w &= f_0(w) + f_1(w)c_0 + f_2(w)c_0^2 + O\left(c_0^3\right) \\
    f_0(w) &= \frac{w}{N_C} \\
    f_1(w) &= \frac{4}{N_C^2(N_C-2)}w(N_C-w)(N_C-2w).
\end{align}
Our objective is to solve for $f_2(w)$. To do so, we must expand various trigonometric functions of the $\theta_w$ values to second order in $c_0$. We use the following Taylor series:
\begin{align}
    \left(1 + \epsilon\right)^{1/2} &= 1 + \frac{\epsilon}{2} - \frac{\epsilon^2}{8} + O\left(\epsilon^3\right) \\
    \left(1 - \epsilon\right)^{-1/2} &= 1 - \frac{\epsilon}{2} + \frac{3}{8}\epsilon^2 + O\left(\epsilon^3\right).
\end{align}
Here, we have to expand these Taylor series out to the $\epsilon^2$ term, whereas in Appendix \ref{appendix:perturbative-protocols}\ref{subsec:lam1-order1}, we only expanded them out to the $\epsilon$ term. These Taylor series allow us to eventually compute all of the trigonometric functions we need:
\begin{align}
    \sin^2\theta_w &= f_0(w)\left[1 + \frac{f_1(w)}{f_0(w)}c_0 + \frac{f_2(w)}{f_0(w)}c_0^2 + O\left(c_0^3\right)\right] \\
    \sin\theta_w &= \sqrt{f_0(w)}\left[1 + \frac{f_1(w)}{2f_0(w)}c_0 + \left(\frac{f_2(w)}{2f_0(w)} - \frac{f_1(w)^2}{8f_0(w)^2}\right)c_0^2 + O\left(c_0^3\right)\right] \\
    \csc\theta_w &= \frac{1}{\sqrt{f_0(w)}}\left[1 - \frac{f_1(w)}{2f_0(w)}c_0 + \left(\frac{3f_1(w)^2}{8f_0(w)^2} - \frac{f_2(w)}{2f_0(w)}\right)c_0^2 + O\left(c_0^3\right)\right]
\end{align}

\begin{align}
    \cos^2\theta_w &= \left(1 - f_0(w)\right)\left[1 - \frac{f_1(w)}{1 - f_0(w)}c_0 - \frac{f_2(w)}{1 - f_0(w)}c_0^2 + O\left(c_0^3\right)\right] \\
    \cos\theta_w &= \sqrt{1 - f_0(w)}\left[1 - \frac{f_1(w)}{2\left(1 - f_0(w)\right)}c_0 - \left(\frac{f_2(w)}{2\left(1 - f_0(w)\right)} + \frac{f_1(w)^2}{8\left(1 - f_0(w)\right)^2}\right)c_0^2 + O\left(c_0^3\right)\right]
\end{align}

\begin{align}
    \tan\theta_w &= \frac{\sin\theta_w}{\cos\theta_w} \\
    &= \sqrt{\frac{f_0(w)}{1 - f_0(w)}}\left[1 + \frac{f_1(w)}{2f_0(w)\left(1 - f_0(w)\right)}c_0 + (\text{stuff})c_0^2 + O\left(c_0^3\right)\right] \\
    (\text{stuff}) &= -\beta_1\left(\alpha_1 - \beta_1\right) + \left(\alpha_2 - \beta_2\right) \\
    &= \frac{f_1(w)}{2\left(1 - f_0(w)\right)}\cdot\frac{f_1(w)}{2f_0(w)\left(1 - f_0(w)\right)} + \frac{f_2(w)}{2f_0(w)\left(1 - f_0(w)\right)} + \frac{f_1(w)^2\left(2f_0(w) - 1\right)}{8f_0(w)^2\left(1 - f_0(w)\right)^2} \\
    &= \frac{f_2(w)}{2f_0(w)\left(1 - f_0(w)\right)} + \frac{f_1(w)^2\left(4f_0(w) - 1\right)}{8f_0(w)^2\left(1 - f_0(w)\right)^2},
\end{align}
where in the above calculation of $\tan\theta_w$, we once again used the formula for the quotient of two second-order perturbation series.

\vspace{0.5\baselineskip}

Now we need to plug all of these results into our three-angle recurrence relation
\begin{equation}
    \tan\theta_w = \frac{\cos\theta_{w-1}}{\sin\theta_{w+1}}R_w = \cos\theta_{w-1}\csc\theta_{w+1}R_w.
\end{equation}
To compute the right side, we need to multiply three different second-order perturbation series: one for $\cos\theta_{w-1}$, one for $\csc\theta_{w+1}$, and one for $R_w$. In the most general case, this is done as follows:
\begin{align}
    & \left[1 + \alpha_1c_0 + \alpha_2c_0^2 + O\left(c_0^3\right)\right]\left[1 + \beta_1c_0 + \beta_2c_0^2 + O\left(c_0^3\right)\right]\left[1 + \gamma_1c_0 + \gamma_2c_0^2 + O\left(c_0^3\right)\right] \\
    = \quad & 1 + \left(\alpha_1 + \beta_1 + \gamma_1\right)c_0 + \left(\alpha_2 + \beta_2 + \gamma_2 + \alpha_1\beta_1 + \alpha_1\gamma_1 + \beta_1\gamma_1\right)c_0^2 + O\left(c_0^3\right).
\end{align}
This will also be the time to actually plug in the known values for $f_0(w)$ and $f_1(w)$. When we do so, the zeroth-order and first-order terms on both sides of the three-angle recurrence relation will match, since we solved for the $f_0(w)$ and $f_1(w)$ that would make them match. When we compare the second-order terms on both sides, we obtain
\begin{align}
    & \frac{N_C^2}{2w(N_C-w)}f_2(w) + \frac{N_C}{2(N_C-w+1)}f_2(w-1) + \frac{N_C}{2(w+1)}f_2(w+1) \\
    = \,\, & \frac{2(N_C-2w)^2(N_C-4w)}{N_C(N_C-2)^2} + \frac{4(N_C-2w)}{N_C}\left[\frac{(N_C-2w)^2 - N_C}{N_C(N_C-1)} + \frac{4(N_C-2w)}{N_C(N_C-2)}\right] \\
    & + \frac{2(N_C-4)(N_C-2w)[3(N_C-w-1)(N_C-2w-2) - (w-1)(N_C-2w+2)]}{N_C^2(N_C-2)^2}.
\end{align}
We have thus derived a three-value relation for the second-order perturbation $f_2(w)$. Along with the boundary conditions $f_2(0) = f_2(N_C) = 0$, this defines a boundary value problem for $f_2(w)$. It is also worth noting that the boundary conditions and three-value relation themselves satisfy bit-flip symmetry.

\vspace{0.5\baselineskip}

Just as we did for the first-order perturbation, we now need to somehow guess the solution to the above boundary value problem. Based on the the numerics, and similar to what we did for the first-order perturbation, it makes sense to guess that $f_2(w)$ is approximately scale-invariant. This assumption dramatically simplifies the three-value relation as follows:
\begin{align}
    f_2(w) &\approx g\left(\frac{w}{N_C}\right) \\
    \implies \frac{g(x)}{2x(1-x)} + \frac{g(x)}{2(1-x)} + \frac{g(x)}{2x} &= 2(1-2x)^2(1-4x) + 4(1-2x)[(1-2x)^2 + 0] \\
    & \quad\quad + 2(1-2x)[3(1-x)(1-2x) - x(1-2x)] \\
    \iff g(x)\left[\frac{1}{2x(1-x)} + \frac{1}{2(1-x)} + \frac{1}{2x}\right] &= 2(1-2x)^2(1-4x) + 4(1-2x)^3 + 2(1-2x)^2(3-4x) \\
    \iff \frac{g(x)}{x(1-x)} &= 2(1-2x)^2\left((1-4x) + (2-4x) + (3-4x)\right) \\
    \iff g(x) &= 12x(1-x)(1-2x)^3.
\end{align}

Based on this solution for $g(x)$, it is tempting to use a guess of the form $f_2(w) = Aw(N_C-w)(N_C-2w)^3$ for some constant $A$. However, some quick checks will reveal that no value of $A$ will allow the above recurrence relation connecting $f_2(w-1)$, $f_2(w)$, and $f_2(w+1)$ to be satisfied exactly.

\vspace{0.5\baselineskip}

As a result, we need to consider a slightly more general ansatz. We take the smallest amount of generalization we can help, which is the following:
\begin{equation}
    f_2(w) = Aw(N_C-w)(N_C-2w)(w^2 - N_Cw + B)
\end{equation}
This ansatz is very sensible, since the ansatz still needs to have roots at $w = 0, \frac{N_C}{2}, N_C$, and since the ansatz still needs to be odd with respect to $w = \frac{N_C}{2}$. Furthermore, the asymptotic approximation given by $g(x) = 12x(1-x)(1-2x)^3$ suggests the following limiting behaviors for the unknown parameters $A$ and $B$:
\begin{equation}
    A \sim \frac{48}{N_C^5}, \quad B \sim \frac{N_C^2}{4}.
\end{equation}

Expanding the three-term relation for $f_2(w)$ using the above quintic ansatz for $f_2(w)$ yields a polynomial in $w$ on both the left-hand side (LHS) and the right-hand side (RHS). We begin by matching the leading coefficients on both sides, which at first appear to be the $w^4$ coefficients:
\begin{align}
    (\text{LHS }w^3\text{ coeff}) &= 0 + \frac{AN_C}{2}(-2) + \frac{AN_C}{2}(+2) = 0 \\
    (\text{RHS }w^3\text{ coeff}) &= 0.
\end{align}
But this turns out to be useless. In particular, both sides are in fact just cubic polynomials in $w$. We must thus move on to match the $w^3$ coefficients. We derive
\begin{align}
    (\text{LHS }w^3\text{ coeff}) = \,\, & \frac{AN_C^2}{2}(-2) \\
    & + \frac{AN_C}{2}\left[(-1)(-2)(1) + (1)(N_C+2)(1) + (1)(-2)(-N_C-2)\right] \\
    & + \frac{AN_C}{2}\left[(N_C-1)(-2)(1) + (-1)(N_C-2)(1) + (-1)(-2)(-N_C+2)\right] \\
    = \,\, & -AN_C^2 + \frac{AN_C}{2}(3N_C+8) + \frac{AN_C}{2}(-5N_C+8) \\
    = \,\, & -2AN_C(N_C-4)
\end{align}

\begin{align}
    (\text{RHS }w^3\text{ coeff}) &= \frac{2(-2)^2(-4)}{N_C(N_C-2)^2} + \frac{4(-2)^3}{N_C^2(N_C-1)} + \frac{2(N_C-4)(-2)[3(-1)(-2) - (1)(-2)]}{N_C^2(N_C-2)^2} \\
    &= -\frac{32}{N_C(N_C-2)^2} - \frac{32}{N_C^2(N_C-1)} - \frac{32(N_C-4)}{N_C^2(N_C-2)^2} \\
    &= -\frac{32}{N_C^2(N_C-1)(N_C-2)^2}\left[N_C(N_C-1) + (N_C-2)^2 + (N_C-1)(N_C-4)\right] \\
    &= -\frac{32}{N_C^2(N_C-1)(N_C-2)^2}\left[(3N_C-4)(N_C-2)\right] \\
    &= -\frac{32(3N_C-4)}{N_C^2(N_C-1)(N_C-2)}.
\end{align}
Matching these two expressions lets us solve for $A$:
\begin{align}
    (\text{LHS }w^3\text{ coeff}) &= (\text{RHS }w^3\text{ coeff}) \\
    \implies -2AN_C(N_C-4) &= -\frac{32(3N_C-4)}{N_C^2(N_C-1)(N_C-2)} \\
    \implies A &= \frac{16(3N_C-4)}{N_C^3(N_C-1)(N_C-2)(N_C-4)}.
\end{align}

We now proceed to solve for $B$. Matching the constant terms (i.e., the $w^0$ coefficients) on both sides appears to be the easiest way to do this. We first compute them as functions of $B$, as follows:
\begin{align}
    (\text{LHS }w^0\text{ coeff}) &= \frac{AN_C^2}{2}(N_C)(B) + \frac{AN_C}{2}(-1)(N_C+2)(N_C+1+B) \\
    & \quad\quad + \frac{AN_C}{2}(N_C-1)(N_C-2)(-N_C+1+B) \\
    &= \frac{AN_C}{2}B\left[N_C^2 - (N_C+2) + (N_C-1)(N_C-2)\right] \\
    & \quad\quad - \frac{AN_C}{2}(N_C+1)(N_C+2) - \frac{AN_C}{2}(N_C-1)^2(N_C-2) \\
    &= \frac{AN_C^2}{2}\left[2(N_C-2)B - (N_C^2 - 3N_C + 8)\right]
\end{align}

\begin{align}
    (\text{RHS }w^0\text{ coeff}) &= \frac{2(N_C)^2(N_C)}{N_C(N_C-2)^2} + \frac{4N_C}{N_C}\left[\frac{N_C^2 - N_C}{N_C(N_C-1)} + \frac{4N_C}{N_C(N_C-2)}\right] \\
    & \quad\quad + \frac{2(N_C-4)(N_C)[3(N_C-1)(N_C-2) - (-1)(N_C+2)]}{N_C^2(N_C-2)^2} \\
    &= \frac{2N_C^2}{(N_C-2)^2} + 4\left(1 + \frac{4}{N_C-2}\right) + \frac{2(N_C-4)(3N_C^2 - 8N_C + 8)}{N_C(N_C-2)^2} \\
    &= \frac{2}{N_C(N_C-2)^2}\left[N_C^3 + 2N_C(N_C-2)^2 + 8N_C(N_C-2) + (N_C-4)(3N_C^2 - 8N_C + 8)\right] \\
    &= \frac{2}{N_C(N_C-2)^2}\left[2(N_C-2)(3N_C^2 - 4N_C + 8)\right] \\
    &= \frac{4(3N_C^2 - 4N_C + 8)}{N_C(N_C-2)}.
\end{align}

Matching these two quantities lets us solve for $B$:
\begin{align}
    (\text{LHS }w^0\text{ coeff}) &= (\text{RHS }w^0\text{ coeff}) \\
    \implies \frac{AN_C^2}{2}\left[2(N_C-2)B - (N_C^2 - 3N_C + 8)\right] &= \frac{4(3N_C^2 - 4N_C + 8)}{N_C(N_C-2)} \\
    \implies B &= \frac{3N_C^3 - 7N_C^2 + 16}{4(3N_C-4)}.
\end{align}
One good sanity check at this point is that both $A$ and $B$ indeed satisfy the asymptotic approximations that we expected them to, namely
\begin{equation}
    A \sim \frac{48}{N_C^5}, \quad B \sim \frac{N_C^2}{4}.
\end{equation}
The last step is to verify that the $w$ and $w^2$ coefficients of the LHS and the RHS actually match with these given values of $A$ and $B$. This is a tedious but ultimately straightforward process.

\vspace{0.5\baselineskip}

We have thus computed the the $c_0^2$ contribution to the optimal protocol. Putting everything together, here is the expansion of $\sin^2\theta_w$ as a power series in $c_0$, out to the $c_0^2$ term:
\begin{align}
    S_w = \sin^2\theta_w &= f_0(w) + f_1(w)c_0 + f_2(w)c_0^2 + O\left(c_0^3\right) \\
    f_0(w) &= \frac{w}{N_C} \\
    f_1(w) &= \frac{4}{N_C^2(N_C-2)}w(N_C-w)(N_C-2w) \\
    f_2(w) &= Aw(N_C-w)(N_C-2w)(w^2 - N_Cw + B) \\
    A &= \frac{16(3N_C-4)}{N_C^3(N_C-1)(N_C-2)(N_C-4)} \\
    B &= \frac{3N_C^3 - 7N_C^2 + 16}{4(3N_C-4)}.
\end{align}
This completes the proof of Theorem \ref{thm:equatorial-lam1-perturbative}.

\vspace{0.5\baselineskip}

Just as we did for $f_1(w)$, it is worthwhile to make a few nice observations regarding $f_2(w)$:
\begin{itemize}
    \item First, $f_2(w)$ is indeed a quintic function in $w$, which one may have suspected from the numerics.
    \item Second, $f_2(w)$ has $w=0,\frac{N_C}{2},N_C$ as three of its five roots. These are natural consequences of the boundary conditions and bit-flip symmetry.
    \item Third, although $f_2(w)$ is not exactly scale-invariant (in the sense of depending only on $\frac{w}{N_C}$), it approaches scale invariance in the $N_C\rightarrow\infty$ limit, which one may also have suspected from the numerics. Interestingly, the two remaining roots of the quintic tend toward $w=\frac{N_C}{2}$ in the limit, which makes the perturbation very small in the region $w\approx\frac{N_C}{2}$.
    \item Fourth, the prefactor $A$ is well-defined for all $N_C$ except for $N_C=0,1,2,4$. For the same reason as before, the $N_C=0,1,2$ cases are not an issue. But one might be concerned about the $N_C=4$ case. Fortunately, in this case, the only valid values of $w$, which are integers $0\le w\le 4$, all yield zero when plugged into the rest of the expression, namely $w(N_C-w)(N_C-2w)(w^2 - N_Cw + B)$. As a result, one gets an indeterminate $0/0$ form, rather than a form that would necessarily diverge or be otherwise problematic. In other words, our formula tells us nothing about the second-order perturbation in the $N_C=4$ case. Fortunately, $N_C=4$ is small enough to be solved exactly, and the second-order perturbation can always be computed manually from that solution.
\end{itemize}

\end{document}